\patchcmd{\thebibliography}{\section*}{\section}{}{}
\numberwithin{equation}{section}
\newtheorem{theorem}{Theorem}
\newtheorem{lemma}{Lemma}
\begin{document}
\vspace{10mm}
\begin{center}
	\large{\textbf{Three-loop singularity structure for a non-linear sigma model}}
\end{center}
\vspace{2mm}
\begin{center}
	\large{\textbf{P. V. Akacevich${}^{\star}$~~~A. V. Ivanov${}^{\dagger}$~~~I. V. Korenev${}^{\sharp}$}}
\end{center}
\begin{center}
	${}^{\star,\dagger,\sharp}$St. Petersburg Department
	of Steklov Mathematical Institute
	of Russian Academy of Sciences,\\
	27 Fontanka, St. Petersburg, 191023, Russia
\end{center}
\begin{center}
	${}^{\star,\dagger}$Saint Petersburg State University,\\ 	7/9 Universitetskaya Emb., St. Petersburg, 199034, Russia
\end{center}

\begin{center}
${}^{\star}$E-mail: pavel.akacevich@yandex.ru\,\,\,
${}^{\dagger}$E-mail: regul1@mail.ru\,\,\,
${}^{\sharp}$E-mail: jacepool332@gmail.com
\end{center}
\vspace{10mm}
\begin{flushright}
	\textbf{On the 85-th anniversary of PDMI RAS\footnote{The appropriate epigraph can be found in the Russian version.}}~~~
\end{flushright}
\vspace{10mm}

\textbf{Abstract.} The paper is devoted to the three-loop renormalization of the effective action for a two-dimensional non-linear sigma model using the background field method and a cutoff regularization in the coordinate representation. The coefficients of the renormalization constant and the necessary auxiliary vertices are found, as well as the asymptotic expansions of all three-loop diagrams, and their dependence on the type of regularizing function. A comparison is also made with the standard case of a cutoff in the momentum representation.

\vspace{2mm}
\textbf{Key words and phrases:} three loops, cutoff regularization, renormalization, deformation of Green's function, averaging, non-linear sigma model, quantum action, divergence, principal chiral field.

\newpage	

\tableofcontents 
\newpage

\section{Introduction}
\label{33:sec:int}
The study of quantum field models plays an important role in modern mathematics and theoretical physics \cite{3,9,10}. There is a large number of approaches suitable for studying various properties. Nevertheless, almost every option faces some mathematical difficulties. Formally, these approaches can be divided into two groups: nonperturbative and perturbative. If the methods of the first category, as a rule, border on fundamental problems in constructing a functional integral \cite{sk-1,sk-2,34-6,34-c-m}, then the second case deals more with computational challenges, which, in particular, include the appearance of divergences in the coefficients of series under consideration, see \cite{6,7,105}.

Despite the long history of research on perturbative decompositions, both physical and mathematical questions remain open within the framework of this concept. Moreover, the second class is much broader and includes, for example, a lack of understanding of the nature of power-law singularities or the dependence of spectral data on the renormalization process. Additional interest in perturbative methods has arisen in recent years due to the development of functorial quantum field theory \cite{sk-14,sk-3,sk-4,sk-5,sk-6,sk-7}. This is due to the fact that a functional integral can be defined by a formal series, the coefficients of which contain divergences, see \cite{sk-16,sksk}, and even more diverse in their appearance than those that were within the framework of the standard paradigm.

Thus, the study of singular densities and integrals, as well as methods of their regularization, is an important and relevant task. Let us pay attention to several popular approaches to regularization: dimensional regularization \cite{19,555}, regularization with the usage of higher covariant derivatives \cite{Bakeyev-Slavnov,29-st,AA-1,AA-2}, the implicit one \cite{chi-0,chi-1,chi-2}, Feynman \cite{FF-1,Bog-R} and Pauli--Villars \cite{Pauli-Villars} regularizations, and cutoff regularizations \cite{w6,w7,w8,Khar-2020,sk-b-19,Sh}. All the mentioned cases have both pros and cons. For example, the dimensional regularization is undisputed leader in multi-loop calculations. Nevertheless, it contains the operation of deforming the dimension of space, which leads to open fundamental questions. Further, the regularization with higher derivatives is an indispensable tool in supersymmetric theories. But at the same time, it is associated with the transition to higher-order operators, the properties of which, as a rule, are poorly studied. The cutoff regularizations are the most intuitively understood, but they can violate important internal symmetries (for example, the gauge one) of models.

This paper is devoted to the study of special cutoff regularization (a family of regularizations) in the coordinate representation, see \cite{34,Ivanov-Kharuk-2020,Ivanov-Kharuk-20222,Ivanov-Akac,Ivanov-Kharuk-2023,Iv-2024-1,Iv-Kh-2024,Kh-2024,Kh-25}, using the example of three-loop calculations for a two-dimensional non-linear sigma model (or principal chiral model), see \cite{HB,HB1,HB2,HB3,HB4,HB7,sig1,sig2,q-2,q-1,q-3,q-6,q-14}. A special feature of such regularization, see \cite{Iv-2024,Ivanov-2022,sk-b-20}, is a direct connection with the averaging operator, an explicit spectral representation, as well as the presence of additional degrees of freedom, which, depending on the model under consideration, can be fixed taking into account physical and computational features. Bypassing the stages of introducing regularization, we can say that it boils down to a special deformation of the Green's function, which in the main order in the two-dimensional case can be expressed by the transition
\begin{equation*}	
-\frac{\ln(|x|\sigma)}{2\pi}\longrightarrow
\frac{\mathbf{f}\big(\Lambda,|x|^2\Lambda^2\big)}{4\pi}+
	\frac{1}{2\pi}
	\begin{cases} 
		~~\ln(\Lambda/\sigma), & \mbox{for}\,\,\,|x|\leqslant1/\Lambda;\\
		-\ln(|x|\sigma), &\mbox{for}\,\,\,|x|>1/\Lambda.
	\end{cases}
\end{equation*}
Here, $\Lambda$ is a regularizing parameter, and $\mathbf{f}(\cdot)$ is an auxiliary\footnote{More detailed properties can be found in Section \ref{33:sec:ps}.} function. At the same time, the paper does not study the supersymmetric case, some results and references for which can be found in \cite{HB5,HB6}.

The choice of the non-linear sigma model is due not only to its important position in physics, see \cite{q-12,q-13,q-10,q-11,q-9,q-5,q-7,q-8,q-4,q-15,q-16,q-17,33-ma-1}, but also to the desire to study the structure of divergences for a theory in which the interaction part contains a derivative. Previously, similar calculations were performed for the Yang--Mills theory in two loops, see \cite{Ivanov-Kharuk-2020,Ivanov-Kharuk-20222}, and were a continuation of the works \cite{29-1-0,29-1-1} on the study of renormalization "scenario". However, the complexity of the four-dimensional theory leads to serious computational difficulties that are not yet possible to solve. Two-dimensional theory, as shown in this paper, can be studied explicitly. At the same time, additional interest arises due to the fact that previously the non-linear sigma model beyond the second correction was studied using only the dimensional regularization. Therefore, it remains unclear what renormalization should look like in the proposed regularization, as well as which additional vertices will appear.

The main results, which are described in detail in Section \ref{33:sec:r}, include the following six points.
\begin{enumerate}
	\item The asymptotic expansions for three-loop diagrams are calculated.
	\item Auxiliary vertices necessary for the renormalization process in three loops are found.
	\item The $\mathcal{R}$-operation is studied using the example of three-loop diagrams.
	\item The main logarithmic singularity $L^2$ for the coupling constant is calculated.
	\item A comparison is made with the case of the standard cutoff in the momentum representation.
	\item The case of quasi-local vertices and some of their properties are studied.
\end{enumerate}

\vspace{2mm}
\textbf{The work is organized as follows.}\\

In \textbf{Section \ref{33:sec:ps}}, we introduce such basic concepts for a two-dimensional non-linear sigma model as the quantum (effective) action, a regularized classical action, a deformed Green's function, and auxiliary vertices, as well as the well-known results for the first two loops are presented.
\textbf{Section \ref{33:sec:r}} presents the main results in the form of theorems, as well as some special cases. In addition, a comparison with the standard cutoff in the momentum representation is considered.
Further, in \textbf{Section \ref{33:sec:vsvs}}, the elements of diagram techniques and the rules for handling them are presented, as well as a number of auxiliary functionals are introduced in both analytical and diagrammatic form.
\textbf{Section \ref{33:sec:cl:1}} describes needed properties of the deformed Green's function, its asymptotic expansion near the diagonal, as well as a number of relations for special functions from the decomposition coefficients.
Then, in \textbf{Section \ref{33:sec:cl}}, we present calculations of asymptotic expansions with respect to the regularizing parameter for all strongly connected three-loop diagrams.
In \textbf{Sections \ref{33:sec:dop}} and \textbf{\ref{33:sec:kva}}, an additional method for checking the obtained decompositions is proposed, as well as a number of auxiliary quasi-local vertices have also been studied.
\textbf{Sections \ref{33:sec:zak}} and \textbf{\ref{33:sec:appl}} contain concluding comments, acknowledgements, and some auxiliary calculations.

\section{Problem statement}
\label{33:sec:ps}

Consider\footnote{In fact, we can consider an arbitrary compact Lie group.} a special unitary group $\mathrm{SU}(n)$ of degree $n\in\mathbb{N}$ and the corresponding Lie algebra $\mathfrak{su}(n)$, see \cite{2}. By symbols $t^a$, where $a\in\{1,\ldots,\mathrm{dim}\,\mathfrak{su}(n)\}$, we notate the generators of the Lie algebra satisfying two conditions
\begin{equation}\label{33-p-1}
[t^a,t^b]=f^{abc}t^c,\,\,\,
\mathrm{tr}(t^at^b)=-\frac{1}{2}\delta^{ab},
\end{equation}
where the coefficients $f^{abc}$ are totally antisymmetric structure constants for $\mathfrak{su}(n)$. The maps $[\,\cdot\,,\,\cdot\,]$ and $\mathrm{tr}(\,\cdot\,)$ denote the commutator and the Killing form, respectively. For convenience, we assume that the adjoint representation is used. Next, we can check that the structure constants satisfy the relations
\begin{equation}\label{33-p-2}
f^{abc}f^{aef}=f^{abf}f^{aec}-f^{acf}f^{aeb}
,\,\,\,
f^{abc}f^{abe}=c_2\delta^{ce}.
\end{equation}
Here, the real constant $c_2$ denotes the eigenvalue of the Casimir operator for the algebra $\mathfrak{su}(n)$. Next, we introduce the two-dimensional Euclidean space $\mathbb{R}^2$. As a rule, the elements of this set are notated by the symbols $x$, $y$, and $z$, and their individual components are represented by Greek letters. Also, unless otherwise specified, the Einstein convention on summation over repeated indices is used.

Consider a smooth mapping $g(\cdot)\in C^{\infty}(\mathbb{R}^2,\mathrm{SU}(n))$ and define a set of functions $C_\mu^a(x)$, where
$a\in\{1,\ldots,\mathrm{dim}\,\mathfrak{su}(n)\}$ and $\mu\in\{1,2\}$, by the equalities
\begin{equation}\label{33-p-3}
C_\mu^a(x)t^a=\big(\partial_{x^\mu}g(x)\big)g^{-1}(x).
\end{equation}
Then the classical action for a two-dimensional non-linear sigma model (or principal chiral model) can be written as
\begin{equation}\label{33-p-4}
S[C]=\int_{\mathbb{R}^2}\mathrm{d}^2x\,
C_\mu^a(x)C_\mu^a(x).
\end{equation}
Further, the effective (quantum) action is symbolically\footnote{At the moment, the functional integral method contains a number of significant open mathematical questions. The word "symbolically" means that the object is actually defined by a formal series.} represented by the following functional integral
\begin{equation}\label{33-p-5}
W=-\mathrm{ln}\bigg(\int_{\mathcal{H}}\mathcal{D}g\,e^{-S[C]/4\gamma^2}\bigg).
\end{equation}
Here, $\gamma$ denotes the coupling constant and will subsequently play the role of a small parameter. The set $\mathcal{H}$ denotes the "domain" of integration and contains functions with specified behavior\footnote{They should be determined based on physical considerations.} at infinity (boundary).

Moving on to a more specific formulation, let us use the background field method, see \cite{102,103,24,25,26,33-d-1}. To do this, first perform the shift $$g(x)=\exp(\gamma\phi(x))h(x)$$ in integral\footnote{It is assumed that the functional integral has the standard properties for change of variables valid for an ordinary integral. The new "measure", which may differ by the Jacobian of the transformation, will be denoted by $\mathcal{D}^\prime$. Thus, we get $\mathcal{D}g\to\mathcal{D}^\prime\phi$.} \eqref{33-p-5}. In this case, the set $\mathcal{H}$ is transferred to the new one $\mathcal{H}_0$. According to standard theory, the function $h(\cdot)$ is called the background field, while $\phi(\cdot)$ is a fluctuation field. Thus, we move from integration over the fields $g(\cdot)$ to integration over the fluctuations $\phi(\cdot)$ near the selected fixed field $h(\cdot)$. Next, we assume that the background field $h(\cdot)$ satisfies the quantum equation of motion, see \cite{Fa-1,Ba-1,Ba-2}, and the boundary conditions for functions from $\mathcal{H}$.

Let us introduce a few auxiliary notations. The symbols $B_\mu^a(x)$ denote the corresponding coefficients from the equality
\begin{equation}\label{33-p-6}
B_\mu^a(x)t^a=\big(\partial_{x^\mu}h(x)\big)h^{-1}(x).
\end{equation}
In further, they are also called background ones, since they are built using only the function $h(x)$. This is not confusing. Also, for convenience, the symbol $B_\mu^{ac}(x)$ with two upper subscripts denotes the construction $B_\mu^b(x)f^{abc}$. Next, we define the covariant derivative and the Laplace operator in local coordinates by the equations
\begin{equation}\label{33-p-7}
D_{\mu}^{ab}(x)=\delta^{ab}\partial_{x^\mu}-B_\mu^{ab}(x),
\end{equation}
\begin{equation}\label{33-o-3}
A^{ab}(x)=
-\frac{1}{4}\Big(D_\mu^{ab}(x)\partial_{x_\mu}+\partial_{x_\mu}D_\mu^{ab}(x)\Big)
=A_0(x)\delta^{ab}/2+V^{ab}(x)/2,
\end{equation}
where the auxiliary operators are represented as
\begin{equation}\label{33-o-4}
V^{ab}(x)=B^{ab}_\mu(x)\partial_{x_{\mu}}/2+\partial_{x_{\mu}}B^{ab}_\mu(x)/2,
\end{equation}
\begin{equation}\label{33-o-31}
A_0(x)=-\partial_{x_\mu}\partial_{x^\mu}
=-\partial_{x_1}\partial_{x^1}-\partial_{x_2}\partial_{x^2}.
\end{equation}
The Green's function for the operator $A^{ab}(x)$ is notated by the symbol $G^{bc}(x,y)$. By construction, it satisfies the standard equation
\begin{equation}\label{33-p-15}
A^{ab}(x)G^{bc}(x,y)=\delta^{ac}\delta(x-y).
\end{equation}

Next, decomposing the exponential $\exp(\gamma\phi(x))$ into a series in powers of the coupling constant, we can check the following relation for the classical action
\begin{equation}\label{33-p-8}
\frac{S[C]}{4\gamma^2}=
\frac{S[B]}{4\gamma^2}+
\frac{1}{2\gamma}\Gamma_1[\phi]+
\frac{1}{2}S_2[\phi]-\frac{1}{2}
\sum_{k=3}^{+\infty}\frac{\gamma^{k-2}}{k!}\Gamma_{k}[\phi],
\end{equation}
where
\begin{equation}\label{33-p-9}
S_2[\phi]=\int_{\mathbb{R}^2}\mathrm{d}^2x\,\phi^a(x)A^{ab}(x)\phi^b(x),\,\,\,
\Gamma_1[\phi]=-\int_{\mathbb{R}^2}\mathrm{d}^2x\,\phi^a(x)\partial_{x_\mu}B^a_\mu(x),
\end{equation}
\begin{equation}\label{33-p-10}
\Gamma_{k+2}[\phi]=-\int_{\mathbb{R}^2}\mathrm{d}^2x\,
\big(\partial_{x_\mu}\phi^{a_1}(x)\big)\bigg(\prod_{i=1}^{k}\phi^{a_ia_{i+1}}(x)\bigg)D^{a_{k+1}b}_\mu(x)\phi^b(x),\,\,\,\mbox{}\,\,\,k\geqslant1.
\end{equation}
In the latter equation, we have used the notation $\phi^{ab}(x)=f^{acb}\phi^c(x)$. Note that in the case of odd indices, the covariant derivative $D^{ab}_\mu(x)$ can be replaced by the background field $-B_\mu^{ab}(x)$, which is a consequence of the antisymmetry of the structure constants. Next, we use the standard method of transition to a formal decomposition, see \cite{3,sk-2,Vas-98}.
To do this, consider\footnote{Prior to the introduction of any regularization, it is formal in nature.} the auxiliary partition function
\begin{equation}\label{33-p-11}
Z[j]=\int_{\mathcal{H}_0}\mathcal{D}^\prime\phi\,\exp\bigg(-\frac{1}{2}S_2[\phi]+\int_{\mathbb{R}^2}\mathrm{d}^2x\,\phi^a(x)j^a(x)\bigg)=\frac{e^{g(G,j)}}{\sqrt{\det(A)}}=e^{g(G,j)}\sqrt{\det(G)}
,
\end{equation}
where the auxiliary quadratic form can be written as
\begin{equation}\label{33-p-18}
g(G,j)=\frac{1}{2}\int_{\mathbb{R}^2}\mathrm{d}^2x\int_{\mathbb{R}^2}\mathrm{d}^2y\,
j^a(x)G^{ab}(x,y)j^b(y).
\end{equation}
The functional $Z[j]$ depends on the smooth field $j^a(\cdot)$. Thus, if we introduce a notation for the sum of the vertices
\begin{equation}\label{33-p-12}
\Theta[\gamma,\phi]=-
\frac{1}{2\gamma}\Gamma_1[\phi]+\frac{1}{2}
\sum_{k=3}^{+\infty}\frac{\gamma^{k-2}}{k!}\Gamma_{k}[\phi],
\end{equation}
then representation \eqref{33-p-5} for the quantum action is rewritten as
\begin{equation}\label{33-p-13}
W=\frac{S[B]}{4\gamma^2}-\frac{1}{2}\ln\det(G)-\mathrm{ln}\bigg(e^{\Theta[\gamma,\delta_j]}e^{g(G,j)}\Big|_{j=0}\bigg),
\end{equation}
where the symbol $\delta_j$ denotes the variational derivative with respect to the auxiliary field, see \cite{Vas-98}, which is defined by the equality
\begin{equation}\label{33-p-14}
\delta_{j^a(x)}j^b(y)=\frac{\delta j^b(y)}{\delta j^a(x)}=\delta^{ab}\delta(x-y).
\end{equation}
The second term in \eqref{33-p-13} is a formal series with respect to the coupling constant $\gamma$.
Let us make two important notes. First, the quantum action is actually a functional that depends on the background field, that is, $W=W[B]$. Secondly, representation \eqref{33-p-13} contains one well-known problem, see \cite{sk-b-19,Ivanov-Akac}, which consists in the presence of divergences. As a result, it is necessary to introduce a regularization, that is, to make the transition
\begin{equation}\label{33-p-19}
W[B]\xrightarrow{reg.} W_{\mathrm{reg}}[B,\Lambda],
\end{equation}
where the last object in each order of $\gamma$ is finite. In the framework of this work, we use a cutoff regularization in the coordinate representation, which boils down to smoothing fields in the classical action. Let $\omega(\cdot)$ be the kernel of an integral operator $\mathrm{O}^\Lambda_{(\cdot)}$, where $\Lambda>0$, acting from the left, using the selected variable, on a set of continuous functions according to the rule
\begin{equation}\label{33-p-21}
\mathrm{O}_x^\Lambda f(x)=\int_{\mathbb{R}^2}\mathrm{d}^2y\,\omega(|y|)f(x+y/\Lambda),
\end{equation}
and satisfies the conditions
\begin{equation}\label{33-p-20}
\mathrm{supp}\big(\omega(|\cdot|)\big)\subset\{y\in\mathbb{R}^2:|y|\leqslant1/2\}
,\,\,\,\int_{\mathbb{R}^2}\mathrm{d}^2y\,\omega(|y|)=1.
\end{equation}
We also require that the function $\ln(|x|)$ transformed using this integral operator in the following way
\begin{equation}\label{33-p-22}
s(x)=
\mathrm{O}_x^\Lambda\mathrm{O}_x^\Lambda\ln(|x|)
=
\int_{\mathbb{R}^2}\mathrm{d}^2z_1\int_{\mathbb{R}^2}\mathrm{d}^2z_2\,
\omega(|z_1|)\omega(|z_2|)\ln(|x+z_1/\Lambda+z_2/\Lambda|)
\end{equation} 
belongs to the class $s(\cdot)\in C^2(\mathbb{R}^2,\mathbb{R})$. Then we introduce the regularized classical action \eqref{33-p-8} for the two-dimensional non-linear sigma model as
\begin{equation}\label{33-p-23}
\frac{S[B]}{4\gamma^2}+
\frac{1}{2}S_2[\mathrm{O}^\Lambda\phi]-\Theta[\gamma,\mathrm{O}^\Lambda\phi]+
\int_{\mathbb{R}^2}\mathrm{d}^2x\,\Big(\phi^a(x)A_0(x)\phi^a(x)-\big(\mathrm{O}_x^\Lambda\phi^a(x)\big)A_0(x)\big(\mathrm{O}_x^\Lambda\phi^a(x)\big)\Big).
\end{equation}
This is equivalent to smoothing all fields $\phi$, except those that enter with the operator $A_0(x)$. From the construction it is clear that we can remove the regularization using the limit $\Lambda\to+\infty$. 

Such a transition move the Laplace operator in the quadratic form to the new one\footnote{Here, the operator $\mathrm{O}$ acts on all functions that are to the right.}
\begin{equation}\label{33-p-24}
A^{ab}_\Lambda(x)=\frac{1}{2}\Big(A_0(x)\delta^{ab}+\mathrm{O}^\Lambda_xV^{ab}(x)\mathrm{O}^\Lambda_x\Big).
\end{equation}
The corresponding new Green's function we notate by $G_{\mathrm{reg}}^{ab}(x,y)$. In addition, we introduce a deformed Green's function, which is obtained after additional application of the averaging operators
\begin{equation}\label{33-p-25}
G_\Lambda^{ab}(x,y)=\mathrm{O}^\Lambda_x\mathrm{O}^\Lambda_yG_{\mathrm{reg}}^{ab}(x,y)=
2G^{\Lambda,\mathbf{f}}(x-y)\delta^{ab}+\ldots,
\end{equation}
where the following notation was used\footnote{The possibility of using the second equality is related to the averaging properties of free fundamental solutions, see \cite{Ivanov-2022,Iv-2024}.}
\begin{equation}\label{33-o-55-1}	
G^{\Lambda,\mathbf{f}}(x)=
-\frac{\mathrm{O}_x^\Lambda\mathrm{O}_x^\Lambda\ln(|x|\sigma)}{2\pi}
=\,\frac{\mathbf{f}\big(\Lambda,|x|^2\Lambda^2\big)}{4\pi}+
\frac{1}{2\pi}
\begin{cases} 
	~~~~~~L, & \mbox{for}\,\,\,|x|\leqslant1/\Lambda;\\
		-\ln(|x|\sigma), &\mbox{for}\,\,\,|x|>1/\Lambda.
	\end{cases}
\end{equation}
Here $L=\ln(\Lambda/\sigma)$, the parameter $\sigma>0$ is an auxiliary one and is used to make the argument dimensionless. The function $G_\Lambda^{ab}(x,y)$ does not depend \footnote{This parameter appears when adding and subtracting auxiliary singular functions, which are more convenient for specific calculations.} on it.
Then, considering the fairness of the equality
\begin{equation}\label{33-p-26}
F[\mathrm{O}^\Lambda\delta_j]e^{g(G_{\mathrm{reg}},j)}\Big|_{j=0}=F[\delta_j]e^{g(G_{\Lambda},j)}\Big|_{j=0}
\end{equation}
for an arbitrary auxiliary functional decomposable in monomials, the regularized quantum action is defined by the formula
\begin{equation}\label{33-p-13-1}
W_{\mathrm{reg}}[B,\Lambda]=
\frac{S[B]}{4\gamma^2}-\frac{1}{2}\big(\ln\det(G_{\mathrm{reg}})-\varkappa_0\big)-\bigg[\mathrm{ln}\bigg(e^{\Theta[\gamma,\delta_j]}e^{g(G_{\Lambda},j)}\Big|_{j=0}\bigg)-\sum_{k=1}^{+\infty}\gamma^{2k}\varkappa_k\bigg],
\end{equation}
where the auxiliary\footnote{The occurrence of such quantities is due to the presence of integrals over a region of infinite volume. Such integrals do not depend on the background field, so such an adjustment is a shift by a constant and does not affect the physical model.} constants $\{\varkappa_k\}_{k=0}^{+\infty}$, which do not depend on the background field, subtract singular densities and tend to infinity when the regularization is removed. Their explicit form is unimportant, since they are responsible only for the normalization of the partition function and do not carry any physical information.

The regularized quantum action $W_{\mathrm{reg}}[B,\Lambda]$ allows a fairly convenient representation in the form of a sum over Feynman diagrams. Explicit definitions for the elements of the diagram technique are presented in Section \ref{33:sec:int:1}, but for now it is enough to know only two facts. 
\begin{enumerate}
	\item Each functional $\Gamma_n[\phi]$ from \eqref{33-p-12} is assigned a vertex (picture) with $n$ external lines. We notate such the vertex with the same symbol $\Gamma_n$, but without the argument.
	\item A line corresponds to the function $G_\Lambda$.
\end{enumerate}
Next, on the set of vertices, we can introduce a linear operator $\mathbb{H}_i^{c(sc)}$, see \cite{I-R}, where $i\in\mathbb{N}\cup\{0\}$, which transforms non-linear combinations of vertices into connected (strongly connected) diagrams with $i$ external\footnote{If $i=0$, then the diagram is called a vacuum one.} lines. It is clear that due to linearity, it is sufficient to define the operator only on monomials. Indeed, let us define a product with non-negative powers
\begin{equation}\label{33-p-16}
\Gamma=\Gamma_1^{k_1}\prod_{i=3}^{+\infty}\Gamma_i^{k_i},\,\,\,
k(\Gamma)=k_1+\sum_{i=3}^{+\infty}k_i<+\infty,
\end{equation}
then $\mathbb{H}_i^{c(sc)}(\Gamma)$ is equal to the sum of all possible connected (strongly connected) diagrams $\{\mathrm{D}_j\}$, which can be obtained by connecting (pairing) $k(\Gamma)-i$ outer ends belonging to the set of vertices $\Gamma$. Thus, the following decomposition holds
\begin{equation}\label{33-p-17}
\mathbb{H}_i^{c(sc)}(\Gamma)=\sum_{j=1}^{n(\Gamma)}\mathrm{D}_j.
\end{equation}
Clearly, the number $n(\Gamma)$ is finite. For convenience, we supplement the definition with the equality $\mathbb{H}_i^{c(sc)}(1)=\delta_{i0}$. In this case, each diagram corresponds to a single analytical expression, which can be obtained by substituting  expressions \eqref{33-p-10} and \eqref{33-p-25} instead of the corresponding elements of the diagrammatic technique.

Finally, using logarithmic properties of the partition function and the background field, see \cite{Vas-98,I-R},  formula \eqref{33-p-13-1} for the regularized quantum action is rewritten as
\begin{equation}\label{33-p-27}
W_{\mathrm{reg}}[B,\Lambda]=
\frac{S[B]}{4\gamma^2}-\frac{1}{2}\big(\ln\det(G_{\mathrm{reg}})-\hat{\varkappa}_0\big)-\bigg[\mathbb{H}_0^{\mathrm{sc}}\exp\bigg(
\sum_{k=3}^{+\infty}\frac{\gamma^{k-2}}{k!2}\Gamma_{k}\bigg)
-\sum_{k=1}^{+\infty}\gamma^{2k}\varkappa_k\bigg].
\end{equation}
This representation is the most transparent and convenient to study. In particular, it can be used as a definition of a functional representation for the regularized quantum action in the form of a formal series with respect to the coupling constant. 

Object \eqref{33-p-27} does not contain ultraviolet divergences. Nevertheless, each coefficient of the series with respect to the coupling constant is a singular function with respect to the regularizing parameter $\Lambda$. Such coefficients diverge when the regularization is removed. In this regard, a renormalization procedure is applied to the regularized quantum action, which reduces (subtracts) the singular components. Thus, the transition is made
\begin{equation}\label{33-p-19-1}
W_{\mathrm{reg}}[B,\Lambda]\xrightarrow{ren.} W_{\mathrm{ren}}[B,\Lambda].
\end{equation}
The coefficients of the resulting series $W_{\mathrm{ren}}[B,\Lambda]$ have a finite limit at $\Lambda\to+\infty$. The procedure itself for converting the regularized action into the renormalized one, depending on the model under consideration and the type of regularization, may have its own characteristics and, thus, differ from a purely multiplicative approach, which consists in redefining constants. In this paper, the process of renormalization is understood as the transition from \eqref{33-p-27} to the functional
\begin{equation}\label{33-p-27-1}
W_{\mathrm{ren}}[B,\Lambda]=
\frac{S[B]}{4\gamma^2_0}-\frac{1}{2}\big(\ln\det(G_{\mathrm{reg}})-\varkappa_0\big)-\bigg[\mathbb{H}_0^{\mathrm{sc}}\exp\bigg(
\sum_{k=3}^{+\infty}\frac{\gamma^{k-2}_0}{k!2}\Big(\Gamma_{k}+\Gamma_{r,k-2}\Big)\bigg)
-\sum_{k=1}^{+\infty}\gamma^{2k}_0\hat{\varkappa}_k\bigg],
\end{equation}
during which we have replaced the coupling constant $\gamma\to\gamma_0=\gamma_0(\Lambda)$,  added  auxiliary vertices $\{\Gamma_{r,k-2}\}_{k=3}^{+\infty}$, and changed\footnote{Such a change is purely formal, since the explicit form of the constants is unimportant.} the constants $\{\varkappa_i\}_{i=0}^{+\infty}\to\{\hat{\varkappa}_i\}_{i=0}^{+\infty}$. Note that in the new vertices, the second index corresponds to the power of $\gamma_0$ with which they enter the formula. 
Moreover, we assume that $\Gamma_{r,k-2}$ is a finite linear combination of vertices, each of which can contain no more than $k$ external lines. Further, the new coupling constant $\gamma_0^2$ is a formal series in powers of $\gamma^2$ with coefficients depending on the regularizing parameter $\Lambda$, that is
\begin{equation}\label{33-p-32}
\frac{1}{\gamma^2_0}=\frac{1}{\gamma^2}-\sum_{i=0}^{+\infty}a_i\gamma^{2i},
\end{equation}
where $a_i=a_i(\Lambda)$.
Note that the last series is formal, so the functions\footnote{For example, exponentiation.} of such a series can be obtained using the formal calculation of coefficients for identical powers of the constant $\gamma$. Calculating the coefficients from \eqref{33-p-32} is a sequential procedure, since each step uses previously obtained data. The same applies to the vertices. Note that when calculating contributions proportional to $\gamma^{2k}$, where $k\in\mathbb{N}$, the vertices $\Gamma_{r,2k-3}$ and $\Gamma_{r,2k-2}$ are determined. At the same time, knowing the two-loop results for the non-linear sigma model, see \cite{Ivanov-Akac,sk-b-19,Iv-244}, we immediately set $\Gamma_{r,1}=0$ for convenience.

After substituting series \eqref{33-p-32} into \eqref{33-p-27-1}, we can perform a summation, thereby obtaining a new series by the constant $\gamma^2$. The coefficients of such a series, if the model is renormalizable, are no longer contain singular components, and their values become finite in the limit of removing the regularization. Let us rewrite the renormalized action as
\begin{equation}\label{33-p-28}
	W_{\mathrm{ren}}[B,\Lambda]=
	\frac{S[B]}{4\gamma^2}+W_0[B,\Lambda]+\gamma^2W_1[B,\Lambda]+\gamma^4W_2[B,\Lambda]+\ldots,
\end{equation}
and write out the first three coefficients explicitly. To do this, we note that series \eqref{33-p-32} can be rewritten as
\begin{equation}\label{33-p-32-1}
\gamma^2_0
=\gamma^2+a_0\gamma^4+\ldots,
\end{equation}
then the functions are represented by the following formulas
\begin{equation}\label{33-p-29}
W_0[B,\Lambda]=-\frac{1}{2}\big(\ln\det(G_{\mathrm{reg}})-\hat{\varkappa}_0\big)-\frac{a_0}{4}S[B],
\end{equation}
\begin{equation}\label{33-p-30}
W_1[B,\Lambda]=-\frac{\mathbb{H}_0^{\mathrm{sc}}(\Gamma_3^2)}{8(3!)^2}
-\frac{\mathbb{H}_0^{\mathrm{sc}}(\Gamma_4+\Gamma_{r,2})}{2(4!)}-\frac{a_1}{4}S[B]+\hat{\varkappa}_1,
\end{equation}
\begin{align}\label{33-p-31}
W_2[B,\Lambda]=&-\frac{\mathbb{H}_0^{\mathrm{sc}}(\Gamma_3^4)}{16(3!)^44!}
-\frac{\mathbb{H}_0^{\mathrm{sc}}(\Gamma_3^2\Gamma_4^{\phantom{1}})}{16(3!)^24!}
-\frac{\mathbb{H}_0^{\mathrm{sc}}(\Gamma_3\Gamma_5)}{4(3!5!)}
-\frac{\mathbb{H}_0^{\mathrm{sc}}(\Gamma_4^2)}{8(4!)^2}
-\frac{\mathbb{H}_0^{\mathrm{sc}}(\Gamma_6)}{2(6!)}
\\\nonumber&-\frac{\mathbb{H}_0^{\mathrm{sc}}(\Gamma_3^2\Gamma_{r,2}^{\phantom{1}})}{16(3!)^24!}
-\frac{\mathbb{H}_0^{\mathrm{sc}}(\Gamma_4\Gamma_{r,2})}{4(4!)^2}
-\frac{\mathbb{H}_0^{\mathrm{sc}}(\Gamma_{r,2}^2)}{8(4!)^2}
-\frac{\mathbb{H}_0^{\mathrm{sc}}(\Gamma_3\Gamma_{r,3})}{4(3!5!)}
-\frac{\mathbb{H}_0^{\mathrm{sc}}(\Gamma_{r,4})}{2(6!)}
\\\nonumber&+a_0\Big(W_1[B,\Lambda]+a_1S[B]/4\Big)-\frac{a_2}{4}S[B]+\hat{\varkappa}_2.
\end{align}
Next, the symbol $\stackrel{\mathrm{s.p.}}{=}$ denotes the equality of singular components with respect to the regularizing parameter. Thus, the renormalizability condition leads to a set of relations
\begin{equation}\label{33-p-33}
W_i[B,\Lambda]\stackrel{\mathrm{s.p.}}{=}0,
\end{equation}
where $i\in\mathbb{N}\cup\{0\}$. Let us consider in more detail the first two relations, which were previously studied for the proposed type of regularization.\\

\noindent\underline{The first loop.} For convenience, we introduce $L_1=L+\mathbf{f}(0)/2=2\pi G^{\Lambda,\mathbf{f}}(0)$. This choice was made for convenience reasons, since in the higher loops a large number of diagrams contain the Green's function with equal to each other arguments (on the diagonal). At the same time, such a shift is a new fixation of the initial conditions in the renormalization process and does not impose additional restrictions on the parameter $\sigma$, which, as before, obeys the condition $\sigma>0$. Taking into account the notation, it can be shown that relation \eqref{33-p-29} leads to equality
\begin{equation}\label{33-p-34}
-\frac{1}{2}\big(\ln\det(G_{\mathrm{reg}})-\hat{\varkappa}_0\big)\stackrel{\mathrm{s.p.}}{=}-\frac{c_2L_1}{16\pi}S[B],
\end{equation}
and thus, we obtain the well-known answer
\begin{equation}\label{33-p-35}
a_0=-\frac{c_2L_1}{4\pi}.
\end{equation}

\noindent\underline{The second loop.} In this case, we use the results obtained in the works \cite{Ivanov-Akac,Iv-244} on the two-loop coefficient. Let us introduce a few additional objects: three auxiliary numbers
\begin{equation}\label{33-p-37}
\alpha_1(\mathbf{f})=A_0(x)\mathbf{f}(|x|^2)\big|_{x=0},
\end{equation}
\begin{equation}\label{33-p-38}
\frac{\theta_1}{2\pi}=
\int_{\mathbb{R}^{2}}\mathrm{d}^2x\,
\Big(A_0(x)G^{1,\mathbf{f}}(x)\Big)
\Big(G^{1,\mathbf{f}}(x)-
G^{1,\mathbf{f}}(0)\bigg),
\end{equation}
\begin{equation}\label{33-p-38-1}
\frac{\theta_2}{2\pi}=
\int_{\mathbb{R}^{4}}\mathrm{d}^2x\mathrm{d}^2y\,
\Big(A_0(x)G^{1,\mathbf{f}}(x)\Big)
\Big(A_0(y)G^{1,\mathbf{f}}(y)\Big)
\Big(G^{1,\mathbf{f}}(x-y)-
G^{1,\mathbf{f}}(x)\bigg),
\end{equation}
as well as the functional
\begin{equation}\label{33-p-36}
\mathrm{R}_0[\phi]=\int_{\mathbb{R}^2}\mathrm{d}^2x\,\phi^a(x)\phi^a(x),
\end{equation}
which corresponds to the vertex $\mathrm{R}_0$ with two external lines. Then it can be shown that
\begin{equation}\label{33-p-39}
-\frac{\mathbb{H}_0^{\mathrm{sc}}(\Gamma_3^2)}{8(3!)^2}
-\frac{\mathbb{H}_0^{\mathrm{sc}}(\Gamma_4)}{2(4!)}+\hat{\varkappa}_1\stackrel{\mathrm{s.p.}}{=}
-\frac{\Lambda^2c_2\alpha_1(\mathbf{f})}{96\pi}\mathbb{H}_0^{\mathrm{sc}}(\mathrm{R}_0)-\frac{L_1c_2^2}{12(4\pi)^2}S[B](3\theta_1+2\theta_2).
\end{equation}
Thus, comparing with formula \eqref{33-p-30}, we get the vertex and the coefficient
\begin{equation}\label{33-p-40}
\Gamma_{r,2}=-\frac{\Lambda^2c_2\alpha_1(\mathbf{f})}{2\pi}\mathrm{R}_0,\,\,\,
a_1=-\frac{L_1c_2^2}{3(4\pi)^2}(3\theta_1+2\theta_2).
\end{equation}

\noindent\underline{The purpose of the work.} It is necessary to obtain asymptotic expansions with respect to the regularizing parameter for all diagrams from formula \eqref{33-p-31}. Find the type of auxiliary vertices $\Gamma_{r,3}$ and $\Gamma_{r,4}$, and calculate the leading logarithmic singularity. Special cases need to be considered for the formulas obtained.

\section{Results}
\label{33:sec:r}
\subsection{The main statements}
\label{33:sec:osn}
Let us introduce a number of auxiliary functions. First, in addition to the values from \eqref{33-p-38} and \eqref{33-p-38-1}, we define a set of numbers
\begin{equation}\label{33-p-38-2}
	\frac{\theta_{k+1}}{2\pi}=
	\int_{\mathbb{R}^{2\times k}}\mathrm{d}^2y_1\ldots\mathrm{d}^2y_k\,
	\Big(A_0(x)G^{1,\mathbf{f}}(y_1)\Big)\cdot\ldots\cdot
	\Big(A_0(y)G^{1,\mathbf{f}}(y_k)\Big)
	\Big(G^{1,\mathbf{f}}(y_{k-1}-y_k)-
	G^{1,\mathbf{f}}(y_{k-1})\bigg),
\end{equation}
where $k>1$, and also\footnote{The functionals $\alpha_9(\mathbf{f})$ and $\alpha_{11}(\mathbf{f})$ are calculated in Section \ref{33:sec:sp}, see property 4. Although the relation $\alpha_{9}(\mathbf{f})=0$ is fulfilled, the answers for diagrams in Section \ref{33:sec:cl:5} still include this functional. We believe, it is useful for additional verification when restoring calculations that are different from the calculation method of the integral itself.}  
\begin{equation}\label{33-r-24-9}
	\alpha_6(\mathbf{f})=\int_{\mathbb{R}^2}\mathrm{d}^2x\,
	\Big(A_0(x)G^{1,\mathbf{f}}(x)\Big)^2,
\end{equation}
\begin{equation}\label{33-r-24-1-9}
	\alpha_7(\mathbf{f})=\frac{1}{2}\int_{\mathbb{R}^2}\mathrm{d}^2x\,
	\Big(A_0(x)G^{1,\mathbf{f}}(x)\Big)^2\mathbf{f}(|x|^2),
\end{equation}
\begin{equation}\label{33-9-24-52}
	\alpha_8(\mathbf{f})=\pi\int_{\mathbb{R}^2}\mathrm{d}^2x\,
	\Big(A_0(x)G^{1,\mathbf{f}}(x)\Big)^2|x|^2,
\end{equation}
\begin{equation}\label{33-9-24-53}
	\alpha_9(\mathbf{f})=\pi\int_{\mathbb{R}^2}\mathrm{d}^2x\,
	A_0(x)G^{1,\mathbf{f}}(x)x^\mu\partial_{x^\mu}
	\bigg(G^{1,\mathbf{f}}(x)-\int_{\mathbb{R}^2}\mathrm{d}^2y\,G^{1,\mathbf{f}}(x-y)A_0(y)G^{1,\mathbf{f}}(y)\bigg)=0,
\end{equation}
\begin{equation}\label{33-9-24-54}
	\alpha_{10}(\mathbf{f})=\int_{\mathbb{R}^2}\mathrm{d}^2x\,
	\Big(A_0(x)G^{1,\mathbf{f}}(x)\Big)\Big(\partial_{x_\mu}G^{1,\mathbf{f}}(x)\Big)\Big(\partial_{x^\mu}G^{1,\mathbf{f}}(x)\Big),
\end{equation}
\begin{equation}\label{33-9-24-55}
	\alpha_{11}(\mathbf{f})=\pi\int_{\mathbb{R}^2}\mathrm{d}^2x\,
	\Big(A_0(x)G^{1,\mathbf{f}}(x)\Big)\Big(x^\mu\partial_{x^\mu}G^{1,\mathbf{f}}(x)\Big)=-\frac{1}{4}.
\end{equation}
Note that the latter functionals depend only on the regularizing function $\mathbf{f}(\cdot)$. At the same time, they do not depend on the regularizing and auxiliary parameters $\Lambda$ and $\sigma$. Secondly, we define a set of auxiliary vertices
\begin{equation}\label{33-d-28-11}
\mathrm{R}_1[\phi]=\int_{\mathbb{R}^2}\mathrm{d}^2x\,\phi^e(x)B_\mu^{eh}(x)B_\mu^{ha}(x)\phi^a(x),
\end{equation}
\begin{equation}\label{33-d-29-1}
\mathrm{R}_2[\phi]=\int_{\mathbb{R}^2}\mathrm{d}^2x\,
	\Big(f^{abc}\phi^b(x)\phi^d(x)f^{cde}\Big)B_\mu^{eh}(x)B_\mu^{ha}(x),
\end{equation}
\begin{equation}\label{33-w-5-1}
\mathrm{R}_3[\phi]=\int_{\mathbb{R}^2}\mathrm{d}^2x\,
	\Big(f^{a_1ga_2}f^{a_1bc}\phi^b(x)\phi^d(x)f^{cde_1}f^{e_2ge_1}\Big)B_\mu^{e_2h}(x)B_\mu^{ha_2}(x),
\end{equation}
\begin{equation}\label{33-d-28-1-1}
\mathrm{R}_4[\phi]=\int_{\mathbb{R}^2}\mathrm{d}^2x\,\phi^e(x)B_\mu^{ea}(x)\partial_{x_\mu}\phi^a(x),
\end{equation}
\begin{equation}\label{33-z-11}
	\hat{\Gamma}_3[\phi]=\int_{\mathbb{R}^2}\mathrm{d}^2x\,f^{abc}\phi^b(x)f^{ced}\phi^e(x)\Big(\partial_{x_\mu}B^{dg}_{\mu}(x)\Big)f^{gha}\phi^h(x),
\end{equation}
\begin{equation}\label{33-d-28-1-2}
	\widetilde{\Gamma}_4[\phi]=\int_{\mathbb{R}^2}\mathrm{d}^2x\,
	f^{a_1b_1a_2}\phi^{b_1}(x)f^{a_2b_2a_3}\phi^{b_2}(x)
	f^{a_3b_3a_4}\phi^{b_3}(x)f^{a_4b_4a_1}\phi^{b_4}(x).
\end{equation}

\begin{theorem}\label{33-t}
Let all the conditions stated above be fulfilled. Then it is possible to exclude nonlocal (disproportionate to the classical action) singular contributions from the three-loop correction \eqref{33-p-31} using the following choice of auxiliary renormalization vertices
\begin{equation*}
\Gamma_{r,3}=-\frac{L_1}{3\pi}\hat{\Gamma}_3,
\end{equation*}
\begin{equation*}
\Gamma_{r,4}=\Lambda^2\bigg(\frac{\alpha_1}{2\pi}-\frac{5\alpha_6}{2}\bigg)
\widetilde{\Gamma}_4
-\frac{15\Lambda^2c_2^2}{2\pi}\mathrm{R}_0\tau_5-
\frac{5L_1}{32\pi^2}\Big(c_2^2\mathrm{R}_1\tau_1+c_2\mathrm{R}_2\tau_2+\mathrm{R}_3\tau_3+c_2^2\mathrm{R}_4\tau_4\Big),
\end{equation*}
where auxiliary finite numbers are defined by the equalities
\begin{align*}
\tau_1=&\,4\theta_3-10\theta_2-2\theta_1+26\alpha_{11},\\
\tau_2=&\,2\alpha_8+34\theta_3-154\theta_2/5+8\theta_1+14\alpha_{11},\\
\tau_3=&\, 8\alpha_8-4\theta_3-164\theta_2/5+8\theta_1-6\alpha_{11},\\
\tau_4=&\,10\alpha_8-60\theta_2+52\theta_1+44\alpha_{11},\\
\tau_5=&-\mathbf{f}(0)/2+\alpha_7-4\alpha_{10}.
\end{align*}
They depend only on the deforming function $\mathbf{f}(\cdot)$. Singular terms proportional to the classical action $S[B]$ or independent of the background field at all can be reduced by a suitable choice of the coefficients $a_2$ and $\hat{\varkappa}_2$. In this case, the structure of the value $a_2$ has the following form
\begin{equation}\label{33-i-7}
a_2=\frac{a_0a_1}{2}-\frac{c_2^3L_1^2}{16(4!)^2\pi^3}\big(2\tau_1-2\tau_2+2\tau_3+\tau_4\Big)+L_1t,
\end{equation}
where $t\in\mathbb{R}$ and it depends only on the regularizing function. $\mathbf{f}(\cdot)$.
\end{theorem}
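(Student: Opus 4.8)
The plan is to start from the explicit expansion \eqref{33-p-31} of the three-loop coefficient $W_2[B,\Lambda]$ and to determine, term by term, the singular part with respect to $\Lambda$ of every functional occurring in it. The diagrams built purely from the genuine vertices, namely $\mathbb{H}_0^{\mathrm{sc}}(\Gamma_3^4)$, $\mathbb{H}_0^{\mathrm{sc}}(\Gamma_3^2\Gamma_4)$, $\mathbb{H}_0^{\mathrm{sc}}(\Gamma_3\Gamma_5)$, $\mathbb{H}_0^{\mathrm{sc}}(\Gamma_4^2)$ and $\mathbb{H}_0^{\mathrm{sc}}(\Gamma_6)$, are the strongly connected three-loop graphs whose near-diagonal asymptotic expansions are produced in Section \ref{33:sec:cl}, using the region-split representation of $G^{\Lambda,\mathbf{f}}$ from \eqref{33-o-55-1} together with the short-distance expansion of the deformed Green's function from Section \ref{33:sec:cl:1}. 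Each such expansion yields, besides finite pieces, singular terms organised by their $\Lambda$-dependence: a power-law part proportional to $\Lambda^2$ multiplying local quartic-in-$B$ densities, a logarithmic part proportional to $L_1$ (and $L_1^2$) multiplying various local densities, and contributions independent of the background field. The coefficients $\alpha_6,\dots,\alpha_{11}$ and $\theta_1,\theta_2,\theta_3$ are exactly the $\mathbf{f}$-dependent numbers arising here (several internal $\mathbb{R}^2$-integrals are only conditionally convergent and must be regrouped into the subtracted form of the $\theta_k$ integrals), and the identities $\alpha_9=0$, $\alpha_{11}=-1/4$ from Section \ref{33:sec:sp} are then inserted to simplify the result.

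Next I would treat the diagrams that already carry the two-loop counterterm $\Gamma_{r,2}=-\tfrac{\Lambda^2 c_2\alpha_1}{2\pi}\mathrm{R}_0$ fixed in \eqref{33-p-40}, that is $\mathbb{H}_0^{\mathrm{sc}}(\Gamma_3^2\Gamma_{r,2})$, $\mathbb{H}_0^{\mathrm{sc}}(\Gamma_4\Gamma_{r,2})$, $\mathbb{H}_0^{\mathrm{sc}}(\Gamma_{r,2}^2)$, together with the explicit terms $a_0\big(W_1[B,\Lambda]+a_1S[B]/4\big)$ in the last line of \eqref{33-p-31} that come from substituting the series \eqref{33-p-32}. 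These are precisely the ingredients realising the $\mathcal{R}$-operation at three loops: they must cancel the nested one- and two-loop subdivergences of the genuine three-loop graphs. Checking that after this cancellation the surviving singularity is a finite linear combination of the local vertices $\hat{\Gamma}_3$, $\widetilde{\Gamma}_4$, $\mathrm{R}_0$, $\mathrm{R}_1$, $\mathrm{R}_2$, $\mathrm{R}_3$, $\mathrm{R}_4$ and of the classical action $S[B]$ is the conceptual heart of the proof, since this is where renormalizability of the model in the present regularization is actually being verified.

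The remaining purely local singular part is then matched structure by structure. Collecting the cubic-in-$B$ logarithmic term fixes $\Gamma_{r,3}=-\tfrac{L_1}{3\pi}\hat{\Gamma}_3$; the $\Lambda^2$ terms with the quartic tensor structure $\widetilde{\Gamma}_4$ and with the two-point structure $\mathrm{R}_0$ fix the first two summands of $\Gamma_{r,4}$, the latter giving $\tau_5=-\mathbf{f}(0)/2+\alpha_7-4\alpha_{10}$; and the four logarithmic structures $c_2^2\mathrm{R}_1$, $c_2\mathrm{R}_2$, $\mathrm{R}_3$, $c_2^2\mathrm{R}_4$ fix $\tau_1,\dots,\tau_4$ as the stated combinations of $\theta_1,\theta_2,\theta_3,\alpha_8,\alpha_{11}$. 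By construction what is left is proportional to $S[B]$ or background-independent; reading the $S[B]$ coefficient together with the exponentiation implicit in \eqref{33-p-32} and the cross term $a_0(W_1+a_1S[B]/4)$ produces formula \eqref{33-i-7} for $a_2$, and the background-independent residue is removed by a choice of $\hat{\varkappa}_2$.

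The main obstacle is step one: extracting the correct singular asymptotics of the genuinely three-loop diagrams, where one must carefully combine the coincident-point value $L_1$, the region-split form of $G^{\Lambda,\mathbf{f}}$, the near-diagonal expansion of the deformed Green's function, and the conditionally convergent internal integrations. A secondary, but unavoidable, difficulty is the bookkeeping of the $\mathfrak{su}(n)$ tensor structures: many a priori inequivalent contractions of four structure constants appearing in the quartic-in-$B$ divergences must first be reduced, via the Jacobi-type identity and the Casimir relation \eqref{33-p-2}, to the basis $\{c_2^2\mathrm{R}_1,\,c_2\mathrm{R}_2,\,\mathrm{R}_3,\,c_2^2\mathrm{R}_4\}$ before the matching that fixes $\tau_1,\dots,\tau_4$ can even be formulated.
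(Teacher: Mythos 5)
Your strategy for the first half of the theorem — substituting the asymptotic expansions of the five genuinely three-loop diagram classes, combining them with the counterterm diagrams built on $\Gamma_{r,2}$ and with $a_0\big(W_1+a_1S[B]/4\big)$, and then matching the surviving local singular structures $\hat{\Gamma}_3$, $\widetilde{\Gamma}_4$, $\mathrm{R}_0,\ldots,\mathrm{R}_4$ to fix $\Gamma_{r,3}$, $\Gamma_{r,4}$ and the numbers $\tau_1,\ldots,\tau_5$ — is essentially the paper's route to relation \eqref{33-i-4} and the choice of vertices, including the reduction of the group tensors to the basis of the $J_i$ (equivalently $\mathrm{R}_i$) structures.

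There is, however, a genuine gap in how you propose to obtain formula \eqref{33-i-7} for $a_2$. You say one "reads the $S[B]$ coefficient" of the assembled sum, but the diagram expansions that the argument rests on deliberately do \emph{not} supply that coefficient: every lemma in Section \ref{33:sec:cl} bundles the $S[B]$-proportional singularities into unspecified constants $\hat{\kappa}_i$, so there is nothing to read off, and a head-on computation of those coefficients is precisely what the theorem avoids (this is why \eqref{33-i-7} contains the undetermined term $L_1t$). The paper instead isolates the computable $L_1^2$ part by a differentiation trick: since all logarithmic singularities are polynomials in $L_1=\ln(\Lambda/\sigma)+\mathbf{f}(0)/2$ while the vertices and the deformed Green's function are $\sigma$-independent, applying $-\sigma\,\mathrm{d}/\mathrm{d}\sigma$ to $W_2[B,\Lambda]$ kills everything except the counterterm diagrams $\mathbb{H}_0^{\mathrm{sc}}(\Gamma_3\Gamma_{r,3})$, $\mathbb{H}_0^{\mathrm{sc}}(\Gamma_{r,4})$ and the $a_0W_1$ cross term, whose $S[B]$-proportional singular parts \emph{are} explicitly computable (e.g.\ $\mathbb{H}_0^{\mathrm{sc}}(\mathrm{R}_1)\stackrel{\mathrm{s.p.}}{=}-L_1c_2S[B]/\pi$, etc.). Imposing $\dot{W}_2\stackrel{\mathrm{s.p.}}{=}0$ then yields a first-order equation $\dot{a}_2\stackrel{\mathrm{s.p.}}{=}-\frac{c_2^3L_1}{8(4!)^2\pi^3}\big(2\tau_1-2\tau_2+2\tau_3+\tau_4\big)+a_0\dot{a}_1$, whose integration produces both the $L_1^2$ coefficient and the $\tfrac{a_0a_1}{2}$ term of \eqref{33-i-7}; see \eqref{33-i-5}. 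Without this step (or an equivalent explicit computation of all $S[B]$-proportional singular parts, including the finite parts of the two-loop diagram entering $a_0W_1$), your argument establishes the vertices but not the stated structure of $a_2$.
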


\begin{theorem}\label{33-tt} Let the assumptions of Theorem \ref{33-t} be fulfilled. Let $\{\mathrm{K}_i(\cdot)\}_{i=1}^4$ be a set of continuous functions such that the relations hold
	\begin{equation*}
		\mathrm{supp}(\mathrm{K}_i)\subset\mathrm{B}_1,\,\,\,
		\int_{\mathbb{R}^2}\mathrm{d}^2x\,\mathrm{K}_i(x)=\tau_i,
	\end{equation*}
	for all index values $i$. Next, using the functionals from Section \ref{33:sec:kva}, we define four vertices
	\begin{equation*}
		\hat{\mathrm{V}}_i[\phi]=\mathrm{V}_i[\mathrm{K}_i,\phi],
	\end{equation*}
	where $i\in\{1,\ldots,4\}$. Then, as an auxiliary vertex $\Gamma_{r,4}$ from Theorem \ref{33-t}, we can choose a functional of the form
	\begin{multline}\label{33-i-2}
		\Lambda^2\bigg(\frac{\alpha_1}{2\pi}-\frac{5\alpha_6}{2}\bigg)
		\widetilde{\Gamma}_4
		-\frac{15\Lambda^2c_2^2}{2\pi}\mathrm{R}_0\tau_5-
		\frac{5L_1}{32\pi^2}\Big(c_2^2\mathrm{R}_1\tau_1\upsilon_1+c_2\mathrm{R}_2\tau_2\upsilon_2+\mathrm{R}_3\tau_3\upsilon_3+c_2^2\mathrm{R}_4\tau_4\upsilon_4\Big)-\\-\frac{5}{32\pi}
		\Big(c_2\hat{\mathrm{V}}_1\vartheta_1+c_2\hat{\mathrm{V}}_2\vartheta_2+\hat{\mathrm{V}}_3\vartheta_3+c_2\hat{\mathrm{V}}_4\vartheta_4\Big),
	\end{multline}
	in which the coefficients $\{\upsilon_i\}_{i=1}^4$ and $\{\vartheta\}_{i=1}^4$ are subject to the following conditions
	\begin{align*}
		\upsilon_1-\vartheta_1+\vartheta_2-\vartheta_3=&\,1,\\
		\upsilon_2-\vartheta_2+\vartheta_4/2=&\,1,\\
		\upsilon_3-\vartheta_3=&\,1,\\
		\upsilon_4+\vartheta_4=&\,1.
	\end{align*}
	In this case, the third coefficient $a_2$ of the renormalization constant becomes equal to
	\begin{equation}\label{33-i-6}
		a_2\to a_2=\frac{a_0a_1}{2}-\frac{c_2^3L_1^2}{16(4!)^2\pi^3}\big(2\tau_1\upsilon_1-2\tau_2\upsilon_2+2\tau_3\upsilon_3+\tau_4\upsilon_4\Big)+L_1t_1,
	\end{equation}
	where the number $t_1\in\mathbb{R}$ depends on the regularizing function $\mathbf{f}(\cdot)$, the set of coefficients $\{\vartheta_i\}_{i=1}^4$, and the set of kernels $\{\mathrm{K}_i(\cdot)\}_{i=1}^4$, and it does not depend on $\{\upsilon_i\}_{i=1}^4$. In particular, fixing the values as follows $\upsilon_1=\upsilon_2=\upsilon_3=\upsilon_4=0$, we obtain coefficients for quasi-local vertices in the form
	\begin{equation*}
		\vartheta_1=-1/2,\,\,\,
		\vartheta_2=-1/2,\,\,\,
		\vartheta_3=-1,\,\,\,
		\vartheta_4=1.
	\end{equation*}
	In this case, the coefficient for the renormalization constant takes the form
	\begin{equation}\label{33-i-1}
		a_2\to a_2=\frac{a_0a_1}{2}+L_1t_2,
	\end{equation}
	where $t_2\in\mathbb{R}$ depends on the regularizing function $\mathbf{f}(\cdot)$ and the set of kernels $\{\mathrm{K}_i(\cdot)\}_{i=1}^4$.
\end{theorem}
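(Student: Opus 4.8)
The plan is to prove everything by comparison with Theorem~\ref{33-t}. That theorem already exhibits one admissible $\Gamma_{r,4}$, which I denote $\Gamma_{r,4}^{\mathrm{old}}$; write $\Gamma_{r,4}^{\mathrm{new}}$ for the functional \eqref{33-i-2}. In the three-loop correction \eqref{33-p-31} the vertex $\Gamma_{r,4}$ occurs only in the term $-\mathbb{H}_0^{\mathrm{sc}}(\Gamma_{r,4})/(2\cdot 6!)$, since a $\gamma^4$-vertex has nothing to be paired with at this order, so replacing $\Gamma_{r,4}^{\mathrm{old}}$ by $\Gamma_{r,4}^{\mathrm{new}}$ changes $W_2$ only by $-\mathbb{H}_0^{\mathrm{sc}}(\Gamma_{r,4}^{\mathrm{new}}-\Gamma_{r,4}^{\mathrm{old}})/(2\cdot 6!)$ together with the readjustment of $a_2$ and $\hat{\varkappa}_2$ allowed by \eqref{33-p-33}. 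As the $\widetilde{\Gamma}_4$- and $\mathrm{R}_0\tau_5$-pieces coincide in the two candidates,
\begin{multline*}
	\Gamma_{r,4}^{\mathrm{new}}-\Gamma_{r,4}^{\mathrm{old}}=
	-\frac{5L_1}{32\pi^2}\Big(c_2^2\mathrm{R}_1\tau_1(\upsilon_1-1)+c_2\mathrm{R}_2\tau_2(\upsilon_2-1)+\mathrm{R}_3\tau_3(\upsilon_3-1)+c_2^2\mathrm{R}_4\tau_4(\upsilon_4-1)\Big)\\
	-\frac{5}{32\pi}\Big(c_2\hat{\mathrm{V}}_1\vartheta_1+c_2\hat{\mathrm{V}}_2\vartheta_2+\hat{\mathrm{V}}_3\vartheta_3+c_2\hat{\mathrm{V}}_4\vartheta_4\Big),
\end{multline*}
so I will only have to show that $\mathbb{H}_0^{\mathrm{sc}}$ of the right-hand side has a singular part equal to a ($\Lambda$-dependent) number times $S[B]$ plus a background-independent constant: the constant is then absorbed into $\hat{\varkappa}_2$ and the coefficient of $S[B]$ fixes the shift $a_2^{\mathrm{old}}\to a_2^{\mathrm{new}}$. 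The existence of continuous kernels $\mathrm{K}_i$ with $\mathrm{supp}(\mathrm{K}_i)\subset\mathrm{B}_1$ and $\int_{\mathbb{R}^2}\mathrm{K}_i=\tau_i$ is trivial (rescale any bump) and will be dismissed at once.

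The central step is to evaluate $\mathbb{H}_0^{\mathrm{sc}}(\hat{\mathrm{V}}_i)$. The quasi-local vertices $\hat{\mathrm{V}}_i=\mathrm{V}_i[\mathrm{K}_i,\phi]$ of Section~\ref{33:sec:kva} are built so that, when the finite-range kernel is shrunk to the diagonal, they collapse onto the local structures $\mathrm{R}_i$; more precisely, $\mathrm{supp}(\mathrm{K}_i)\subset\mathrm{B}_1$ together with $\int\mathrm{K}_i=\tau_i$ forces the rescaled kernel to behave as $\tau_i\Lambda^{-2}\delta$ up to contributions that stay finite in the limit, after which closing the remaining loop against the near-diagonal expansion of $G_\Lambda$ of Section~\ref{33:sec:cl:1} and reducing the colour factors with \eqref{33-p-2} expresses $\mathbb{H}_0^{\mathrm{sc}}(\hat{\mathrm{V}}_i)$ --- up to a multiple of $S[B]$ and up to finite terms --- as a fixed linear combination of the $\mathbb{H}_0^{\mathrm{sc}}(\mathrm{R}_j)$. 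The same expansion puts each $\mathbb{H}_0^{\mathrm{sc}}(\mathrm{R}_j)$, modulo $S[B]$, into a four-dimensional space of divergent local structures. Demanding that the singular part of $\mathbb{H}_0^{\mathrm{sc}}(\Gamma_{r,4}^{\mathrm{new}}-\Gamma_{r,4}^{\mathrm{old}})$ have no component outside $S[B]$ is then one linear equation per basis structure; since $(\upsilon_i,\vartheta_i)=(1,0)$ --- the choice of Theorem~\ref{33-t}, for which the difference vanishes identically --- must be a solution, the system is exactly the four relations $\upsilon_1-\vartheta_1+\vartheta_2-\vartheta_3=1$, $\upsilon_2-\vartheta_2+\vartheta_4/2=1$, $\upsilon_3-\vartheta_3=1$, $\upsilon_4+\vartheta_4=1$. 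Their triangular shape makes them uniquely solvable for $\{\vartheta_i\}$ given any $\{\upsilon_i\}$, and for $\upsilon_i\equiv 0$ they yield $\vartheta_1=\vartheta_2=-1/2$, $\vartheta_3=-1$, $\vartheta_4=1$.

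To read off $a_2$ I would use a power count in $L_1$: in $\Gamma_{r,4}^{\mathrm{new}}$ the counterterms $\mathrm{R}_i\tau_i\upsilon_i$ already carry an explicit $L_1$ while the quasi-local terms $\hat{\mathrm{V}}_i\vartheta_i$ carry none, and closing one loop brings in $G_\Lambda$ on the diagonal, which by \eqref{33-p-25} and \eqref{33-o-55-1} equals $(L_1/\pi)\delta^{ab}$ up to lower-order terms, i.e.\ one more factor $L_1$. Hence the $\mathrm{R}_i$-part of $\mathbb{H}_0^{\mathrm{sc}}(\Gamma_{r,4}^{\mathrm{new}}-\Gamma_{r,4}^{\mathrm{old}})$ feeds the $L_1^2$-part of the answer and the $\hat{\mathrm{V}}_i$-part only the $L_1$-part. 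So the $L_1^2$-term of $a_2$ is produced by the $\mathrm{R}_i$-counterterms alone; since there they multiply $\mathrm{R}_j$ by $\upsilon_j-1$, adding this to the coefficient $1$ already in \eqref{33-i-7} turns $2\tau_1-2\tau_2+2\tau_3+\tau_4$ into $2\tau_1\upsilon_1-2\tau_2\upsilon_2+2\tau_3\upsilon_3+\tau_4\upsilon_4$, which is \eqref{33-i-6}, whereas the remaining $L_1$-linear piece --- collecting the $\mathbf{f}$-, $\vartheta_i$- and $\mathrm{K}_i$-dependence from the $\hat{\mathrm{V}}_i$-terms and no residual $\upsilon_i$-dependence --- is the number $t_1$. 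Setting $\upsilon_i\equiv 0$ kills the $L_1^2$-term and leaves \eqref{33-i-1} with $t_2=t_1\big|_{\vartheta=(-1/2,-1/2,-1,1)}$.

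The hard part will be the computation just described in the second paragraph --- actually evaluating $\mathbb{H}_0^{\mathrm{sc}}(\hat{\mathrm{V}}_i)$, i.e.\ determining how the quasi-local vertices mix into the basis $\{\mathrm{R}_1,\dots,\mathrm{R}_4,S[B]\}$ and extracting the corresponding $L_1$-linear remainders. This requires pairing the kernels $\mathrm{K}_i$ with the near-diagonal asymptotics of $G_\Lambda$ (Section~\ref{33:sec:cl:1}), separating logarithmic divergences from finite parts, and checking that no structure outside this basis, and no new $\Lambda^2$-divergence, is generated --- precisely where the support condition and the normalisation $\int\mathrm{K}_i=\tau_i$ enter. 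Everything after that --- solving the triangular system, the $L_1$-count, the shape of $a_2$, the special values of $\vartheta_i$ --- is routine.
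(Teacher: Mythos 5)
Your high-level architecture matches the paper's: substitute the new $\Gamma_{r,4}$ into the singular part of $W_2[B,\Lambda]$, demand that the residual nonlocal singular structures (the $L\cdot J_i[B]$ terms of \eqref{33-i-4}) cancel, and absorb what is proportional to $S[B]$ into $a_2$. But the proposal has two genuine gaps. First, you explicitly defer ``the hard part'' --- the evaluation of $\mathbb{H}_0^{\mathrm{sc}}(\hat{\mathrm{V}}_i)$ --- yet this is exactly where the content of the theorem lives. The paper supplies these evaluations in Section \ref{33:sec:kva}, formulas \eqref{33-j-5}--\eqref{33-j-8}: each $\mathbb{H}_0^{\mathrm{sc}}(\hat{\mathrm{V}}_i)$ mixes into a \emph{specific} combination of $J_1,\dots,J_4$ (for instance $\hat{\mathrm{V}}_2$ produces $c_2J_1[B]-J_2[B]$, $\hat{\mathrm{V}}_3$ produces $-c_2^2J_1[B]-J_3[B]$, $\hat{\mathrm{V}}_4$ produces $2c_2J_4[B]-J_2[B]$), and it is precisely this mixing pattern that generates the four linear conditions. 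Your argument ``since $(\upsilon_i,\vartheta_i)=(1,0)$ must be a solution, the system is exactly these four relations'' does not work: a linear system is not determined by exhibiting one of its solutions, and without the explicit mixing coefficients you cannot recover why $\vartheta_2$ and $\vartheta_3$ appear in the first equation, or why $\vartheta_4$ enters the second with the factor $1/2$.

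Second, the $L_1$-power count you use to obtain \eqref{33-i-6} rests on a false premise. You claim the $\hat{\mathrm{V}}_i$-terms feed only the $L_1$-linear part of $a_2$ because they ``carry no explicit $L_1$'' and closing the loop supplies a single factor. But each $\hat{\mathrm{V}}_i$ contains an internal line $G_\Lambda(x,y)$ evaluated at separations $|x-y|\leqslant 1/\Lambda$ (forced by $\mathrm{supp}(\mathrm{K}_i)\subset\mathrm{B}_1$ after rescaling), which contributes an additional factor of $L_1$; accordingly, \eqref{33-j-5}--\eqref{33-j-8} contain genuine $L_1^2S[B]$ terms with coefficients proportional to $\tau_i$ and hence, in \eqref{33-i-2}, to $\vartheta_i$. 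That the $L_1^2$ coefficient of $a_2$ nevertheless depends only on $\{\upsilon_i\}$ is a nontrivial cancellation. The paper obtains it cleanly by applying $-\sigma\,\mathrm{d}/\mathrm{d}\sigma$ to $W_2[B,\Lambda]$ as in \eqref{33-i-5}: the quasi-local vertices are independent of $\sigma$, so they drop out of the differentiated equation, leaving only the $\mathrm{R}_i\tau_i\upsilon_i$ terms, whose known singular parts $\mathbb{H}_0^{\mathrm{sc}}(\mathrm{R}_i)\stackrel{\mathrm{s.p.}}{=}\mathrm{const}\cdot L_1S[B]$ integrate to the $L_1^2$ coefficient in \eqref{33-i-6}; the $\vartheta_i$- and $\mathrm{K}_i$-dependence is thereby confined to the integration constant $L_1t_1$. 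Without either this differentiation argument or a complete explicit bookkeeping of the $L_1^2S[B]$ pieces across all four channels, your derivation of \eqref{33-i-6} and of the $\upsilon$-independence of $t_1$ does not go through.
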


\noindent\textbf{Additional comments on Theorem \ref{33-t}.}
\begin{enumerate}
	\item The three-loop coefficient $a_2$ for the coupling constant in the main order is proportional to the logarithm squared, which is completely consistent with the general theory.
	\item The coefficient $a_2$ does not contain power-law singularities.
	\item In addition to the functionals $\theta_i$ containing the differences of deformed Green's functions, the answer also contains integrals of the special form $\alpha_8$ and $\alpha_{11}$. It can be noted that $\alpha_{11}$ is equal to $-1/4$ and does not depend on the regularizing function. At the same time, $\alpha_8$ is the result of deformation and is somewhat artificial in nature. It would be zero in the absence of regularization. Indeed, we have
	 \begin{equation*}
	 \pi\int_{\mathbb{R}^2}\mathrm{d}^2x\,
	 \Big(A_0(x)G^{1,\mathbf{f}}(x)\Big)^2|x|^2\longrightarrow
	 \pi\int_{\mathbb{R}^2}\mathrm{d}^2x\,
	 \Big(A_0(x)\frac{\ln(|x|\sigma)}{2\pi}\Big)^2|x|^2=
	 \pi\int_{\mathbb{R}^2}\mathrm{d}^2x\,
	 \Big(|x|\delta(x)\Big)^2=0.
	 \end{equation*} 
 \item Auxiliary (counter) vertices do not contain a singularity of the form $\Lambda^2L$.
 \item The triple vertex $\hat{\Gamma}_3$ is an intermediate (auxiliary) object. It can be noted that it is proportional to $\partial_{x_\mu}B^a_\mu(x)$, and thus is involved in solving the classical equation of motion. It is assumed that in subsequent renormalization orders, additional terms will arise, which in total will lead to a functional equal to zero on the set of solutions of the quantum equation of motion. Thus, this amount will be zero due to the choice of the background field.
 \item The part of the coefficient $a_2$ from \eqref{33-i-7}, proportional to $L_1^2$, consists of two parts. The first one is standard and appears in every renormalization scheme. It can be called a "reaction" of the renormalization process on the first two loops. The second term is a regularization feature and a consequence of the requirement for locality of auxiliary vertices $\mathrm{R}_i[\,\cdot\,]$. It depends on the regularizing function $\mathbf{f}(\cdot)$. Note that in dimensional regularization, such a term is zero.
\end{enumerate}

\noindent\textbf{Additional comments on Theorem \ref{33-tt}.}
\begin{enumerate}
	\item The abandonment of the locality of auxiliary vertices and the transition to quasi-local functionals helped solve the problem of the presence of a second term in \eqref{33-i-7}.
	\item The introduced nonlocal vertices $\mathrm{\hat{V}}_i[\,\cdot\,]$ are not the only option. In particular, they depend on the kernels of $\mathrm{K}_i$. Moreover, we can enter vertices that differ significantly from the existing ones, see Section \ref{33:sec:zak}.
	\item After adjusting the coefficient $a_2$, the renormalized coupling constant can be used to calculate the $\beta$-function. Indeed, we can rewrite formula \eqref{33-p-32} as
	\begin{equation*}
\gamma_0^2=\gamma^2+a_0^{\phantom{1}}\gamma^4+(a_0^2+a_1^{\phantom{1}})\gamma^6+(a_0^3+2a_0^{\phantom{1}}a_1^{\phantom{1}}+a_2^{\phantom{1}})\gamma^8+\ldots,
	\end{equation*}
	then for the equation
	\begin{equation*}
\Lambda\frac{\mathrm{d}\gamma^2_0}{\mathrm{d}\Lambda}=\beta_1^{\phantom{1}}\gamma^4_0+\beta_2^{\phantom{1}}\gamma^6_0+\beta_3^{\phantom{1}}\gamma^8_0+\ldots
	\end{equation*}
	we obtain the following coefficients
	\begin{equation*}
		\beta_1=-\frac{c_2}{4\pi}
		,\,\,\,
		\beta_2=-\frac{c_2^2}{3(4\pi)^2}(3\theta_1+2\theta_2)
		,\,\,\,
		\beta_3=t_2
		.
	\end{equation*}
	\item The problem of calculating $t_2$ for the proposed deformation remains open.
\end{enumerate}

\subsection{Proofs}
\noindent\textbf{Proof of Theorem \ref{33-t}.}
Let us consider formula \eqref{33-p-31} for the functional $W_2[B,\Lambda]$ and substitute into it explicit asymptotic expansions with respect to the regularizing parameter $\Lambda$ for each individual term. Calculations of such decompositions are carried out in Section \ref{33:sec:cl}, therefore we provide only references to the corresponding decompositions:
\begin{align*}
\mathbb{H}_0^{\mathrm{sc}}(\Gamma_3\Gamma_5)&\longrightarrow
\mbox{see Lemma \ref{33-lem-3} in Section \ref{33:sec:cl:2}},
\\
\mathbb{H}_0^{\mathrm{sc}}\big(\Gamma_3^4\big)&\longrightarrow
\mbox{see Lemma \ref{33-lem-1} in Section \ref{33:sec:cl:3}},
\\
\mathbb{H}_0^{\mathrm{sc}}(\Gamma_3^2\Gamma_4^{\phantom{1}})&\longrightarrow
\mbox{see Lemma \ref{33-lem-13} in Section \ref{33:sec:sum1}},
\\
\mathbb{H}_0^{\mathrm{sc}}(\Gamma_4^2)&\longrightarrow
\mbox{see Lemma \ref{33-lem-14} in Section \ref{33:sec:sum2}},
\\
\mathbb{H}_0^{\mathrm{sc}}(\Gamma_6)&\longrightarrow
\mbox{see Lemma \ref{33-lem-10} in Section \ref{33:sec:cl:6}}.
\end{align*}
Since the statements are proved for the more general case, the free parameters $\sigma_1$--$\sigma_5$ should be chosen such that equalities hold $\ln(\Lambda/\sigma_i)=L_1$ for all $i\in\{1,\ldots,5\}$.
Additionally, we use the decomposition for the two-loop diagram:
\begin{equation}\label{33-m-6}
\mathbb{H}_0^{\mathrm{sc}}\big(\Gamma_4\big)\longrightarrow
\mbox{see Lemma \ref{33-lem-15} in Section \ref{33:sec:sum2}}.
\end{equation}
It has been studied in the works \cite{Ivanov-Akac,Iv-244}. Note that the finite parts were preserved in the last relation, since the diagram is included in $W_1[B,\Lambda]$ and, thus, multiplied by the renormalization coefficient $a_0$. Further, after additional summation, we come to the relation
\begin{align}\nonumber
W_2[B,\Lambda]\stackrel{\mathrm{s.p.}}{=}&\,
	\bigg(\frac{\Lambda^2\alpha_1}{4(6!)\pi}-\frac{\Lambda^2\alpha_6}{4!4!}\bigg)\mathbb{H}_0^{\mathrm{sc}}(\widetilde{\Gamma}_4)-\frac{c_2^2}{96\pi}\Lambda^2J_5[B]\tau_5-\frac{L_1}{12(3!5!)\pi}\mathbb{H}_0^{\mathrm{sc}}(\Gamma_3\hat{\Gamma}_3)
	\\\label{33-i-4}&-
	\frac{L}{8(4!)^2\pi^2}\Big(c_2^2J_1[B]\tau_1+c_2J_2[B]\tau_2+J_3[B]\tau_3+c_2^2J_4[B]\tau_4\Big)
	\\\nonumber
	&-\frac{\mathbb{H}_0^{\mathrm{sc}}(\Gamma_3\Gamma_{r,3})}{4(3!5!)}
	-\frac{\mathbb{H}_0^{\mathrm{sc}}(\Gamma_{r,4})}{2(6!)}+\hat{\upsilon}_1S[B]+\hat{\upsilon}_2,
\end{align}
where $\hat{\upsilon}_1$ and $\hat{\upsilon}_2$ are singular coefficients with respect to the regularizing parameter $\Lambda$. They are independent of the background field.
Next, we note that the $J$-functionals, the definitions for which are in Section \ref{33:sec:dia}, can be replaced via auxiliary vertices as follows
\begin{equation}\label{33-i-3}
J_5[B]\longrightarrow\frac{1}{2}\mathbb{H}_0^{\mathrm{sc}}(\mathrm{R}_0),\,\,\,
J_i[B]\longrightarrow\frac{1}{2}\mathbb{H}_0^{\mathrm{sc}}(\mathrm{R}_i),
\end{equation}
where $i=1,2,3,4$. However, such a substitution can only lead to a shift in the coefficients $\hat{\upsilon}_1$ and $\hat{\upsilon}_2$. Therefore, by choosing the vertices $\Gamma_{r,3}$ and $\Gamma_{r,4}$ as suggested in the statement, all singular nonlocal components are reduced. It is clear that the part proportional to the classical action $S[B]$ can be removed by a suitable choice of the coefficient $a_2$ from \eqref{33-p-32}, see also \eqref{33-p-31}. Let us clarify the structure of the value $a_2$.

Note that during the calculation of the asymptotics in Section \ref{33:sec:cl}, it was shown that all logarithmic singularities are constructed using monomials $L_1^k$, where $k>0$. Thus, the main logarithmic contribution can be calculated by differentiating by the auxiliary parameter $\sigma$. Indeed, we calculate the derivative in equation \eqref{33-p-31}, and then use equation \eqref{33-p-33} for $i=1$, the absence of dependence of the vertices and the deformed Green's function on the parameter $\sigma$, as well as the fact that such differentiation does not lead to towards new singularities. Note that only the last five terms in \eqref{33-p-31} can make a non-zero contribution. For convenience, the operator $-\sigma\partial_{\sigma}$ we notate by a dot. Then we get the equality
\begin{align}\label{33-i-5}
-\sigma\frac{\mathrm{d}}{\mathrm{d}\sigma}&W_2[B,\Lambda]=-\sigma\frac{\mathrm{d}}{\mathrm{d}\sigma}
\bigg(-\frac{\mathbb{H}_0^{\mathrm{sc}}(\Gamma_3\Gamma_{r,3})}{4(3!5!)}
-\frac{\mathbb{H}_0^{\mathrm{sc}}(\Gamma_{r,4})}{2(6!)}+a_0W_1[B,\Lambda]-\frac{a_2-a_0a_1}{4}S[B]+\hat{\varkappa}_2\bigg)\\\nonumber&
=-\frac{\mathbb{H}_0^{\mathrm{sc}}(\Gamma_3\hat{\Gamma}_3)}{12(3!5!)\pi}+
\frac{1}{16(4!)^2\pi^2}\mathbb{H}_0^{\mathrm{sc}}\big(c_2^2\mathrm{R}_1\tau_1+c_2\mathrm{R}_2\tau_2+\mathrm{R}_3\tau_3+c_2^2\mathrm{R}_4\tau_4\Big)
-\frac{\dot{a}_2-\dot{a}_0a_1}{4}S[B]+\dot{\hat{\varkappa}}_2.
\end{align}
Next, note that the diagram $\mathbb{H}_0^{\mathrm{sc}}(\Gamma_3\hat{\Gamma}_3)$ is finite, while the remaining diagrams can be calculated explicitly
\begin{equation*}
\mathbb{H}_0^{\mathrm{sc}}(\mathrm{R}_1)\stackrel{\mathrm{s.p.}}{=}-\frac{L_1c_2}{\pi}S[B]
,\,\,\,
\mathbb{H}_0^{\mathrm{sc}}(\mathrm{R}_2)\stackrel{\mathrm{s.p.}}{=}\frac{L_1c_2^2}{\pi}S[B]
,\,\,\,
\mathbb{H}_0^{\mathrm{sc}}(\mathrm{R}_3)\stackrel{\mathrm{s.p.}}{=}-\frac{L_1c_2^3}{\pi}S[B]
,\,\,\,
\mathbb{H}_0^{\mathrm{sc}}(\mathrm{R}_4)\stackrel{\mathrm{s.p.}}{=}-\frac{L_1c_2}{2\pi}S[B].
\end{equation*}
Therefore, the requirement $\dot{W}_2[B,\Lambda]\stackrel{\mathrm{s.p.}}{=}0$ leads to a first-order differential equation
\begin{equation*}
\dot{a}_2\stackrel{\mathrm{s.p.}}{=}-\frac{c_2^3L_1}{8(4!)^2\pi^3}\big(2\tau_1-2\tau_2+2\tau_3+\tau_4\Big)
+a_0\dot{a}_1,
\end{equation*}
which can be integrated explicitly and leads to the answer stated in the theorem. The proof is finished.\\

\noindent\textbf{Proof of Theorem \ref{33-tt}.} Let us again consider the representation from \eqref{33-i-4} for the singular part of the third loop correction $W_2[B,\Lambda]$. However, instead of replacing \eqref{33-i-3}, let us use the calculations obtained in Section \ref{33:sec:kva} and perform the substitutions taking into account the fairness of the following transitions
\begin{align*}
	c_2^2\tau_1\upsilon_1\mathbb{H}_0^{\mathrm{sc}}(\mathrm{R}_1)
	+c_2\pi L_1^{-1}\vartheta_1\mathbb{H}_0^{\mathrm{sc}}(\mathrm{\hat{V}}_1)\longrightarrow&\,
	\tau_1\Big(2c_2^2\tau_1\upsilon_1J_1-2c_2^2\vartheta_1J_1\Big),
	\\
	c_2\tau_2\upsilon_2\mathbb{H}_0^{\mathrm{sc}}(\mathrm{R}_2)
	+c_2\pi L_1^{-1}\vartheta_2\mathbb{H}_0^{\mathrm{sc}}(\mathrm{\hat{V}}_2)\longrightarrow&\,
	\tau_2\Big(2c_2\tau_2\upsilon_2J_2+2c_2\vartheta_2(c_2J_1-J_2)\Big),
	\\
	\tau_3\upsilon_3\mathbb{H}_0^{\mathrm{sc}}(\mathrm{R}_3)
	+\pi L_1^{-1}\vartheta_3\mathbb{H}_0^{\mathrm{sc}}(\mathrm{\hat{V}}_3)\longrightarrow&\,
	\tau_3\Big(2\tau_3\upsilon_3J_3+2\vartheta_3(-c_2^2J_1-J_3)\Big),
	\\
	c_2^2\tau_4\upsilon_4\mathbb{H}_0^{\mathrm{sc}}(\mathrm{R}_4)
	+c_2\pi L_1^{-1}\vartheta_4\mathbb{H}_0^{\mathrm{sc}}(\mathrm{\hat{V}}_4)\longrightarrow&\,
	\tau_4\Big(2c_2^2\tau_4\upsilon_4J_4+c_2\vartheta_4(2c_2J_4-J_2)\Big),
\end{align*}
in which the terms proportional to the classical action are omitted.
In this case, the auxiliary coefficients must obey the conditions from Theorem \ref{33-tt}, which is easy to verify after adding up all the relations and comparing the linear combinations from \eqref{33-i-2} and \eqref{33-i-4}. Thus, the possibility of using a vertex of the form \eqref{33-i-2} has been proved.

Turning to obtaining the renormalization coefficient, we use the differentiation by the parameter $\sigma$ and the corresponding formula from \eqref{33-i-5}. Note that the auxiliary quasi-local vertices are independent of the parameter $\sigma$. Therefore, after differentiation, we get the second line from \eqref{33-i-5} with the replacement $\tau_i\to\tau_i\upsilon_i$. Therefore, the third coefficient $a_2$ is obtained by using the same substitution. Next, it is easy to make sure that the diagrams $L_1\mathbb{H}_0^{\mathrm{sc}}(\mathrm{R}_i)$ for $i\in\{1,2,3,4\}$ does not lead to terms proportional to the classical action and the first degree of the logarithm $L_1$, which implies the absence of dependence on the set $\{\upsilon_i\}_{i=1}^4$. The theorem has been proved.

\subsection{Special cases}
\label{33:sec:spec}
Let us look at some explicit examples. To do this, let us pay attention to the constants $\tau_i$ from Theorem \ref{33-t}. They are linear combinations of functionals depending on the deformed free Green's function, and in particular contain $\theta$-functionals. Let us study them in more detail and start with $\theta_1$, see \eqref{33-p-38},
\begin{equation}\label{33-p-38-11}
	\frac{\theta_1}{2\pi}=
	\int_{\mathbb{R}^{2}}\mathrm{d}^2x\,
	\Big(A_0(x)G^{1,\mathbf{f}}(x)\Big)
	\Big(G^{1,\mathbf{f}}(x)-
	G^{1,\mathbf{f}}(0)\bigg).
\end{equation}
Next, we note that the regularization from Section \ref{33:sec:ps} leads to the deformation of the Green's function, which in the momentum representation is reduced to multiplying the function $|k|^{-2}$ by the regularizing factor $\rho(|k|/\Lambda)$. The explicit form of such a factor is not important yet, we only assume that the function $\rho(|k|/\Lambda)/|k|^2$ is bounded and integrable in the exterior of a ball of any radius centered at the origin, and also has the properties
\begin{equation*}
\rho(0)=1,\,\,\,0\leqslant\rho(s)\leqslant1\,\,\,\mbox{for almost all}\,\,\,s>0.
\end{equation*}
Let us switch to \eqref{33-p-38-11} in the momentum representation. Despite the fact that $1/|k|^2$ is a nonintegrable function in two-dimensional space, the difference is well defined
\begin{equation*}
\theta_1=\frac{1}{2\pi}
\int_{\mathbb{R}^{2}}\mathrm{d}^2k\,|k|^{-2}
\Big(\rho^2(|k|/\Lambda)-\rho(|k|/\Lambda)\Big).
\end{equation*}
It instantly follows from the properties of the function $\rho(\cdot)$ that $\theta_1\leqslant0$. In this case, the equality\footnote{This property will be useful later when discussing the standard cutoff in the momentum representation. Thus, the relation is true for a wider class of deformations.} $\theta_1=0$ can be fulfilled if and only if the mapping $\rho(|\cdot|)$ takes values from $\{0,1\}$ almost everywhere. The same reasoning holds for all $\theta_k$, $k>0$.
\begin{lemma}\label{33-ll-1}
Taking into account the type of regularization, the relations $\theta_k\leqslant0$ are correct for $k>0$. The constraint $a_1\geqslant0$ for the two-loop renormalization coefficient is also valid.
\end{lemma}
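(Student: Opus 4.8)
The approach is to push every $\theta_k$ to the momentum representation, read off its sign from the pointwise bounds $0\leqslant\rho\leqslant1$, and then combine $\theta_1$ and $\theta_2$ through \eqref{33-p-40}.

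For $\theta_1$ the work is essentially done in the discussion preceding the lemma: in the momentum representation \eqref{33-p-38-11} becomes $\theta_1=\tfrac{1}{2\pi}\int_{\mathbb{R}^2}\mathrm{d}^2k\,|k|^{-2}\big(\rho^2(|k|)-\rho(|k|)\big)$, and $\rho^2-\rho=\rho(\rho-1)\leqslant0$ pointwise, so $\theta_1\leqslant0$. For $\theta_k$ with $k>1$, defined by \eqref{33-p-38-1}--\eqref{33-p-38-2}, I would run the same argument: replace each factor $A_0(x)G^{1,\mathbf{f}}(\cdot)$ and each propagator $G^{1,\mathbf{f}}(\cdot)$ by its Fourier transform, using that $\widehat{A_0G^{1,\mathbf{f}}}(k)=\rho(|k|)$ with $\int_{\mathbb{R}^2}\mathrm{d}^2x\,A_0(x)G^{1,\mathbf{f}}(x)=\rho(0)=1$, while $\widehat{G^{1,\mathbf{f}}}(k)=\rho(|k|)/|k|^2$ for $k\neq0$, and then collapse all position-space integrations with the convolution theorem. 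The subtraction of the ``single-argument'' propagator in each $\theta_k$ is exactly the device that tames the $|k|^{-2}$ non-integrability at the origin (as in the $\theta_1$ case) and, because the subtracted term carries the unit-normalized factor, leaves behind a single scalar factor $\big(\rho(|k|)-1\big)$. One then arrives at an expression of the shape
\begin{equation*}
\theta_k=\frac{1}{2\pi}\int_{\mathbb{R}^2}\mathrm{d}^2k\;|k|^{-2}\,\rho^{m_k}(|k|)\big(\rho(|k|)-1\big),\qquad m_k\in\mathbb{N},
\end{equation*}
whose integrand is nonpositive since $\rho^{m_k}\geqslant0$ and $\rho\leqslant1$; hence $\theta_k\leqslant0$ for every $k>0$.

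The second assertion is then immediate: inserting $\theta_1\leqslant0$ and $\theta_2\leqslant0$ into the two-loop coefficient \eqref{33-p-40}, $a_1=-\tfrac{L_1c_2^2}{3(4\pi)^2}(3\theta_1+2\theta_2)$, the bracket $3\theta_1+2\theta_2$ is nonpositive, $c_2^2\geqslant0$, and $L_1=L+\mathbf{f}(0)/2=2\pi G^{\Lambda,\mathbf{f}}(0)\geqslant0$ in the regime of interest (for large $\Lambda$ one has $L=\ln(\Lambda/\sigma)>0$), so $a_1\geqslant0$.

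The hard part is combinatorial bookkeeping rather than analysis: one must verify that the chain structure of \eqref{33-p-38-2}, once the convolution theorem is applied, really does collapse to a single radial integral in which a nonnegative power of $\rho$ multiplies the single factor $(\rho-1)$, with no sign-indefinite remainder, and that the origin singularity is controlled by the subtraction together with the normalization $\rho(0)=1$ and the regularity of $\rho$ near zero. The only input beyond the structural facts already in the paper is the sign of $L_1$, which is automatic once the cutoff is taken large.
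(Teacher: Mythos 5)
Your proposal is correct and follows essentially the same route as the paper: the authors also pass to the momentum representation, write $\theta_1=\tfrac{1}{2\pi}\int\mathrm{d}^2k\,|k|^{-2}\rho(\rho-1)$, invoke $0\leqslant\rho\leqslant1$ to get $\theta_1\leqslant0$, and simply remark that "the same reasoning holds for all $\theta_k$," after which $a_1\geqslant0$ follows from \eqref{33-p-40}. Your explicit identification of the collapsed form $|k|^{-2}\rho^{m_k}(\rho-1)$ and the role of the normalization $\rho(0)=1$ in the factorized subtraction term is in fact slightly more detailed than the paper's own argument.
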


Note that the deformation of the Green's function, introduced in Section \ref{33:sec:ps}, in the main decomposition order near the diagonal was reduced to the deformation of the free fundamental solution in some neighborhood of zero (in a ball of radius $1/\Lambda$). This was important for using the averaging operator in the formulation of regularization. In particular, this led to the fact that the support of the function $A_0(x)G^{\Lambda,\mathbf{f}}(x)$ was contained in the ball $\mathrm{B}_{1/\Lambda}$. However, technically, this restriction was not applied in all functionals (diagrams).

Let us try to compare the deformation used with the standard cutoff in the momentum representation. In other words, let us proceed formally to the limiting case. Namely, choose \footnote{In the general case, we can consider a broader class of sets of finite measure. In particular, we can choose $|k|<a\in\mathbb{R}_+$, which eventually reduces to scaling the regularizing parameter.} $\rho(|k|)=\chi(|k|<1)$, where $\chi(\cdot)$ is the characteristic function of the specified set in $\mathbb{R}^2$. Note that in this situation, the function
\begin{equation*}
P(x-y)=\frac{1}{4\pi^2}\int_{\mathbb{R}^2}\mathrm{d}^2k\,e^{ik_\mu(x-y)^{\mu}}\chi(|k|<1)
\end{equation*}
is the kernel of an integral representation for a projector. Thus, the relation is correct
\begin{equation*}
\int_{\mathbb{R}^2}\mathrm{d}^2z\,P(x-z)P(z-y)=P(x-y),
\end{equation*}
which should be understood in the sense of generalized functions, see \cite{Gelfand-1964,Vladimirov-2002}. Therefore, as a deformed combination of $A_0(x)G_{\mathrm{def}}^\Lambda(x)$, we can choose the function $P(x)$, that is
\begin{equation*}
A_0(x)G_{\mathrm{def}}(x)=P(x)=\frac{1}{4\pi^2}\int_{\mathbb{R}^2}\mathrm{d}^2k\,e^{ik_\mu x^{\mu}}\chi(|k|<\Lambda)=\frac{\Lambda J_1(|x|\Lambda)}{2\pi|x|},
\end{equation*}
where in the last equation the definition for the Bessel function was used. This choice instantly leads to the equality $\theta_k=0$ for all $k>0$. And this fact is a positive feature of the standard cutoff in the momentum representation. It minimizes the value of $-\theta_i$. 

However, when trying to calculate, for example, the functional from \eqref{33-9-24-52}, a significant problem appears. Indeed, direct substitution leads to a divergent integral
\begin{equation*}
\alpha_8\Big|_{G^{\Lambda,\mathbf{f}}\to G_{\mathrm{def}}^\Lambda}=\lim_{N\to+\infty}\pi\int_{\mathrm{B}_N}\mathrm{d}^2x\,
\Big(A_0(x)G_{\mathrm{def}}^\Lambda(x)\Big)^2|x|^2=\lim_{N\to+\infty}\frac{1}{2}\int_0^N\mathrm{d}s\,sJ_1^2(s)=+\infty.
\end{equation*}
To understand what the number $\alpha_8$ should be equal to and which singularity it corresponds to, it is necessary to return to the process of obtaining it. When calculating, the functional was determined from the equality
\begin{equation*}
\frac{L\alpha_8}{2\pi}\stackrel{\mathrm{s.p.}}{=}\pi\int_{\mathrm{B}_{1/\sigma}}\mathrm{d}^2x\,
\Big(A_0(x)G^{\Lambda,\mathbf{f}}(x)\Big)^2G^{\Lambda,\mathbf{f}}(x)|x|^2\equiv\mathrm{Y}(G^{\Lambda,\mathbf{f}}),
\end{equation*}
as a coefficient about the logarithmic singularity. Thus, consider $\mathrm{Y}(G_{\mathrm{def}}^\Lambda)$.
For these purposes, we calculate the explicit form of the function $G_{\mathrm{def}}^\Lambda(x)$. Representing the Laplace operator in polar coordinates, we obtain
\begin{equation*}
G_{\mathrm{def}}^\Lambda(x)=\int_{1/\sigma}^{|x|}\mathrm{d}s\,\frac{J_{0}(s\Lambda)-1}{2\pi s}+\mathcal{O}\big(1/\Lambda\big),
\end{equation*}
where the integration used the fact that removing the regularization $\Lambda\to+\infty$ should result in the function $-\ln(|x|\sigma)/2\pi$. The correction term does not depend on the variable $x$. Therefore, after scaling, we have
\begin{equation*}
\mathrm{Y}(G_{\mathrm{def}}^\Lambda)=\int_{0}^{\Lambda/\sigma}\mathrm{d}t\,t
J_1^2(t)\int^{\Lambda/\sigma}_{t}\mathrm{d}s\,\frac{1-J_{0}(s)}{4\pi s}\\
=
\int^{\Lambda/\sigma}_{0}\mathrm{d}s\,\frac{1-J_{0}(s)}{4\pi s}
\int_{0}^{s}\mathrm{d}t\,t
J_1^2(t).
\end{equation*}
Let us differentiate and consider the asymptotic expansion for large $\Lambda$
\begin{align*}
\frac{\mathrm{d}}{\mathrm{d}\Lambda}\mathrm{Y}(G_{\mathrm{def}}^\Lambda)=&\,
\frac{1-J_{0}(\Lambda/\sigma)}{4\pi\Lambda}
\int_{0}^{\Lambda/\sigma}\mathrm{d}t\,t
J_1^2(t)\\=&\,\frac{1}{4\pi^2\sigma}
+\frac{\cos(2\Lambda/\sigma)}{8\pi^2\Lambda}
+\mathcal{O}\big(\Lambda^{-3/2}\big).
\end{align*}
Thus, the function $\mathrm{Y}(G_{\mathrm{def}}^\Lambda)$ has a power-law behavior at high momentum and, therefore, leads to significant differences between the two types of cutoffs. Of course, singular power contributions should not depend on the auxiliary parameter $\sigma$, and therefore such terms should cancel each other out with similar contributions from other functionals. However, further study of the cutoff in the momentum representation is beyond the scope of this work.

\section{Additional definitions}
\label{33:sec:vsvs}

\subsection{Diagram technique}
\label{33:sec:int:1}

Most of the calculations and results contain cumbersome formulas, which makes it convenient to use special-type diagram techniques. Speaking of purely group quantities, two main objects can be distinguished: the Kronecker symbol $\delta^{ab}$ and the totally antisymmetric structure constant $f^{abc}$. The following elements of diagram technique correspond to them
\begin{equation}
	\delta^{ab}=a\ 
	{
		\centering
		\adjincludegraphics[width = 0.7 cm, valign=c]{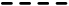}
	}
	\ b,
	\qquad
	f^{abc}=
	\hspace{10pt}
	\mathstrut
	\put(-7, -15){$a$}
	\put(12, 18){$b$}
	\put(30, -15){$c$}
	\begin{minipage}[h]{1.2 cm}
		{
			\centering
			\adjincludegraphics[width = 1 cm, valign=c]{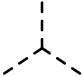}
		}
	\end{minipage}
	\hspace{10pt}.
	\label{33-eq:group_symbols}
\end{equation}
In the second case, the ordering of the indices is important: the sequence $a\to b\to c$ is organized clockwise. Note that well-known identities such as convolutions with two and three indices, as well as the Jacobi identity for structure constants, see \eqref{33-p-2}, can be represented by diagrammatic equalities
\begin{equation}
	{
		\centering
		\adjincludegraphics[width = 1.5 cm, valign=c]{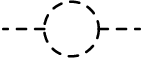}
	}
	=-c_2\
	{
		\centering
		\adjincludegraphics[width = 0.7 cm, valign=c]{fig/Line_1.eps}
	}
	,\qquad
	{
		\centering
		\adjincludegraphics[width = 1 cm, valign=c]{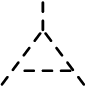}
	}
	=-\frac{c_2}{2}
	{
		\centering
		\adjincludegraphics[width = 1 cm, valign=c]{fig/Star.eps}
	},
	\label{33-eq:tr_S}
\end{equation}
\begin{equation}
	{
		\centering
		\adjincludegraphics[width = 1.2 cm, valign=c]{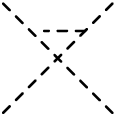}
	}
	=
	{
		\centering
		\adjincludegraphics[width = 1.2 cm, valign=c]{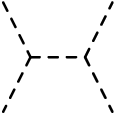}
	}
	-
	{
		\centering
		\adjincludegraphics[width = 1.2 cm, valign=c]{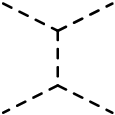}
	}\,\,.
	\label{33-eq:Jacobi}
\end{equation}
The next type of values are objects that, in addition to group indices, also contain spatial arguments. The main ones are the deformed Green's function $G^{ab}_{\Lambda}(x, y)$, as well as its nonlocal part $PS^{ab}_{\Lambda}(x, y)$. They are notated as follows
\begin{equation}
	G_{\Lambda}(x, y)=
	x\ 
	{
		\centering
		\adjincludegraphics[width = 1 cm, valign=c]{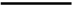}
	}
	\ y,\qquad PS_{\Lambda}(x, y)
	=x\ 
	{
		\centering
		\adjincludegraphics[width = 1 cm, valign=c]{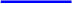}
	}
	\ y,
	\label{33-eq:G_symbols}
\end{equation}
which also implies the presence of group indexes. To avoid unnecessary cumbersomeness, instead of the arguments $x$ and $y$, we use dots (small circles) of the appropriate color, that is
\begin{equation}
	G_{\Lambda}({\color{red}\bullet}, \bullet)=
	{
		\centering
		\adjincludegraphics[width = 1 cm, valign=c]{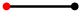}
	},\qquad PS_{\Lambda}({\color{red}\bullet}, \bullet)=
	{
		\centering
		\adjincludegraphics[width = 1 cm, valign=c]{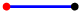}
	}.
	\label{33-eq:G_symbols_color}
\end{equation}
The objects above may have additional structures in the form of the derivative $\partial_{x^\mu}$, the background field $B_\mu(x)$, and the covariant derivative $D_\mu(x)$. Let us relate these symbols according to the following table
\begin{align*}
	\partial_{x^\mu}&\to\,\,\mid,\\
	B_\mu(x)&\to\,\,>,\\
	D_\mu(x)&\to\,\bullet.
\end{align*}
Given the fact that the last operators in calculations occur in pairs that depend on one variable and are summed by a free index, such objects in a pair will have the same color corresponding to the selected variable and different from the colors of the other pairs. For example, the representations are equivalent
\begin{align}
	&\partial_{x_\mu}\overrightarrow{D}_{\mu}(x) G_{\Lambda}(x, y)\longleftrightarrow
	{
		\centering
		\adjincludegraphics[width = 1 cm, valign=c]{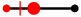}
	},\\&\partial_{x^\mu} PS_{\Lambda}(x, y) \overleftarrow{D}_{\nu}(y)\longleftrightarrow
	{
		\centering
		\adjincludegraphics[width = 1 cm, valign=c]{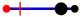}
	},\\&\big( \partial_{x^\mu} PS_{\Lambda}(x, y) \big)_{y = x} B_{\mu}(x)\longleftrightarrow
	{
		\centering
		\adjincludegraphics[width = 1 cm, valign=c]{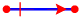}
	}.
\end{align}
At the same time, if the point indicating the argument is internal to the diagram (that is, inside), then it also includes summation by indices and corresponding integration over all possible values of the argument. So, for example, we have
\begin{equation*}
	\int_{\mathbb{R}^2}\mathrm{d}^2x\,\big( \partial_{x^\mu} PS_{\Lambda}^{ab}(x, y) \big)_{y = x} B_{\mu}^{ba}(x)=
	{
		\centering
		\adjincludegraphics[width = 0.8 cm, valign=c]{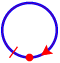}
	}.
\end{equation*}
The next type of objects are vertices. Taking into account the definition from \eqref{33-p-10}, we present them without additional explanations:
\begin{align}
	\label{33-g-3}
	&\Gamma_3=
	f^{abc}
	\int_{\mathbb{R}^2}
	\mathrm{d}^2x\,
	(\partial_\mu \phi^a)\phi^b
	B^{cd}_\mu \phi^d
	=
	{
		\centering
		\adjincludegraphics[width = 1.2 cm, valign=c]{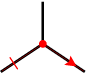}
	},
	\\\label{33-g-4}&\Gamma_4=-f^{abc} f^{cde} \int_{\mathbb{R}^2}
	\mathrm{d}^2x\,(\partial_\mu \phi^a) \phi^b \phi^d D_\mu^{eg} \phi^g
	=-{
		\centering
		\adjincludegraphics[width = 1.8 cm, valign=c]{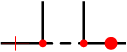}
	},
	\\\label{33-g-5}&\Gamma_5=f^{abc} f^{cde} f^{egh}\int_{\mathbb{R}^2}\mathrm{d}^2x\,
	(\partial_\mu \phi^a) \phi^b \phi^d \phi^g B_\mu^{hp} \phi^p=
	{
		\centering
		\adjincludegraphics[width = 2.4 cm, valign=c]{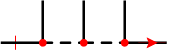}
	},
	\\\label{33-g-6}&\Gamma_6=-f^{abc} f^{cde} f^{egh} f^{hpq}
	\int_{\mathbb{R}^2}\mathrm{d}^2x\,
	(\partial_\mu \phi^a) \phi^b \phi^d \phi^g \phi^pD_\mu^{qr} \phi^r
	=-{
		\centering
		\adjincludegraphics[width = 3 cm, valign=c]{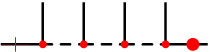}
	}.
\end{align}
Some auxiliary vertices may appear in the proofs as needed. Note that similar designations were proposed in \cite{13} and used in \cite{Ivanov-Kharuk-2020,Ivanov-Kharuk-20222,Ivanov-Akac}.
\subsection{Functionals}
\label{33:sec:dia}
Taking into account the definitions for the elements of the diagrammatic technique, we present a set of auxiliary functionals. However, they can be divided into two types. The first type contains the nonlocal component $PS_\Lambda(x,y)$ squared, as well as derivatives and background fields. Due to their bulkiness, they are presented only in the diagrammatic form:
\begin{equation}\label{33-s-8}
	\mathrm{A}_1=
	{\centering\adjincludegraphics[width = 0.6 cm, valign=c]{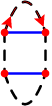}}
	,\,\,\,\mathrm{A}_2=
	{\centering\adjincludegraphics[width = 0.6 cm, valign=c]{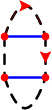}}
	,\,\,\,\mathrm{A}_3=
	{\centering\adjincludegraphics[width = 0.6 cm, valign=c]{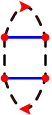}}
	,\,\,\,\mathrm{A}_4=
	{\centering\adjincludegraphics[width = 0.6 cm, valign=c]{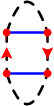}},
\end{equation}
\begin{equation}\label{33-d-107}
	\mathrm{A}_5=
	{\centering\adjincludegraphics[width = 0.6 cm, valign=c]{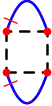}},\,\,\,
	\mathrm{A}_6=
	{\centering\adjincludegraphics[width = 0.6 cm, valign=c]{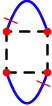}},\,\,\,
	\mathrm{A}_7=
	{\centering\adjincludegraphics[width = 0.6 cm, valign=c]{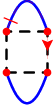}},\,\,\,
	\mathrm{A}_8=
	{\centering\adjincludegraphics[width = 0.6 cm, valign=c]{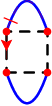}},
\end{equation}
\begin{equation}\label{33-s-9}
	\mathrm{B}_1=
	{\centering\adjincludegraphics[width = 0.8 cm, valign=c]{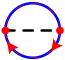}}
	,\,\,\,\mathrm{B}_2=
	{\centering\adjincludegraphics[width = 0.8 cm, valign=c]{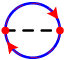}}
	,\,\,\,\mathrm{B}_3=
	{\centering\adjincludegraphics[width = 1.9 cm, valign=c]{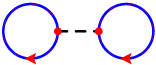}},
\end{equation}
\begin{equation}\label{33-d-98}
	\mathrm{B}_4={\centering\adjincludegraphics[width = 1.9 cm, valign=c]{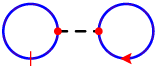}},\,\,\,
	\mathrm{B}_5={\centering\adjincludegraphics[width = 1.9 cm, valign=c]{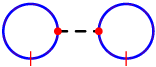}},
\end{equation}
\begin{equation}\label{33-d-106}
\mathrm{B}_6={\centering\adjincludegraphics[width = 0.8 cm, valign=c]{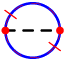}},\,\,\,
\mathrm{B}_7={\centering\adjincludegraphics[width = 0.8 cm, valign=c]{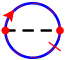}},\,\,\,
\mathrm{B}_8={\centering\adjincludegraphics[width = 0.8 cm, valign=c]{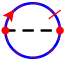}}.
\end{equation}
The second type of functionals can be represented analytically
\begin{equation}\label{33-d-28}
	J_1[B]=\int_{\mathbb{R}^2}\mathrm{d}^2x\,PS_\Lambda^{ae}(x,x)B_\mu^{eh}(x)B_\mu^{ha}(x),
\end{equation}
\begin{equation}\label{33-d-29}
	J_2[B]=\int_{\mathbb{R}^2}\mathrm{d}^2x\,
	\Big(f^{abc}PS_\Lambda^{bd}(x,x)f^{cde}\Big)B_\mu^{eh}(x)B_\mu^{ha}(x),
\end{equation}
\begin{equation}\label{33-w-5}
	J_3[B]=\int_{\mathbb{R}^2}\mathrm{d}^2x\,
	\Big(f^{a_1ga_2}f^{a_1bc}PS_\Lambda^{bd}(x,x)f^{cde_1}f^{e_2ge_1}\Big)B_\mu^{e_2h}(x)B_\mu^{ha_2}(x),
\end{equation}
\begin{equation}\label{33-d-28-1}
	J_4[B]=\int_{\mathbb{R}^2}\mathrm{d}^2x\,B_\mu^{ea}(x)\Big(\partial_{x_\mu}PS_\Lambda^{ae}(x,y)\Big)\Big|_{y=x},
\end{equation}
\begin{equation}\label{33-d-28-3}
	J_5[B]=\int_{\mathbb{R}^2}\mathrm{d}^2x\,PS_\Lambda^{aa}(x,x),
\end{equation}
\begin{equation}\label{33-d-123}
	J_6[B]=\int_{\mathbb{R}^2}\mathrm{d}^2x\,
	\Big(f^{a_1ga_2}f^{a_1bc}PS_\Lambda^{bd}(x,x)f^{cde_1}f^{e_2ge_1}\Big)PS_\Lambda^{e_2a_2}(x,x),
\end{equation}
\begin{equation}\label{33-d-123-1}
	J_7[B]=\int_{\mathbb{R}^2}\mathrm{d}^2x\,
	\Big(f^{a_1bc}PS_\Lambda^{bd}(x,x)f^{cde_1}\Big)PS_\Lambda^{e_1a_1}(x,x).
\end{equation}


\section{Green's function and its properties}
\label{33:sec:cl:1}

As noted above, an unregularized quadratic form is defined by the Laplace operator, which can be written out in local coordinates by equation \eqref{33-o-3}. It is known that the Green's function for the free Laplace operator is equal to $-\ln(|x|\mu)/2\pi$. Here $\mu$ is a formal parameter to make the argument dimensionless, which is fixed taking into account the boundary conditions.
Further, in the process of regularization, all fluctuations are smoothed, except those in quadratic form with the operator $A_0$. 
Thus, the final operator \eqref{33-o-3} in the quadratic form passes into the operator from equation \eqref{33-p-24}.
It is for such an operator that the perturbative decomposition for the Green's function is constructed.
Using the formal application of the second resolvent identity, the Green's function $G_{\mathrm{reg}}^{ab}(x,y)$ for the regularized operator from \eqref{33-p-24} can be written as a perturbative series over the background field.
Finally, the desired deformed Green's function $G_{\Lambda}^{ab}(x,y)$ from equality \eqref{33-p-25}, taking into account the rules of diagram technique, is obtained by applying averaging operators for both variables and, according to previously obtained results, can be written in the following form
\begin{equation}\label{33-o-5}
	G_{\Lambda}^{ab}(x,y)=
	\mathrm{O}(x)\mathrm{O}(y)\big(A^{-1}_{\Lambda}\big)^{ab}(x,y)
	=
	2G^{\Lambda,\mathbf{f}}(x-y;\mu)\delta^{ab}+2\sum_{k=1}^3\eta^{ab}_k(x,y),
\end{equation}
where 
\begin{equation}\label{33-o-55}
	-\frac{\ln(|x|\mu)}{2\pi}\longrightarrow
	G^{\Lambda,\mathbf{f}}(x;\mu)=\,\frac{\mathbf{f}\big(\Lambda,|x|^2\Lambda^2\big)}{4\pi}+
	\frac{1}{2\pi}
	\begin{cases} 
		~~\ln(\Lambda/\mu), & \mbox{for}\,\,\,|x|\leqslant1/\Lambda;\\
		-\ln(|x|\mu), &\mbox{for}\,\,\,|x|>1/\Lambda,
	\end{cases}
\end{equation}
\begin{equation}\label{33-o-6}
	\eta_1^{ab}(x,y)=-\int_{\mathbb{R}^2}\mathrm{d}^2x_1\,
	G^{\Lambda,\mathbf{f}}(x-x_1;\mu)V^{ab}(x_1)G^{\Lambda,\mathbf{f}}(x_1-y;\mu),
\end{equation}
\begin{equation}\label{33-o-7}
	\eta_2^{ab}(x,y)=\int_{\mathbb{R}^{2\times2}}\mathrm{d}^2x_1\mathrm{d}^2x_2\,
	G^{\Lambda,\mathbf{f}}(x-x_1;\mu)V^{ac}(x_1)G^{\Lambda,\mathbf{f}}(x_1-x_2;\mu)
	V^{cb}(x_2)G^{\Lambda,\mathbf{f}}(x_2-y;\mu),
\end{equation}
\begin{multline}
	\eta_3^{ab}(x,y)=\sum_{k=3}^{+\infty}(-1)^k
	\int_{\mathbb{R}^{2\times k}}\mathrm{d}^2x_1\ldots\mathrm{d}^2x_k\,
	G^{\Lambda,\mathbf{f}}(x-x_1;\mu)\times\\\times V^{ac_1}(x_1)G^{\Lambda,\mathbf{f}}(x_1-x_2;\mu)\cdot\ldots\cdot
	V^{c_{k-1}b}(x_k)G^{\Lambda,\mathbf{f}}(x_k-y;\mu).
\end{multline}
Next, we assume that the kernel of the averaging operator (deformation) is chosen such that the following equality holds
\begin{equation}\label{33-r-31}
	\partial_{x^\mu}G^{\Lambda,\mathbf{f}}(x)\big|_{x=0}=0.
\end{equation}
It is worth noting that, despite the obvious dependence, the representation from \eqref{33-o-5} is very cumbersome, so it is more convenient to use the modified decomposition in real calculations. To do this, we introduce a dimensional fixed parameter $\sigma>0$ and fix the notation $G^{\Lambda,\mathbf{f}}(x)$ for an auxiliary function $G^{\Lambda,\mathbf{f}}(x;\mu)$, then, see \cite{29,vas1,vas2,33,30-1-1}, we can write
\begin{multline}\label{33-r-1}
	G_{\Lambda}^{ab}(x,y)=2G^{\Lambda,\mathbf{f}}(x-y)\delta^{ab}+
	G_{1}^\mu(x-y)\Big(B^{ab}_\mu(x)+B^{ab}_\mu(y)\Big)+\\+
	G_{11}^{\nu\mu}(x-y)\Big(\partial_{x^\nu}B^{ab}_\mu(x)+\partial_{y^\nu}B^{ab}_\mu(y)\Big)+\\+
	G_{2}^{\nu\mu}(x-y)\Big(B^{ac}_\mu(x)B^{cb}_\mu(x)+B^{ac}_\mu(y)B^{cb}_\mu(y)\Big)
	+2PS^{ab}_\Lambda(x,y),
\end{multline}
when $|x-y|<1/\sigma$, and also\footnote{It is assumed that when dividing the support, a unit partition with a suitable smooth function is used, so that no singular functions appear when differentiating.} $G_{\Lambda}^{ab}(x,y)=2PS^{ab}_\Lambda(x,y)$ for $|x-y|\geqslant1/\sigma$. In this case, the function $PS_\Lambda$ has two finite derivatives before and after removing the regularization. We can assume that $PS_\Lambda$ is determined by the latter equality, while the deformed function $G_{\Lambda}^{ab}(x,y)$ is independent of $\sigma$.
Note that the functions $\{G_{1}^\mu, G_{11}^{\nu\mu}, G_{2}^{\nu\mu}\}$ have an arbitrary definition, which consists in the possibility of adding a smooth function. For convenience, we fix and write out their explicit form, and also note a number of properties that they possess. We have
\begin{equation}\label{33-r-21}
	G_{1}^\mu(x)=-
	\int_{\mathrm{B}_{1/\hat{\sigma}}}\mathrm{d}^2y\,
	G^{\Lambda,\mathbf{f}}(x-y)
	\partial_{y_{\mu}}G^{\Lambda,\mathbf{f}}(y),
\end{equation}
\begin{equation}\label{33-r-22}
	G_{11}^{\nu\mu}(x)=-
	\int_{\mathrm{B}_{1/\sigma}}\mathrm{d}^2y\,
	G^{\Lambda,\mathbf{f}}(x-y)
	\big(\delta_{\nu\mu}/2+x^\nu\partial_{y_{\mu}}\big)G^{\Lambda,\mathbf{f}}(y),
\end{equation}
\begin{equation}\label{33-r-23}
	G_{2}^{\nu\mu}(x)=
	\int_{\mathrm{B}_{1/\sigma}}\mathrm{d}^2y\int_{\mathrm{B}_{1/\sigma}}\mathrm{d}^2z\,
	G^{\Lambda,\mathbf{f}}(x-y)
	G^{\Lambda,\mathbf{f}}(y-z)
	\partial_{z_{\nu}}\partial_{z_{\mu}}G^{\Lambda,\mathbf{f}}(z)+\frac{\delta_{\mu\nu}}{16\pi\sigma^2},
\end{equation}
where the parameter $\hat{\sigma}=\hat{\sigma}(\sigma)$ selected in such a way that the equality is satisfied
\begin{equation}\label{33-r-29}
	\partial_{x^\mu}G_{1}^\mu(x)\Big|_{x=0}=
	\int_{\mathrm{B}_{1/\hat{\sigma}}}\mathrm{d}^2y\,
	\Big(\partial_{y^{\mu}}G^{\Lambda,\mathbf{f}}(y)\Big)\Big(
	\partial_{y_{\mu}}G^{\Lambda,\mathbf{f}}(y)\Big)=
	G^{\Lambda,\mathbf{f}}(0)=\frac{2L+\mathbf{f}(0)}{4\pi}\equiv\frac{L_1}{2\pi}
	,
\end{equation}
where $L=\ln(\Lambda/\sigma)$.
Also, in formula \eqref{33-r-23}, the subtracted number was selected based on the following integral representation
\begin{multline}\label{33-r-30}
	\int_{\mathrm{B}_{1/\sigma}}\mathrm{d}^2y\int_{\mathrm{B}_{1/\sigma}}\mathrm{d}^2z\,
	G(y)
	G(y-z)
	\partial_{y_{\nu}}\partial_{y_{\mu}}G(y)=\\=-\frac{\delta_{\mu\nu}}{2}
	\int_{\mathrm{B}_{1/\sigma}}\mathrm{d}^2y\,
	G^2(y)=-\frac{\delta_{\mu\nu}}{4\pi}
	\int_0^{1/\sigma}\mathrm{d}s\,s\big(\ln(s\sigma)\big)^2=-\frac{\delta_{\mu\nu}}{16\pi\sigma^2}
	.
\end{multline}
Additionally, we introduce a number of auxiliary objects that appear in the calculations quite frequently. They have the form
\begin{equation}\label{33-r-29-1}
	\frac{L_2}{2\pi}=
	\int_{\mathrm{B}_{1/\Lambda}}\mathrm{d}^2y\,
	\Big(A_0(y)G^{\Lambda,\mathbf{f}}(y)\Big)G^{\Lambda,\mathbf{f}}(y)=
	\frac{L_1-\ln(\sigma/\hat{\sigma})}{2\pi},
\end{equation}
and for $k>1$
\begin{equation*}\label{33-r-29-2}
	\frac{L_{k+1}}{2\pi}=
	\int_{\mathbb{R}^{2\times k}}\mathrm{d}^2y_1\ldots\mathrm{d}^2y_k\,
	\Big(A_0(y)G^{\Lambda,\mathbf{f}}(y_1)\Big)
	\Big(A_0(y)G^{\Lambda,\mathbf{f}}(y_1-y_2)\Big)
	\ldots
	\Big(A_0(y)G^{\Lambda,\mathbf{f}}(y_{k-1}-y_k)\Big)
	G^{\Lambda,\mathbf{f}}(y_k).
\end{equation*}
Given the definition of numbers \eqref{33-p-38}, \eqref{33-p-38-1}, and \eqref{33-p-38-2}, it is possible to check the identities
\begin{equation}\label{33-bb-1}
	L_{k+1}=L_k+\theta_k\,\,\,\mbox{for}\,\,\,k\geqslant1,\,\,\,
	\ln(\sigma/\hat{\sigma})=-\theta_1,
\end{equation}
as well as
\begin{equation}\label{33-bb-2}
	\int_{\mathrm{B}_{1/\Lambda}}\mathrm{d}^2y\,
	\Big(A_0(y)G^{\Lambda,\mathbf{f}}(y)\Big)\partial_{y^\mu}G^{\mu}_1(y)=
	\frac{L_3-\theta_1}{2\pi}=\frac{L_1+\theta_2}{2\pi}.
\end{equation}
We note several useful properties that follow from spherical symmetry and explicit calculation of integrals. In addition to the functionals from Section \ref{33:sec:osn}, we define the value\footnote{The notations for $\alpha_2$--$\alpha_4$ are omitted, as they were used in other works. In particular, in \cite{Iv-244} it was introduced $\alpha_3(\mathbf{f})=2\theta_2/3$.}
\begin{equation}\label{33-r-24-4}
	\alpha_5(\mathbf{f})=-\frac{3}{16\pi}\bigg(\int_0^1\mathrm{d}s\,\mathbf{f}(s)-1\bigg).
\end{equation}
\begin{lemma}\label{33-lem-7}
	Taking into account all of the above, the following relations are true
	\begin{equation}\label{33-r-24}
		G_{1}^\mu(0)=0,\,\,\,G_{11}^{\nu\mu}(0)=0,\,\,\,
		\partial_{x_\rho}G_{11}^{\nu\mu}(x-y)\big|_{y=x}=0,\,\,\,
		\partial_{x_\rho}G_{2}^{\nu\mu}(x-y)\big|_{y=x}=0,
	\end{equation}
	\begin{equation}\label{33-r-25}
		G_{2}^{\nu\mu}(0)=\frac{L\alpha_{5}(\mathbf{f})}{\Lambda^2}\delta_{\mu\nu}+\mathcal{O}\big(1/\Lambda^2\big),
	\end{equation}
	\begin{equation}\label{33-r-28}
		\partial_{x^\mu}G_{\Lambda}^{ab}(x,y)\big|_{y=x}=
		\frac{L_1}{2\pi}B_\mu^{ab}(x)+2\partial_{x^\mu}PS_{\Lambda}^{ab}(x,y)\big|_{y=x}
		+\mathcal{O}\big(L/\Lambda^2\big),
	\end{equation}
	\begin{equation}\label{33-r-32}
		D_{\mu}^{ac}(x)G_{\Lambda}^{cb}(x,y)\big|_{y=x}=
		-\frac{L_1}{2\pi}B_\mu^{ab}(x)+2D_{\mu}^{ac}(x)PS_{\Lambda}^{cb}(x,y)\big|_{y=x}
		+\mathcal{O}\big(L/\Lambda^2\big),
	\end{equation}
	\begin{equation}\label{33-r-26}
		\partial_{x^\mu}G_{\Lambda}^{ab}(x,y)\big|_{y=x}+
		\partial_{y^\mu}G_{\Lambda}^{ab}(x,y)\big|_{y=x}=
		2\partial_{x^\mu}PS_{\Lambda}^{ab}(x,y)\big|_{y=x}+
		2\partial_{y^\mu}PS_{\Lambda}^{ab}(x,y)\big|_{y=x}+\mathcal{O}\big(L/\Lambda^2\big),
	\end{equation}
	\begin{equation}\label{33-r-27}
			D_{\mu}^{ac}(x)G_{\Lambda}^{cb}(x,y)\big|_{y=x}+
			D_{\mu}^{bc}(y)G_{\Lambda}^{ac}(x,y)\big|_{y=x}=
		2	D_{\mu}^{ac}(x)PS_{\Lambda}^{cb}(x,y)\big|_{y=x}+
		2	D_{\mu}^{bc}(y)PS_{\Lambda}^{ac}(x,y)\big|_{y=x}+\mathcal{O}\big(L/\Lambda^2\big).
	\end{equation}
\end{lemma}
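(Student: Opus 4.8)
\noindent\textbf{Proof plan for Lemma \ref{33-lem-7}.}
The plan is to obtain every relation by substituting the explicit formulas \eqref{33-r-21}--\eqref{33-r-23} into the near-diagonal decomposition \eqref{33-r-1}, exploiting throughout that $G^{\Lambda,\mathbf{f}}$ is radial (hence even, with odd gradient) and that, by the obvious change of integration variable, $G_1^\mu(\cdot)$ transforms as a vector and $G_{11}^{\nu\mu}(\cdot),G_2^{\nu\mu}(\cdot)$ as symmetric $2$-tensors under rotations of their arguments. First I would settle \eqref{33-r-24}: in $G_1^\mu(0)$ the integrand equals $\tfrac12\partial_{y_\mu}\big(G^{\Lambda,\mathbf{f}}(y)\big)^2$, an odd function, so its integral over the centred ball vanishes; the same odd-integrand mechanism (equivalently: gradient at the origin of an even function) kills $\partial_{x_\rho}G_{11}^{\nu\mu}(0)$ and $\partial_{x_\rho}G_2^{\nu\mu}(0)$. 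The remaining equalities $G_1^\mu(0)=0$, $G_{11}^{\nu\mu}(0)=0$ reduce to elementary radial integrals; the only genuinely nonzero, $\sigma$-dependent, $\mathcal{O}(1)$ contribution arising here is the double convolution in \eqref{33-r-23}, and it is exactly this contribution that the added constant $\tfrac{\delta_{\mu\nu}}{16\pi\sigma^2}$ removes, as recorded in \eqref{33-r-30} --- i.e.\ it is the ``free smooth additive part'' mentioned before the lemma.

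The heart of the statement is \eqref{33-r-25}. Here I would take the trace: by rotational symmetry $G_2^{\nu\mu}(0)=\tfrac12\delta^{\nu\mu}G_2^{\rho\rho}(0)$, and contracting $\partial_{z_\nu}\partial_{z_\mu}$ turns $G_2^{\rho\rho}(0)$ into a double convolution of $G^{\Lambda,\mathbf{f}}$ against $\Delta_z G^{\Lambda,\mathbf{f}}(z)=-A_0(z)G^{\Lambda,\mathbf{f}}(z)$, the latter being (up to sign) a regularised delta supported in $\mathrm{B}_{1/\Lambda}$ with unit mass. Splitting each $G^{\Lambda,\mathbf{f}}$ into its bare part $-\ln(|\cdot|\sigma)/2\pi$ plus the deformation correction (which lives in $\mathrm{B}_{1/\Lambda}$): the all-bare piece is cancelled by the constant in \eqref{33-r-23} via \eqref{33-r-30}, while every surviving term is localised to $|y|,|z|\lesssim 1/\Lambda$ and therefore carries a volume factor $\Lambda^{-2}$ times at most one logarithm $L$ coming from a $G^{\Lambda,\mathbf{f}}$ evaluated on the annulus. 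Computing these $\mathcal{O}(L/\Lambda^2)$ terms in polar coordinates, using only the normalisation and support of $\omega$ together with the definition \eqref{33-o-55-1}, yields the stated coefficient $\alpha_5(\mathbf{f})=-\tfrac{3}{16\pi}\big(\int_0^1\mathbf{f}(s)\,\mathrm{d}s-1\big)$; the leftover $\mathcal{O}(1/\Lambda^2)$ is harmless.

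With \eqref{33-r-24}--\eqref{33-r-25} in hand, relations \eqref{33-r-28}--\eqref{33-r-27} are bookkeeping. For \eqref{33-r-28} I would differentiate \eqref{33-r-1} in $x^\mu$ and set $y=x$: the $\delta^{ab}$-term vanishes by \eqref{33-r-31}; the $G_{11}$-term vanishes since $G_{11}^{\nu\mu}(0)=0$ and $\partial_{x_\rho}G_{11}^{\nu\mu}(0)=0$; the $G_2$-term is $\mathcal{O}(L/\Lambda^2)$ by \eqref{33-r-25} (plus a piece killed by $\partial_{x_\rho}G_2^{\nu\mu}(0)=0$); the $PS$-term gives $2\partial_{x^\mu}PS_\Lambda^{ab}(x,y)|_{y=x}$; and the $G_1$-term gives $2\partial_{x^\mu}G_1^\rho(0)\,B_\rho^{ab}(x)$ because $G_1^\rho(0)=0$, where vector covariance forces $2\partial_{x^\mu}G_1^\rho(0)=\delta_{\mu\rho}\,L_1/2\pi$ once the trace is fixed by \eqref{33-r-29}, producing the coefficient $L_1/2\pi$. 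Then \eqref{33-r-32} follows by writing $D_\mu^{ac}(x)=\delta^{ac}\partial_{x^\mu}-B_\mu^{ac}(x)$ and combining \eqref{33-r-28} with the diagonal value $G_\Lambda^{ab}(x,x)=\tfrac{L_1}{\pi}\delta^{ab}+2PS_\Lambda^{ab}(x,x)+\mathcal{O}(L/\Lambda^2)$ (using $G_1^\mu(0)=G_{11}^{\nu\mu}(0)=0$, $G_2^{\nu\mu}(0)=\mathcal{O}(L/\Lambda^2)$ and $2G^{\Lambda,\mathbf{f}}(0)=L_1/\pi$ from \eqref{33-r-29}): the coefficients combine as $\tfrac{L_1}{2\pi}-\tfrac{L_1}{\pi}=-\tfrac{L_1}{2\pi}$ and the $PS$-pieces reassemble into $2D_\mu^{ac}(x)PS_\Lambda^{cb}(x,y)|_{y=x}$. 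Finally \eqref{33-r-26} is proved exactly like \eqref{33-r-28} but applying $\partial_{x^\mu}+\partial_{y^\mu}$: on arguments $x-y$ the two derivatives act with opposite sign, so the $\tfrac{L_1}{2\pi}B_\mu$ contribution (from the $G_1$-term, and any from the $\delta^{ab}$-term) cancels in pairs, leaving only the $PS$-part and the $\mathcal{O}(L/\Lambda^2)$ remainder; and \eqref{33-r-27} follows from \eqref{33-r-26} and \eqref{33-r-32} together with $G_\Lambda^{ab}(x,y)=G_\Lambda^{ba}(y,x)$, $PS_\Lambda^{ab}(x,y)=PS_\Lambda^{ba}(y,x)$ and $B_\mu^{ba}=-B_\mu^{ab}$.

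I expect the main obstacle to be \eqref{33-r-25}: all other steps are parity bookkeeping together with reuse of the trace identities \eqref{33-r-29}--\eqref{33-r-30}, whereas pinning down $\alpha_5(\mathbf{f})$ demands an honest computation of the subleading $\mathcal{O}(L/\Lambda^2)$ artifact of the double convolution near the origin and a careful check that its $\sigma$-dependent $\mathcal{O}(1)$ part cancels exactly against the constant inserted in \eqref{33-r-23} --- this being the only point at which the detailed profile of the deforming function $\mathbf{f}(\cdot)$ actually enters.
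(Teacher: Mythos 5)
Your proposal is correct and follows essentially the same route as the paper: relations \eqref{33-r-24} and \eqref{33-r-28}--\eqref{33-r-27} are obtained from parity/spherical symmetry of the densities together with the explicit near-diagonal decomposition \eqref{33-r-1} and the normalizations \eqref{33-r-29}, \eqref{33-r-31}, exactly as the paper asserts. Your treatment of \eqref{33-r-25} --- reducing to the trace by rotational covariance, splitting $G^{\Lambda,\mathbf{f}}$ into the bare logarithm plus a correction supported in $\mathrm{B}_{1/\Lambda}$, cancelling the all-bare piece against the constant $\delta_{\mu\nu}/(16\pi\sigma^2)$ via \eqref{33-r-30}, and extracting the $L/\Lambda^2$ coefficient $\alpha_5(\mathbf{f})$ --- is precisely the computation carried out in the paper's Appendix (Section \ref{33:sec:appl:2}).
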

\begin{proof} All relations except \eqref{33-r-25} are a consequence of the spherical symmetry of the densities and the explicit decomposition from \eqref{33-r-1}. The remaining equality  from \eqref{33-r-25} is presented in Section \ref{33:sec:appl:2}.
\end{proof}
\begin{lemma}\label{33-lem-8} Let us consider three auxiliary matrix-valued operators
	\begin{equation*}
		M_1^{ab}(y)=\int_{\mathbb{R}^2}\mathrm{d}^2x\,
		\Big(\partial_{y_\mu}G_\Lambda^{ad}(y,x)\Big)
		\Big(A^{de}(x)G_\Lambda^{eb}(x,y)\Big),
	\end{equation*}
	\begin{equation*}
		M_2^{ab}(y)=\int_{\mathbb{R}^2}\mathrm{d}^2x\,
		\Big(D_{\mu}^{ac}(y)G_\Lambda^{cd}(y,x)\Big)
		\Big(A^{de}(x)G_\Lambda^{eb}(x,y)\Big),
	\end{equation*}
	\begin{equation*}
		M_3^{ab}(y)=\int_{\mathbb{R}^2}\mathrm{d}^2x\,
		\Big(A^{ac}(y)G_\Lambda^{cd}(y,x)\Big)
		\Big(A^{de}(x)G_\Lambda^{eb}(x,y)\Big).
	\end{equation*}
	Then, taking into account all the above, the following relations are true
	\begin{equation*}
		M^{ab}_1(y)=\frac{L_1+2\theta_2}{2\pi}B_\mu^{ab}
		+\partial_{y_\mu}PS_\Lambda^{ad}(y,x)\big|_{x=y}+\mathcal{O}(L/\Lambda),
	\end{equation*}
	\begin{equation*}
		M_2^{ab}(y)=-\frac{L_1-2\theta_2+2\theta_1}{2\pi}B_\mu^{ab}
		+D_{\mu}^{ac}(y)PS_\Lambda^{ad}(y,x)\big|_{x=y}+\mathcal{O}(L/\Lambda),
	\end{equation*}
	\begin{equation*}
		\frac{1}{2}\Big(M_3^{ab}(y)+M_3^{ba}(y)\Big)=\Lambda^2\alpha_6(\mathbf{f})
		-\frac{\theta_3+\theta_2}{4\pi}B_\mu^{ac}(y)B_\mu^{cb}(y)
		+\mathcal{O}(L/\Lambda).
	\end{equation*}
\end{lemma}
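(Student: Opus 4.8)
All three operators have the same shape: an integral over an internal point $x$ of a product of two lines, both anchored at $y$, with the second line carrying the full operator $A^{de}(x)=A_0(x)\delta^{de}/2+V^{de}(x)/2$. The starting point is the near-diagonal expansion \eqref{33-r-1}: on $\{|x-y|<1/\sigma\}$ one has $G_\Lambda^{eb}(x,y)=2G^{\Lambda,\mathbf{f}}(x-y)\delta^{eb}+(\text{terms linear and quadratic in }B)+2PS_\Lambda^{eb}(x,y)$, and on $\{|x-y|\geqslant1/\sigma\}$ one has $G_\Lambda^{eb}=2PS_\Lambda^{eb}$. Applying $A^{de}(x)$ to the leading term gives $\big(A_0(x)G^{\Lambda,\mathbf{f}}(x-y)\big)\delta^{db}$, which by construction of the deformation is supported in $\mathrm{B}_{1/\Lambda}(y)$ and integrates to $1$; that is, it is a mollifier collapsing the $x$-integral onto an $\mathcal{O}(1/\Lambda)$ neighbourhood of $y$. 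This localization drives every term of the statement.

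First I would split the $x$-integration into the disc $\{|x-y|<1/\sigma\}$, where \eqref{33-r-1} applies, and its complement, where $A_0(x)G^{\Lambda,\mathbf{f}}(x-y)\equiv0$ and all remaining factors are $C^2$ uniformly in $\Lambda$, so that region only feeds the explicit $PS_\Lambda$-pieces of the answer plus an $\mathcal{O}(L/\Lambda)$ remainder. On the near-diagonal disc I substitute \eqref{33-r-1} into both lines, carry out the derivatives ($\partial_{y_\mu}$, $D_\mu^{ac}(y)$, resp.\ $A^{ac}(y)$) acting on the first line and $A^{de}(x)$ on the second, and sort the outcome by the number of background fields produced. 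Power counting in the mollifier width does the triage: an uncompensated factor $B_\mu$, or an extra derivative landing on $A_0(x)G^{\Lambda,\mathbf{f}}(x-y)$, costs a $1/\Lambda$, so only the contributions with $0$, $1$ and -- for $M_3$ only -- $2$ background fields survive modulo $\mathcal{O}(L/\Lambda)$.

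The $0$-$B$ part is pure $G^{\Lambda,\mathbf{f}}$: for $M_1$ and $M_2$ it vanishes by radial symmetry together with \eqref{33-r-31}, while for $M_3$ it equals $\int\mathrm{d}^2x\,\big(A_0G^{\Lambda,\mathbf{f}}\big)^2(x-y)=\Lambda^2\alpha_6(\mathbf{f})\delta^{ab}$ after rescaling $x\mapsto x/\Lambda$ and using \eqref{33-r-24-9}. For the remainder I would treat the mollifier-part and the $PS_\Lambda$-part of $A^{de}(x)G_\Lambda^{eb}(x,y)$ separately. Convolving the mollifier against the $B$-linear blocks of the first line -- after rescaling and Taylor-expanding the smooth background coefficients about $x=y$ -- reduces to the numbers $L_1,L_2,L_3$, which by \eqref{33-bb-1} equal $L_1$, $L_1+\theta_1$, $L_1+\theta_1+\theta_2$, together with the integral \eqref{33-bb-2}; assembling the various terms produces the coefficients $L_1+2\theta_2$ for $M_1$ and $-(L_1-2\theta_2+2\theta_1)$ for $M_2$, the $V^{de}(x)$-part of $A^{de}(x)$ contributing here at the $1$-$B$ level. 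Convolving the $PS_\Lambda$-part of $A^{de}(x)G_\Lambda^{eb}$ against the leading $G^{\Lambda,\mathbf{f}}$-term of the first line, and integrating by parts to move $A_0$ onto the mollifier, collapses to the single $\partial_{y_\mu}PS_\Lambda$, resp.\ $D_\mu PS_\Lambda$, term displayed; the on-diagonal vanishing relations in \eqref{33-r-24} of Lemma \ref{33-lem-7} kill the would-be extra contributions from the $G_{11}$ and $G_2$ blocks. For $M_3$ one symmetrises in $a\leftrightarrow b$ as in the statement, discarding antisymmetric $\mathcal{O}(L/\Lambda)$ debris, and the $2$-$B$ part -- coming from the $G_2^{\nu\mu}$ block of \eqref{33-r-1}, whose diagonal data is controlled by \eqref{33-r-24}--\eqref{33-r-25}, and from the nested convolution of two mollifiers -- produces exactly $-(\theta_3+\theta_2)/(4\pi)\,B_\mu^{ac}(y)B_\mu^{cb}(y)$ through the defining integrals \eqref{33-p-38-1}--\eqref{33-p-38-2}.

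The leading terms (the $\delta$-collapse and the $\Lambda^2\alpha_6$ for $M_3$) are immediate; the bulk of the work, and the main obstacle, is the bookkeeping of the subleading contributions. The delicate point is that the mollifier $A_0(x)G^{\Lambda,\mathbf{f}}(x-y)$ has width $1/\Lambda$ -- the very scale on which the other line varies -- so its action is not a pointwise evaluation but carries $\mathcal{O}(1)$ corrections; these corrections are precisely the $\theta_2$ and $\theta_3$ terms, and extracting them requires expanding the genuinely smooth background factors to the needed order while keeping the $G^{\Lambda,\mathbf{f}}$-differences exact, exactly as in the definitions \eqref{33-p-38}--\eqref{33-p-38-2}. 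A secondary complication is that one must consistently exploit the smooth ambiguity of $G_1^\mu,G_{11}^{\nu\mu},G_2^{\nu\mu}$, fixed by the choices \eqref{33-r-21}--\eqref{33-r-23} and \eqref{33-r-29}, so that the discarded terms genuinely close at order $\mathcal{O}(L/\Lambda)$.
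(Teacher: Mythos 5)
Your plan is correct and coincides with the paper's own (one-line) proof: substitute the expansions \eqref{33-o-5} and \eqref{33-r-1} into the integrals, use that $A_0(x)G^{\Lambda,\mathbf{f}}(x-y)$ is a unit-mass mollifier of width $1/\Lambda$, and integrate by parts, with the $\theta_1,\theta_2,\theta_3$ corrections arising exactly from the finite-width effects via \eqref{33-p-38}--\eqref{33-p-38-2}, \eqref{33-bb-1}, and \eqref{33-bb-2}, and the $\Lambda^2\alpha_6$ term from rescaling $\int(A_0G^{\Lambda,\mathbf{f}})^2$. Your bookkeeping of the $0$-, $1$- and $2$-$B$ contributions is a faithful elaboration of what the paper leaves implicit.
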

\begin{proof}
	Decompositions follow from substitutions of \eqref{33-o-5} and \eqref{33-r-1} into integrals and additional integration by parts.
\end{proof}

\section{Calculation of asymptotic expansions}
\label{33:sec:cl}

\subsection{The case $\mathbb{H}_0^{\mathrm{sc}}(\Gamma_3\Gamma_5)$}
\label{33:sec:cl:2}
Let us calculate the contributions of interest in $\mathbb{H}_0^{\mathrm{sc}}(\Gamma_3\Gamma_5)$ containing singularities according to the regularizing parameter $\Lambda$. Note that all strongly connected diagrams for the product of the vertices $\Gamma_3$ and $\Gamma_5$ contain exactly one loop (Green's function on the diagonal). Therefore, the relation is true
\begin{equation}\label{33-s-10}
\mathbb{H}_0^{\mathrm{sc}}(\Gamma_3\Gamma_5)=\mathbb{H}_0^{\mathrm{sc}}(\Gamma_3\mathbb{H}_3^{\mathrm{c}}(\Gamma_5)).
\end{equation}
Without taking into account the group and operator elements available in \eqref{33-s-10}, the general view of the diagrams can be represented using the figure
\begin{equation}\label{33-s-31}
	{\centering\adjincludegraphics[width = 2 cm, valign=c]{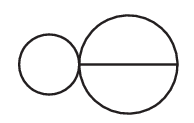}}.
\end{equation}
Next, we note that the diagram group from \eqref{33-s-10} can be divided into two parts
\begin{enumerate}
	\item The first part contains diagrams in which the derivative acts on the Green's function in the loop. The sum of such diagrams is denoted by the symbol $\mathrm{H}_{35}^1$.
	\item The second (remaining) part of the diagrams contains a loop without a derivative and will be indicated by the symbol $\mathrm{H}_{35}^2$.
\end{enumerate}
Thus, we get the following decomposition
\begin{equation}\label{33-w-1}
\mathbb{H}_0^{\mathrm{sc}}(\Gamma_3\mathbb{H}_3^{\mathrm{c}}(\Gamma_5))=\mathrm{H}_{35}^1+
\mathrm{H}_{35}^2.
\end{equation}

\noindent\underline{\textbf{Part} $\mathrm{H}_{35}^1$.} In this case, the diagrams contain four Green's functions, two derivatives, and two integration operators. The intermediate representation can be written as follows
\begin{equation}\label{33-s-11}
\mathrm{H}_{35}^1=\mathbb{H}_0^{\mathrm{sc}}(\Gamma_3\Gamma_{5,1}),
\end{equation}
where
\begin{equation}\label{33-s-1}
\Gamma_{5,1}=\,{\centering\adjincludegraphics[width = 2.1 cm, valign=c]{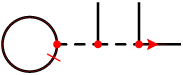}}-
	{\centering\adjincludegraphics[width = 1.5 cm, valign=c]{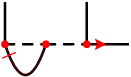}}+
	{\centering\adjincludegraphics[width = 1.5 cm, valign=c]{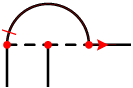}}+
	{\centering\adjincludegraphics[width = 1 cm, valign=c]{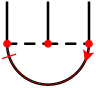}}\,.
\end{equation}
Further, taking into account the main distinguishing feature (the derivative in the loop) of the contributions under consideration, we note that the integral of the product of three Green's functions (without taking into account the loop), where only one of them contains one derivative, does not include nonintegrable densities. Therefore, the only singular component in the diagrams from $\mathrm{H}_{35}^1$ is related to the presence of a singular component in the loop. Thus, using the decomposition for the Green's function near the diagonal, see Lemma \ref{33-lem-7}, the following transition can be made
\begin{equation}\label{33-s-13}
{\centering\adjincludegraphics[width = 1.3 cm, valign=c]{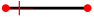}}\,=
\partial_{x^\mu}G_\Lambda(x,y)\big|_{y=x}
\longrightarrow
\frac{L}{2\pi}B_\mu(x)=\frac{L}{2\pi}
\,\,{\centering\adjincludegraphics[width = 1.1 cm, valign=c]{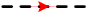}}\,,
\end{equation}
Taking this observation into account, it is advisable to make the replacement
\begin{equation}\label{33-s-2}
\Gamma_{5,1}\longrightarrow\frac{L}{2\pi}
\Bigg(\,{\centering\adjincludegraphics[width = 2.1 cm, valign=c]{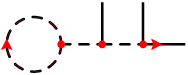}}+
{\centering\adjincludegraphics[width = 1.5 cm, valign=c]{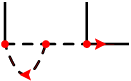}}+
{\centering\adjincludegraphics[width = 1.5 cm, valign=c]{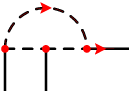}}-
{\centering\adjincludegraphics[width = 1 cm, valign=c]{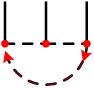}}\,\,\Bigg).
\end{equation}
It is easy to see that the first two diagrams are identically zero due to the antisymmetry of the structure constants, and the last two, using property \eqref{33-eq:Jacobi}, are summarily reduced to the vertex $-\widetilde{\Gamma}_3$, where
\begin{equation}\label{33-s-3}
\widetilde{\Gamma}_3=\,{\centering\adjincludegraphics[width = 1.3 cm, valign=c]{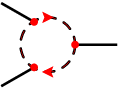}}\,.
\end{equation}
Next, note that by permuting the elements of the last vertex relative to the horizontal axis, we can get the same vertex with a minus sign. This means that $\widetilde{\Gamma}_3=0$, and the final result for the first part can be written as
\begin{equation}\label{33-s-12}
	\mathrm{H}_{35}^1\stackrel{\mathrm{s.p.}}{=}-\frac{L}{2\pi}\mathbb{H}_0^{\mathrm{sc}}(\Gamma_3\widetilde{\Gamma}_3)=0.
\end{equation}

\noindent\underline{\textbf{Part} $\mathrm{H}_{35}^2$.} In the second situation, the loop does not contain a derivative, so the Green's function on the diagonal splits into three parts
\begin{equation}\label{33-s-14}
G_\Lambda^{ab}(x,x)=\frac{L_1}{\pi}\delta^{ab}+2PS_\Lambda^{ab}(x,x)+\mathcal{O}\big(L/\Lambda\big).
\end{equation}
where
\begin{equation}\label{33-s-14-1}
L_1=L+\mathbf{f}(0)/2.
\end{equation}
The latter component does not play an important role, since when regularization is removed, it decreases faster than any degree of the logarithm. In turn, the first two parts lead to non-trivial answers. Let us divide the calculation process into two stages.  With the symbol $\mathrm{H}_{35}^{2,1}$/$\mathrm{H}_{35}^{2,2}$ we notate the diagrams corresponding to the first/second term of \eqref{33-s-14}. Thus, the relation is valid
\begin{equation}\label{33-s-15}
\mathrm{H}_{35}^{2}\stackrel{\mathrm{s.p.}}{=}
\mathrm{H}_{35}^{2,1}+\mathrm{H}_{35}^{2,2},
\end{equation}
Let us start with the first situation. To do this, write out all the possible components of $\mathrm{H}_{35}^{2,1}$, they have the form
\begin{equation}\label{33-s-4}
	\frac{L_1}{\pi}
	\Bigg(\,{\centering\adjincludegraphics[width = 2 cm, valign=c]{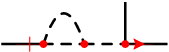}}-
	{\centering\adjincludegraphics[width = 1.7 cm, valign=c]{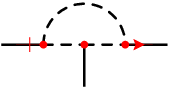}}+
	{\centering\adjincludegraphics[width = 1.7 cm, valign=c]{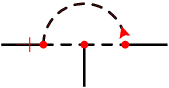}}+
	{\centering\adjincludegraphics[width = 2 cm, valign=c]{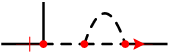}}-
	{\centering\adjincludegraphics[width = 2 cm, valign=c]{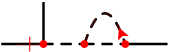}}+
	{\centering\adjincludegraphics[width = 2 cm, valign=c]{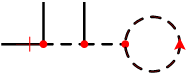}}\,\,\Bigg).
\end{equation}
Next, we define an auxiliary triple vertex $\overline{\Gamma}_3$, which in diagrammatic language can be written as follows
\begin{equation}\label{33-s-5}
	\overline{\Gamma}_3={\centering\adjincludegraphics[width = 1.3 cm, valign=c]{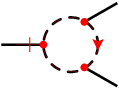}}\,.
\end{equation}
Then, using the properties for the structure constants from \eqref{33-eq:group_symbols}, it is easy to make sure that the third diagram from \eqref{33-s-4} is equal to $-c_2\Gamma_3/2-\overline{\Gamma}_3$, while the rest are all proportional to $\Gamma_3$. Summing up the numerical coefficients, we get the following answer
\begin{equation}\label{33-s-16}
\mathrm{H}_{35}^{2,1}=-\frac{9c_2L_1}{2\pi}\mathbb{H}_0^{\mathrm{sc}}(\Gamma_3\Gamma_3)-\frac{L_1}{\pi}\mathbb{H}_0^{\mathrm{sc}}(\Gamma_3\overline{\Gamma}_3).
\end{equation}
Additionally, we pay attention to the auxiliary decomposition for the vertex $\overline{\Gamma}_3$. To do this, we define another vertex with three external lines of the form
\begin{equation}\label{33-z-1}
\hat{\Gamma}_3=\int_{\mathbb{R}^2}\mathrm{d}^2x\,f^{abc}\phi^b(x)f^{ced}\phi^e(x)\Big(\partial_{x_\mu}B^{dg}_{\mu}(x)\Big)f^{gha}\phi^h(x).
\end{equation}
Then, using integration by parts and property \eqref{33-eq:Jacobi}, we can verify the equality
\begin{equation}\label{33-z-2}
\overline{\Gamma}_3=-\frac{1}{3}\hat{\Gamma}_3-\frac{c_2}{3}\Gamma_3
\end{equation}
and, therefore, rewrite \eqref{33-s-16} as follows
\begin{equation}\label{33-s-16-1}
	\mathrm{H}_{35}^{2,1}=-\frac{25c_2L_1}{6\pi}\mathbb{H}_0^{\mathrm{sc}}(\Gamma_3\Gamma_3)+\frac{L_1}{3\pi}\mathbb{H}_0^{\mathrm{sc}}(\Gamma_3\hat{\Gamma}_3).
\end{equation}
Note that in the second term, the combination $\mathbb{H}_0^{\mathrm{sc}}(\Gamma_3\hat{\Gamma}_3)$ is finite.

Moving on to the diagrams from $\mathrm{H}_{35}^{2,2}$, it is worth additionally dividing the process of finding contributions into two parts. Note that in this case, the loop is replaced by the function $2PS_\Lambda$, so the singularity will follow not because of the Green's function on the diagonal, but because of the appearance of a nonintegrable density. Indeed, the density contains three Green's functions, which are under two derivatives. This fact leads to logarithmic singularities. At the same time, if any Green's function with a derivative is replaced by the function $PS_\Lambda$, then the singularity disappears. Therefore, only the Green's function without derivatives can be replaced with $PS_\Lambda$. Let us first describe the process of calculating contributions containing two functions $PS_\Lambda$. And then we present the answer for the case of a single function $PS_\Lambda$.\\

\noindent\underline{Stage 1.} From the vertex $\Gamma_5$, we construct vertices with three external lines $\mathbb{H}_3^{\mathrm{c}}(\Gamma_5)$ and keep only those that contain a loop without a derivative. Next, in the remaining diagrams, we replace the Green's function $G_\Lambda$ in the loop with the function $2PS_\Lambda$. The result is denoted by the vertex $2\Gamma_{5,1}$, which consists of six diagrams\\

\noindent\underline{Stage 2.} At the vertex $2\Gamma_{5,1}$, we select one of the ends without a derivative and replace its color from black to blue. This is how we transform all diagrams and all external lines. The result is denoted by the vertex $2^{3/2}\Gamma_{5,2}$, which consists of twelve parts and has the following explicit form
\begin{equation}\label{33-s-6}
\Gamma_{5,2}=\,
{\centering\adjincludegraphics[width = 2 cm, valign=c]{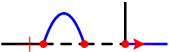}}+
{\centering\adjincludegraphics[width = 2 cm, valign=c]{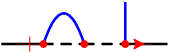}}+
{\centering\adjincludegraphics[width = 1.7 cm, valign=c]{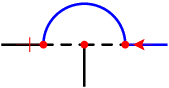}}+
{\centering\adjincludegraphics[width = 1.7 cm, valign=c]{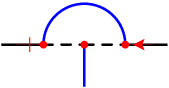}}+
{\centering\adjincludegraphics[width = 1.7 cm, valign=c]{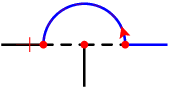}}+
{\centering\adjincludegraphics[width = 1.7 cm, valign=c]{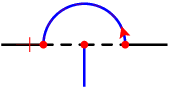}}\,+
\end{equation}
\begin{equation}\label{33-s-7}+
{\centering\adjincludegraphics[width = 2 cm, valign=c]{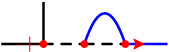}}+
{\centering\adjincludegraphics[width = 2 cm, valign=c]{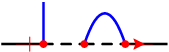}}-
{\centering\adjincludegraphics[width = 2 cm, valign=c]{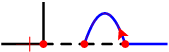}}-
{\centering\adjincludegraphics[width = 2 cm, valign=c]{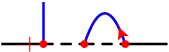}}+
{\centering\adjincludegraphics[width = 2.1 cm, valign=c]{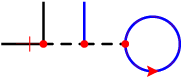}}+
{\centering\adjincludegraphics[width = 2.1 cm, valign=c]{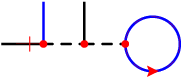}}\,.
\end{equation}
\noindent\underline{Stage 3.} In the vertex $\Gamma_3$, we similarly select the outer line without a derivative and repaint it blue. As a result, we obtain the sum of the two diagrams
\begin{equation}\label{33-s-17}
2^{1/2}\bigg(\,{\centering\adjincludegraphics[width = 1 cm, valign=c]{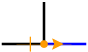}}+
{\centering\adjincludegraphics[width = 1 cm, valign=c]{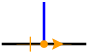}}\,\bigg).
\end{equation}
Next, we note that at the second vertex, it is possible to integrate in parts, as well as transfer the background field from one line to another. At the same time, if we exclude vertices containing the blue line with a derivative or a background field with a derivative (such vertices do not lead to singular contributions), then we can make sure that the second vertex can be replaced by half of the first one. As a result, we come to the diagram $2^{-1/2}3\,\Gamma_{3,1}$, where
\begin{equation}\label{33-s-18}
\Gamma_{3,1}=\,{\centering\adjincludegraphics[width = 1 cm, valign=c]{fig/A3-013.eps}}\,.
\end{equation}
Note that the color of the variable in this case is different from the color in $\Gamma_{5,2}$.\\

\noindent\underline{Stage 4.} Connect vertices from $2^{3/2}\Gamma_{5,2}$ with vertices from $2^{-1/2}3\,\Gamma_{3,1}$ in all possible ways, observing the color matching rule (blue-blue, black-black). It is clear that 24 diagrams can be obtained in this way. However, given the fact that the derivative can be shifted from one black line to another, discarding convergent (without singularities) diagrams in which the derivative acts on the blue line or the background field, we get the rule: connect the black line with/without the derivative only with its own kind, and double the result.\\

\noindent\underline{Stage 5.} Having obtained the final set of 12 diagrams, we use Lemma \ref{33-lem-2} to find the singular component, which, in the language of diagrammatic technique, boils down to the following transformations in each diagram: 
\begin{itemize}
	\item remove derivatives (dashes) from the diagram;
	\item replace black solid lines with black dashed ones;
	\item make the variables and background fields the same color;
	\item add the multiplier\footnote{In this case, it is taken into account that $G_\Lambda(x,y)$ contains in the decomposition $2G^{\Lambda,\mathbf{f}}(x-y)$.} $L/\pi$.
\end{itemize}
Finally, using relations \eqref{33-eq:tr_S} and \eqref{33-eq:Jacobi} and a less trivial equality of the form
\begin{equation}\label{33-s-19}
{\centering\adjincludegraphics[width = 1.7 cm, valign=c]{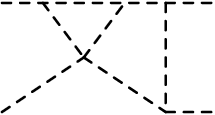}}=-\frac{c_2}{2}\,
{\centering\adjincludegraphics[width = 0.95 cm, valign=c]{fig/Cross_2.eps}}\,-\,
{\centering\adjincludegraphics[width = 1.7 cm, valign=c]{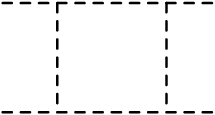}}\,,
\end{equation}
it can be shown that the singular part of $\mathrm{H}_{35}^{2,2}$, containing contributions with two functions $PS_\Lambda$, can be represented as a linear combination of diagrams from Section \ref{33:sec:dia}.
Moreover, it can also be established that the remaining singular component $\mathcal{J}_{35}$ is proportional to the first power of the function $PS_\Lambda(x,x)$ and the second power of the background field $B_\mu(x)$. Thus, the intermediate equality can be written as follows
\begin{equation}\label{33-s-20}
\mathrm{H}_{35}^{2,2}\stackrel{\mathrm{s.p.}}{=}
\frac{30L}{\pi}\Big(\mathrm{A}_1+2\mathrm{A}_2-2\mathrm{A}_3-\mathrm{A}_4\Big)
-\frac{15c_2L}{\pi}
\Big(\mathrm{B}_1+2\mathrm{B}_2+2\mathrm{B}_3\Big)+\mathcal{J}_{35}.
\end{equation}
Let us explain the calculation process using the example of connecting the third vertex from \eqref{33-s-6} and the vertex $\Gamma_{3,1}$ from \eqref{33-s-18}. Indeed, taking into account the second property from \eqref{33-eq:tr_S}, we have\footnote{It is assumed here that the operator $\mathbb{H}_0^{\mathrm{sc}}$ takes into account the rules for connecting colored lines.}
\begin{equation}\label{33-s-21}
\mathbb{H}_0^{\mathrm{sc}}\bigg(\,{\centering\adjincludegraphics[width = 1.7 cm, valign=c]{fig/A3-003.eps}}\,\,
{\centering\adjincludegraphics[width = 1 cm, valign=c]{fig/A3-013.eps}}\,\bigg)\longrightarrow
2\times\frac{L}{\pi}\times\bigg(-\frac{c_2}{2}\bigg)\times
\,{\centering\adjincludegraphics[width = 0.8 cm, valign=c]{fig/A4-006.eps}}\,.
\end{equation}

Next, we complete the analysis of this group of diagrams by searching for the contribution $\mathcal{J}_{35}$ proportional to the function $PS_{\Lambda}$. For these purposes, we introduce the vertex $\Gamma_{5,3}$, which is equal to half of the vertex $\Gamma_{5,2}$, if all the outer lines in the latter are black. Thus, considering that $\Gamma_{5,3}$ contains a single function $PS_{\Lambda}$, it can be argued that the contribution of $\mathcal{J}_{35}$ is equal to the singular part of the diagrams
\begin{equation}\label{33-w-2}
2\mathbb{H}_0^{\mathrm{sc}}(\Gamma_{5,3}\Gamma_3)\Big|_{G_\Lambda(\,\cdot\,,\,\cdot\,)\to
2G^{\Lambda,\mathbf{f}}(\,\cdot\,-\,\cdot\,)}.
\end{equation}
Here, using the last transition, all Green's functions are replaced by the main parts of the expansion from \eqref{33-r-1}. Next, we transform the vertex $\Gamma_3$. If we fix the outer lines, then all possible variants for connecting such a vertex are equal
\begin{equation}\label{33-w-3}
{\centering\adjincludegraphics[reflect, angle=180, width = 1 cm, valign=c]{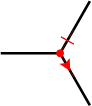}}-
{\centering\adjincludegraphics[ width = 1 cm, valign=c]{fig/A7-007-01.eps}}+
{\centering\adjincludegraphics[width = 1 cm, valign=c]{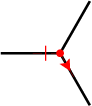}}-
{\centering\adjincludegraphics[reflect, angle=180,width = 1 cm, valign=c]{fig/A7-009-01.eps}}+
{\centering\adjincludegraphics[width = 1 cm, valign=c]{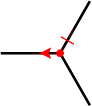}}-
{\centering\adjincludegraphics[reflect, angle=180,width = 1 cm, valign=c]{fig/A7-011-01.eps}}.
\end{equation}
Next, using integration by parts and transfer the background field from one line to another, the last vertices with a fixed position of the lines can be converted into the combination
\begin{equation}\label{33-w-4}
3\,{\centering\adjincludegraphics[reflect, angle=180, width = 1 cm, valign=c]{fig/A7-007-01.eps}}-3\,
	{\centering\adjincludegraphics[ width = 1 cm, valign=c]{fig/A7-007-01.eps}}+
	\Bigg(
	{\centering\adjincludegraphics[width = 1 cm, valign=c]{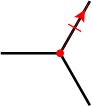}}-
	{\centering\adjincludegraphics[width = 1 cm, valign=c]{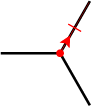}}\,
	\Bigg)-
\Bigg(
{\centering\adjincludegraphics[reflect, angle=180,width = 1 cm, valign=c]{fig/A7-014-01.eps}}+
{\centering\adjincludegraphics[reflect, angle=180,width = 1 cm, valign=c]{fig/A7-013-01.eps}}\,
\Bigg).
\end{equation}
The convenience of this representation is that the vertices in parentheses do not contain lines with a derivative, since the difference $\partial_{x_\mu}B_\mu(x)-B_\mu(x)\partial_{x_\mu}$ is a multiplication operator by a function. Therefore, it is necessary to consider only the first two terms. Thus, formula \eqref{33-w-2} is rewritten as
\begin{equation}\label{33-w-6}
6\mathbb{\hat{H}}_0^{\mathrm{sc}}\Bigg(\Gamma_{5,3}\,{\centering\adjincludegraphics[reflect, angle=180, width = 1 cm, valign=c]{fig/A7-007-01.eps}}\,-
\,\Gamma_{5,3}\,{\centering\adjincludegraphics[ width = 1 cm, valign=c]{fig/A7-007-01.eps}}\,\Bigg)\Bigg|_{G_\Lambda(\,\cdot\,,\,\cdot\,)\to
	2G^{\Lambda,\mathbf{f}}(\,\cdot\,-\,\cdot\,)},
\end{equation}
where the "hat" fixes the external lines. In other words, not all possible combinations of pairings should be considered, but only ones with fixed external lines.

Let us use the functionals from Section \ref{33:sec:dia}, and also define the auxiliary integral
\begin{align}\label{33-w-7}
	\mathrm{I}_1(\Lambda,\sigma)&\,=\,4\int_{\mathbf{B}_{1/\sigma}}\mathrm{d}^2x\,
	\Big(\partial_{x^\mu}G^{\Lambda,\mathbf{f}}(x)\Big)
	\Big(\partial_{x_\mu}G^{\Lambda,\mathbf{f}}(x)\Big)
	G^{\Lambda,\mathbf{f}}(x)\\\nonumber&\,=\,
	2\int_{\mathbf{B}_{1/\Lambda}}\mathrm{d}^2x\,
	\Big(A_0(x)G^{\Lambda,\mathbf{f}}(x)\Big)
	\Big(G^{\Lambda,\mathbf{f}}(x)\Big)^2\\\nonumber&\stackrel{\mathrm{s.p.}}{=}\,
	\frac{2LL_2-L^2}{2\pi^2}\stackrel{\mathrm{s.p.}}{=}\frac{L_1^2+2L\theta_1^{\phantom{1}}}{2\pi^2}.
\end{align}
Then, after a series of transformations, we get the answer in the form
\begin{equation}\label{33-w-8}
\mathcal{J}_{35}\stackrel{\mathrm{s.p.}}{=}\frac{15}{2}\mathrm{I}_1(\Lambda,\sigma)
\Big(2c_2^2J_1[B]-c_2J_2[B]+2J_3[B]\Big).
\end{equation}
For example, let us calculate one term explicitly:
\begin{equation}\label{33-w-9}
-6\mathbb{\hat{H}}_0^{\mathrm{sc}}\Bigg({\centering\adjincludegraphics[width = 2.2 cm, valign=c]{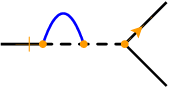}}
\,\,
{\centering\adjincludegraphics[reflect, angle=180, width = 1 cm, valign=c]{fig/A7-007-01.eps}}\,\Bigg)=-6\mathrm{I}_1(\Lambda,\sigma)\,
{\centering\adjincludegraphics[width = 0.6 cm, valign=c]{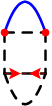}}=
6\mathrm{I}_1(\Lambda,\sigma)J_3[B].
\end{equation}
The final result of all calculations in this section can be presented in the form of a statement.
\begin{lemma}\label{33-lem-3}
Taking into account all of the above, the following relation holds
\begin{align}
\mathbb{H}_0^{\mathrm{sc}}(\Gamma_3\Gamma_5)\stackrel{\mathrm{s.p.}}{=}
&-\frac{25c_2L_1}{6\pi}\mathbb{H}_0^{\mathrm{sc}}(\Gamma_3\Gamma_3)+\frac{L_1}{3\pi}\mathbb{H}_0^{\mathrm{sc}}(\Gamma_3\hat{\Gamma}_3)\\\label{33-s-22}&+
\frac{30L}{\pi}\Big(\mathrm{A}_1+2\mathrm{A}_2-2\mathrm{A}_3-\mathrm{A}_4\Big)
-\frac{15c_2L}{\pi}
\Big(\mathrm{B}_1+2\mathrm{B}_2+2\mathrm{B}_3\Big)\\&+
\frac{15(L_1^2+2L\theta_1^{\phantom{1}})}{4\pi^2}
\Big(2c_2^2J_1[B]-c_2J_2[B]+2J_3[B]\Big).
\end{align}
\end{lemma}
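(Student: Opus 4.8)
The plan is to exploit the natural stratification of the strongly connected diagrams built from the product $\Gamma_3\Gamma_5$. First I would observe that every such diagram contains exactly one loop, i.e.\ one deformed Green's function evaluated on the diagonal, so that $\mathbb{H}_0^{\mathrm{sc}}(\Gamma_3\Gamma_5)=\mathbb{H}_0^{\mathrm{sc}}(\Gamma_3\,\mathbb{H}_3^{\mathrm{c}}(\Gamma_5))$; this reduces the task to understanding the three-external-line vertex $\mathbb{H}_3^{\mathrm{c}}(\Gamma_5)$ and how it is contracted against $\Gamma_3$. I would then split the diagrams according to whether the derivative carried by $\Gamma_5$ acts on the Green's function forming the loop (the part $\mathrm{H}_{35}^1$) or not (the part $\mathrm{H}_{35}^2$), and treat the two cases separately.

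For $\mathrm{H}_{35}^1$ the key remark is that, once the loop is stripped off, the remaining density is an integral of three deformed Green's functions carrying at most one derivative among them, which is integrable; hence the only source of a singularity is the near-diagonal behaviour of the looped Green's function. I would therefore invoke the near-diagonal expansion of Lemma \ref{33-lem-7}, replacing that Green's function by $\tfrac{L}{2\pi}B_\mu$, and then show that the resulting auxiliary vertex collapses: two of its four pieces vanish identically by antisymmetry of the structure constants, while the remaining two combine through the Jacobi identity \eqref{33-eq:Jacobi} into $-\widetilde{\Gamma}_3$, which itself vanishes under reflection of its external ends. This gives $\mathrm{H}_{35}^1\stackrel{\mathrm{s.p.}}{=}0$.

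For $\mathrm{H}_{35}^2$ I would use the diagonal decomposition $G_\Lambda^{ab}(x,x)=\tfrac{L_1}{\pi}\delta^{ab}+2PS_\Lambda^{ab}(x,x)+\mathcal{O}(L/\Lambda)$ and split into $\mathrm{H}_{35}^{2,1}$ and $\mathrm{H}_{35}^{2,2}$ according to the first two terms. The piece $\mathrm{H}_{35}^{2,1}$ is purely combinatorial and group-theoretic: enumerate the six diagrams obtained by contracting $\Gamma_3$ against the loop-carrying pieces of $\mathbb{H}_3^{\mathrm{c}}(\Gamma_5)$, simplify each using \eqref{33-eq:tr_S} (the single diagram that is not already proportional to $\Gamma_3$ producing the auxiliary vertex $\overline{\Gamma}_3$), sum the numerical coefficients, and finally apply the integration-by-parts relation $\overline{\Gamma}_3=-\tfrac13\hat{\Gamma}_3-\tfrac{c_2}{3}\Gamma_3$ to present the result in terms of $\mathbb{H}_0^{\mathrm{sc}}(\Gamma_3^2)$ and the finite diagram $\mathbb{H}_0^{\mathrm{sc}}(\Gamma_3\hat{\Gamma}_3)$. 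The piece $\mathrm{H}_{35}^{2,2}$ is the core of the argument: here the loop is $2PS_\Lambda$, so the singularity comes instead from the non-integrable density formed by three Green's functions under two derivatives, and a power-counting argument shows that only the derivative-free Green's functions may themselves be replaced by $PS_\Lambda$. I would run this in two sub-steps. For the part quadratic in $PS_\Lambda$, I would follow the staged construction: build the auxiliary vertices $\Gamma_{5,1}\to\Gamma_{5,2}$ and $\Gamma_{3,1}$, connect them respecting the colour-matching rule, extract the singular part via Lemma \ref{33-lem-2} (erase derivatives, turn solid lines into dashed ones, equalise the colours of variables and background fields, insert the factor $L/\pi$), and reduce the remaining group structures with \eqref{33-eq:tr_S}, \eqref{33-eq:Jacobi} and the auxiliary identity \eqref{33-s-19}, arriving at the $\mathrm{A}_i,\mathrm{B}_i$ combination. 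For the part linear in $PS_\Lambda$ (the contribution $\mathcal{J}_{35}$), I would isolate the vertex $\Gamma_{5,3}$, reduce the $\Gamma_3$-contractions to the two surviving terms of \eqref{33-w-6} after integration by parts discards the derivative-free vertices, replace every Green's function by its leading part $2G^{\Lambda,\mathbf{f}}$, evaluate the scalar integral $\mathrm{I}_1(\Lambda,\sigma)\stackrel{\mathrm{s.p.}}{=}\tfrac{L_1^2+2L\theta_1}{2\pi^2}$, and collect the $J_1[B],J_2[B],J_3[B]$ structure as in \eqref{33-w-8}. Summing $\mathrm{H}_{35}^1$, $\mathrm{H}_{35}^{2,1}$ and $\mathrm{H}_{35}^{2,2}$ then yields the claimed formula.

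I expect the main obstacle to be the combinatorial bookkeeping inside $\mathrm{H}_{35}^{2,2}$: tracking the full set of contractions (and the reduction of the $24$ pairings to the effective $12$), the colour-matching constraints, and the symmetry factors generated at each staging step, and then carrying out the group-theory reductions carefully enough that precisely the coefficients $30$, $15c_2$ and $\tfrac{15}{2}$ emerge. A secondary subtlety is the power-counting claim that only derivative-free Green's functions can supply $PS_\Lambda$-insertions; this should be argued uniformly for all diagrams in the class rather than verified diagram by diagram.
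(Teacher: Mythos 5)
Your proposal reproduces the paper's own argument essentially step for step: the same one-loop observation reducing to $\mathbb{H}_0^{\mathrm{sc}}(\Gamma_3\,\mathbb{H}_3^{\mathrm{c}}(\Gamma_5))$, the same split into $\mathrm{H}_{35}^1$ and $\mathrm{H}_{35}^2$ (with the vanishing of $\mathrm{H}_{35}^1$ via antisymmetry, Jacobi, and reflection of $\widetilde{\Gamma}_3$), the same diagonal decomposition into $\mathrm{H}_{35}^{2,1}$ and $\mathrm{H}_{35}^{2,2}$, the same reduction through $\overline{\Gamma}_3=-\tfrac13\hat{\Gamma}_3-\tfrac{c_2}{3}\Gamma_3$, and the same staged extraction of the $PS_\Lambda$-quadratic and $PS_\Lambda$-linear singular parts via Lemma \ref{33-lem-2} and the integral $\mathrm{I}_1$. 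The approach is correct and matches the paper.
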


\subsection{The case $\mathbb{H}_0^{\mathrm{sc}}(\Gamma_3^4)$}
\label{33:sec:cl:3}

Turning to the search for asymptotic expansions for the group of integrals from $\mathbb{H}_0(\Gamma_3^4)$, we note that, omitting operator and group notation, each of the available diagrams is equivalent to one of the following pictures
\begin{equation}\label{33-s-28}
{\centering\adjincludegraphics[width = 1.7 cm, valign=c]{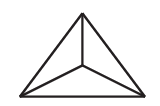}},\,\,\,
{\centering\adjincludegraphics[width = 1.1 cm, valign=c]{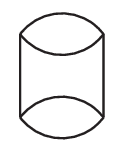}}.
\end{equation}
Thus, the calculation is divided into two parts. In addition, using the previously introduced definitions for the diagram technique, we can verify the validity of the following decomposition
\begin{equation}\label{33-1-3}
\mathbb{H}_2^{\mathrm{sc}}(\Gamma_3^2)=
2\mathrm{C}_1+
\mathrm{C}_2-
2\mathrm{C}_3-
\mathrm{C}_4-
\mathrm{C}_5+
\mathrm{C}_6+
\mathrm{C}_7+
2\mathrm{C}_8+
2\mathrm{C}_9-
\mathrm{C}_{10}-
2\mathrm{C}_{11}-
2\mathrm{C}_{12},
\end{equation}	
where
\begin{equation}\label{33-s-23}
\mathrm{C}_{1}=
{\centering\adjincludegraphics[width = 1.5 cm, valign=c]{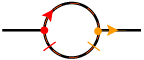}},\,\,\,
\mathrm{C}_{2}=
{\centering\adjincludegraphics[width = 1.5 cm, valign=c]{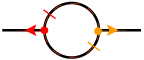}},\,\,\,
\mathrm{C}_{3}=
{\centering\adjincludegraphics[width = 1.5 cm, valign=c]{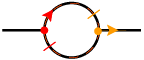}},\,\,\,
\mathrm{C}_{4}=
{\centering\adjincludegraphics[width = 1.5 cm, valign=c]{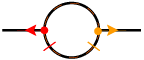}},
\end{equation}
\begin{equation}\label{33-s-24}
\mathrm{C}_{5}=
{\centering\adjincludegraphics[width = 1.5 cm, valign=c]{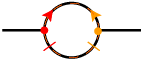}},\,\,\,
\mathrm{C}_{6}=
{\centering\adjincludegraphics[width = 1.5 cm, valign=c]{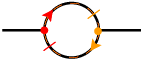}},\,\,\,
\mathrm{C}_{7}=
{\centering\adjincludegraphics[width = 1.5 cm, valign=c]{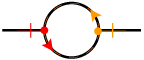}},\,\,\,
\mathrm{C}_{8}=
{\centering\adjincludegraphics[width = 1.5 cm, valign=c]{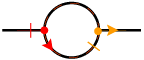}},
\end{equation}
\begin{equation}\label{33-s-25}
\mathrm{C}_{9}=
{\centering\adjincludegraphics[width = 1.5 cm, valign=c]{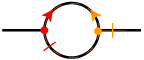}},\,
\mathrm{C}_{10}=
{\centering\adjincludegraphics[width = 1.5 cm, valign=c]{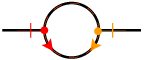}},\,
\mathrm{C}_{11}=
{\centering\adjincludegraphics[width = 1.5 cm, valign=c]{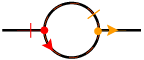}},\,
\mathrm{C}_{12}=
{\centering\adjincludegraphics[width = 1.5 cm, valign=c]{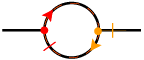}}.
\end{equation}
Next, we define two auxiliary vertices with two external lines
\begin{equation}\label{33-s-26}
\mathrm{X}_{1} =\frac{c_{2}}{4\pi}\int_{\mathbb{R}^{2}}\mathrm{d}^2x\,\phi^{a}(x)B^{ac}_{\mu}(x)B^{ce}_{\mu}(x)\phi^{e}(x)=
\,
{\centering\adjincludegraphics[width = 1 cm, valign=c]{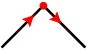}}\,,
\end{equation}
\begin{equation}\label{33-s-27}
\mathrm{X}_{2} = -\frac{1}{2\pi}\int_{\mathbb{R}^{2}}\mathrm{d}^2x\,f^{abc}f^{ced}\phi^{a}(x)\phi^{d}(x)B^{bg}_{\mu}(x)B^{ge}_{\mu}(x).
\end{equation}

\begin{lemma}\label{33-lem-1} Let $\sigma_1>0$. Then, taking into account all the above, the equality is true
\begin{equation}\label{33-s-30}
\mathbb{H}_0^{\mathrm{sc}}\big(\Gamma_3^4\big)\stackrel{\mathrm{s.p.}}{=}
-2^23^3\ln(\Lambda/\sigma_1)\mathbb{H}_0^{\mathrm{sc}}\big(\Gamma_3^2\mathrm{X}_1^{\phantom{1}}\big)-2^23^5\ln^2(\Lambda/\sigma_1)\mathbb{H}_0\big(\mathrm{X}_1^2\big).
\end{equation}	
\end{lemma}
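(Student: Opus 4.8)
The plan is to reduce the whole computation to the single one-loop two-point subdivergence. First I would classify the strongly connected vacuum diagrams built from four $\Gamma_3$ vertices: up to the group, derivative and background-field decorations they are exactly the two topologies drawn in \eqref{33-s-28}, namely the complete graph on four vertices (``tetrahedron'') and the diagram in which two one-loop bubbles are joined into a ring. Since every vertex $\Gamma_n$ carries precisely one derivative, the superficial degree of divergence of a subgraph with $V$ vertices and at least one loop equals $2-V$; hence the tetrahedron itself ($V=4$, $\omega=-2$) is overall ultraviolet convergent and its only loop-subgraphs (the three triangles, $V=3$, $\omega=-1$) converge as well, so this family is finite and drops out of the relation $\stackrel{\mathrm{s.p.}}{=}$. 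The ring topology is therefore the sole source of singularities; each of its diagrams contains one or two \emph{disjoint} one-loop two-point subgraphs, each formed by a pair of $\Gamma_3$ vertices, i.e.\ a copy of $\mathbb{H}_2^{\mathrm{sc}}(\Gamma_3^2)$. (A $\Gamma_3$ self-loop would make the diagram one-particle-reducible and hence absent from $\mathbb{H}_0^{\mathrm{sc}}$, so no tadpole subdivergences occur.)

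Second, I would compute the $\Lambda$-singular part of the elementary block $\mathbb{H}_2^{\mathrm{sc}}(\Gamma_3^2)$ from its decomposition \eqref{33-1-3} into the twelve diagrams $\mathrm{C}_1,\dots,\mathrm{C}_{12}$. In each $\mathrm{C}_i$ the only divergence sits in the one-loop bubble; collapsing the bubble to the diagonal by means of Lemma \ref{33-lem-7} and the normalizations \eqref{33-r-29}, \eqref{33-r-29-1} (together with the diagonal expansions \eqref{33-r-28}, \eqref{33-s-14}) replaces it by a local two-point density with an overall factor $\propto L_1$, the $PS_\Lambda$ remainders being non-singular here. The only a priori possible local two-point structures are $\phi^a B_\mu^{ac}B_\mu^{ce}\phi^e$ (the $\mathrm{X}_1$ structure) and $f^{abc}f^{ced}\phi^a\phi^d B_\mu^{bg}B_\mu^{ge}$ (the $\mathrm{X}_2$ structure), whereas $\phi^a(\partial_\mu B_\mu^{ab})\phi^b$ vanishes identically by the antisymmetry of the structure constants. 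Summing all twelve $\mathrm{C}_i$ with the group identities \eqref{33-eq:tr_S}, \eqref{33-eq:Jacobi} and integration by parts, and re-expressing the $\mathrm{X}_2$ structure through $\mathrm{X}_1$, one obtains $\mathbb{H}_2^{\mathrm{sc}}(\Gamma_3^2)\stackrel{\mathrm{s.p.}}{=}-2\cdot 3^2\,\ln(\Lambda/\sigma_1)\,\mathrm{X}_1$, the free scale $\sigma_1$ replacing $L_1$ once a finite, $\sigma_1$-dependent constant has been split off.

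Finally I would reinsert this block into the ring diagrams. Shrinking one bubble to the local vertex $-2\cdot 3^2\ln(\Lambda/\sigma_1)\mathrm{X}_1$ turns that diagram into $\mathbb{H}_0^{\mathrm{sc}}(\Gamma_3^2\mathrm{X}_1)$, and shrinking both turns it into $\mathbb{H}_0(\mathrm{X}_1^2)$; note that $\mathbb{H}_0(\mathrm{X}_1^2)$ is ultraviolet finite (so $\ln^2(\Lambda/\sigma_1)\mathbb{H}_0(\mathrm{X}_1^2)$ is an elementary singular term), whereas $\mathbb{H}_0^{\mathrm{sc}}(\Gamma_3^2\mathrm{X}_1)$ still carries a logarithmic subdivergence. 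Because the ring is overall convergent, its singularity is produced entirely by its two disjoint subdivergences, so inclusion--exclusion over them fixes both the form and the relative sign, giving per labelled ring contraction a combination $c_1\ln(\Lambda/\sigma_1)\mathbb{H}_0^{\mathrm{sc}}(\Gamma_3^2\mathrm{X}_1)+c_2\ln^2(\Lambda/\sigma_1)\mathbb{H}_0(\mathrm{X}_1^2)$ with $c_1,c_2<0$ built from the bubble coefficient $-18$; tallying the labelled ring contractions (the three partitions of the four $\Gamma_3$ vertices into two bubble-pairs) together with the multiplicities generated by $\mathbb{H}_0^{\mathrm{sc}}$ and the matching conventions for $\mathbb{H}_0^{\mathrm{sc}}(\Gamma_3^2\mathrm{X}_1)$ and $\mathbb{H}_0(\mathrm{X}_1^2)$ then produces exactly the coefficients $-2^2 3^3$ and $-2^2 3^5$. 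I expect the main obstacle to be twofold: (i) checking that the twelve $\mathrm{C}_i$ truly collapse to a pure multiple of $\ln(\Lambda/\sigma_1)\mathrm{X}_1$, with no residual $\mathrm{X}_2$- or equation-of-motion-type contribution and with the $\mathbf{f}$-dependent finite constant absorbed cleanly; and (ii) the combinatorics — the multiplicity of labelled ring contractions, their automorphism factors, and the correct inclusion--exclusion bookkeeping of the two subdivergences — which is what pins down the numbers $2^2 3^3$ and $2^2 3^5$.
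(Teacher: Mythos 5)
Your proposal is correct and follows essentially the same route as the paper: the same split into the convergent tetrahedron topology and the singular ring topology, the same reduction of the building block $\mathbb{H}_2^{\mathrm{c}}(\Gamma_3^2)$ to $-18\ln(\Lambda/\sigma_1)\mathrm{X}_1$ via the twelve diagrams of \eqref{33-1-3} (with the $\mathrm{X}_2$ pieces cancelling), and the same inclusion--exclusion (add-and-subtract) recombination with the factor $3$ from the pairings, yielding $6\cdot 18=2^2 3^3$ and $3\cdot 18^2=2^2 3^5$. The only stylistic difference is that the paper justifies the convergence of the triangular subintegrals by the explicit estimates of Section \ref{33:sec:appl:3} rather than by power counting alone.
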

\begin{proof} The contribution $\mathbb{H}_0\big(\Gamma_3^4\big)$ contains diagrams topologically equivalent to either triangular or cylindrical diagrams, see \eqref{33-s-28}. The first type of diagrams does not provide a singular contribution, which is a consequence of the low dimension, that is $d=2$, the number of integration operators, and the fact that only single lines appear in such diagrams. Auxiliary estimates for integrals appearing in triangular diagrams are given in Section \ref{33:sec:appl:3}.
	
Turning to the cylindrical part, we note that the contributions are not singular if two derivatives act on the side line (left or right), which is a consequence of the convergence of the integral
\begin{equation}\label{33-s-29}
\int_{\mathrm{B}_{1/\sigma}}\mathrm{d}^2x\,A_0^{\phantom{1}}(x)G^{1,\mathbf{f}}(x)=1,
\end{equation}
which in fact does not depend on the regularizing parameter $\Lambda$. This reasoning leads to the fact that it is convenient to cut a cylindrical diagram into two parts (upper and lower), which in diagrammatic language can be represented by the same vertex $\mathbb{H}_2^{\mathrm{c}}(\Gamma_3^2)$, and then investigate this part separately.

Thus, the analysis of the cylindrical diagram reduces to the asymptotic decomposition of the integral $\mathbb{H}_2^{\mathrm{c}}(\Gamma_3^2)$ with respect to the parameter $\Lambda$ as a functional considered on smooth fields. Note that it can be represented as a linear combination of auxiliary vertices, see \eqref{33-1-3}. Taking into account the result from Lemma \ref{33-lem-2} and the Jacobi identity \eqref{33-eq:Jacobi} for structure constants, repeating the calculations from the fifth stage of the proof of Lemma \ref{33-lem-3}, we can verify the validity of the following equalities
\begin{equation*}
\mathrm{C}_1\stackrel{\mathrm{s.p.}}{=}-2\ln(\Lambda/\sigma_1)\mathrm{X}_1,\,\,\,
\mathrm{C}_2\stackrel{\mathrm{s.p.}}{=}-4\ln(\Lambda/\sigma_1)\mathrm{X}_1,\,\,\,
\mathrm{C}_3\stackrel{\mathrm{s.p.}}{=}2\ln(\Lambda/\sigma_1)\mathrm{X}_1,
\end{equation*}
\begin{equation*}
\mathrm{C}_4\stackrel{\mathrm{s.p.}}{=}4\ln(\Lambda/\sigma_1)\mathrm{X}_1,\,\,\,
\mathrm{C}_5\stackrel{\mathrm{s.p.}}{=}2\ln(\Lambda/\sigma_1)\mathrm{X}_2,\,\,\,
\mathrm{C}_6\stackrel{\mathrm{s.p.}}{=}-2\ln(\Lambda/\sigma_1)\mathrm{X}_1+2\ln(\Lambda/\sigma_3)\mathrm{X}_2,
\end{equation*}
\begin{equation*}
\mathrm{C}_i\stackrel{\mathrm{s.p.}}{=}0,
\end{equation*}
where $i\in\{7,\ldots,12\}$. Next, for convenience, we define two functionals $\mathrm{C}=\mathbb{H}_2^{\mathrm{c}}(\Gamma_3^2)$ and $\mathrm{C}_0=-18\mathrm{X}_1$. Then, based on the last calculations, we can make sure that the equality $\mathrm{C}\stackrel{\mathrm{s.p.}}{=}\ln(\Lambda/\sigma_1)\mathrm{C}_0$ holds and the chain of relations is true
\begin{align*}
\mathbb{H}_0^{\mathrm{sc}}\big(\Gamma_3^4\big)&\stackrel{\mathrm{s.p.}}{=}
3\mathbb{H}_0^{\mathrm{sc}}\big(\mathbb{H}_2^{\mathrm{c}}(\Gamma_3^2)\mathbb{H}_2^{\mathrm{c}}(\Gamma_3^2)\big)=
3\mathbb{H}_0^{\mathrm{sc}}\big(\mathrm{C}^2\big)\\&\,=\,
3\ln^2(\Lambda/\sigma_1)\mathbb{H}_0^{\mathrm{sc}}\big(\mathrm{C}_0^2\big)+6\ln(\Lambda/\sigma_1) \mathbb{H}_0^{\mathrm{sc}}\big((\mathrm{C}-\ln(\Lambda/\sigma_1)\mathrm{C}_0^{\phantom{1}})\mathrm{C}_0^{\phantom{1}}\big)
+3\mathbb{H}_0^{\mathrm{sc}}\big((\mathrm{C}-\ln(\Lambda/\sigma_1)\mathrm{C}_0^{\phantom{1}})^2\big)\\&\stackrel{\mathrm{s.p.}}{=}
6\ln(\Lambda/\sigma_1)\mathbb{H}_0^{\mathrm{sc}}\big(\mathrm{C}\mathrm{C}_0^{\phantom{1}}\big)-3\ln^2(\Lambda/\sigma_1)\mathbb{H}_0^{\mathrm{c}}\big(\mathrm{C}_0^2\big).
\end{align*}
From the last equalities the main statement follows.
\end{proof}

\subsection{The case $\mathbb{H}_0^{\mathrm{sc}}(\Gamma_3^2\Gamma_4^{\phantom{1}})$}
\label{33:sec:cl:4}
Let us move on to a group of diagrams that can be constructed using two triple vertices and one quartic vertex. Such a group can be divided into two parts, the first of which includes diagrams containing the Green's function on the diagonal, and the second does not. Omitting operator and group elements, they can be schematically represented as follows 
\begin{equation}\label{33-d-1}
{\centering\adjincludegraphics[angle=-90, width = 1.9 cm, valign=c]{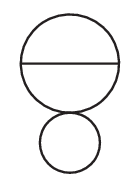}},\,\,\,
{\centering\adjincludegraphics[width = 1.6 cm, valign=c]{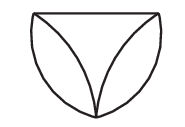}}.
\end{equation}
Thus, we obtain the decomposition
\begin{equation}\label{33-d-2}
\mathbb{H}_0^{\mathrm{sc}}(\Gamma_3^2\Gamma_4^{\phantom{1}})=\mathrm{H}_{334}^1+\mathrm{H}_{334}^2,
\end{equation}
where $\mathrm{H}_{334}^1$ contains diagrams corresponding to the first type, and $\mathrm{H}_{334}^2$ contains diagrams corresponding to the second category. We consider them separately.

\subsubsection{Part with one loop $\mathbb{H}_2^{\mathrm{c}}(\Gamma_4)$}
Let us move on to the first situation. In this case, it is convenient to use the expansion from \eqref{33-1-3}, taking into account the analysis of singularities from the proof of Lemma \ref{33-lem-1} and writing out the following chain of relations
\begin{align}\label{33-d-3}
\mathrm{H}_{334}^1&=\mathbb{H}_0^{\mathrm{sc}}\big(\Gamma_3^2\mathbb{H}_2^{\mathrm{c}}(\Gamma_4^{\phantom{1}})\big)=\mathbb{H}_0^{\mathrm{sc}}\big(\mathbb{H}_2^{\mathrm{c}}(\Gamma_3^2)\mathbb{H}_2^{\mathrm{c}}(\Gamma_4^{\phantom{1}})\big)\\\nonumber&=
-18\ln(\Lambda/\sigma_2)\mathbb{H}_0^{\mathrm{sc}}\big(\mathrm{X}_1^{\phantom{1}}\mathbb{H}_2^{\mathrm{c}}(\Gamma_4^{\phantom{1}})\big)+
\mathbb{H}_0^{\mathrm{sc}}\big(
\big[\mathbb{H}_2^{\mathrm{c}}(\Gamma_3^2)+18\ln(\Lambda/\sigma_2)\mathrm{X}_1^{\phantom{1}}
\big]
\mathbb{H}_2^{\mathrm{c}}(\Gamma_4^{\phantom{1}})\big),
\end{align}
where $\sigma_2>0$ is an auxiliary parameter. Next, we represent $\mathbb{H}_2^{\mathrm{c}}(\Gamma_4)$ as the sum of three terms
\begin{equation}\label{33-d-4}
\mathbb{H}_2^{\mathrm{c}}(\Gamma_4^{\phantom{1}})=\mathrm{D}_1^{\phantom{1}}+\mathrm{D}_2^{\phantom{1}}+\mathrm{D}_3^{\phantom{1}}.
\end{equation}
In the diagrammatic language, the last terms can be shown as follows
\begin{equation}\label{33-d-5}
\mathrm{D}_1^{\phantom{1}}=-\,{\centering\adjincludegraphics[width = 0.6 cm, valign=c]{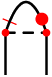}}
\,,\,\,\,
\mathrm{D}_3^{\phantom{1}}=-\,{\centering\adjincludegraphics[width = 0.65 cm, valign=c]{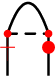}}\,,
\end{equation}
\begin{equation}\label{33-d-6}
\mathrm{D}_2^{\phantom{1}}=
\,{\centering\adjincludegraphics[width = 0.6 cm, valign=c]{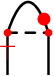}}+
\,{\centering\adjincludegraphics[width = 0.6 cm, valign=c]{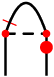}}+
\,{\centering\adjincludegraphics[width = 1.3 cm, valign=c]{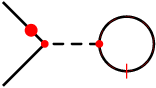}}+
\,{\centering\adjincludegraphics[width = 1.3 cm, valign=c]{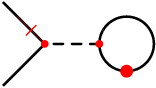}}\,.
\end{equation}
Further, for convenience of calculations, taking into account \eqref{33-d-4}, we divide the second term from \eqref{33-d-3} into the corresponding three parts
\begin{equation}\label{33-d-7}
\mathrm{H}_{334}^{1,i}=
\mathbb{H}_0^{\mathrm{sc}}\big(
\big[\mathbb{H}_2^{\mathrm{c}}(\Gamma_3^2)+18\ln(\Lambda/\sigma_2)\mathrm{X}_1^{\phantom{1}}
\big]\mathrm{D}_i^{\phantom{1}}\big),
\end{equation}
where $i=1,2,3$. Let us study them separately.\\

\noindent\underline{Part $\mathrm{H}_{334}^{1,1}$.} Consider an auxiliary vertex with two external lines and write out the kernel of the integral operator connecting the field functions
\begin{equation}\label{33-d-8}
\mathbb{H}_2^{\mathrm{c}}(\Gamma_3^2)+18\ln(\Lambda/\sigma_2)\mathrm{X}_1^{\phantom{1}}=
\int_{\mathbb{R}^2}\mathrm{d}^2x\int_{\mathbb{R}^2}\mathrm{d}^2y\,
K^{ab}(x,y)\phi^a(x)\phi^b(y).
\end{equation}
It is clear that this kernel is symmetric $K^{ab}(x,y)=K^{ba}(y,x)$ differential operator of the second (totally, with respect to $x$ and $y$) order. For each variable, it contains no more than one derivative. Note that symmetry follows from the presence of convolution with field functions that commute with each other. Moreover, the coefficients of such an operator are finite and, due to the calculations in Section \ref{33:sec:cl:3}, have a finite limit when removing the regularization of $\Lambda\to+\infty$. Taking into account the latest formula and the rules of diagrammatic technique from Section \ref{33:sec:int:1}, the value under study is written out explicitly
\begin{equation}\label{33-d-9}
\mathrm{H}_{334}^{1,1}=-
\int_{\mathbb{R}^2}\mathrm{d}^2z\,
\rho_2^{ed}(z)\Big(\partial_{z_\mu}D^{ef}_\mu(y)G_\Lambda^{df}(z,y)\Big)\Big|_{y=z},
\end{equation}
where
\begin{equation}\label{33-d-10}
\rho_2^{ed}(z)=2\int_{\mathbb{R}^2}\mathrm{d}^2x\int_{\mathbb{R}^2}\mathrm{d}^2y\,
K^{ab}(x,y)G_\Lambda^{ag}(x,z)G_\Lambda^{bh}(y,z)f^{gce}f^{chd}.
\end{equation}
It is clear that $\rho_2^{ed}(z)=\rho_2^{de}(z)$. Next, we integrate by parts\footnote{We do not take into account the boundary terms, assuming that at infinity the functions decrease quite quickly.} into \eqref{33-d-9} and then reconstruct the Laplace operator \eqref{33-o-3}, then we get
\begin{multline}\label{33-d-11}
\int_{\mathbb{R}^2}\mathrm{d}^2z\,
\Big(\partial_{z_\mu}\rho_2^{ed}(z)\Big)\Big(D^{ef}_\mu(y)G_\Lambda^{df}(z,y)\Big)\Big|_{y=z}-2
\int_{\mathbb{R}^2}\mathrm{d}^2z\,
\rho_2^{de}(z)\Big(A^{ef}(y)G_\Lambda^{fd}(y,z)\Big)\Big|_{y=z}-\\-\frac{1}{2}
\int_{\mathbb{R}^2}\mathrm{d}^2z\,
\rho_2^{de}(z)\Big(\partial_{z_\mu}B^{ef}_\mu(z)\Big)G_\Lambda^{fd}(z,z).
\end{multline}
It can be verified that due to the symmetry of $\rho_2^{ed}(z)$ and the presence of relations from Lemma \ref{33-lem-7}, the first and third terms in the last sum do not contain singular contributions. Therefore, it is necessary to consider only the second part, which, due to the possibility of transition
\begin{equation}\label{33-d-12}
A^{ef}(y)G_\Lambda^{fd}(y,z)\Big|_{y=z}\to-\partial_{z_\mu}\partial_{z^\mu}
G^{\Lambda,\mathbf{f}}(z)\Big|_{z=0}\delta^{ed}=\frac{\Lambda^2\alpha_1(\mathbf{f})}{4\pi}\delta^{ed}
\end{equation}
can be rewritten as
\begin{equation}\label{33-d-13}
\mathrm{H}_{334}^{1,1}\stackrel{\mathrm{s.p.}}{=}-
\frac{\Lambda^2\alpha_1(\mathbf{f})}{2\pi}
\int_{\mathbb{R}^2}\mathrm{d}^2z\,
\rho_2^{ee}(z).
\end{equation}
Thus, if we introduce an auxiliary vertex into consideration
\begin{equation}\label{33-d-14}
\widetilde{\Gamma}_2=\frac{c_2\alpha_1(\mathbf{f})}{2\pi}\int_{\mathbb{R}^2}\mathrm{d}^2x\,\phi^a(x)\phi^a(x),
\end{equation}
then the result can be rewritten in the form
\begin{equation}\label{33-d-15}
\mathrm{H}_{334}^{1,1}\stackrel{\mathrm{s.p.}}{=}\Lambda^2
\mathbb{H}_0^{\mathrm{sc}}\big(\Gamma_3^2\widetilde{\Gamma}_2^{\phantom{1}}\big)+
18\ln(\Lambda/\sigma_2)\Lambda^2\mathbb{H}_0^{\mathrm{sc}}\big(\mathrm{X}_1^{\phantom{1}}
\widetilde{\Gamma}_2^{\phantom{1}}\big).
\end{equation}
\noindent\underline{Part $\mathrm{H}_{334}^{1,2}$.} Turning to the second case from \eqref{33-d-7}, we transform individual parts of the vertex \eqref{33-d-6}. Note that the following chain of relations is true for the sum of the second and third terms
\begin{align}\label{33-d-16}
{\centering\adjincludegraphics[width = 0.6 cm, valign=c]{fig/A6-004.eps}}+
\,{\centering\adjincludegraphics[width = 1.5 cm, valign=c]{fig/A6-005.eps}}&\stackrel{\mathrm{s.p.}}{=}
\frac{\ln(\Lambda/\sigma_3)}{2\pi}\,
{\centering\adjincludegraphics[width = 0.6 cm, valign=c]{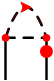}}+
\frac{\ln(\Lambda/\sigma_3)}{2\pi}
\,{\centering\adjincludegraphics[width = 1.5 cm, valign=c]{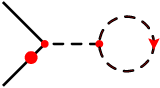}}
\\\nonumber&\,=
-\frac{3\ln(\Lambda/\sigma_3)c_2}{4\pi}
\int_{\mathbb{R}^2}\mathrm{d}^2x\,\phi^a(x)B_\mu^{ac}(x)D_\mu^{cb}(x)\phi^b(x),
\end{align}
where therelations from \eqref{33-eq:tr_S} and \eqref{33-r-28} were used in the transitions. Here $\sigma_3>0$ is an auxiliary parameter. Similarly, using \eqref{33-r-32} instead of \eqref{33-r-28}, we can obtain the following chain of equalities for the sum of the first and fourth terms
\begin{align}\label{33-d-17}
{\centering\adjincludegraphics[width = 0.6 cm, valign=c]{fig/A6-003.eps}}+
\,{\centering\adjincludegraphics[width = 1.5 cm, valign=c]{fig/A6-006.eps}}&\stackrel{\mathrm{s.p.}}{=}
-\frac{\ln(\Lambda/\sigma_3)}{2\pi}\,
{\centering\adjincludegraphics[width = 0.6 cm, valign=c]{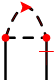}}
-\frac{\ln(\Lambda/\sigma_3)}{2\pi}
\,{\centering\adjincludegraphics[width = 1.5 cm, valign=c]{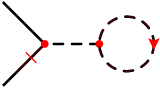}}
\\\nonumber&\,=
\frac{3\ln(\Lambda/\sigma_3)c_2}{4\pi}
\int_{\mathbb{R}^2}\mathrm{d}^2x\,\phi^a(x)B_\mu^{ab}(x)\partial_{x_\mu}^{\phantom{1}}\phi^b(x).
\end{align}
Then, in total, the vertex $\mathrm{D}_2$ can be transformed according to the following relation
\begin{equation}\label{33-d-18}
\mathrm{D}_2\stackrel{\mathrm{s.p.}}{=}
\frac{3\ln(\Lambda/\sigma_3)c_2}{4\pi}
\int_{\mathbb{R}^2}\mathrm{d}^2x\,\phi^a(x)B_\mu^{ac}(x)B_\mu^{cb}(x)\phi^b(x)=
3\ln(\Lambda/\sigma_3)\mathrm{X}_1=
3\ln(\Lambda/\sigma_3)\,
{\centering\adjincludegraphics[width = 1 cm, valign=c]{fig/A6-009.eps}}\,.
\end{equation}
The final answer for the second case is written out as follows
\begin{equation}\label{33-d-19}
\mathrm{H}_{334}^{1,2}\stackrel{\mathrm{s.p.}}{=}
3\ln(\Lambda/\sigma_3)
\mathbb{H}_0^{\mathrm{sc}}\big(\Gamma_3^2\mathrm{X}_1^{\phantom{1}}\big)
+54\ln(\Lambda/\sigma_2)\ln(\Lambda/\sigma_3)
\mathbb{H}_0^{\mathrm{sc}}\big(\mathrm{X}_1^{2}\big).
\end{equation}
\noindent\underline{Part $\mathrm{H}_{334}^{1,3}$.} Let us move on to the last part. In this case, the diagram is decomposed into the sum of two parts
\begin{equation}\label{33-d-20}
\mathrm{D}_3=-\,{\centering\adjincludegraphics[width = 0.7 cm, valign=c]{fig/A6-002.eps}}\stackrel{\mathrm{s.p.}}{=}
-\frac{L_1}{\pi}
\,{\centering\adjincludegraphics[width = 0.7 cm, valign=c]{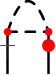}}-2
\,{\centering\adjincludegraphics[width = 0.7 cm, valign=c]{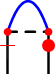}}\,.
\end{equation}
Further, by applying integration by parts and discarding the finite terms (before and after removing the regularization), we can proceed to the following auxiliary vertices with two external lines
\begin{equation}\label{33-d-21}
{\centering\adjincludegraphics[width = 0.7 cm, valign=c]{fig/A6-012.eps}}\to-2c_2
\int_{\mathbb{R}^2}\mathrm{d}^2x\,\phi^a(x)A^{ab}(x)\phi^b(x)=-2\pi\hat{\Gamma}_2,
\end{equation}
\begin{equation}\label{33-d-22}
{\centering\adjincludegraphics[width = 0.7 cm, valign=c]{fig/A6-013.eps}}\to
2\int_{\mathbb{R}^2}\mathrm{d}^2x\,\phi^a(x)
\Big(f^{abc}PS_\Lambda^{bd}(x,x)f^{cde}\Big)
A^{eh}(x)\phi^h(x)=2\overline{\Gamma}_2.
\end{equation}
Thus, the transition is acceptable
\begin{equation}\label{33-d-23}
\mathrm{H}_{334}^{1,3}=
\mathbb{H}_0^{\mathrm{sc}}\big(
\big[\mathbb{H}_2^{\mathrm{c}}(\Gamma_3^2)+18\ln(\Lambda/\sigma_2)\mathrm{X}_1^{\phantom{1}}
\big]\mathrm{D}_3^{\phantom{1}}\big)
\stackrel{\mathrm{s.p.}}{=}
\mathbb{H}_0^{\mathrm{sc}}\big(
\big[\mathbb{H}_2^{\mathrm{c}}(\Gamma_3^2)+18\ln(\Lambda/\sigma_2)\mathrm{X}_1^{\phantom{1}}
\big](2L_1^{\phantom{1}}\hat{\Gamma}_2^{\phantom{1}}-4\overline{\Gamma}_2^{\phantom{1}})\big).
\end{equation}
Let us consider all four terms separately. Performing calculations similar to those that were performed earlier, we obtain
\begin{equation}\label{33-d-26}
\mathbb{H}_0^{\mathrm{sc}}\big(
\mathrm{X}_1^{\phantom{1}}\hat{\Gamma}_2^{\phantom{1}}\big)\stackrel{\mathrm{s.p.}}{=}
\frac{c_2^2}{\pi^2}
J_1[B]
+
\kappa_1S[B]+\mathcal{O}\big(L^{-2}\big),
\end{equation}
\begin{equation}\label{33-d-24}
\mathbb{H}_0^{\mathrm{sc}}\big(\mathrm{X}_1^{\phantom{1}}\overline{\Gamma}_2^{\phantom{1}}\big)
\stackrel{\mathrm{s.p.}}{=}\frac{c_2}{\pi}
\rho_3[B]+
\frac{c_2L_2}{2\pi^2}J_2[B]
+
\kappa_2S[B]+\mathcal{O}\big(L^{-1}\big),
\end{equation}
\begin{equation}\label{33-d-30}
\mathbb{H}_0^{\mathrm{sc}}\big(\Gamma_3^2\hat{\Gamma}_2^{\phantom{1}}\big)
\stackrel{\mathrm{s.p.}}{=}
\frac{6c_2}{\pi}\mathbb{H}_0^{\mathrm{sc}}\big(\Gamma_3^2\big)
-\frac{36c_2^2\theta_2}{\pi^2}J_1[B]
+\kappa_3S[B]+\mathcal{O}\big(L^{-1}\big),
\end{equation}
\begin{align}\label{33-d-25}
\mathbb{H}_0^{\mathrm{sc}}\big(\Gamma_3^2\overline{\Gamma}_2^{\phantom{1}}\big)
\stackrel{\mathrm{s.p.}}{=}\,&\frac{36L}{\pi}
\Big(\mathrm{A}_1-2\mathrm{A}_2+\mathrm{A}_4\Big)
-\frac{18Lc_2}{\pi}\rho_3\\\nonumber+\,&\,
9\mathrm{I}_1(\Lambda,\sigma)\Big(-c_2J_2[B]+2J_3[B]\Big)+\frac{18L\theta_2}{\pi^2}J_3[B]
\\\nonumber+\,&\,\kappa_4S[B]+\mathcal{O}(1)
,
\end{align}
wher
\begin{equation}\label{33-d-27}
\rho_3[B]=\int_{\mathbb{R}^2}\mathrm{d}^2x\,
\Big(f^{abc}PS_\Lambda^{bd}(x,x)f^{cde}\Big)PS_\Lambda^{eg}(x,x)B_\mu^{gh}(x)B_\mu^{ha}(x)
=\mathrm{B}_1-2\mathrm{B}_2+\mathrm{B}_3,
\end{equation}
and the functionals $J_1$ and $J_2$ were defined by the equalities from \eqref{33-d-28} and \eqref{33-d-29}. The coefficients $\{\kappa_1,\ldots,\kappa_4\}$ are singular in the regularizing parameter. The auxiliary number $\theta_2$ from formula \eqref{33-p-38-1} was also used. Note that it is easier to perform calculations for \eqref{33-d-30} and \eqref{33-d-25} using representation \eqref{33-w-4} for a triple vertex.

Summing up the expressions obtained, we come to the equality
\begin{align}\label{33-d-25-1}
\mathrm{H}_{334}^{1,3}
=\,&\frac{12c_2L_1}{\pi}\mathbb{H}_0^{\mathrm{sc}}\big(\Gamma_3^2\big)
\\\nonumber-\,&\,\frac{144L}{\pi}
\Big(\mathrm{A}_1-2\mathrm{A}_2+\mathrm{A}_4\Big)+
\frac{36c_2^2L_1\big(\ln(\Lambda/\sigma_2)-2\theta_2^{\phantom{1}}\big)}{\pi^2}
J_1[B]\\\nonumber+\,&\,
c_2\frac{18\big(L_1^2+2L\theta_1^{\phantom{1}}-2\ln(\Lambda/\sigma_2)L_2\big)}{\pi^2}J_2[B]-
\frac{36\big(L_1^2+2L\theta_1^{\phantom{1}}+2L\theta_2^{\phantom{1}}\big)}{\pi^2}J_3[B]
\\\nonumber+\,&\,\hat{\kappa}_1S[B]+\mathcal{O}(1)
,
\end{align}
where is the coefficient $\hat{\kappa}_1$ is singular in the regularizing parameter $\Lambda$.

\begin{lemma}\label{33-lem-12}
Taking into account all the above, the following decomposition is true for the first part of \eqref{33-d-2}
\begin{align}\label{33-d-25-2}
	\mathrm{H}_{334}^{1}
	\stackrel{\mathrm{s.p.}}{=}\,&
	18\ln(\Lambda/\sigma_2)\Lambda^2\mathbb{H}_0^{\mathrm{sc}}\big(\mathrm{X}_1^{\phantom{1}}
	\widetilde{\Gamma}_2^{\phantom{1}}\big)+\Lambda^2
	\mathbb{H}_0^{\mathrm{sc}}\big(\Gamma_3^2\widetilde{\Gamma}_2^{\phantom{1}}\big)+\frac{12c_2L_1}{\pi}\mathbb{H}_0^{\mathrm{sc}}\big(\Gamma_3^2\big)\\\nonumber+\,&\,3\ln(\Lambda/\sigma_3)
	\mathbb{H}_0^{\mathrm{sc}}\big(\Gamma_3^2\mathrm{X}_1^{\phantom{1}}\big)-18\ln(\Lambda/\sigma_2)\mathbb{H}_0^{\mathrm{sc}}\big(\mathrm{X}_1^{\phantom{1}}\Gamma_4^{\phantom{1}}\big)
	+54\ln(\Lambda/\sigma_2)\ln(\Lambda/\sigma_3)
	\mathbb{H}_0^{\mathrm{sc}}\big(\mathrm{X}_1^{2}\big)
	\\\nonumber-\,&\,\frac{144L}{\pi}
	\Big(\mathrm{A}_1-2\mathrm{A}_2+\mathrm{A}_4\Big)+
	\frac{36c_2^2L_1\big(\ln(\Lambda/\sigma_2)-2\theta_2^{\phantom{1}}\big)}{\pi^2}
	J_1[B]\\\nonumber+\,&\,
	c_2\frac{18\big(L_1^2+2L\theta_1^{\phantom{1}}-2\ln(\Lambda/\sigma_2)L_2\big)}{\pi^2}J_2[B]-
	\frac{36\big(L_1^2+2L\theta_1^{\phantom{1}}+2L\theta_2^{\phantom{1}}\big)}{\pi^2}J_3[B]
	+\hat{\kappa}_1S[B].
\end{align}
\end{lemma}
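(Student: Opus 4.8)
The starting point is the decomposition \eqref{33-d-3}, which rests on two observations: the diagrams in $\mathrm{H}_{334}^1$ have a cylindrical shape, so they can be cut along the loop into the two-legged objects $\mathbb{H}_2^{\mathrm{c}}(\Gamma_3^2)$ and $\mathbb{H}_2^{\mathrm{c}}(\Gamma_4)$, and the first of these may be replaced by its leading logarithm $-18\ln(\Lambda/\sigma_2)\mathrm{X}_1$ (Lemma \ref{33-lem-1}) up to a remainder whose coefficients are finite and admit a limit as $\Lambda\to+\infty$ (established in Section \ref{33:sec:cl:3}). Hence $\mathrm{H}_{334}^1$ is the sum of the ``main'' term $-18\ln(\Lambda/\sigma_2)\mathbb{H}_0^{\mathrm{sc}}(\mathrm{X}_1\mathbb{H}_2^{\mathrm{c}}(\Gamma_4))$ and the three pieces $\mathrm{H}_{334}^{1,1},\mathrm{H}_{334}^{1,2},\mathrm{H}_{334}^{1,3}$ of \eqref{33-d-7}, obtained by feeding in the splitting $\mathbb{H}_2^{\mathrm{c}}(\Gamma_4)=\mathrm{D}_1+\mathrm{D}_2+\mathrm{D}_3$ of \eqref{33-d-4}. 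The plan is to evaluate each of these.

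For $\mathrm{D}_1$ (derivative acting on the loop) I would integrate by parts, reconstruct the operator $A^{ab}$ from \eqref{33-o-3}, drop the two resulting terms that are non-singular by the symmetry relations of Lemma \ref{33-lem-7}, and pull out the power divergence via the diagonal value \eqref{33-d-12}, $A^{ef}(y)G_\Lambda^{fd}(y,z)|_{y=z}=\Lambda^2\alpha_1(\mathbf{f})\delta^{ed}/(4\pi)$; repackaging the surviving $\Lambda^2$-term through the auxiliary vertex $\widetilde{\Gamma}_2$ gives \eqref{33-d-15}. For $\mathrm{D}_2$ I would resolve the triangular loop with the convolution identity \eqref{33-eq:tr_S} and the near-diagonal expansions \eqref{33-r-28}, \eqref{33-r-32}; after cancellation of the antisymmetric pieces the four sub-diagrams collapse to $3\ln(\Lambda/\sigma_3)\mathrm{X}_1$, hence \eqref{33-d-19}. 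For $\mathrm{D}_3$ (loop without a derivative) I would use the diagonal splitting \eqref{33-s-14}, $G_\Lambda^{ab}(x,x)=L_1\delta^{ab}/\pi+2PS_\Lambda^{ab}(x,x)+\mathcal{O}(L/\Lambda)$, which introduces the auxiliary vertices $\hat{\Gamma}_2$ and $\overline{\Gamma}_2$; the four diagrams \eqref{33-d-26}--\eqref{33-d-25} would then be handled by the same toolkit already used for $\mathbb{H}_0^{\mathrm{sc}}(\Gamma_3\Gamma_5)$ (integration by parts, the triple-vertex rearrangement \eqref{33-w-4}, the singular-part extraction of Lemma \ref{33-lem-2}, and the auxiliary integral $\mathrm{I}_1(\Lambda,\sigma)$ of \eqref{33-w-7}), and summed into \eqref{33-d-25-1}. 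The $\mathrm{X}_1$-contraction in the ``main'' term is then dealt with by the same three sub-computations applied to $\mathrm{D}_1,\mathrm{D}_2,\mathrm{D}_3$.

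Assembling \eqref{33-d-15}, \eqref{33-d-19}, \eqref{33-d-25-1} and the ``main'' term, I would group the result by its dependence on the background field: the $\Lambda^2$ contributions into the $\widetilde{\Gamma}_2$ and $\mathrm{X}_1\widetilde{\Gamma}_2$ terms; the purely logarithmic contributions into the $\Gamma_3^2$-, $\Gamma_3^2\mathrm{X}_1$-, $\mathrm{X}_1\Gamma_4$-, $\mathrm{X}_1^2$-, $\mathrm{A}$-functional and $J_1,J_2,J_3$-functional terms, carefully tracking which combination of $L_1^2$, $L\theta_1$, $L\theta_2$, $\ln(\Lambda/\sigma_2)L_2$ multiplies each; and everything proportional to $S[B]$, together with the $\sigma_2,\sigma_3$-dependent constants (which must recombine so that the total is independent of the auxiliary parameters), into the single singular coefficient $\hat{\kappa}_1$. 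This produces exactly \eqref{33-d-25-2}.

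The main obstacle I expect is the $\mathrm{D}_3$ step: it is the only place with genuinely non-integrable densities (three Green's functions under two derivatives), so one must decide carefully which line can be replaced by $PS_\Lambda$ without losing a singularity, separate the $\mathrm{A}$-functional contributions from the $J$-functional ones, and keep the mixed logarithmic coefficients straight. A secondary subtlety is that attaching the finite remainder bracket $[\mathbb{H}_2^{\mathrm{c}}(\Gamma_3^2)+18\ln(\Lambda/\sigma_2)\mathrm{X}_1]$ to $\mathrm{D}_1$ or $\mathrm{D}_2$ still closes a loop, and one has to check that this loop yields no singularity beyond those already counted; this is exactly where the finiteness and the $\Lambda\to+\infty$ limit of that bracket, shown in Section \ref{33:sec:cl:3}, are used.
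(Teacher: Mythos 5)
Your proposal is correct and follows essentially the same route as the paper: the splitting \eqref{33-d-3} combined with $\mathbb{H}_2^{\mathrm{c}}(\Gamma_4)=\mathrm{D}_1+\mathrm{D}_2+\mathrm{D}_3$, the three sub-computations yielding \eqref{33-d-15}, \eqref{33-d-19}, and \eqref{33-d-25-1}, and their summation. The only small imprecision is your closing remark that the ``main'' term $-18\ln(\Lambda/\sigma_2)\mathbb{H}_0^{\mathrm{sc}}\big(\mathrm{X}_1^{\phantom{1}}\mathbb{H}_2^{\mathrm{c}}(\Gamma_4^{\phantom{1}})\big)$ should itself be processed through the $\mathrm{D}_i$ decomposition — in the stated formula it is simply carried through verbatim as $-18\ln(\Lambda/\sigma_2)\mathbb{H}_0^{\mathrm{sc}}\big(\mathrm{X}_1^{\phantom{1}}\Gamma_4^{\phantom{1}}\big)$.
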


\subsubsection{Part without loop $\mathbb{H}_2^{\mathrm{c}}(\Gamma_4)$}
This type of diagram contains a large number of terms, so it is convenient to first transform the vertices and thus remove the equivalent diagrams. It is clear from the view that the quartic vertex $\Gamma_4$ connects by two external lines to one triple vertex, and the remaining two to the other. In this regard, it is reasonable to write out combinations with fixed external lines, taking into account the division into pairs relative to the vertical axis. In this case, we assume that the usual derivative is always on the left, since another option can be obtained by rearranging the triple vertices, which actually double the result. There are three such elements. They have the form
\begin{equation}\label{33-d-33}
{\centering\adjincludegraphics[width = 1.4 cm, valign=c]{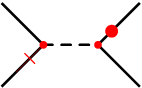}}+
{\centering\adjincludegraphics[width = 0.8 cm, valign=c]{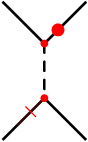}}-
{\centering\adjincludegraphics[width = 0.8 cm, valign=c]{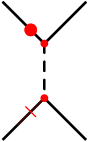}}.
\end{equation}
We transform the last vertex using the integration by parts
\begin{equation}\label{33-d-34}
-
{\centering\adjincludegraphics[width = 0.8 cm, valign=c]{fig/A7-003.eps}}=
{\centering\adjincludegraphics[width = 0.8 cm, valign=c]{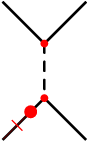}}+
{\centering\adjincludegraphics[width = 0.8 cm, valign=c]{fig/A7-002.eps}}+
{\centering\adjincludegraphics[width = 0.8 cm, valign=c]{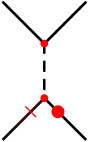}}.
\end{equation}
Next, in the last term, we use relation \eqref{33-eq:Jacobi} for structure constants and symmetry, which follows from the fact that the triple vertex is connected in all possible ways. We get equality
\begin{equation}\label{33-d-35}
{\centering\adjincludegraphics[width = 0.8 cm, valign=c]{fig/A7-005.eps}}=
{\centering\adjincludegraphics[angle=90, reflect,width = 1.4 cm, valign=c]{fig/A7-003.eps}}+
{\centering\adjincludegraphics[width = 1.2 cm, valign=c]{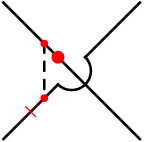}}
\longrightarrow-
{\centering\adjincludegraphics[width = 1.4 cm, valign=c]{fig/A7-001.eps}}+
{\centering\adjincludegraphics[width = 0.8 cm, valign=c]{fig/A7-002.eps}}.
\end{equation}
Finally, summing up all the calculations, we come to the next four-point vertex with fixed ends
\begin{equation}\label{33-d-36}
\Gamma_4\longrightarrow \Gamma_{4,1}+3\Gamma_{4,2}=
{\centering\adjincludegraphics[width = 0.8 cm, valign=c]{fig/A7-004.eps}}+
3{\centering\adjincludegraphics[width = 0.8 cm, valign=c]{fig/A7-002.eps}}.
\end{equation}
Further, returning to the discussion of the triple vertex in Section \ref{33:sec:cl:2}, we note that all possible permutations of the outer lines in $\Gamma_3$ can be represented in three ways.
\begin{enumerate}
	\item $\Gamma_3\to3\Gamma_{3,l1}-3\Gamma_{3,l2}+\Gamma_{3,l3}-\Gamma_{3,l4}$, where
	\begin{equation*}\label{33-d-37}
		\Gamma_{3,l1}=\,{\centering\adjincludegraphics[reflect, angle=180, width = 1 cm, valign=c]{fig/A7-007-01.eps}},\,\,\,
		\Gamma_{3,l2}=\,
		{\centering\adjincludegraphics[ width = 1 cm, valign=c]{fig/A7-007-01.eps}}
		,\,\,\,
		\Gamma_{3,l3}=
		\Bigg(
		{\centering\adjincludegraphics[width = 1 cm, valign=c]{fig/A7-014-01.eps}}-
		{\centering\adjincludegraphics[width = 1 cm, valign=c]{fig/A7-013-01.eps}}\,
		\Bigg),\,\,\,
		\Gamma_{3,l4}=
		\Bigg(
		{\centering\adjincludegraphics[reflect, angle=180,width = 1 cm, valign=c]{fig/A7-014-01.eps}}-
		{\centering\adjincludegraphics[reflect, angle=180,width = 1 cm, valign=c]{fig/A7-013-01.eps}}\,
		\Bigg).
	\end{equation*}
	\item $\Gamma_3\to3\Gamma_{3,r1}-3\Gamma_{3,r2}+\Gamma_{3,r3}-\Gamma_{3,r4}$, where
	\begin{equation*}\label{33-d-38}
		\Gamma_{3,r1}=\,{\centering\adjincludegraphics[reflect, width = 1 cm, valign=c]{fig/A7-007-01.eps}},\,\,\,
		\Gamma_{3,r2}=\,
		{\centering\adjincludegraphics[ angle=180, width = 1 cm, valign=c]{fig/A7-007-01.eps}}
		,\,\,\,
		\Gamma_{3,r3}=
		\Bigg(\,
		{\centering\adjincludegraphics[ angle=180,width = 1 cm, valign=c]{fig/A7-014-01.eps}}-
		{\centering\adjincludegraphics[ angle=180,width = 1 cm, valign=c]{fig/A7-013-01.eps}}
		\Bigg),\,\,\,
		\Gamma_{3,r4}=
		\Bigg(\,
		{\centering\adjincludegraphics[reflect,width = 1 cm, valign=c]{fig/A7-014-01.eps}}-
		{\centering\adjincludegraphics[reflect,width = 1 cm, valign=c]{fig/A7-013-01.eps}}
		\Bigg).
	\end{equation*}
	\item$\Gamma_3\to3\Gamma_{3,c1}-3\Gamma_{3,c2}+\Gamma_{3,c3}-\Gamma_{3,c4}$, where
	\begin{equation*}\label{33-d-39}
		\Gamma_{3,c1}=\,{\centering\adjincludegraphics[ width = 1 cm, valign=c]{fig/A7-011-01.eps}},\,\,\,
		\Gamma_{3,c2}=\,
		{\centering\adjincludegraphics[angle=180, reflect, width = 1 cm, valign=c]{fig/A7-009-01.eps}}
		,\,\,\,
		\Gamma_{3,c3}=
		\Bigg(
		{\centering\adjincludegraphics[width = 1 cm, valign=c]{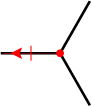}}-
		{\centering\adjincludegraphics[width = 1 cm, valign=c]{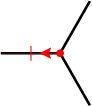}}\,
		\Bigg),\,\,\,
		\Gamma_{3,c4}=
		\Bigg(
		{\centering\adjincludegraphics[width = 1 cm, valign=c]{fig/A7-014-01.eps}}-
		{\centering\adjincludegraphics[width = 1 cm, valign=c]{fig/A7-013-01.eps}}\,
		\Bigg).
	\end{equation*}
\end{enumerate}
Thus, the diagrams under study are reduced to the study of the following two sets
\begin{equation}\label{33-d-40}
2\mathbb{\hat{H}}_0^{\mathrm{sc}}\Big((3\Gamma_{3,c1}-3\Gamma_{3,c2}+\Gamma_{3,c3}-\Gamma_{3,c4})\cdot\Gamma_{4,1}\cdot(3\Gamma_{3,r1}-3\Gamma_{3,r2}+\Gamma_{3,r3}-\Gamma_{3,r4})\Big),
\end{equation}
\begin{equation}\label{33-b-1}
6\mathbb{\hat{H}}_0^{\mathrm{sc}}\Big((3\Gamma_{3,l1}-3\Gamma_{3,l2}+\Gamma_{3,l3}-\Gamma_{3,l4})\cdot\Gamma_{4,2}\cdot(3\Gamma_{3,r1}-3\Gamma_{3,r2}+\Gamma_{3,r3}-\Gamma_{3,r4})\Big).
\end{equation}
Let us study each term separately.\\

\noindent\underline{Part with vertices $\Gamma_{3,c4}$ and $\Gamma_{4,1}$.} Let us connect these two vertices with fixed external lines into a new vertex, which is a component for \eqref{33-d-40}. It consists of two parts and has the form
\begin{equation}\label{33-d-41}
{\centering\adjincludegraphics[width = 2 cm, valign=c]{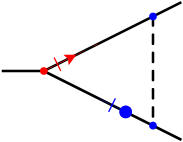}}-
{\centering\adjincludegraphics[width = 2 cm, valign=c]{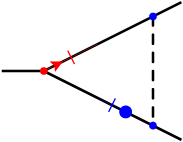}}.
\end{equation}
We note an important feature of this component. It contains three differential operators in the inner loop. This means that the rest of the diagram contains only one derivative. This fact makes it possible to isolate a local singularity by shifting a variable. Thus, from formula \eqref{33-d-41}, we can proceed to the sum of three terms of the form
\begin{equation}\label{33-d-42}
\frac{2\ln(\Lambda/\sigma_4)}{\pi}\Bigg(
{\centering\adjincludegraphics[width = 2 cm, valign=c]{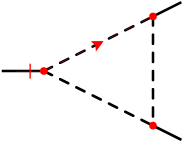}}+
{\centering\adjincludegraphics[width = 2 cm, valign=c]{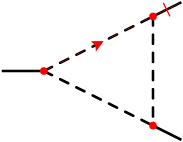}}+
{\centering\adjincludegraphics[width = 2 cm, valign=c]{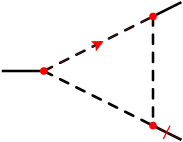}}\,\,\,\Bigg)=
-\frac{2\ln(\Lambda/\sigma_4)}{\pi}\Big(-c_2\Gamma_3-3\overline{\Gamma}_3\Big),
\end{equation}
where $\sigma_4>0$ is an auxiliary parameter. Note that the last vertices are connected in an arbitrary way to $\Gamma_3$. Therefore, we arrive at an equality of the form
\begin{multline}\label{33-d-43}
-2\mathbb{\hat{H}}_0^{\mathrm{sc}}\Big(\Gamma_{3,c4}\Gamma_{4,1}\big(3\Gamma_{3,r1}-3\Gamma_{3,r2}+\Gamma_{3,r3}-\Gamma_{3,r4}\big)\Big)\stackrel{\mathrm{s.p.}}{=}\\\stackrel{\mathrm{s.p.}}{=}
-\frac{4\ln(\Lambda/\sigma_4)}{\pi}\mathbb{\hat{H}}_0^{\mathrm{sc}}\Big(\big(c_2\Gamma_3+3\overline{\Gamma}_3\big)\big(3\Gamma_{3,r1}-3\Gamma_{3,r2}+\Gamma_{3,r3}-\Gamma_{3,r4}\big)\Big)
=\\=
-\frac{4\ln(\Lambda/\sigma_4)}{\pi}\mathbb{H}_0^{\mathrm{sc}}\big((c_2\Gamma_3+3\overline{\Gamma}_3)\Gamma_3\big).
\end{multline}
\noindent\underline{Part with vertices $\Gamma_{3,c3}$ and $\Gamma_{4,1}$.} This component does not contain singular contributions, that is
\begin{equation}\label{33-d-44}
2\mathbb{\hat{H}}_2^{\mathrm{sc}}\Big(\Gamma_{3,c3}\Gamma_{4,1}(3\Gamma_{3,r1}-3\Gamma_{3,r2}+\Gamma_{3,r3}-\Gamma_{3,r4})\Big)\stackrel{\mathrm{s.p.}}{=}0.
\end{equation}
We can verify this by performing similar calculations, as was done for the previous part, taking into account the antisymmetry of the structure constants. \\

\noindent\underline{Part with vertices $\Gamma_{3,c2}$ and $\Gamma_{4,1}$.} In this case, the calculation is more cumbersome, so it is convenient to divide the contribution into four parts
\begin{equation}\label{33-d-45}
\mathrm{H}_{334}^1=-18\mathbb{\hat{H}}_2^{\mathrm{sc}}\big(\Gamma_{3,c2}\Gamma_{4,1}\Gamma_{3,r1}\big),\,\,\,
\mathrm{H}_{334}^2=18\mathbb{\hat{H}}_2^{\mathrm{sc}}\big(\Gamma_{3,c2}\Gamma_{4,1}\Gamma_{3,r2}\big),
\end{equation}
\begin{equation}\label{33-d-46}
\mathrm{H}_{334}^3=-6\mathbb{\hat{H}}_2^{\mathrm{sc}}\big(\Gamma_{3,c2}\Gamma_{4,1}\Gamma_{3,r3}\big),\,\,\,
\mathrm{H}_{334}^4=6\mathbb{\hat{H}}_2^{\mathrm{sc}}\big(\Gamma_{3,c2}\Gamma_{4,1}\Gamma_{3,r4}\big).
\end{equation}
Consider the diagram $\mathrm{H}_{334}^1$. Taking into account the definitions from Section \ref{33:sec:int:1}, it is explicitly written as follows
\begin{equation}\label{33-d-47}
\mathrm{H}_{334}^1=
-18\,\,{\centering\adjincludegraphics[width = 3 cm, valign=c]{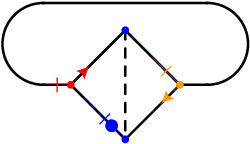}}.
\end{equation}
Note that the figure contains dots of three colors. This fact reflects the presence of three integration operators. It is convenient to analyze such a diagram using the method of addition and subtraction. We do all the calculations in detail, and for the following diagrams we will present only the answers. First, we write out the analytical form
\begin{equation}\label{33-d-48}
-18\int_{\mathbb{R}^{2\times3}}\mathrm{d}^2x\mathrm{d}^2y\mathrm{d}^2z
\begin{tabular}{ccc}
	$f^{a_1a_2a_3}$ & $\partial_{x_{\mu}}G_{\Lambda}^{a_1b_1}(x,y)$ &$f^{b_1b_4b_3}$ \\
	$B_\mu^{a_2c_2}(x)G_{\Lambda}^{c_2c_4}(x,z)$& $f^{c_1c_4c_3}$ & $\partial_{y_{\nu}}G_{\Lambda}^{c_3b_3}(x,y)$ \\
	$D_\rho^{d_4d_3}(z)\partial_{z_{\rho}}G_{\Lambda}^{a_3d_3}(x,z)$& $f^{d_4c_1d_2}$ & $G_{\Lambda}^{d_2b_2}(z,y)B_\nu^{b_4b_2}(y)$
\end{tabular}.
\end{equation}
In the last expression, a special arrangement of functions is used for clarity. In fact, it is simple multiplication. Next, we subtract and add an integral of the form 
\begin{equation*}\label{33-d-49}
\mathrm{\widetilde{H}}_{334}^{1}=
-144\int_{\mathbb{R}^{2\times3}}\mathrm{d}^2x\mathrm{d}^2y\mathrm{d}^2z
\begin{tabular}{ccc}
	$f^{a_1a_2a_3}$ & $\partial_{x_{\mu}}G^{\Lambda,\mathbf{f}}(x-y)$ &$f^{a_1b_4b_3}$ \\
	$B_\mu^{a_2c_2}(x)G_{\Lambda}^{c_2c_4}(x,z)$& $f^{c_1c_4b_3}$ & $\partial_{y_{\nu}}G^{\Lambda,\mathbf{f}}(x-y)$ \\
	$D_\rho^{d_4d_3}(z)\partial_{z_{\rho}}G_{\Lambda}^{a_3d_3}(x,z)$& $f^{d_4c_1d_2}$ & $\Big[G^{\Lambda,\mathbf{f}}(z-y)B_\nu^{b_4d_2}(z)$\\
	&&~~$+PS_{\Lambda}^{d_2b_2}(z,z)B_\nu^{b_4b_2}(z)\Big]$
\end{tabular}.
\end{equation*}
Then the part with the difference $\mathrm{H}_{334}^1-\mathrm{\widetilde{H}}_{334}^{1}$ and the contribution $\mathrm{\widetilde{H}}_{334}^{1}$ should be considered separately. Note that the first part can be represented as
\begin{equation}\label{33-d-52}
\mathrm{H}_{334}^1-\mathrm{\widetilde{H}}_{334}^{1}=
\int_{\mathbb{R}^{2\times2}}\mathrm{d}^2x\mathrm{d}^2z\,
G_{\Lambda}^{c_2c_4}(x,z)D_\rho^{d_4d_3}(z)\partial_{z_{\rho}}G_{\Lambda}^{a_3d_3}(x,z)
\rho_{4}^{c_2c_4,a_3d_4}(x,z).
\end{equation}
At the same time, the singularity is concentrated only in the multiplier
\begin{equation}\label{33-d-50}
G_{\Lambda}^{c_2c_4}(x,z)D_\rho^{d_4d_3}(z)\partial_{z_{\rho}}G_{\Lambda}^{a_3d_3}(x,z),
\end{equation}
because the rest of the internal singularities have been neutralized by subtraction. In other words, there is a finite limit to the function $\rho_{4}^{c_2c_4,a_3d_4}(x,z)$ for $z\to x$. Thus, using the result from Lemma \ref{33-lem-4}, the following substitution can be performed
\begin{align*}
G_{\Lambda}^{c_2c_4}(x,z)D_\rho^{d_4d_3}(z)\partial_{z_{\rho}}G_{\Lambda}^{a_3d_3}(x,z)
&\longrightarrow-2
G_{\Lambda}^{c_2c_4}(x,z)A^{d_4d_3}(x)G_{\Lambda}^{a_3d_3}(x,z)\\
&\longrightarrow-\frac{2\ln(\Lambda/\sigma_4)}{\pi}\delta_{c_2c_4}\delta_{d_4a_3}\delta(x-z),
\end{align*}
which leads to a relation of the form
\begin{equation}\label{33-d-53}
\mathrm{H}_{334}^1-\mathrm{\widetilde{H}}_{334}^{1}\stackrel{\mathrm{s.p.}}{=}
-\frac{2\ln(\Lambda/\sigma_4)}{\pi}
\int_{\mathbb{R}^{2}}\mathrm{d}^2x\,
\rho_{4}^{a_2a_2,a_3a_3}(x,x).
\end{equation}
Describing the latter relation in explicit terms, we obtain definitively
\begin{equation}\label{33-d-54}
\mathrm{H}_{334}^1-\mathrm{\widetilde{H}}_{334}^{1}\stackrel{\mathrm{s.p.}}{=}
\frac{36\ln(\Lambda/\sigma_4)}{\pi}\,\,{\centering\adjincludegraphics[width = 3 cm, valign=c]{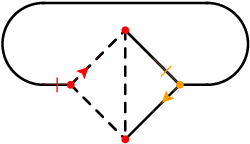}}+\kappa_5S[B],
\end{equation}
where $\kappa_5$ is a singular coefficient that does not depend on the background field. In the calculation, an auxiliary relation was also used
\begin{equation}\label{33-d-55}
{\centering\adjincludegraphics[width = 3 cm, valign=c]{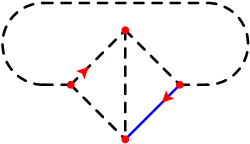}}=0.
\end{equation}
The rest part (the difference) can be reduced to the form\footnote{Here, the first term is zero because of \eqref{33-d-55}. However, the general view is useful for verifying calculations.} using the methods mentioned above
\begin{equation}\label{33-d-56}
\frac{\mathrm{\widetilde{H}}_{334}^{1}}{3^22^5}\stackrel{\mathrm{s.p.}}{=}
\big(\mathrm{I}_3(\Lambda,\sigma)+\mathrm{I}_4(\Lambda,\sigma)\big)
{\centering\adjincludegraphics[width = 3 cm, valign=c]{fig/A7-024.eps}}
-\frac{L}{2\pi}\,
{\centering\adjincludegraphics[width = 3 cm, valign=c]{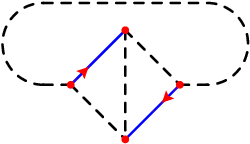}}+
\kappa_6S[B],
\end{equation}
where $\kappa_6$ is a coefficient singular in the parameter $\Lambda$, which also does not depend on the background field. The auxiliary integrals used have the form
\begin{align}\label{33-d-57}
\mathrm{I}_3(\Lambda,\sigma)&\,=\,
\int_{\mathbf{B}_{1/\Lambda}}\int_{\mathbf{B}_{1/\sigma}}\mathrm{d}^2x\mathrm{d}^2y\,
\Big(A_0(x)G^{\Lambda,\mathbf{f}}(x)\Big)G^{\Lambda,\mathbf{f}}(x)
\Big(\partial_{x^\mu}G^{\Lambda,\mathbf{f}}(x-y)\Big)
\partial_{y_\mu}G^{\Lambda,\mathbf{f}}(y)\\\noindent&\,=\,-
\int_{\mathbf{B}_{1/\Lambda}}\int_{\mathbf{B}_{1/\Lambda}}\mathrm{d}^2x\mathrm{d}^2y\,
\Big(A_0(x)G^{\Lambda,\mathbf{f}}(x)\Big)G^{\Lambda,\mathbf{f}}(x)
G^{\Lambda,\mathbf{f}}(x-y)
A_0(y)G^{\Lambda,\mathbf{f}}(y)\\\noindent&\stackrel{\mathrm{s.p.}}{=}\,-
\frac{4LL_2-2L^2+2L\theta_2}{8\pi^2}\stackrel{\mathrm{s.p.}}{=}-
\frac{L_1^2+2L\theta_1^{\phantom{1}}+L\theta_2^{\phantom{1}}}{4\pi^2}
,
\end{align}
\begin{align}\label{33-d-58}
\mathrm{I}_4(\Lambda,\sigma)&\,=\
\int_{\mathbf{B}_{1/\Lambda}}\int_{\mathbf{B}_{1/\sigma}}\mathrm{d}^2x\mathrm{d}^2y\,
\Big(A_0(x)G^{\Lambda,\mathbf{f}}(x)\Big)
\Big(\partial_{x^\mu}G^{\Lambda,\mathbf{f}}(x-y)\Big)G^{\Lambda,\mathbf{f}}(y)
\partial_{y_\mu}G^{\Lambda,\mathbf{f}}(y)\\\noindent&\,=\,-\frac{1}{2}
\int_{\mathbf{B}_{1/\Lambda}}\int_{\mathbf{B}_{1/\Lambda}}\mathrm{d}^2x\mathrm{d}^2y\,
\Big(A_0(x)G^{\Lambda,\mathbf{f}}(x)\Big)
\Big(G^{\Lambda,\mathbf{f}}(x-y)\Big)^2
A_0(x)G^{\Lambda,\mathbf{f}}(y)\\\noindent&\stackrel{\mathrm{s.p.}}{=}\,
-\frac{2LL_2-L^2+2L\theta_2}{8\pi^2}\stackrel{\mathrm{s.p.}}{=}
-\frac{L_1^2+2L\theta_1^{\phantom{1}}+2L\theta_2^{\phantom{1}}}{8\pi^2}
.
\end{align}
Finally, using relation \eqref{33-d-55}, we obtain the equality
\begin{equation}\label{33-d-59}
{\centering\adjincludegraphics[width = 3 cm, valign=c]{fig/A7-025.eps}}=
\frac{c_2}{2}\mathrm{B}_2-2\mathrm{A}_2+\mathrm{A}_3+\mathrm{A}_4,
\end{equation}
and also summing up the results of \eqref{33-d-54} and \eqref{33-d-56}, we get the answer in the form
\begin{equation}\label{33-d-60}
\mathrm{H}_{334}^1\stackrel{\mathrm{s.p.}}{=}
\frac{36\ln(\Lambda/\sigma_4)}{\pi}\,\,{\centering\adjincludegraphics[width = 3 cm, valign=c]{fig/A7-023.eps}}-\frac{72L}{\pi}\Big(c_2\mathrm{B}_2-4\mathrm{A}_2+2\mathrm{A}_3+2\mathrm{A}_4\Big)+(\kappa_5+\kappa_6)S[B].
\end{equation}
Consider the diagram $\mathrm{H}_{334}^2$. The analysis largely repeats previous calculations, so we only write out the final answer
\begin{align}\label{33-d-70}
\mathrm{H}_{334}^2\stackrel{\mathrm{s.p.}}{=}&-
\frac{36\ln(\Lambda/\sigma_4)}{\pi}\,\,{\centering\adjincludegraphics[width = 3 cm, valign=c]{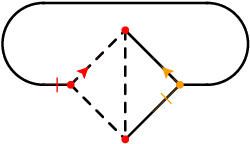}}+\frac{72c_2L}{\pi}\Big(\mathrm{B}_3-\mathrm{B}_2\Big)\\
\nonumber&+72
c_2^2J_1[B]\bigg(\frac{\ln(\Lambda/\sigma_4)L_2}{4\pi^2}+\mathrm{I}_3(\Lambda,\sigma)+\mathrm{I}_4(\Lambda,\sigma)\bigg)
+\kappa_7S[B],
\end{align}
where $\kappa_7$ is a coefficient depending on the parameter $\Lambda$.
Next, consider the remaining two contributions $\mathrm{H}_{334}^3$ and $\mathrm{H}_{334}^4$. They contain one less derivative, so it is enough to consider only the part of the diagram corresponding to the two connected vertices $\Gamma_{3,c2}\Gamma_{4,1}$. Calculations are performed using a shift of variables, and the formulas for the sum of $\mathrm{H}_{334}^3+\mathrm{H}_{334}^4$ take the form
\begin{equation*}
\mathrm{H}_{334}^3+\mathrm{H}_{334}^4=
-6\mathbb{\hat{H}}_0^{\mathrm{sc}}\Bigg(
{\centering\adjincludegraphics[width = 2 cm, valign=c]{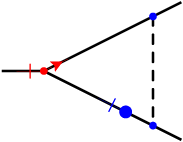}}\,
\Big(\Gamma_{3,r3}-\Gamma_{3,r4}\Big)\Bigg)
\stackrel{\mathrm{s.p.}}{=}
\frac{12\ln(\Lambda/\sigma_4)}{\pi}\mathbb{\hat{H}}_0^{\mathrm{sc}}\Bigg(
{\centering\adjincludegraphics[width = 2 cm, valign=c]{fig/A7-019.eps}}\,
\Big(\Gamma_{3,r3}-\Gamma_{3,r4}\Big)\Bigg).
\end{equation*}
Finally, by collecting all the parts of \eqref{33-d-45} and \eqref{33-d-46} and using the relation in the form
\begin{equation}\label{33-d-61}
\mathbb{\hat{H}}_0^{\mathrm{sc}}\Bigg(
{\centering\adjincludegraphics[width = 2 cm, valign=c]{fig/A7-019.eps}}\,
\Big(3\Gamma_{3,r1}-3\Gamma_{3,r2}+\Gamma_{3,r3}-\Gamma_{3,r4}\Big)\Bigg)=
\mathbb{H}_0^{\mathrm{sc}}\Bigg(
\Big(\overline{\Gamma}_3+\frac{c_2}{2}\Gamma_3\Big)\Gamma_3\Bigg)
,
\end{equation}
we get the following answer
\begin{align}\label{33-d-62}
-6\mathbb{\hat{H}}_2^{\mathrm{sc}}\Big(\Gamma_{3,c2}\Gamma_{4,1}(3\Gamma_{3,r1}-3\Gamma_{3,r2}&+\Gamma_{3,r3}-\Gamma_{3,r4})\Big)\stackrel{\mathrm{s.p.}}{=}
\frac{12\ln(\Lambda/\sigma_4)}{\pi}
\mathbb{H}_0^{\mathrm{sc}}\Bigg(
\Big(\overline{\Gamma}_3+\frac{c_2}{2}\Gamma_3\Big)\Gamma_3\Bigg)
\\\nonumber&
+\frac{144L}{\pi}\Big(2\mathrm{A}_2-\mathrm{A}_3-\mathrm{A}_4\Big)
-\frac{72c_2L}{\pi}\Big(2\mathrm{B}_2-\mathrm{B}_3\Big)
\\\nonumber&
-\frac{9
c_2^2}{\pi^2}J_1[B]
\Big(3L_1^2+6L\theta_1^{\phantom{1}}+4L\theta_2^{\phantom{1}}-2\ln(\Lambda/\sigma_4)L_2\Big)
+\hat{\kappa}_2S[B],
\end{align}
where $\hat{\kappa}_2$ is a singular coefficient that does not depend on the background field.\\

\noindent\underline{Part with vertices $\Gamma_{3,c1}$ and $\Gamma_{4,1}$.} This contribution is analyzed in a similar way. At the same time, its singular component is proportional to the classical action, that is
\begin{equation}\label{33-d-63}
6\mathbb{\hat{H}}_2^{\mathrm{sc}}\Big(\Gamma_{3,c2}\Gamma_{4,1}(3\Gamma_{3,r1}-3\Gamma_{3,r2}+\Gamma_{3,r3}-\Gamma_{3,r4})\Big)\stackrel{\mathrm{s.p.}}{=}\kappa_8S[B],
\end{equation}
where $\kappa_8$ is a singular coefficient.\\

\noindent\underline{Part with vertices \eqref{33-b-1}.} Further, using the linearity of the operator $\mathbb{\hat{H}}$, the set of diagrams can be divided into three parts
\begin{equation}\label{33-d-64}
\mathrm{H}_{334}^{5,1}=
6\mathbb{\hat{H}}_0^{\mathrm{sc}}\Big((\Gamma_{3,l3}-\Gamma_{3,l4})\Gamma_{4,2}(\Gamma_{3,r3}-\Gamma_{3,r4})\Big),
\end{equation}
\begin{equation}\label{33-d-65}
\mathrm{H}_{334}^{5,2}=
36\mathbb{\hat{H}}_0^{\mathrm{sc}}\Big((\Gamma_{3,l1}-\Gamma_{3,l2})\Gamma_{4,2}(\Gamma_{3,r3}-\Gamma_{3,r4})\Big),
\end{equation}
\begin{equation}\label{33-d-66}
\mathrm{H}_{334}^{5,3}=
54\mathbb{\hat{H}}_0^{\mathrm{sc}}\Big((\Gamma_{3,l1}-\Gamma_{3,l2})\Gamma_{4,2}(\Gamma_{3,r1}-\Gamma_{3,r2})\Big).
\end{equation}
The calculation of these three contributions largely repeats the calculations already performed. The main method of analysis is addition and subtraction, so here are just the answers
\begin{equation}\label{33-d-71}
\mathrm{H}_{334}^{5,1}\stackrel{\mathrm{s.p.}}{=}0,\,\,\,
\mathrm{H}_{334}^{5,2}\stackrel{\mathrm{s.p.}}{=}
\frac{18c_2\ln(\Lambda/\sigma_{5})}{\pi}\mathbb{\hat{H}}_0^{\mathrm{sc}}\Bigg({\centering\adjincludegraphics[ width = 1 cm, valign=c]{fig/A7-011-01.eps}}\,\,\big(\Gamma_{3,r3}-\Gamma_{3,r4}\big)\Bigg),
\end{equation}
\begin{align}\label{33-d-72}
\mathrm{H}_{334}^{5,3}\stackrel{\mathrm{s.p.}}{=}&\,
\frac{18c_2\ln(\Lambda/\sigma_{5})}{\pi}\mathbb{\hat{H}}_0^{\mathrm{sc}}\Bigg({\centering\adjincludegraphics[ width = 1 cm, valign=c]{fig/A7-011-01.eps}}\,\,\big(3\Gamma_{3,r1}-3\Gamma_{3,r2}\big)\Bigg)\\\nonumber
+&
\frac{2^33^3L}{\pi}\Big(c_2\mathrm{B}_2-4\mathrm{A}_2+2\mathrm{A}_3+2\mathrm{A}_4\Big)
\\\nonumber+&
54c_2^2J_1[B]\bigg(\frac{\ln(\Lambda/\sigma_{5})}{\pi^2}\Big(\ln(\Lambda/\sigma_{5})+L_2\Big)+8\mathrm{I}_2(\Lambda,\sigma)\bigg)+\kappa_9S[B],
\end{align}
where $\sigma_{5}$ is an auxiliary parameter, and $\kappa_9$ is a singular coefficient. The following notation for the integral was also used
\begin{align}\label{33-d-73}
\mathrm{I}_2(\Lambda,\sigma)&\,=\,\int_{\mathrm{B}_{1/\sigma}\times\mathrm{B}_{1/\sigma}}
\mathrm{d}^2x\mathrm{d}^2y\,\Big(\partial_{x_\mu}\partial_{x_\nu}G^{\Lambda,\mathbf{f}}(x)\Big)
G^{\Lambda,\mathbf{f}}(x-y)
\Big(\partial_{y^\mu}G^{\Lambda,\mathbf{f}}(y)\Big)
\Big(\partial_{y^\nu}G^{\Lambda,\mathbf{f}}(y)\Big)\\\nonumber&\stackrel{\mathrm{s.p.}}{=}\,
\int_{\mathrm{B}_{1/\sigma}\times\mathrm{B}_{1/\Lambda}}
\mathrm{d}^2x\mathrm{d}^2y\,\Big(\partial_{x_\nu}G^{\Lambda,\mathbf{f}}(x)\Big)
G^{\Lambda,\mathbf{f}}(x-y)
\Big(A_0(y)G^{\Lambda,\mathbf{f}}(y)\Big)
\Big(\partial_{y^\nu}G^{\Lambda,\mathbf{f}}(y)\Big)\\\nonumber&\,-\,\frac{1}{2}
\int_{\mathrm{B}_{1/\Lambda}\times\mathrm{B}_{1/\sigma}}
\mathrm{d}^2x\mathrm{d}^2y\,\Big(A_0(x)G^{\Lambda,\mathbf{f}}(x)\Big)
G^{\Lambda,\mathbf{f}}(x-y)
\Big(\partial_{y^\mu}G^{\Lambda,\mathbf{f}}(y)\Big)
\Big(\partial_{y_\mu}G^{\Lambda,\mathbf{f}}(y)\Big)\\\nonumber&\stackrel{\mathrm{s.p.}}{=}\,
-\frac{1}{2}
\int_{\mathrm{B}_{1/\Lambda}\times\mathrm{B}_{1/\sigma}}
\mathrm{d}^2x\mathrm{d}^2y\,\Big(A_0(x)G^{\Lambda,\mathbf{f}}(x)\Big)
G(x-y)
\Big(\partial_{y^\mu}G^{\Lambda,\mathbf{f}}(y)\Big)
\Big(\partial_{y_\mu}G^{\Lambda,\mathbf{f}}(y)\Big)
\\\nonumber&\,=\,
-\frac{1}{2}
\int_{\mathrm{B}_{1/\Lambda}\times\mathrm{B}_{1/\sigma}}
\mathrm{d}^2x\mathrm{d}^2y\,\Big(A_0(x)G^{\Lambda,\mathbf{f}}(x)\Big)
G^{1/|x|,\mathbf{0}}(y)
\Big(\partial_{y^\mu}G^{\Lambda,\mathbf{f}}(y)\Big)
\Big(\partial_{y_\mu}G^{\Lambda,\mathbf{f}}(y)\Big)
\\\nonumber&\stackrel{\mathrm{s.p.}}{=}\,
-\frac{1}{2}
\int_{\mathrm{B}_{1/\sigma}}\mathrm{d}^2y\,
G^{\Lambda,\mathbf{f}}(y)
\Big(\partial_{y^\mu}G^{\Lambda,\mathbf{f}}(y)\Big)
\Big(\partial_{y_\mu}G^{\Lambda,\mathbf{f}}(y)\Big)
\stackrel{\mathrm{s.p.}}{=}\,
-\frac{2LL_2-L^2}{16\pi^2}\stackrel{\mathrm{s.p.}}{=}-\frac{L_1^2+2L\theta_1^{\phantom{1}}}{16\pi^2}
,
\end{align}
where the calculation \eqref{33-w-7} was used in the last transition. Thus, summing up the found terms, we get
\begin{align}\label{33-d-72-1}
	\eqref{33-b-1}\stackrel{\mathrm{s.p.}}{=}&\,
	\frac{18c_2\ln(\Lambda/\sigma_{5})}{\pi}\mathbb{H}_0^{\mathrm{sc}}\big(\Gamma_3^2\big)
	+
	\frac{2^33^3L}{\pi}\Big(c_2\mathrm{B}_2-4\mathrm{A}_2+2\mathrm{A}_3+2\mathrm{A}_4\Big)
	\\\nonumber+&
	\frac{27c_2^2J_1[B]}{\pi^2}\Big(2\ln(\Lambda/\sigma_{5})(\ln(\Lambda/\sigma_{5})+L_2)-L_1^2-2L\theta_1^{\phantom{1}}\Big)+\kappa_9S[B].
\end{align}
\begin{lemma}\label{33-lem-6}Taking into account all the above, the following decomposition is true for the second part of \eqref{33-d-2}
\begin{align}\label{33-d-62-1}
\mathrm{H}_{334}^{1}\stackrel{\mathrm{s.p.}}{=}&
\frac{2c_2}{\pi}\big(\ln(\Lambda/\sigma_4)+9\ln(\Lambda/\sigma_{5})\big)
\mathbb{H}_0^{\mathrm{sc}}\big(\Gamma_3^2\big)
	\\\nonumber
	-\,&\frac{288L}{\pi}\Big(2\mathrm{A}_2-\mathrm{A}_3-\mathrm{A}_4\Big)
	+\frac{72c_2L}{\pi}\Big(\mathrm{B}_2+\mathrm{B}_3\Big)
	\\\nonumber
	+\,&\frac{9
	c_2^2J_1[B]}{\pi^2}
	\Big(-6L_1^2-12L\theta_1^{\phantom{1}}-4L\theta_2^{\phantom{1}}+2\big(\ln(\Lambda/\sigma_4)+3\ln(\Lambda/\sigma_5)\big)L_2+6\ln^2(\Lambda/\sigma_5)\Big)\\\nonumber
	+\,&\hat{\kappa}_3S[B],
\end{align}
where $\hat{\kappa}_3$ is a singular coefficient that does not depend on the background field.
\end{lemma}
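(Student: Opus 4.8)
The plan is to assemble the decomposition from the partial results already obtained in Section~\ref{33:sec:cl:4}, organized around the reduction of $\Gamma_4$ and $\Gamma_3$ to vertices with fixed external legs. First I would record the normal form of the quartic vertex: after one integration by parts and one application of the Jacobi identity \eqref{33-eq:Jacobi}, $\Gamma_4$ with fixed ends is replaced by $\Gamma_{4,1}+3\Gamma_{4,2}$, see \eqref{33-d-33}--\eqref{33-d-36}. Each triple vertex is in turn written in one of the three equivalent ``left/right/cylinder'' forms listed just before \eqref{33-d-40}. This splits the loop-free contribution into the two families \eqref{33-d-40} (those built on $\Gamma_{4,1}$) and \eqref{33-b-1} (those built on $\Gamma_{4,2}$), and it then remains to extract the singular part of each term in these two families.

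For the $\Gamma_{4,1}$ family I would treat the combinations $\Gamma_{3,c\bullet}\Gamma_{4,1}\Gamma_{3,r\bullet}$ one by one. The key technical inputs are: (i) that an internal sub-loop carrying three derivatives produces a local $\ln(\Lambda/\sigma_4)$ factor after a shift of variables, yielding the reductions \eqref{33-d-42}--\eqref{33-d-43}; (ii) Lemma~\ref{33-lem-4}, which lets one replace $G_\Lambda D_\rho\partial_\rho G_\Lambda$ on the diagonal by $-2G_\Lambda A G_\Lambda\to-\tfrac{2\ln(\Lambda/\sigma_4)}{\pi}\delta\,\delta$, giving the add-and-subtract identity \eqref{33-d-53}; and (iii) the explicit scalar integrals $\mathrm{I}_2,\mathrm{I}_3,\mathrm{I}_4$ from \eqref{33-d-57}, \eqref{33-d-58}, \eqref{33-d-73}, together with the diagrammatic identities \eqref{33-d-55}, \eqref{33-d-59}, \eqref{33-d-61}. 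Feeding these into the four pieces $\mathrm{H}_{334}^1$--$\mathrm{H}_{334}^4$ and into the $\Gamma_{3,c1},\Gamma_{3,c3},\Gamma_{3,c4}$ sub-parts produces \eqref{33-d-60}, \eqref{33-d-70}, \eqref{33-d-62}, \eqref{33-d-63}, while the convergent contractions (e.g.\ \eqref{33-d-44}) contribute nothing. For the $\Gamma_{4,2}$ family the same method applied to $\mathrm{H}_{334}^{5,1},\mathrm{H}_{334}^{5,2},\mathrm{H}_{334}^{5,3}$ gives \eqref{33-d-71}--\eqref{33-d-72}, which sum to \eqref{33-d-72-1}.

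Finally I would add \eqref{33-d-43}, \eqref{33-d-60}, \eqref{33-d-62}, \eqref{33-d-63} and \eqref{33-d-72-1}, using the two contractions $\mathbb{H}_0^{\mathrm{sc}}((c_2\Gamma_3+3\overline{\Gamma}_3)\Gamma_3)$ and $\mathbb{H}_0^{\mathrm{sc}}((\overline{\Gamma}_3+\tfrac{c_2}{2}\Gamma_3)\Gamma_3)$ to re-express all triple-vertex pairings in terms of $\mathbb{H}_0^{\mathrm{sc}}(\Gamma_3^2)$, and then collecting the coefficients of $\mathrm{A}_i$, $\mathrm{B}_i$, $J_1[B]$ and $S[B]$. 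All $\sigma_i$-dependent finite remainders and the separate $\kappa_5,\ldots,\kappa_9$ get absorbed into a single background-field-independent singular coefficient $\hat{\kappa}_3$, and the surviving combination is exactly \eqref{33-d-62-1}.

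The main obstacle I expect is bookkeeping rather than conceptual: the two families contain on the order of dozens of topologically distinct contractions, and one must consistently (a) fix the ordinary derivative on the left of each triple vertex to halve the count, (b) discard every contraction in which a derivative lands on a background field or on the nonlocal line $PS_\Lambda$, since these are convergent, and (c) keep the auxiliary scales $\sigma_4,\sigma_5$ straight so that $\ln(\Lambda/\sigma_4)$ and $\ln(\Lambda/\sigma_5)$ combine correctly with $L$, $L_1$, $L_2$ and with the $\theta_i$ hidden inside $\mathrm{I}_2$--$\mathrm{I}_4$. The only genuinely non-routine step is the add-and-subtract decomposition \eqref{33-d-48}--\eqref{33-d-56}: the subtraction term $\widetilde{\mathrm{H}}^{1}_{334}$ must be chosen so that all internal sub-divergences cancel and the leftover kernel $\rho_4^{c_2c_4,a_3d_4}(x,z)$ has a finite limit as $z\to x$, which is precisely what makes Lemma~\ref{33-lem-4} applicable to the remaining factor.
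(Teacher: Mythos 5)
Your proposal follows essentially the same route as the paper's own derivation: the reduction of $\Gamma_4$ to $\Gamma_{4,1}+3\Gamma_{4,2}$, the three fixed-leg forms of $\Gamma_3$, the split into the families \eqref{33-d-40} and \eqref{33-b-1}, the add-and-subtract treatment with Lemma \ref{33-lem-4} and the integrals $\mathrm{I}_2$--$\mathrm{I}_4$, and the final summation of \eqref{33-d-43}, \eqref{33-d-62}, \eqref{33-d-63} and \eqref{33-d-72-1} (in which the $\mathbb{H}_0^{\mathrm{sc}}(\overline{\Gamma}_3\Gamma_3)$ pieces combine, via \eqref{33-z-2} and the finiteness of $\mathbb{H}_0^{\mathrm{sc}}(\Gamma_3\hat{\Gamma}_3)$, into the stated $\tfrac{2c_2}{\pi}\ln(\Lambda/\sigma_4)$ coefficient). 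The plan is correct and correctly isolates the one non-routine step, namely the choice of the subtraction term $\widetilde{\mathrm{H}}^{1}_{334}$ that renders $\rho_4$ finite on the diagonal.
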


\subsubsection{Sum of parts for $\mathbb{H}_0^{\mathrm{sc}}(\Gamma_3^2\Gamma_4^{\phantom{1}})$}
\label{33:sec:sum1}

\begin{lemma}\label{33-lem-13} Let $\{\sigma_1,\sigma_2,\sigma_3,\sigma_4\}$ be a set of positive parameters. The numbers $L_1$, $L_2$, and $\theta_2$ are defined according to the formulations from Section \ref{33:sec:cl:1}. Then, for the sum of the diagrams $\mathbb{H}_0^{\mathrm{sc}}(\Gamma_3^2\Gamma_4^{\phantom{1}})$ the following decomposition into singular components is valid
\begin{align}\nonumber
	\mathbb{H}_0^{\mathrm{sc}}&(\Gamma_3^2\Gamma_4^{\phantom{1}})\stackrel{\mathrm{s.p.}}{=}	18\ln(\Lambda/\sigma_2)\Lambda^2\mathbb{H}_0^{\mathrm{sc}}\big(\mathrm{X}_1^{\phantom{1}}
	\widetilde{\Gamma}_2^{\phantom{1}}\big)+
	\Lambda^2
	\mathbb{H}_0^{\mathrm{sc}}\big(\Gamma_3^2\widetilde{\Gamma}_2^{\phantom{1}}\big)+
	\frac{2c_2(\ln(\Lambda/\sigma_4)+9\ln(\Lambda/\sigma_{5})+6L_1)}{\pi}
	\mathbb{H}_0^{\mathrm{sc}}\big(\Gamma_3^2\big)
	\\\label{33-d-62-2}
	-&18\ln(\Lambda/\sigma_2)\mathbb{H}_0^{\mathrm{sc}}\big(\mathrm{X}_1^{\phantom{1}}\Gamma_4^{\phantom{1}}\big)+
	3\ln(\Lambda/\sigma_3)
	\mathbb{H}_0^{\mathrm{sc}}\big(\Gamma_3^2\mathrm{X}_1^{\phantom{1}}\big)
	+54\ln(\Lambda/\sigma_2)\ln(\Lambda/\sigma_3)
	\mathbb{H}_0^{\mathrm{sc}}\big(\mathrm{X}_1^{2}\big)
	\\\nonumber
	-&
	\frac{144L}{\pi}\Big(\mathrm{A}_1+2\mathrm{A}_2-2\mathrm{A}_3-\mathrm{A}_4\Big)+
	\frac{72c_2L}{\pi}\Big(\mathrm{B}_2+\mathrm{B}_3\Big)
	\\\nonumber
	+&\frac{18
		c_2^2J_1[B]}{\pi^2}
	\Big(-3L_1^2-6L\theta_1^{\phantom{1}}-6L\theta_2^{\phantom{1}}+2L_1\ln(\Lambda/\sigma_2)+\big(\ln(\Lambda/\sigma_4)+3\ln(\Lambda/\sigma_5)\big)L_2+3\ln^2(\Lambda/\sigma_5)\Big)\\\nonumber+&
	c_2\frac{18(L_1^2+2L\theta_1^{\phantom{1}}-2\ln(\Lambda/\sigma_2)L_2)}{\pi^2}J_2[B]-
	\frac{36(L_1^2+2L\theta_1^{\phantom{1}}+2L\theta_2^{\phantom{1}})}{\pi^2}J_3[B]
	+(\hat{\kappa}_1+\hat{\kappa}_3)S[B],
\end{align}
where $\hat{\kappa}_1$ and $\hat{\kappa}_3$ are singular coefficients by the parameter $\Lambda$ that are independent of the background field.
\end{lemma}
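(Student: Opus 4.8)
The plan is to take the splitting \eqref{33-d-2}, i.e. $\mathbb{H}_0^{\mathrm{sc}}(\Gamma_3^2\Gamma_4^{\phantom{1}})=\mathrm{H}_{334}^1+\mathrm{H}_{334}^2$, substitute the expansion \eqref{33-d-25-2} of Lemma \ref{33-lem-12} for the one-loop part $\mathrm{H}_{334}^1$ and the expansion \eqref{33-d-62-1} of Lemma \ref{33-lem-6} for the loopless part $\mathrm{H}_{334}^2$, and add the two right-hand sides. After this substitution nothing analytic is left to do: one only has to collect like terms, organised by the structures that actually appear — the reducible pieces $\Lambda^2\mathbb{H}_0^{\mathrm{sc}}(\mathrm{X}_1^{\phantom{1}}\widetilde{\Gamma}_2^{\phantom{1}})$, $\Lambda^2\mathbb{H}_0^{\mathrm{sc}}(\Gamma_3^2\widetilde{\Gamma}_2^{\phantom{1}})$, $\mathbb{H}_0^{\mathrm{sc}}(\Gamma_3^2)$, $\mathbb{H}_0^{\mathrm{sc}}(\mathrm{X}_1^{\phantom{1}}\Gamma_4^{\phantom{1}})$, $\mathbb{H}_0^{\mathrm{sc}}(\Gamma_3^2\mathrm{X}_1^{\phantom{1}})$, $\mathbb{H}_0^{\mathrm{sc}}(\mathrm{X}_1^2)$; the diagrammatic combinations $\mathrm{A}_i$ and $\mathrm{B}_i$; the functionals $J_1[B]$, $J_2[B]$, $J_3[B]$; and a background-field-independent remainder proportional to $S[B]$.

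First I would carry over the structures that occur in only one of the two lemmas: the $\Lambda^2$-pieces, the logarithmically weighted $\mathrm{X}_1^{\phantom{1}}\Gamma_4^{\phantom{1}}$, $\Gamma_3^2\mathrm{X}_1^{\phantom{1}}$ and $\mathrm{X}_1^2$ terms, and the $J_2[B]$ and $J_3[B]$ contributions all come from \eqref{33-d-25-2}, while the $\mathrm{B}_i$ term $\tfrac{72c_2L}{\pi}(\mathrm{B}_2+\mathrm{B}_3)$ comes from \eqref{33-d-62-1}; these pass directly to the statement. For the coefficient of $\mathbb{H}_0^{\mathrm{sc}}(\Gamma_3^2)$ one adds $12c_2L_1/\pi$ from Lemma \ref{33-lem-12} and $2c_2(\ln(\Lambda/\sigma_4)+9\ln(\Lambda/\sigma_5))/\pi$ from Lemma \ref{33-lem-6}, getting $2c_2(\ln(\Lambda/\sigma_4)+9\ln(\Lambda/\sigma_5)+6L_1)/\pi$. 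For the $\mathrm{A}_i$-part one combines $-\tfrac{144L}{\pi}(\mathrm{A}_1-2\mathrm{A}_2+\mathrm{A}_4)$ with $-\tfrac{288L}{\pi}(2\mathrm{A}_2-\mathrm{A}_3-\mathrm{A}_4)$ and checks by elementary algebra that the sum is $-\tfrac{144L}{\pi}(\mathrm{A}_1+2\mathrm{A}_2-2\mathrm{A}_3-\mathrm{A}_4)$.

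The only step needing a little care is the coefficient of $c_2^2J_1[B]$, to which both lemmas contribute: $\tfrac{36c_2^2L_1(\ln(\Lambda/\sigma_2)-2\theta_2)}{\pi^2}$ from \eqref{33-d-25-2} and $\tfrac{9c_2^2}{\pi^2}\big(-6L_1^2-12L\theta_1-4L\theta_2+2(\ln(\Lambda/\sigma_4)+3\ln(\Lambda/\sigma_5))L_2+6\ln^2(\Lambda/\sigma_5)\big)$ from \eqref{33-d-62-1}. When adding these one must remember that $\theta_2$ is a number depending only on $\mathbf{f}(\cdot)$ and independent of $\Lambda$ (see \eqref{33-p-38-1}), so that $L_1\theta_2-L\theta_2=\tfrac{\mathbf{f}(0)}{2}\theta_2$ is finite and hence $L_1\theta_2\stackrel{\mathrm{s.p.}}{=}L\theta_2$; the bilinear $L_1\ln(\Lambda/\sigma_2)$, by contrast, is a genuine product of two divergent logarithms and must be kept as is. With this observation the $-4L_1\theta_2$ coming from Lemma \ref{33-lem-12} adds to the $-2L\theta_2$ coming from Lemma \ref{33-lem-6} to produce $-6L\theta_2$, and one reads off precisely the stated $J_1[B]$ coefficient $\tfrac{18c_2^2}{\pi^2}\big(-3L_1^2-6L\theta_1-6L\theta_2+2L_1\ln(\Lambda/\sigma_2)+(\ln(\Lambda/\sigma_4)+3\ln(\Lambda/\sigma_5))L_2+3\ln^2(\Lambda/\sigma_5)\big)$. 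The identities \eqref{33-bb-1} relating $L_1$, $L_2$ and the $\theta_k$ may be invoked if one prefers to trade $L_2$ for $L_1-\theta_1$, but this is not essential. Finally the two $S[B]$-coefficients combine into $\hat{\kappa}_1+\hat{\kappa}_3$, and the assembly is complete.

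I do not anticipate a genuine obstacle here: all the hard analysis — the asymptotic expansions near the diagonal, the shifts of variables, the use of the Jacobi identity and of the diagonal expansions from Lemma \ref{33-lem-7} — has already been carried out inside Lemmas \ref{33-lem-12} and \ref{33-lem-6}. The "main difficulty" in this final step is purely organisational: keeping the signs straight and distinguishing which log-bilinears ($L_1^2$, $L_1\ln(\Lambda/\sigma_i)$, $\ln^2(\Lambda/\sigma_i)$) must be preserved verbatim from those products $L_1\theta_k$ of a logarithm with a constant which may be freely replaced by $L\theta_k$ under $\stackrel{\mathrm{s.p.}}{=}$.
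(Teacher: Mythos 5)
Your proposal is correct and is exactly the route the paper takes: Lemma \ref{33-lem-13} is simply the sum of \eqref{33-d-25-2} (Lemma \ref{33-lem-12}) and \eqref{33-d-62-1} (Lemma \ref{33-lem-6}) via the splitting \eqref{33-d-2}, and your bookkeeping — including the algebra on the $\mathrm{A}_i$ combination and the observation that $L_1\theta_2\stackrel{\mathrm{s.p.}}{=}L\theta_2$ because $\theta_2$ is a $\Lambda$-independent constant, so that $-4L_1\theta_2-2L\theta_2$ yields the stated $-6L\theta_2$ in the $J_1[B]$ coefficient — checks out against the stated result.
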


\subsection{The case $\mathbb{H}_0^{\mathrm{sc}}(\Gamma_4^2)$}
\label{33:sec:cl:5}

Let us move on to consider the following set of diagrams, which is constructed from two quartic vertices. Omitting all operator notation, such diagrams can be divided into two types
\begin{equation}\label{33-d-74}
{\centering\adjincludegraphics[ width = 2 cm, valign=c]{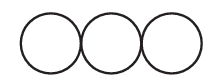}},\,\,\,
{\centering\adjincludegraphics[width = 1.4 cm, valign=c]{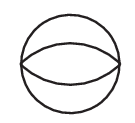}}.
\end{equation}
Thus, the required contribution is represented as the sum of two parts
\begin{equation}\label{33-d-75}
\mathbb{H}_0^{\mathrm{sc}}(\Gamma_4\Gamma_4)=
\mathrm{H}_{44}^1+\mathrm{H}_{44}^2.
\end{equation}
where the contribution $\mathrm{H}_{44}^1$ includes all diagrams of the type "caterpillar", and the contribution of $\mathrm{H}_{44}^2$ includes diagrams of the type "watermelon".
We study them separately.
\subsubsection{Part with loops $\mathbb{H}_2^{\mathrm{c}}(\Gamma_4)$}
The part with loops $\mathbb{H}_2^{\mathrm{c}}(\Gamma_4)$ according to the decomposition of \eqref{33-d-75} is denoted by $\mathrm{H}_{44}^1$. Note that the vertex $\mathbb{H}_2^{\mathrm{c}}(\Gamma_4) A $ with two outer lines can be represented as the sum of three parts, see \eqref{33-d-4}, so we can use the following decomposition
\begin{equation*}\label{33-d-75-1}
\mathbb{H}_0^{\mathrm{sc}}(\Gamma_4\Gamma_4)=
\mathbb{H}_0^{\mathrm{sc}}\big((\mathbb{H}_2^{\mathrm{c}}(\Gamma_4))^2\big)=
\mathbb{H}_0^{\mathrm{sc}}(\mathrm{D}_1^2)+
\mathbb{H}_0^{\mathrm{sc}}(\mathrm{D}_2^2)+
\mathbb{H}_0^{\mathrm{sc}}(\mathrm{D}_3^2)+
2\mathbb{H}_0^{\mathrm{sc}}(\mathrm{D}_1\mathrm{D}_2)+
2\mathbb{H}_0^{\mathrm{sc}}(\mathrm{D}_2\mathrm{D}_3)+
2\mathbb{H}_0^{\mathrm{sc}}(\mathrm{D}_3\mathrm{D}_1).
\end{equation*}
Let us consider them separately as well.\\

\noindent\underline{Part $\mathbb{H}_0^{\mathrm{sc}}(\mathrm{D}_2^2)$.} First, we transform the vertex $\mathrm{D}_2$. It is clear that under the sign of the operator $\mathbb{H}_0^{\mathrm{sc}}$, the vertices acquire a symmetrical appearance relative to the permutation of the outer lines, so we write a combination in which the derivative acts on the lower outer line
\begin{equation}\label{33-d-91}
\mathrm{D}_2=
{\centering\adjincludegraphics[ width = 0.8 cm, valign=c]{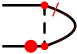}}+
{\centering\adjincludegraphics[reflect, angle=180, width = 0.8 cm, valign=c]{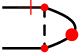}}+
{\centering\adjincludegraphics[reflect, angle=180,width = 1.3 cm, valign=c]{fig/A6-005.eps}}+
{\centering\adjincludegraphics[reflect, angle=180,width = 1.3 cm, valign=c]{fig/A6-006.eps}}.
\end{equation}
Next, we perform calculations similar to those in \eqref{33-d-16} and \eqref{33-d-17}, taking into account the result of Lemma \ref{33-lem-7}, then we obtain a decomposition of the form
\begin{equation}\label{33-d-92}
\mathrm{D}_2=3L_1\mathrm{X}_1+2\mathrm{\hat{D}}_2+\mathcal{O}\big(L/\Lambda^2\big),
\end{equation}
where the correction term is the finite sum of vertices with two external lines containing at most one derivative in total, and
\begin{equation}\label{33-d-90}
\mathrm{\hat{D}}_2=
{\centering\adjincludegraphics[ width = 0.8 cm, valign=c]{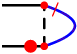}}+
{\centering\adjincludegraphics[ reflect, angle=180, width = 0.8 cm, valign=c]{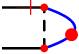}}+
{\centering\adjincludegraphics[ width = 1.3 cm, valign=c]{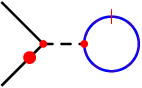}}+
{\centering\adjincludegraphics[ width = 1.3 cm, valign=c]{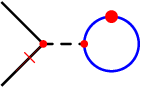}}.
\end{equation}
Therefore, by performing addition and subtraction, for the contribution of interest, we obtain
\begin{equation}\label{33-d-93}
\mathbb{H}_0^{\mathrm{sc}}(\mathrm{D}_2^2)\stackrel{\mathrm{s.p.}}{=}
-9L_1^2\mathbb{H}_0^{\mathrm{sc}}(\mathrm{X}_1^2)+
6L_1\mathbb{H}_0^{\mathrm{sc}}(\mathrm{D}_2\mathrm{X}_1)+4
\mathbb{H}_0^{\mathrm{sc}}(\mathrm{\hat{D}}_2^2).
\end{equation}
Let us consider the last term in more detail. Note that only logarithmic singularities of the first degree appear in it, so the vertices from \eqref{33-d-90} can be changed in the sense of equality of the singular parts in the final diagram as follows:
\begin{enumerate}
	\item replace the point $\bullet$ on the outer line with the dash $\mid\,$;
	\item rearrange the derivative from one external line to another with a minus sign.
\end{enumerate}
Therefore, if we consider the symmetric combination of the vertex \eqref{33-d-90}
\begin{equation}\label{33-d-94}
{\centering\adjincludegraphics[ width = 0.8 cm, valign=c]{fig/A8-019.eps}}+
{\centering\adjincludegraphics[ reflect, angle=180, width = 0.8 cm, valign=c]{fig/A8-002.eps}}+
{\centering\adjincludegraphics[reflect, angle=180, width = 0.8 cm, valign=c]{fig/A8-019.eps}}+
{\centering\adjincludegraphics[ width = 0.8 cm, valign=c]{fig/A8-002.eps}}+
{\centering\adjincludegraphics[ width = 1.3 cm, valign=c]{fig/A8-026.eps}}+
{\centering\adjincludegraphics[ width = 1.3 cm, valign=c]{fig/A8-004.eps}}+
{\centering\adjincludegraphics[ reflect, angle=180,width = 1.3 cm, valign=c]{fig/A8-026.eps}}+
{\centering\adjincludegraphics[ reflect, angle=180,width = 1.3 cm, valign=c]{fig/A8-004.eps}},
\end{equation}
then, for example, the second and fourth terms, taking into account the use of identity \eqref{33-eq:Jacobi} for the structure constants, can be transformed as follows
\begin{equation}\label{33-d-95}
{\centering\adjincludegraphics[ reflect, angle=180, width = 0.8 cm, valign=c]{fig/A8-002.eps}}+
{\centering\adjincludegraphics[ width = 0.8 cm, valign=c]{fig/A8-002.eps}}\to
{\centering\adjincludegraphics[ reflect, angle=180, width = 0.8 cm, valign=c]{fig/A8-002.eps}}-
{\centering\adjincludegraphics[  reflect, angle=180,width = 0.8 cm, valign=c]{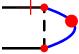}}
=
{\centering\adjincludegraphics[width = 1.3 cm, valign=c]{fig/A8-004.eps}}.
\end{equation}
Applying similar transformations to the rest of the diagrams, we obtain that the combination from \eqref{33-d-94} can be rewritten as
\begin{equation}\label{33-d-96}
3{\centering\adjincludegraphics[reflect, angle=180,width = 1.3 cm, valign=c]{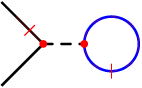}}+3
{\centering\adjincludegraphics[width = 1.3 cm, valign=c]{fig/A8-004.eps}}=
6{\centering\adjincludegraphics[reflect, angle=180,width = 1.3 cm, valign=c]{fig/A8-003.eps}}+
3{\centering\adjincludegraphics[reflect, angle=180,width = 1.3 cm, valign=c]{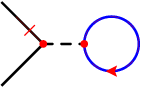}}
=\mathrm{\widetilde{D}}_2.
\end{equation}
Thus, after a series of calculations taking into account the result of Lemma \ref{33-lem-2}, we have
\begin{equation}\label{33-d-97}
4\mathbb{H}_0^{\mathrm{sc}}(\mathrm{\hat{D}}_2^2)\stackrel{\mathrm{s.p.}}{=}
2\mathbb{\hat{H}}_0^{\mathrm{sc}}(\mathrm{\widetilde{D}}_2^2)\stackrel{\mathrm{s.p.}}{=}
-\frac{72c_2L}{\pi}\mathrm{B}_5-
\frac{72c_2L}{\pi}\mathrm{B}_4+
\frac{18c_2L}{\pi}\mathrm{B}_3,
\end{equation}
where the notation from \eqref{33-s-9} was used, as well as two functions from Section \ref{33:sec:dia}.
Finally, the answer for the contribution under study takes the form
\begin{equation}\label{33-d-93-1}
	\mathbb{H}_0^{\mathrm{sc}}(\mathrm{D}_2^2)\stackrel{\mathrm{s.p.}}{=}
	-9L_1^2\mathbb{H}_0^{\mathrm{sc}}(\mathrm{X}_1^2)+
	6L_1\mathbb{H}_0^{\mathrm{sc}}(\mathrm{D}_2\mathrm{X}_1)+\frac{18c_2L}{\pi}\Big(-4\mathrm{B}_5-4
	\mathrm{B}_4+\mathrm{B}_3\Big).
\end{equation}

\noindent\underline{Part $2\mathbb{H}_0^{\mathrm{sc}}(\mathrm{D}_1\mathrm{D}_2)$.} To begin with, consider the vertex $\mathrm{D}_1$, taking into account the symmetrization\footnote{This is exactly the combination that appears when using the operator $\mathbb{H}_0^{\mathrm{sc}}$.} of external lines. Using the integration by parts, the following decompositions can be obtained
\begin{align}\label{33-d-76}
-{\centering\adjincludegraphics[ width = 0.8 cm, valign=c]{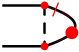}}-
{\centering\adjincludegraphics[ reflect, angle=180, width = 0.8 cm, valign=c]{fig/A8-009.eps}}&\,=\,
{\centering\adjincludegraphics[ reflect, angle=180, width = 0.85 cm, valign=c]{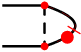}}+
{\centering\adjincludegraphics[ reflect, angle=180, width = 0.8 cm, valign=c]{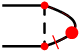}}+
{\centering\adjincludegraphics[ width = 0.8 cm, valign=c]{fig/A8-017.eps}}+
{\centering\adjincludegraphics[ reflect, angle=180, width = 0.8 cm, valign=c]{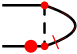}}+
{\centering\adjincludegraphics[ reflect, angle=180, width = 0.8 cm, valign=c]{fig/A8-011.eps}}+
{\centering\adjincludegraphics[ reflect, angle=180, width = 0.8 cm, valign=c]{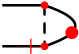}}
\\\nonumber
&\,=\,
{\centering\adjincludegraphics[ width = 0.85 cm, valign=c]{fig/A8-022.eps}}+
{\centering\adjincludegraphics[ width = 0.8 cm, valign=c]{fig/A8-010.eps}}+
{\centering\adjincludegraphics[reflect, angle=180, width = 0.8 cm, valign=c]{fig/A8-017.eps}}+
{\centering\adjincludegraphics[ width = 0.8 cm, valign=c]{fig/A8-018.eps}}+
{\centering\adjincludegraphics[ width = 0.8 cm, valign=c]{fig/A8-011.eps}}+
{\centering\adjincludegraphics[ width = 0.8 cm, valign=c]{fig/A8-012.eps}}.
\end{align}
Further, using the result of Lemma \ref{33-lem-7}, we can verify the validity of the following set of equalities
\begin{align}\label{33-d-78-1}
{\centering\adjincludegraphics[ width = 0.8 cm, valign=c]{fig/A8-017.eps}}+
{\centering\adjincludegraphics[ width = 0.8 cm, valign=c]{fig/A8-018.eps}}&\,=\,
2\,{\centering\adjincludegraphics[ width = 0.8 cm, valign=c]{fig/A8-019.eps}}+
2\,{\centering\adjincludegraphics[ width = 0.8 cm, valign=c]{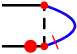}}+\mathcal{O}\big(L/\Lambda^2\big),
\\\nonumber
{\centering\adjincludegraphics[ reflect, angle=180, width = 0.8 cm, valign=c]{fig/A8-018.eps}}+
{\centering\adjincludegraphics[reflect, angle=180, width = 0.8 cm, valign=c]{fig/A8-017.eps}}
&\,=\,
2\,{\centering\adjincludegraphics[reflect, angle=180, width = 0.8 cm, valign=c]{fig/A8-019.eps}}+
2\,{\centering\adjincludegraphics[reflect, angle=180, width = 0.8 cm, valign=c]{fig/A8-020.eps}}+\mathcal{O}\big(L/\Lambda^2\big),
\\\nonumber
{\centering\adjincludegraphics[ reflect, angle=180, width = 0.8 cm, valign=c]{fig/A8-011.eps}}+
{\centering\adjincludegraphics[ width = 0.8 cm, valign=c]{fig/A8-012.eps}}
&\,=\,
2\,{\centering\adjincludegraphics[reflect, angle=180, width = 0.8 cm, valign=c]{fig/A8-005.eps}}+
2\,{\centering\adjincludegraphics[reflect, angle=180, width = 0.8 cm, valign=c]{fig/A8-002.eps}}+\mathcal{O}\big(L/\Lambda^2\big),
\\\nonumber
{\centering\adjincludegraphics[ reflect, angle=180, width = 0.8 cm, valign=c]{fig/A8-012.eps}}+
{\centering\adjincludegraphics[ width = 0.8 cm, valign=c]{fig/A8-011.eps}}
&\,=\,
2\,{\centering\adjincludegraphics[ width = 0.8 cm, valign=c]{fig/A8-005.eps}}+
2\,{\centering\adjincludegraphics[ width = 0.8 cm, valign=c]{fig/A8-002.eps}}+\mathcal{O}\big(L/\Lambda^2\big),
\end{align}
where the correction terms are vertices with two external lines. In this case, the outer lines can contain no more than one derivative in total. Next, paying attention to the relation
\begin{equation}\label{33-d-78}
\partial_{x_\nu}D_\nu^{ad}(x)+D_\nu^{ad}(x)\partial_{x_\nu}=\,\,
\mid\bullet+\bullet\mid\,\,=-4A^{ad}(x),
\end{equation}
the definition for the vertex $\widetilde{\Gamma}_2$ from \eqref{33-d-14}, as well as the decomposition
\begin{equation}\label{33-d-79}
\frac{1}{2}\Big(A^{ad}(x)G_\Lambda^{db}(x,y)\big|_{y=x}+A^{bd}(x)G_\Lambda^{da}(x,y)\big|_{y=x}\Big)=
\frac{\Lambda^2\alpha_1(\mathbf{f})}{4\pi}\delta^{ed}-\frac{\theta_2}{4\pi}B_\mu^{ad}(x)B_\mu^{db}(x)
+o(1),
\end{equation}
derived in \cite{Iv-244}, we obtain equality with fixed external lines of the form\footnote{The vertices $\widetilde{\Gamma}_2$ and $\mathrm{X}_2$ are symmetric with respect to the permutation of the outer lines, so the fixation option is not important.}
\begin{equation}\label{33-d-81}
-{\centering\adjincludegraphics[ width = 0.8 cm, valign=c]{fig/A8-009.eps}}-
{\centering\adjincludegraphics[ reflect, angle=180, width = 0.8 cm, valign=c]{fig/A8-009.eps}}=
2\Lambda^2\widetilde{\Gamma}_2-2\theta_2\mathrm{X}_2
+2\mathrm{\widetilde{D}}_1
+ L\Lambda^{-2}\Theta_1+\Lambda^{-1}\Theta_2,
\end{equation}
where
\begin{equation}\label{33-d-99}
\mathrm{\widetilde{D}}_1=
-{\centering\adjincludegraphics[ width = 0.8 cm, valign=c]{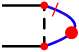}}
-{\centering\adjincludegraphics[reflect, angle=180, width = 0.8 cm, valign=c]{fig/A8-025.eps}}
-\frac{1}{2}\bigg({\centering\adjincludegraphics[ width = 0.8 cm, valign=c]{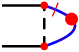}}
+{\centering\adjincludegraphics[ width = 0.8 cm, valign=c]{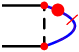}}\bigg)
-\frac{1}{2}\bigg({\centering\adjincludegraphics[ reflect, angle=180,width = 0.8 cm, valign=c]{fig/A8-023.eps}}
+{\centering\adjincludegraphics[reflect, angle=180, width = 0.8 cm, valign=c]{fig/A8-024.eps}}\bigg).
\end{equation}
Here $\Theta_1$ is the sum of the correction terms from \eqref{33-d-78-1}, and the term $\Theta_2$ is a vertex with two outer lines without derivatives and a finite density.

Then, using the linearity of the operator $\mathbb{H}_0^{\mathrm{sc}}$ and the previously obtained representation \eqref{33-d-92} for the vertex $\mathrm{D}_2$, we rewrite the contribution as follows
\begin{equation}\label{33-d-82}
2\mathbb{H}_0^{\mathrm{sc}}(\mathrm{D}_1\mathrm{D}_2)=
6L_1\mathbb{H}_0^{\mathrm{sc}}(\mathrm{D}_1\mathrm{X}_1)+
2\mathbb{H}_0^{\mathrm{sc}}\big(\mathrm{D}_1(\mathrm{D}_2-3L_1\mathrm{X}_1)\big).
\end{equation}
Further, substituting the expansion from \eqref{33-d-81}, we note that the equality is valid
\begin{equation}\label{33-d-83}
\mathbb{H}_0^{\mathrm{sc}}\big(\mathrm{D}_1(\mathrm{D}_2-3L_1\mathrm{X}_1)\big)\stackrel{\mathrm{s.p.}}{=}\Lambda^2\mathbb{H}_0^{\mathrm{sc}}\big(\widetilde{\Gamma}_2(\mathrm{D}_2-3L_1\mathrm{X}_1)\big).
\end{equation}
Thus, we get
\begin{equation}\label{33-d-84}
2\mathbb{H}_0^{\mathrm{sc}}(\mathrm{D}_1\mathrm{D}_2)\stackrel{\mathrm{s.p.}}{=}
6L_1\mathbb{H}_0^{\mathrm{sc}}(\mathrm{D}_1\mathrm{X}_1)-
6L_1\Lambda^2\mathbb{H}_0^{\mathrm{sc}}\big(\widetilde{\Gamma}_2\mathrm{X}_1\big)+
2\Lambda^2\mathbb{H}_0^{\mathrm{sc}}\big(\widetilde{\Gamma}_2\mathrm{D}_2\big).
\end{equation}
\noindent\underline{Part $2\mathbb{H}_0^{\mathrm{sc}}(\mathrm{D}_3\mathrm{D}_2)$.} Let us start considering the contribution again with auxiliary transformations. We choose a symmetric combination of the vertex $\mathrm{D}_3$, then after integration by parts we get
\begin{equation}\label{33-d-85}
-{\centering\adjincludegraphics[ width = 0.8 cm, valign=c]{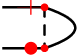}}-
{\centering\adjincludegraphics[ reflect, angle=180, width = 0.8 cm, valign=c]{fig/A8-015.eps}}=
{\centering\adjincludegraphics[ width = 0.85 cm, valign=c]{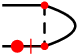}}+
{\centering\adjincludegraphics[ width = 0.8 cm, valign=c]{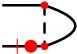}}+
{\centering\adjincludegraphics[ width = 0.8 cm, valign=c]{fig/A8-017.eps}}+
{\centering\adjincludegraphics[ width = 0.8 cm, valign=c]{fig/A8-018.eps}}+
{\centering\adjincludegraphics[ width = 0.8 cm, valign=c]{fig/A8-012.eps}}+
{\centering\adjincludegraphics[ reflect, angle=180, width = 0.8 cm, valign=c]{fig/A8-011.eps}}.
\end{equation}
Note that in the same way, on the right side, we can get a combination that is symmetrical with respect to the horizontal axis. We continue to use this fact by constructing the Laplace operator from the convenient side. Next, we note that the relations are correct
\begin{equation}\label{33-d-86}
{\centering\adjincludegraphics[ width = 0.8 cm, valign=c]{fig/A8-017.eps}}+
{\centering\adjincludegraphics[ width = 0.8 cm, valign=c]{fig/A8-018.eps}}=
2\,{\centering\adjincludegraphics[ width = 0.8 cm, valign=c]{fig/A8-019.eps}}+
2\,{\centering\adjincludegraphics[ width = 0.8 cm, valign=c]{fig/A8-020.eps}}
+\mathcal{O}\big(L/\Lambda^2\big),
\end{equation}
\begin{equation}\label{33-d-87}
{\centering\adjincludegraphics[ width = 0.8 cm, valign=c]{fig/A8-012.eps}}+
{\centering\adjincludegraphics[ reflect, angle=180, width = 0.8 cm, valign=c]{fig/A8-011.eps}}=
2\,{\centering\adjincludegraphics[ reflect, angle=180, width = 0.8 cm, valign=c]{fig/A8-005.eps}}+
2\,{\centering\adjincludegraphics[ reflect, angle=180, width = 0.8 cm, valign=c]{fig/A8-002.eps}}
+\mathcal{O}\big(L/\Lambda^2\big),
\end{equation}
where the correction terms are vertices with two external lines containing at most one derivative in total. Thus, we get
\begin{equation}\label{33-d-103}
-{\centering\adjincludegraphics[ width = 0.8 cm, valign=c]{fig/A8-015.eps}}-
{\centering\adjincludegraphics[ reflect, angle=180, width = 0.8 cm, valign=c]{fig/A8-015.eps}}=
{\centering\adjincludegraphics[ width = 0.85 cm, valign=c]{fig/A8-016.eps}}+
{\centering\adjincludegraphics[ width = 0.8 cm, valign=c]{fig/A8-021.eps}}+
2\mathrm{\widetilde{D}}_3+\mathcal{O}\big(L/\Lambda^2\big),
\end{equation}
where
\begin{equation}\label{33-d-104}
\mathrm{\widetilde{D}}_3=
{\centering\adjincludegraphics[ width = 0.8 cm, valign=c]{fig/A8-019.eps}}+
{\centering\adjincludegraphics[ width = 0.8 cm, valign=c]{fig/A8-020.eps}}+
{\centering\adjincludegraphics[ reflect, angle=180, width = 0.8 cm, valign=c]{fig/A8-005.eps}}+
{\centering\adjincludegraphics[ reflect, angle=180, width = 0.8 cm, valign=c]{fig/A8-002.eps}}.
\end{equation}
Now let us go back to the original combination and rewrite it as follows
\begin{equation}\label{33-d-88}
2\mathbb{H}_0^{\mathrm{sc}}(\mathrm{D}_3\mathrm{D}_2)=
6L_1\mathbb{H}_0^{\mathrm{sc}}(\mathrm{D}_3\mathrm{X}_1)+
2\mathbb{H}_0^{\mathrm{sc}}\big(\mathrm{D}_3(\mathrm{D}_2-3L_1\mathrm{X}_1)\big).
\end{equation}
Further, substituting expansion \eqref{33-d-92} for $\mathrm{D}_2-3L_1\mathrm{X}_1$ and using formula \eqref{33-d-85}, the second term in \eqref{33-d-88} can be rewritten as the sum of two parts
\begin{equation}\label{33-d-89}
2\mathbb{H}_0^{\mathrm{sc}}\big(\mathrm{D}_3(\mathrm{D}_2-3L_1\mathrm{X}_1)\big)=
\mathrm{H}_{44}^{1,1}+\mathrm{H}_{44}^{1,2},
\end{equation}
where
\begin{equation}\label{33-d-100}
\mathrm{H}_{44}^{1,1}=8\mathbb{\hat{H}}_0^{\mathrm{sc}}\Bigg(\bigg(
{\centering\adjincludegraphics[angle=180, width = 0.8 cm, valign=c]{fig/A8-019.eps}}+
{\centering\adjincludegraphics[ reflect, width = 0.8 cm, valign=c]{fig/A8-002.eps}}+
{\centering\adjincludegraphics[angle=180, width = 1.3 cm, valign=c]{fig/A8-026.eps}}+
{\centering\adjincludegraphics[angle=180, width = 1.3 cm, valign=c]{fig/A8-004.eps}}\,\bigg)
\bigg({\centering\adjincludegraphics[ width = 0.8 cm, valign=c]{fig/A8-019.eps}}+
{\centering\adjincludegraphics[ width = 0.8 cm, valign=c]{fig/A8-020.eps}}+
{\centering\adjincludegraphics[ reflect, angle=180, width = 0.8 cm, valign=c]{fig/A8-005.eps}}+
{\centering\adjincludegraphics[ reflect, angle=180, width = 0.8 cm, valign=c]{fig/A8-002.eps}}\bigg)
\Bigg),
\end{equation}
\begin{equation}\label{33-d-101}
\mathrm{H}_{44}^{1,2}=4\mathbb{\hat{H}}_0^{\mathrm{sc}}\Bigg(\bigg(
{\centering\adjincludegraphics[angle=180, width = 0.8 cm, valign=c]{fig/A8-019.eps}}+
{\centering\adjincludegraphics[ reflect, width = 0.8 cm, valign=c]{fig/A8-002.eps}}+
{\centering\adjincludegraphics[angle=180, width = 1.3 cm, valign=c]{fig/A8-026.eps}}+
{\centering\adjincludegraphics[angle=180, width = 1.3 cm, valign=c]{fig/A8-004.eps}}\,\bigg)
\bigg({\centering\adjincludegraphics[ width = 0.85 cm, valign=c]{fig/A8-016.eps}}+
{\centering\adjincludegraphics[ width = 0.8 cm, valign=c]{fig/A8-021.eps}}\bigg)
\Bigg).
\end{equation}
By performing calculations similar to those above, we obtain the following equalities
\begin{equation}\label{33-d-102}
\mathrm{H}_{44}^{1,1}\stackrel{\mathrm{s.p.}}{=}
-\frac{4L}{\pi}\Big(c_2\mathrm{B}_3-8\mathrm{A}_2+4\mathrm{A}_3+4\mathrm{A}_4+
8\mathrm{A}_5+
8\mathrm{A}_6
-8\mathrm{A}_7
-8\mathrm{A}_8
\Big)+\kappa_{10}S[B],
\end{equation}
\begin{align}\label{33-d-105}
\mathrm{H}_{44}^{1,2}\stackrel{\mathrm{s.p.}}{=}
&-\frac{8L}{\pi}\Big(c_2\mathrm{B}_3+8c_2\mathrm{B}_4+8c_2\mathrm{B}_5-8c_2\mathrm{B}_6+8c_2\mathrm{B}_7-2\mathrm{A}_2+4\mathrm{A}_3-2\mathrm{A}_4\Big)\\\nonumber&-\frac{24c_2^2L_1}{\pi}
\Bigg(\frac{4\theta_2-2\theta_1}{2\pi}J_4[B]-
\frac{4\theta_2+2L_1}{4\pi}J_1[B]\Bigg)+\kappa_{11}S[B]
,
\end{align}
where the definitions from \eqref{33-d-28} and \eqref{33-d-28-1} were used in the answer, as well as diagrams from Section \ref{33:sec:dia}. Also $\kappa_{10}$ and $\kappa_{11}$ are singular coefficients independent of the background field. Thus, we get the final answer in the form
\begin{align}\label{33-d-105-1}
2\mathbb{H}_0^{\mathrm{sc}}(\mathrm{D}_3\mathrm{D}_2)\stackrel{\mathrm{s.p.}}{=}&6L_1\mathbb{H}_0^{\mathrm{sc}}(\mathrm{D}_3\mathrm{X}_1)+(\kappa_{10}+\kappa_{11})S[B]
	\\\nonumber
	&-\frac{8L}{\pi}\Big(-6\mathrm{A}_2+6\mathrm{A}_3+4\mathrm{A}_5+
	4\mathrm{A}_6
	-4\mathrm{A}_7
	-4\mathrm{A}_8\Big)
	\\\nonumber
	&-\frac{8Lc_2}{\pi}\Bigg(\frac{3}{2}\mathrm{B}_3+8\mathrm{B}_4+8\mathrm{B}_5-8\mathrm{B}_6+8\mathrm{B}_7\Bigg)\\\nonumber&-\frac{6c_2^2L_1}{\pi^2}
	\Bigg(\Big(8\theta_2-4\theta_1\Big)J_4[B]-
	\Big(4\theta_2+2L_1\Big)J_1[B]\Bigg).
\end{align}

\noindent\underline{Part $\mathbb{H}_0^{\mathrm{sc}}(\mathrm{D}_1^2)$.} In this case, it is convenient to use the decompositions from \eqref{33-d-79} and \eqref{33-d-81}. Then, taking into account the relation
\begin{equation}\label{33-d-108}
\mathbb{H}_0^{\mathrm{sc}}\big((\mathrm{D}_1-\Lambda^2\widetilde{\Gamma}_2)^2\big)\stackrel{\mathrm{s.p.}}{=}0,
\end{equation}
after the appropriate addition and subtraction, we get
\begin{equation}\label{33-d-109}
\mathbb{H}_0^{\mathrm{sc}}(\mathrm{D}_1^2)\stackrel{\mathrm{s.p.}}{=}
-\Lambda^4\mathbb{H}_0^{\mathrm{sc}}(\widetilde{\Gamma}_2\widetilde{\Gamma}_2)+
2\Lambda^2\mathbb{H}_0^{\mathrm{sc}}(\mathrm{D}_1\widetilde{\Gamma}_2).
\end{equation}

\noindent\underline{Part $2\mathbb{H}_0^{\mathrm{sc}}(\mathrm{D}_1\mathrm{D}_3)$.} Let us apply the expansions from \eqref{33-d-79} and \eqref{33-d-103}, then, using addition and subtraction again, we can write out the relation
\begin{equation}\label{33-d-110}
2\mathbb{H}_0^{\mathrm{sc}}(\mathrm{D}_1\mathrm{D}_3)=
2\Lambda^2\mathbb{H}_0^{\mathrm{sc}}(\widetilde{\Gamma}_2\mathrm{D}_3)+
2\mathbb{H}_0^{\mathrm{sc}}\big((\mathrm{D}_1-\Lambda^2\widetilde{\Gamma}_2)\mathrm{D}_3\big).
\end{equation}
In this case, the last term, in turn, can be decomposed as
\begin{equation}\label{33-d-111}
2\mathbb{H}_0^{\mathrm{sc}}\big((\mathrm{D}_1-\Lambda^2\widetilde{\Gamma}_2)\mathrm{D}_3\big)=
\mathrm{H}_{44}^{1,3}+
\mathrm{H}_{44}^{1,4}+\mathcal{O}(1),
\end{equation}
where
\begin{equation}\label{33-d-112}
\mathrm{H}_{44}^{1,3}=2
\mathbb{\hat{H}}_0^{\mathrm{sc}}
\Bigg(
\bigg(
{\centering\adjincludegraphics[angle=180, width = 0.85 cm, valign=c]{fig/A8-016.eps}}+
{\centering\adjincludegraphics[angle=180, width = 0.8 cm, valign=c]{fig/A8-021.eps}}
\bigg)\bigg(
-{\centering\adjincludegraphics[ width = 0.8 cm, valign=c]{fig/A8-025.eps}}
-{\centering\adjincludegraphics[reflect, angle=180, width = 0.8 cm, valign=c]{fig/A8-025.eps}}
-\frac{1}{2}\bigg[{\centering\adjincludegraphics[ width = 0.8 cm, valign=c]{fig/A8-023.eps}}
+{\centering\adjincludegraphics[ width = 0.8 cm, valign=c]{fig/A8-024.eps}}\bigg]
-\frac{1}{2}\bigg[{\centering\adjincludegraphics[ reflect, angle=180,width = 0.8 cm, valign=c]{fig/A8-023.eps}}
+{\centering\adjincludegraphics[reflect, angle=180, width = 0.8 cm, valign=c]{fig/A8-024.eps}}\bigg]\bigg)\Bigg),
\end{equation}
\begin{equation}\label{33-d-113}
\mathrm{H}_{44}^{1,4}=-2\theta_2
\mathbb{\hat{H}}_0^{\mathrm{sc}}
\Bigg(
\bigg(
{\centering\adjincludegraphics[angle=180, width = 0.85 cm, valign=c]{fig/A8-016.eps}}+
{\centering\adjincludegraphics[angle=180, width = 0.8 cm, valign=c]{fig/A8-021.eps}}
\bigg)\mathrm{X}_2\Bigg).
\end{equation}
Note that transition \eqref{33-d-111} used the fact that combinations of the vertex connection $\mathrm{\widetilde{D}}_3$ with the vertices $\mathrm{\widetilde{D}}_1$ and $\mathrm{X}_2$ give diagrams without singular components. Further, the contributions from \eqref{33-d-112} and \eqref{33-d-113} can be investigated using the methods already used above, so we only write out the answers. We have
\begin{equation}\label{33-d-114}
\mathrm{H}_{44}^{1,3}\stackrel{\mathrm{s.p.}}{=}
\frac{32Lc_2}{\pi}\Big(\mathrm{B}_4+\mathrm{B}_5+2\mathrm{B}_6-2\mathrm{B}_7\Big)
-\frac{32L}{\pi}\Big(\mathrm{A}_5+\mathrm{A}_6-\mathrm{A}_7-\mathrm{A}_8\Big),
\end{equation}
\begin{equation}\label{33-d-115}
\mathrm{H}_{44}^{1,4}\stackrel{\mathrm{s.p.}}{=}
\frac{8L\theta_2}{\pi^2}\Big(-J_3[B]+c_2J_2[B]\Big)+\kappa_{12}S[B],
\end{equation}
where $\kappa_{12}$ is a singular coefficient independent of the background field. Summing up the answers, we get
\begin{align}\label{33-d-105-2}
	2\mathbb{H}_0^{\mathrm{sc}}(\mathrm{D}_1\mathrm{D}_3)\stackrel{\mathrm{s.p.}}{=}&\,2\Lambda^2\mathbb{H}_0^{\mathrm{sc}}(\widetilde{\Gamma}_2\mathrm{D}_3)+\kappa_{12}S[B]
	\\\nonumber
	&-\frac{32L}{\pi}\Big(\mathrm{A}_5+\mathrm{A}_6-\mathrm{A}_7-\mathrm{A}_8\Big)
	\\\nonumber
	&+\frac{32Lc_2}{\pi}\Big(\mathrm{B}_4+\mathrm{B}_5+2\mathrm{B}_6-2\mathrm{B}_7\Big)\\\nonumber&+\frac{8L\theta_2}{\pi^2}\Big(-J_3[B]+c_2J_2[B]\Big).
\end{align}

\noindent\underline{Part $\mathbb{H}_0^{\mathrm{sc}}(\mathrm{D}_3^2)$.} Let us use the transformation from \eqref{33-d-103} for a symmetric vertex, then the contribution can be rewritten as three terms
\begin{equation}\label{33-d-116}
\mathbb{H}_0^{\mathrm{sc}}(\mathrm{D}_3^2)=
\mathrm{H}_{44}^{1,4}+
\mathrm{H}_{44}^{1,5}+
\mathrm{H}_{44}^{1,6}
+o(1)
\end{equation}
where
\begin{equation}\label{33-d-117}
\mathrm{H}_{44}^{1,4}=
2\mathbb{\hat{H}}_0^{\mathrm{sc}}\Bigg(
\bigg(
{\centering\adjincludegraphics[ angle=180,width = 0.8 cm, valign=c]{fig/A8-019.eps}}+
{\centering\adjincludegraphics[ angle=180,width = 0.8 cm, valign=c]{fig/A8-020.eps}}+
{\centering\adjincludegraphics[ angle=180,reflect, angle=180, width = 0.8 cm, valign=c]{fig/A8-005.eps}}+
{\centering\adjincludegraphics[ angle=180,reflect, angle=180, width = 0.8 cm, valign=c]{fig/A8-002.eps}}
\bigg)
\bigg(
{\centering\adjincludegraphics[ width = 0.8 cm, valign=c]{fig/A8-019.eps}}+
{\centering\adjincludegraphics[ width = 0.8 cm, valign=c]{fig/A8-020.eps}}+
{\centering\adjincludegraphics[ reflect, angle=180, width = 0.8 cm, valign=c]{fig/A8-005.eps}}+
{\centering\adjincludegraphics[ reflect, angle=180, width = 0.8 cm, valign=c]{fig/A8-002.eps}}
\bigg)
\Bigg),
\end{equation}
\begin{equation}\label{33-d-118}
\mathrm{H}_{44}^{1,5}=2
\mathbb{\hat{H}}_0^{\mathrm{sc}}\Bigg(
\bigg({\centering\adjincludegraphics[angle=180, width = 0.85 cm, valign=c]{fig/A8-016.eps}}+
{\centering\adjincludegraphics[ angle=180,width = 0.8 cm, valign=c]{fig/A8-021.eps}}\bigg)
\bigg(
{\centering\adjincludegraphics[ width = 0.8 cm, valign=c]{fig/A8-019.eps}}+
{\centering\adjincludegraphics[ width = 0.8 cm, valign=c]{fig/A8-020.eps}}+
{\centering\adjincludegraphics[ reflect, angle=180, width = 0.8 cm, valign=c]{fig/A8-005.eps}}+
{\centering\adjincludegraphics[ reflect, angle=180, width = 0.8 cm, valign=c]{fig/A8-002.eps}}
\bigg)
\Bigg),
\end{equation}
\begin{equation}\label{33-d-119}
\mathrm{H}_{44}^{1,6}=\frac{1}{2}
\mathbb{\hat{H}}_0^{\mathrm{sc}}\Bigg(
\bigg({\centering\adjincludegraphics[angle=180, width = 0.85 cm, valign=c]{fig/A8-016.eps}}+
{\centering\adjincludegraphics[ angle=180,width = 0.8 cm, valign=c]{fig/A8-021.eps}}\bigg)
\bigg({\centering\adjincludegraphics[ width = 0.85 cm, valign=c]{fig/A8-016.eps}}+
{\centering\adjincludegraphics[width = 0.8 cm, valign=c]{fig/A8-021.eps}}\bigg)
\Bigg).
\end{equation}
All the components can be examined using the methods already used, so we only write down the answers
\begin{equation}\label{33-d-120}
\mathrm{H}_{44}^{1,4}\stackrel{\mathrm{s.p.}}{=}
-\frac{2L}{\pi}\Big(c_2\mathrm{B}_3-8\mathrm{A}_2+4\mathrm{A}_3+4\mathrm{A}_4+
8\mathrm{A}_5+
8\mathrm{A}_6
-8\mathrm{A}_7
-8\mathrm{A}_8\Big),
\end{equation}
\begin{equation}\label{33-d-121}
\mathrm{H}_{44}^{1,5}\stackrel{\mathrm{s.p.}}{=}
\frac{4c_2L}{\pi}\Big(\mathrm{B}_3+8\mathrm{B}_4+8\mathrm{B}_5+16\mathrm{B}_6-16\mathrm{B}_7\Big)
+
\frac{16L}{\pi}\Big(-2\mathrm{A}_2+\mathrm{A}_3+\mathrm{A}_4\Big),
\end{equation}
\begin{align}\label{33-d-122}
\mathrm{H}_{44}^{1,6}\stackrel{\mathrm{s.p.}}{=}&2^5\Lambda^2\alpha_6(\mathbf{f})
\bigg(J_6[B]+J_5[B]\frac{c_2^2L_1}{\pi}\bigg)\\\nonumber&+\frac{8L_1c_2\big(\theta_2+\theta_3\big)}{\pi^2}J_2[B]+2^6\alpha_5(\mathbf{f})\alpha_6(\mathbf{f})LJ_3[B]
+\kappa_{13}S[B],
\end{align}
where the functionals from Section \ref{33:sec:dia} were used, and the coefficient $\kappa_{13}$ is singular and does not depend on the background field. Summing up the answers, we get
\begin{align}\label{33-d-124}
\mathbb{H}_0^{\mathrm{sc}}(\mathrm{D}_3^2)\stackrel{\mathrm{s.p.}}{=}&2^5\Lambda^2\alpha_6(\mathbf{f})
\bigg(J_6[B]+J_5[B]\frac{c_2^2L_1}{\pi}\bigg)
\\\nonumber&
-\frac{2L}{\pi}\Big(8\mathrm{A}_2-4\mathrm{A}_3-4\mathrm{A}_4+
8\mathrm{A}_5+
8\mathrm{A}_6
-8\mathrm{A}_7
-8\mathrm{A}_8\Big)
\\\nonumber&
+\frac{4c_2L}{\pi}\Big(\mathrm{B}_3/2+8\mathrm{B}_4+8\mathrm{B}_5+16\mathrm{B}_6-16\mathrm{B}_7\Big)
\\\nonumber&+\frac{8L_1c_2\big(\theta_2+\theta_3\big)}{\pi^2}J_2[B]+2^6\alpha_5(\mathbf{f})\alpha_6(\mathbf{f})LJ_3[B]
+\kappa_{13}S[B].
\end{align}
Let us finally calculate the sum of all the contributions studied in this subsection.
\begin{lemma}\label{33-lem-9}
Taking into account all the above, the following decomposition is true for the first part of \eqref{33-d-75}
\begin{align}\label{33-d-124-1}
\mathrm{H}_{44}^{1}\stackrel{\mathrm{s.p.}}{=}&
-\Lambda^4\mathbb{H}_0^{\mathrm{sc}}(\widetilde{\Gamma}_2^2)+
2\Lambda^2\mathbb{H}_0^{\mathrm{sc}}(\Gamma_4\widetilde{\Gamma}_2)
-6L_1\Lambda^2\mathbb{H}_0^{\mathrm{sc}}\big(\widetilde{\Gamma}_2\mathrm{X}_1\big)
-9L_1^2\mathbb{H}_0^{\mathrm{sc}}(\mathrm{X}_1^2)+
6L_1\mathbb{H}_0^{\mathrm{sc}}(\Gamma_4\mathrm{X}_1)
\\\nonumber&+2^5\Lambda^2\alpha_6(\mathbf{f})
	\bigg(J_6[B]+J_5[B]\frac{c_2^2L_1}{\pi}\bigg)
	\\\nonumber&
	+\frac{L}{\pi}\Big(32\mathrm{A}_2-40\mathrm{A}_3+8\mathrm{A}_4
	-80\mathrm{A}_5
	-80\mathrm{A}_6
	+80\mathrm{A}_7
	+80\mathrm{A}_8\Big)
	\\\nonumber&
	+\frac{c_2L}{\pi}\Big(8\mathrm{B}_3-72\mathrm{B}_4-72\mathrm{B}_5+192\mathrm{B}_6-192\mathrm{B}_7\Big)
	\\\nonumber&
	+\frac{6c_2^2}{\pi^2}J_1[B]
	L_1\big(4\theta_2+2L_1\big)
	-\frac{12c_2^2}{\pi^2}J_4[B]L_1\big(4\theta_2-2\theta_1\big)
	\\\nonumber&+\frac{8Lc_2\big(2\theta_2+\theta_3\big)}{\pi^2}J_2[B]+\Big(2^6\alpha_5(\mathbf{f})\alpha_6(\mathbf{f})L-8L\theta_2/\pi^2\Big)J_3[B]
	+\hat{\kappa}_4S[B],
\end{align}
where $\hat{\kappa}_4$ is a singular coefficient that does not depend on the background field.
\end{lemma}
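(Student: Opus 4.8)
The plan is to assemble $\mathrm{H}_{44}^1$ directly from the six building blocks already derived in this subsection. Starting from the splitting \eqref{33-d-75} of $\mathbb{H}_0^{\mathrm{sc}}(\Gamma_4^2)$ into the ``caterpillar'' part $\mathrm{H}_{44}^1$ and the ``watermelon'' part $\mathrm{H}_{44}^2$ of \eqref{33-d-74}, I would note that cutting a caterpillar diagram along its central line exhibits $\mathrm{H}_{44}^1$ as $\mathbb{H}_0^{\mathrm{sc}}\big((\mathbb{H}_2^{\mathrm{c}}(\Gamma_4))^2\big)$, and then insert the decomposition \eqref{33-d-4}, that is $\mathbb{H}_2^{\mathrm{c}}(\Gamma_4)=\mathrm{D}_1+\mathrm{D}_2+\mathrm{D}_3$. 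By linearity of $\mathbb{H}_0^{\mathrm{sc}}$ this produces the six pieces $\mathbb{H}_0^{\mathrm{sc}}(\mathrm{D}_1^2)$, $\mathbb{H}_0^{\mathrm{sc}}(\mathrm{D}_2^2)$, $\mathbb{H}_0^{\mathrm{sc}}(\mathrm{D}_3^2)$, $2\mathbb{H}_0^{\mathrm{sc}}(\mathrm{D}_1\mathrm{D}_2)$, $2\mathbb{H}_0^{\mathrm{sc}}(\mathrm{D}_2\mathrm{D}_3)$, $2\mathbb{H}_0^{\mathrm{sc}}(\mathrm{D}_3\mathrm{D}_1)$, whose singular parts are exactly the contents of \eqref{33-d-109}, \eqref{33-d-93-1}, \eqref{33-d-124}, \eqref{33-d-84}, \eqref{33-d-105-1} and \eqref{33-d-105-2}, respectively.

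Substituting these six expressions is then mechanical, but three groups of terms must be reassembled. First, the pieces containing the auxiliary vertex $\widetilde{\Gamma}_2$ of \eqref{33-d-14}: the contributions $2\Lambda^2\mathbb{H}_0^{\mathrm{sc}}(\mathrm{D}_1\widetilde{\Gamma}_2)$ from \eqref{33-d-109}, $2\Lambda^2\mathbb{H}_0^{\mathrm{sc}}(\widetilde{\Gamma}_2\mathrm{D}_2)$ from \eqref{33-d-84} and $2\Lambda^2\mathbb{H}_0^{\mathrm{sc}}(\widetilde{\Gamma}_2\mathrm{D}_3)$ from \eqref{33-d-105-2} combine; since $\widetilde{\Gamma}_2$ has only two legs, every strongly connected vacuum diagram built from $\Gamma_4$ and $\widetilde{\Gamma}_2$ forces a loop inside $\Gamma_4$, so by linearity and $\mathrm{D}_1+\mathrm{D}_2+\mathrm{D}_3=\mathbb{H}_2^{\mathrm{c}}(\Gamma_4)$ these collapse to $2\Lambda^2\mathbb{H}_0^{\mathrm{sc}}(\Gamma_4\widetilde{\Gamma}_2)$, while $-\Lambda^4\mathbb{H}_0^{\mathrm{sc}}(\widetilde{\Gamma}_2^2)$ and $-6L_1\Lambda^2\mathbb{H}_0^{\mathrm{sc}}(\widetilde{\Gamma}_2\mathrm{X}_1)$ survive unchanged. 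Secondly, the terms linear in $\mathrm{X}_1$ of \eqref{33-s-26}: the three summands $6L_1\mathbb{H}_0^{\mathrm{sc}}(\mathrm{D}_i\mathrm{X}_1)$, $i=1,2,3$, collapse by the same reasoning to $6L_1\mathbb{H}_0^{\mathrm{sc}}(\Gamma_4\mathrm{X}_1)$, and the single $-9L_1^2\mathbb{H}_0^{\mathrm{sc}}(\mathrm{X}_1^2)$ coming from $\mathbb{H}_0^{\mathrm{sc}}(\mathrm{D}_2^2)$ stays put. Thirdly, the vertex $\mathrm{X}_2$ of \eqref{33-s-27} does not appear in the final answer at all: it enters only through the $\theta_2$-proportional pieces of \eqref{33-d-105-1} and \eqref{33-d-124}, where it has already been re-expressed in terms of the functionals $J_2[B]$ and $J_3[B]$, so one just carries those along.

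What is left is to add the diagrammatic part — the $\mathrm{A}_i$, $\mathrm{B}_i$ of \eqref{33-s-8}--\eqref{33-d-106} — and the functionals $J_1[B],\ldots,J_6[B]$ of Section \ref{33:sec:dia}, coefficient by coefficient, tracking the monomials in $L$, $L_1$, $\theta_1$, $\theta_2$, $\theta_3$, $\alpha_5(\mathbf{f})$ and $\alpha_6(\mathbf{f})$; the $\mathrm{A}_2,\mathrm{A}_3,\mathrm{A}_4$ of $\mathbb{H}_0^{\mathrm{sc}}(\mathrm{D}_2^2)$ vanish, so the $\mathrm{A}$-coefficients originate from $2\mathbb{H}_0^{\mathrm{sc}}(\mathrm{D}_2\mathrm{D}_3)$, $2\mathbb{H}_0^{\mathrm{sc}}(\mathrm{D}_1\mathrm{D}_3)$ and $\mathbb{H}_0^{\mathrm{sc}}(\mathrm{D}_3^2)$, and similarly for $\mathrm{B}_3,\ldots,\mathrm{B}_7$ and for $J_1,\ldots,J_6$. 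Finally all the background-field-independent singular coefficients $\kappa_{10},\kappa_{11},\kappa_{12},\kappa_{13}$, together with the analogous remnants hidden in \eqref{33-d-109} and \eqref{33-d-84}, are gathered into one symbol $\hat{\kappa}_4$. The only delicate point — and the step I expect to be the main obstacle — is exactly this final arithmetic: checking that the numerous $\mathrm{A}_i$, $\mathrm{B}_i$ and $J_i$ coefficients and the mixed monomials in $L$, $L_1$, $\theta_k$, $\alpha_5$, $\alpha_6$ sum to precisely the combination displayed in \eqref{33-d-124-1}, and in particular that every $\mathrm{X}_2$-type contribution and every purely power-law remnant either cancels or is absorbed into the stated $J$-functionals and into $\hat{\kappa}_4 S[B]$.
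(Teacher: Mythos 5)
Your proposal is correct and follows essentially the same route as the paper: the lemma is obtained exactly by summing the six singular parts \eqref{33-d-109}, \eqref{33-d-93-1}, \eqref{33-d-124}, \eqref{33-d-84}, \eqref{33-d-105-1}, \eqref{33-d-105-2}, recombining the $\widetilde{\Gamma}_2$- and $\mathrm{X}_1$-pieces via $\mathrm{D}_1+\mathrm{D}_2+\mathrm{D}_3=\mathbb{H}_2^{\mathrm{c}}(\Gamma_4)$ into $2\Lambda^2\mathbb{H}_0^{\mathrm{sc}}(\Gamma_4\widetilde{\Gamma}_2)$ and $6L_1\mathbb{H}_0^{\mathrm{sc}}(\Gamma_4\mathrm{X}_1)$, and collecting the background-independent singular coefficients into $\hat{\kappa}_4$. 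The remaining step is the same coefficient bookkeeping the paper performs implicitly.
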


\subsubsection{Part without loops $\mathbb{H}_2^{\mathrm{c}}(\Gamma_4)$}
Let us move on to the second part of the decomposition from \eqref{33-d-75}. In this case, it is more convenient to transform the quartic vertices first, and only then proceed to the connection. To do this, let us pay attention to the possibility of rewriting\footnote{During the derivation, it was possible to rearrange the external lines belonging to one side, either left or right. This property is also present in the contribution under consideration.} the sum of three quartic vertices \eqref{33-d-33} with fixed outer lines as the sum of two vertices \eqref{33-d-36}, in which the outer lines are also fixed. This means that a set of diagrams $\mathrm{H}_{44}^2$ can be obtained by connecting the outer lines of two vertices
\begin{equation}\label{33-k-4}
{\centering\adjincludegraphics[width = 0.8 cm, valign=c]{fig/A7-004.eps}}+
3{\centering\adjincludegraphics[width = 0.8 cm, valign=c]{fig/A7-002.eps}}
\end{equation}
with the external lines of two vertices
\begin{equation}\label{33-k-5}
{\centering\adjincludegraphics[width = 1.4 cm, valign=c]{fig/A7-001.eps}}+
{\centering\adjincludegraphics[reflect, width = 1.4 cm, valign=c]{fig/A7-001.eps}}
\end{equation}
with the condition that the left ends are connected to the left, and the right ends to the right, in all possible ways. By performing several auxiliary integrations in parts, we obtain that the desired part is expressed as the following sum
\begin{equation}\label{33-k-3}
\mathrm{H}_{44}^2=\sum_{i=1}^{11}t_i,
\end{equation}
where
\begin{equation*}\label{33-k-1}
t_1=2{\centering\adjincludegraphics[ width = 1.3 cm, valign=c]{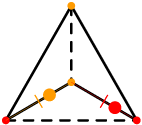}},\,\,\,
t_2=4{\centering\adjincludegraphics[ width = 1.3 cm, valign=c]{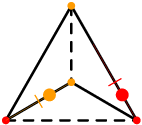}},\,\,\,
t_3=2{\centering\adjincludegraphics[ width = 1.3 cm, valign=c]{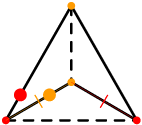}},\,\,\,
t_4=8{\centering\adjincludegraphics[ width = 1.3 cm, valign=c]{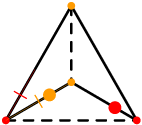}},\,\,\,
t_5=5{\centering\adjincludegraphics[ width = 1.3 cm, valign=c]{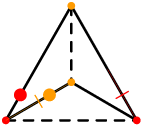}},\,\,\,
t_6=-2{\centering\adjincludegraphics[ width = 1.3 cm, valign=c]{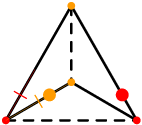}},\,\,\,
\end{equation*}
\begin{equation*}\label{33-k-2}
t_7=6{\centering\adjincludegraphics[ width = 1.3 cm, valign=c]{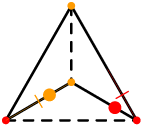}},\,\,\,
t_8=6{\centering\adjincludegraphics[ width = 1.3 cm, valign=c]{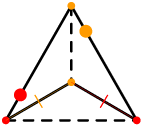}},\,\,\,
t_9=6{\centering\adjincludegraphics[ width = 1.3 cm, valign=c]{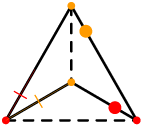}},\,\,\,
t_{10}=-6{\centering\adjincludegraphics[ width = 1.3 cm, valign=c]{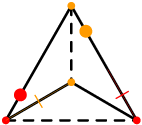}},\,\,\,
t_{11}=-6{\centering\adjincludegraphics[ width = 1.3 cm, valign=c]{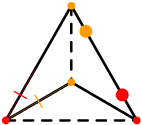}}.
\end{equation*}
The calculation of singular components for diagrams containing two integration operators largely repeats the well-known methods used in the study of two-loop diagrams, and boils down to shifting the variable and additional summation. Omitting the calculations that were performed using the properties from Section \ref{33:sec:sp}, we present the final decompositions for each individual diagram:
\begin{fleqn}
\begin{align}\label{33-k-6}
t_{1}\stackrel{\mathrm{s.p.}}{=}&-16\Lambda^2c_2\alpha_6J_7
+16\frac{\Lambda^2c_2^2}{\pi}J_5(L\alpha_6+\alpha_7)+
\frac{c_2^2L}{\pi^2}J_4(2\alpha_8-4\alpha_9)+
\frac{L}{\pi^2}J_3(4\theta_2 - 12\theta_3)\\\nonumber+&
\frac{c_2L}{\pi^2}J_2(4\alpha_9 +2 \theta_2 - 2\theta_3 -32 \pi^2 \alpha_5 \alpha_6 - 2\alpha_8)+
2\alpha_9\frac{c_2^2L}{\pi^2}J_1+
\kappa_{1,1}S[B]+\kappa_{2,1},
\end{align}
\end{fleqn}
\begin{fleqn}
\begin{align}\label{33-k-7}
t_{2}\stackrel{\mathrm{s.p.}}{=}&\,32\Lambda^2\alpha_6\Big(c_2J_7+2J_6\Big)+
32\frac{\Lambda^2c_2^2}{\pi}J_5(L\alpha_6+\alpha_7)+
\frac{c_2^2L}{\pi^2}J_4(8\alpha_8-16\alpha_9)\\\nonumber+&
\frac{L}{\pi^2}J_3(-16\alpha_9 -8 \theta_2 +8\theta_3 +128 \pi^2 \alpha_5 \alpha_6 +8\alpha_8)+
\frac{c_2L}{\pi^2}J_2(-8\alpha_9 -12 \theta_2 +28\theta_3 +64 \pi^2 \alpha_5 \alpha_6 +4\alpha_8)
\\\nonumber+&
\frac{c_2^2L}{\pi^2}J_1(-4\theta_2+4\theta_3)+
\kappa_{1,2}S[B]+\kappa_{2,2},
\end{align}
\end{fleqn}
\begin{fleqn}
\begin{align}\label{33-k-8}
t_{3}\stackrel{\mathrm{s.p.}}{=}&\,
\frac{L}{\pi}\Big(
4\mathrm{A}_1
-16\mathrm{A}_2
+8\mathrm{A}_3
+4\mathrm{A}_4
-16\mathrm{A}_6
+4\mathrm{A}_7
+4\mathrm{A}_8
\Big)+\frac{Lc_2}{\pi}\Big(
2\mathrm{B}_1
-2\mathrm{B}_4
-8\mathrm{B}_6
+4\mathrm{B}_7
\Big)\\\nonumber
-&16\alpha_{10}\Lambda^2c_2^2J_5+
\frac{c_2^2L}{\pi^2}J_4(-2\theta_2-2\alpha_{11})+
\frac{1}{\pi^2}J_3(L_1^2+2L\theta_1)+
\frac{c_2}{\pi^2}J_2\Big(\frac{3L_1^2}{2}+L(3 \theta_1 + 2\alpha_{11})\Big)
\\\nonumber+&
\frac{c_2^2}{\pi^2}J_1\Big(\frac{L_1^2}{2}+L(\theta_1 +  \theta_2 -\alpha_{11})\Big)+
\kappa_{1,3}S[B]+\kappa_{2,3},
\end{align}
\end{fleqn}
\begin{fleqn}
\begin{align}\label{33-k-9}
t_{4}\stackrel{\mathrm{s.p.}}{=}&\,
\frac{L}{\pi}\Big(
16\mathrm{A}_1
-16\mathrm{A}_4
-64\mathrm{A}_6
+48\mathrm{A}_7
+48\mathrm{A}_8
\Big)+\frac{Lc_2}{\pi}\Big(
8\mathrm{B}_1
-16\mathrm{B}_2
-24\mathrm{B}_4
-32\mathrm{B}_6
+48\mathrm{B}_7
\Big)\\\nonumber-&
64\alpha_{10}\Lambda^2c_2^2J_5+
\frac{c_2^2L}{\pi^2}J_4(-8\theta_2-8\alpha_{11})+
\frac{1}{\pi^2}J_3(4L_1^2+8L\theta_1)+
\frac{c_2}{\pi^2}J_2\Big(6L_1^2+L(12 \theta_1 + 8\alpha_{11})\Big)
\\\nonumber+&
\frac{c_2^2}{\pi^2}J_1\Big(2L_1^2+L(4\theta_1 +  4\theta_2 -4\alpha_{11})\Big)+
\kappa_{1,4}S[B]+\kappa_{2,4},
\end{align}
\end{fleqn}
\begin{fleqn}
\begin{align}\label{33-k-10}
t_{5}\stackrel{\mathrm{s.p.}}{=}&\,
\frac{L}{\pi}\Big(
-8\mathrm{A}_1
+16\mathrm{A}_2
-8\mathrm{A}_4
+16\mathrm{A}_7
\Big)+\frac{Lc_2}{\pi}\Big(
-4\mathrm{B}_1
+16\mathrm{B}_4
+16\mathrm{B}_5
+16\mathrm{B}_6
-16\mathrm{B}_7
-8\mathrm{B}_8
\Big)\\\nonumber-&
32\alpha_{10}\Lambda^2c_2^2J_5+
\frac{c_2^2L}{\pi^2}J_4(-4\theta_1+2\theta_2-4\alpha_{11})+
\frac{1}{\pi^2}J_3(-2L_1^2+L(-4\theta_1+4\theta_2))\\\nonumber+&
\frac{c_2}{\pi^2}J_2\Big(-L_1^2+L(-2\theta_1+4\theta_2+4\alpha_{11})\Big)+
\frac{c_2^2}{\pi^2}J_1\Big(-2L_1^2+L(-2\theta_1+\theta_2)\Big)+
\kappa_{1,5}S[B]+\kappa_{2,5},
\end{align}
\end{fleqn}
\begin{fleqn}
\begin{align}\label{33-k-11}
t_{6}\stackrel{\mathrm{s.p.}}{=}&\,
\frac{L}{\pi}\Big(
4\mathrm{A}_1
-4\mathrm{A}_4
+8\mathrm{A}_7
\Big)+\frac{Lc_2}{\pi}\Big(
-2\mathrm{B}_1
-8\mathrm{B}_4
-8\mathrm{B}_5
-8\mathrm{B}_6
+8\mathrm{B}_7
-4\mathrm{B}_8
\Big)\\\nonumber+&16\alpha_{10}\Lambda^2c_2^2J_5+
\frac{c_2^2L}{\pi^2}J_4(2\theta_1-\theta_2+4\alpha_{11})+
\frac{1}{\pi^2}J_3(L_1^2+L(2\theta_1+2\theta_2+6\alpha_{11}))\\\nonumber+&
\frac{c_2}{\pi^2}J_2\Big(\frac{L_1^2}{2}+L(\theta_1+2\theta_2)\Big)+
\frac{c_2^2}{\pi^2}J_1\Big(L_1^2+L(\theta_1+3\theta_2/2+\alpha_{11})\Big)+
\kappa_{1,6}S[B]+\kappa_{2,6},
\end{align}
\end{fleqn}
\begin{fleqn}
\begin{align}\label{33-k-12}
t_{7}\stackrel{\mathrm{s.p.}}{=}&\,
\frac{L}{\pi}\Big(
-12\mathrm{A}_1
-24\mathrm{A}_2
+24\mathrm{A}_3
+12\mathrm{A}_4
-24\mathrm{A}_8
\Big)\\\nonumber+&\frac{Lc_2}{\pi}\Big(
-6\mathrm{B}_1
+12\mathrm{B}_2
+36\mathrm{B}_4
+24\mathrm{B}_5
+24\mathrm{B}_6
-48\mathrm{B}_7
-12\mathrm{B}_8
\Big)\\\nonumber-&48\alpha_{10}\Lambda^2c_2^2J_5+
\frac{c_2^2L}{\pi^2}J_4(-6\theta_1+3\theta_2-6\alpha_{11})+
\frac{1}{\pi^2}J_3(-3L_1^2+L(-6\theta_1+6\theta_2))\\\nonumber+&
\frac{c_2}{\pi^2}J_2\Big(-\frac{3L_1^2}{2}+L(-3\theta_1+6\theta_2+6\alpha_{11})\Big)+
\frac{c_2^2}{\pi^2}J_1\Big(-3L_1^2+L(-3\theta_1+3\theta_2/2)\Big)+
\kappa_{1,7}S[B]+\kappa_{2,7},
\end{align}
\end{fleqn}
\begin{fleqn}
\begin{align}\label{33-k-13}
t_{8}\stackrel{\mathrm{s.p.}}{=}&\,
\frac{L}{\pi}\Big(
48\mathrm{A}_5
+48\mathrm{A}_6
-48\mathrm{A}_7
-48\mathrm{A}_8
\Big)+\frac{Lc_2}{\pi}\Big(
12\mathrm{B}_2
+24\mathrm{B}_4
+24\mathrm{B}_5
+96\mathrm{B}_6
-96\mathrm{B}_7
\Big)\\\nonumber+&
\frac{c_2^2L}{\pi^2}J_4(12\theta_1+24\alpha_{11})+
\frac{c_2^2L}{\pi^2}J_1(-6\theta_1+12\alpha_{11})+
\kappa_{1,8}S[B]+\kappa_{2,8},
\end{align}
\end{fleqn}
\begin{fleqn}
\begin{align}\label{33-k-14}
t_{9}\stackrel{\mathrm{s.p.}}{=}&\,
\frac{L}{\pi}\Big(
48\mathrm{A}_5
+48\mathrm{A}_6
-48\mathrm{A}_7
-48\mathrm{A}_8
\Big)+\frac{Lc_2}{\pi}\Big(
12\mathrm{B}_2
+24\mathrm{B}_4
+24\mathrm{B}_5
+96\mathrm{B}_6
-96\mathrm{B}_7
\Big)
\\\nonumber+&
\frac{c_2^2L}{\pi^2}J_4(12\theta_1+24\alpha_{11})+
\frac{c_2^2L}{\pi^2}J_1(-6\theta_1+12\alpha_{11})+
\kappa_{1,9}S[B]+\kappa_{2,9},
\end{align}
\end{fleqn}
\begin{fleqn}
\begin{align}\label{33-k-15}
t_{10}\stackrel{\mathrm{s.p.}}{=}&\,
\frac{L}{\pi}\Big(
12\mathrm{A}_1
+24\mathrm{A}_2
-24\mathrm{A}_3
-12\mathrm{A}_4
+12\mathrm{A}_5
+12\mathrm{A}_6
-24\mathrm{A}_7
\Big)\\\nonumber+&\frac{Lc_2}{\pi}\Big(
6\mathrm{B}_1
-12\mathrm{B}_2
+6\mathrm{B}_5
+36\mathrm{B}_6
-24\mathrm{B}_7
+12\mathrm{B}_8
\Big)\\\nonumber
-&24\alpha_{10}\Lambda^2c_2^2J_5+
\frac{c_2^2L}{\pi^2}J_4(6\theta_1-3\theta_2+6\alpha_{11})+
\frac{1}{\pi^2}J_3(3L_1^2+L(6\theta_1-6\theta_2-6\alpha_{11}))\\\nonumber+&
\frac{c_2}{\pi^2}J_2\Big(\frac{3L_1^2}{2}+L(3\theta_1-6\theta_2-3\alpha_{11})\Big)+
\frac{c_2^2}{\pi^2}J_1\Big(3L_1^2+L(3\theta_1-3\theta_2/2+3\alpha_{11})\Big)+
\kappa_{1,10}S[B]+\kappa_{2,10},
\end{align}
\end{fleqn}
\begin{fleqn}
\begin{align}\label{33-k-16}
t_{11}\stackrel{\mathrm{s.p.}}{=}&\,
\frac{L}{\pi}\Big(
12\mathrm{A}_5
+12\mathrm{A}_6
-36\mathrm{A}_7
-12\mathrm{A}_8
\Big)\\\nonumber+&\frac{Lc_2}{\pi}\Big(
12\mathrm{B}_1
+6\mathrm{B}_4
+6\mathrm{B}_5
+36\mathrm{B}_6
-36\mathrm{B}_7
+12\mathrm{B}_8
\Big)\\\nonumber-&24\alpha_{10}\Lambda^2c_2^2J_5+
\frac{c_2^2L}{\pi^2}J_4(6\theta_1-3\theta_2+6\alpha_{11})+
\frac{1}{\pi^2}J_3(L(-6\theta_2-6\alpha_{11}))+\kappa_{2,11}\\\nonumber+&
\frac{c_2}{\pi^2}J_2\Big(3L_1^2+L(6\theta_1-6\theta_2-3\alpha_{11})\Big)+
\frac{c_2^2}{\pi^2}J_1\Big(-\frac{3L_1^2}{2}+L(-6\theta_1-3\theta_2/2+3\alpha_{11})\Big)+
\kappa_{1,11}S[B].
\end{align}
\end{fleqn}
\begin{lemma}\label{33-lem-11}
Taking into account all of the above, the following decomposition is true
\begin{align}\label{33-y-20-1}
\mathrm{H}_{44}^2\stackrel{\mathrm{s.p.}}{=}&\,\frac{L}{\pi}\Big(
16\mathrm{A}_1
+8\mathrm{A}_3
-24\mathrm{A}_4
+120\mathrm{A}_5
+40\mathrm{A}_6
-80\mathrm{A}_7
-80\mathrm{A}_8
\Big)
\\\nonumber+&\frac{Lc_2}{\pi}\Big(
16\mathrm{B}_1
+8\mathrm{B}_2
+72\mathrm{B}_4
+92\mathrm{B}_5
+256\mathrm{B}_6
-256\mathrm{B}_7
\Big)
\\\nonumber+&
16\Lambda^2\alpha_6\Big(c_2J_7+4J_6+\frac{3L_1c_2^2}{\pi}J_5\Big)+\frac{48c_2^2}{\pi}\Lambda^2J_5(-\mathbf{f}(0)/2+\alpha_7-4\alpha_{10})
\\\nonumber+&
\frac{Lc_2^2J_1}{\pi^2}\Big(2\alpha_9+4\theta_3+2\theta_2-14\theta_1+26\alpha_{11}\Big)
\\\nonumber+&
\frac{c_2J_2}{\pi^2}\Big(10L_1^2+L\big[32\pi^2\alpha_5\alpha_6+2\alpha_8-4\alpha_9+26\theta_3-10\theta_2+20\theta_1+14\alpha_{11}\big]\Big)
\\\nonumber+&
\frac{J_3}{\pi^2}\Big(4L_1^2+L\big[128\pi^2\alpha_5\alpha_6+8\alpha_8-16\alpha_9-4\theta_3-4\theta_2+8\theta_1-6\alpha_{11}\big]\Big)
\\\nonumber+&
\frac{Lc_2^2J_4}{\pi^2}\Big(10\alpha_8-20\alpha_9-12\theta_2+28\theta_1+44\alpha_{11}\Big)
+\hat{\kappa}_{1,1}S[B]+\hat{\kappa}_{2,1},
\end{align}
where $\hat{\kappa}_{1,1}$ and $\hat{\kappa}_{2,1}$ are singular coefficients that do not depend on the background field.
\end{lemma}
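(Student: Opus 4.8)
The plan is to prove Lemma \ref{33-lem-11} by pure accumulation: decomposition \eqref{33-k-3} already writes $\mathrm{H}_{44}^2=\sum_{i=1}^{11}t_i$, and each $t_i$ has been assigned its singular asymptotic expansion in \eqref{33-k-6}--\eqref{33-k-16}, so what remains is to add these eleven expansions and reorganize the outcome into the five structural blocks of \eqref{33-y-20-1}: the $\tfrac{L}{\pi}\mathrm{A}_j$ block, the $\tfrac{Lc_2}{\pi}\mathrm{B}_j$ block, the power-law $\Lambda^2$ block built from $J_5,J_6,J_7$ with weights $\alpha_6,\alpha_7,\alpha_{10},\mathbf f(0)$, the $J_1$--$J_4$ block carrying the quadratic-in-$L$ and $\theta$-weighted coefficients, and the background-independent / $S[B]$-proportional remainder $\hat\kappa_{1,1}S[B]+\hat\kappa_{2,1}$.

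First I would record how the individual decompositions \eqref{33-k-6}--\eqref{33-k-16} arise, since they are the real content feeding the sum. Each $t_i$ is the value of $\mathbb{\hat H}_0^{\mathrm{sc}}$ on one pairing of the two transformed quartic vertices \eqref{33-k-4}--\eqref{33-k-5}; as a genuine two-integration object its singular part is extracted by the standard two-loop machinery already used above — shift of the inner variable, addition and subtraction of the leading density $2G^{\Lambda,\mathbf f}$, replacement of a Green's function on the diagonal by $L_1/\pi$ or of a derivative contraction through Lemma \ref{33-lem-7} and Lemma \ref{33-lem-8}, and evaluation of the scalar integrals $\mathrm{I}_1$--$\mathrm{I}_4$ from \eqref{33-w-7}, \eqref{33-d-57}, \eqref{33-d-58}, \eqref{33-d-73}; group reductions use \eqref{33-eq:tr_S}, \eqref{33-eq:Jacobi} and \eqref{33-s-19}, and the $\Lambda^2$-weights come from the diagonal limits computed in Section \ref{33:sec:sp}. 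All $\theta$-dependent factors are then brought to a common normalization using $L_{k+1}=L_k+\theta_k$ and $\ln(\sigma/\hat\sigma)=-\theta_1$ from \eqref{33-bb-1}, while $\alpha_{11}$ and $\alpha_9$ are kept symbolic so the output matches the stated formula verbatim.

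Second I would carry out the summation block by block. For the $\mathrm{A}_j$'s one adds the eleven rows of coefficients; for instance the $\mathrm{A}_1$ contributions from $t_3,t_4,t_5,t_6,t_7,t_{10}$, namely $4,16,-8,4,-12,12$, sum to the stated $16$, the $\mathrm{A}_2$ contributions cancel entirely, and one continues likewise down to $\mathrm{A}_8$ and through $\mathrm{B}_1,\ldots,\mathrm{B}_8$. The $\Lambda^2$ block needs slightly more care: the $J_5,J_6,J_7$ pieces of $t_1,t_2$ proportional to $\alpha_6,\alpha_7$ and the $\alpha_{10}$-pieces spread over $t_3$--$t_7,t_{10},t_{11}$ must be combined so that the $L$-enhanced terms assemble into $16\Lambda^2\alpha_6\big(c_2J_7+4J_6+3L_1c_2^2J_5/\pi\big)$ while the $L$-independent residue collapses to $48\pi^{-1}c_2^2\Lambda^2J_5\big(-\mathbf f(0)/2+\alpha_7-4\alpha_{10}\big)$; this recombination into the $\tau_5$-pattern is the natural internal consistency check of the block. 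Finally, the $J_1$--$J_4$ block is obtained by summing, for each of $J_1,\ldots,J_4$ separately, the coefficients of $L_1^2$ and of $L\theta_1,L\theta_2,L\theta_3,L\alpha_{11},L\alpha_9,L\alpha_8$, again using \eqref{33-bb-1} to merge the $L_2$-type factors, and all purely $S[B]$-proportional and field-independent debris $\kappa_{1,i}S[B]+\kappa_{2,i}$ is absorbed into $\hat\kappa_{1,1}$ and $\hat\kappa_{2,1}$.

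The main obstacle is the bookkeeping itself: eleven long expansions, each with of order twenty monomials, must be tallied without sign or factor slips, while respecting the $L$ versus $L_1$ distinction (they differ by $\mathbf f(0)/2$, which is precisely what feeds the power-law $\Lambda^2J_5$ term). The recombination of the $\Lambda^2$ pieces into the $\tau_5$-pattern and the cancellation of the $\mathrm{A}_2$ coefficient are the spots where an arithmetic error would be most visible, so I would use those identities, together with the $\sigma$-independence of $\mathrm{H}_{44}^2$ (which pins down the coefficient of every bare $L$ consistently), as the principal cross-checks of the final expression \eqref{33-y-20-1}.
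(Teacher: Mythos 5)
Your proposal is correct and follows essentially the same route as the paper: the paper proves Lemma \ref{33-lem-11} precisely by summing the eleven expansions \eqref{33-k-6}--\eqref{33-k-16} obtained from the pairings of the transformed vertices \eqref{33-k-4}--\eqref{33-k-5}, each extracted by the same two-loop shift/addition-and-subtraction machinery and the diagonal limits of Section \ref{33:sec:sp} that you describe. Your spot checks (e.g. the $\mathrm{A}_1$ coefficient $4+16-8+4-12+12=16$ and the full cancellation of $\mathrm{A}_2$) agree with the stated result, and the $\sigma$-independence cross-check you invoke is exactly the verification the paper performs in Section \ref{33:sec:dop}.
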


\subsubsection{Sum of parts for $\mathbb{H}_0^{\mathrm{sc}}(\Gamma_4^2)$}
\label{33:sec:sum2}
Consider the sum of the contributions for all diagrams constructed using two quartic vertices. When summing up, we additionally take into account two relations. First, note that some of the diagrams in Section \ref{33:sec:dia} are linearly dependent. In particular, equality is required
\begin{equation*}
\mathrm{A}_5=\mathrm{A}_6-\frac{c_2}{2}\mathrm{B}_5.
\end{equation*}
Secondly, we define an auxiliary vertex with four external lines
\begin{equation*}
\widetilde{\Gamma}_4[\phi]=\int_{\mathbb{R}^2}\mathrm{d}^2x\,
f^{a_1b_1a_2}\phi^{b_1}(x)f^{a_2b_2a_3}\phi^{b_2}(x)
f^{a_3b_3a_4}\phi^{b_3}(x)f^{a_4b_4a_1}\phi^{b_4}(x).
\end{equation*}
For such a vertex, using standard methods and properties from Section \ref{33:sec:cl:1}, we can prove the relation
\begin{align}\label{33-e-1}
\mathbb{H}_0^{\mathrm{sc}}(\widetilde{\Gamma}_4)+\kappa_{44,1}=\,\,&
\bigg(2\,
{\centering\adjincludegraphics[width = 0.5 cm, valign=c]{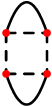}}+
{\centering\adjincludegraphics[width = 0.7 cm, valign=c]{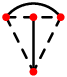}}
\bigg)+\kappa_{44,1}
=\,\bigg(3\,
{\centering\adjincludegraphics[width = 0.5 cm, valign=c]{fig/A11-001.eps}}
+\frac{c_2}{2}
{\centering\adjincludegraphics[width = 0.7 cm, valign=c]{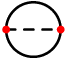}}
\bigg)+\kappa_{44,1}
\\\nonumber\stackrel{\mathrm{s.p.}}{=}\,&12J_6[B]+2c_2J_7[B]+\frac{10c_2^2L_1}{\pi}J_5[B]
\\\nonumber
+\,\,&
\frac{4\alpha_5(\mathbf{f})}{\Lambda^2}\big(6J_3[B]+c_2J_2[B]\big)+
\frac{L\kappa_{44,2}}{\Lambda^2}S[B]
+\mathcal{O}\big(1/\Lambda^2\big),
\end{align}
where $\kappa_{44,1}$ is a singular coefficient, while $\kappa_{44,2}$ is just a finite constant.

\begin{lemma}\label{33-lem-14}
Taking into account all the above, for the sum of the diagrams $\mathbb{H}_0^{\mathrm{sc}}(\Gamma_4\Gamma_4)$ the following asymptotic expansion is valid
\begin{align}\label{33-d-124-1-1}
		\mathbb{H}_0^{\mathrm{sc}}(\Gamma_4^2)\stackrel{\mathrm{s.p.}}{=}&
		-\Lambda^4\mathbb{H}_0^{\mathrm{sc}}(\widetilde{\Gamma}_2^2)+
		2\Lambda^2\mathbb{H}_0^{\mathrm{sc}}(\Gamma_4\widetilde{\Gamma}_2)
		-6L_1\Lambda^2\mathbb{H}_0^{\mathrm{sc}}\big(\widetilde{\Gamma}_2\mathrm{X}_1\big)
		-9L_1^2\mathbb{H}_0^{\mathrm{sc}}(\mathrm{X}_1^2)+
		6L_1\mathbb{H}_0^{\mathrm{sc}}(\Gamma_4\mathrm{X}_1)
		\\\nonumber&+8\alpha_6\Lambda^2\mathbb{H}_0^{\mathrm{sc}}(\widetilde{\Gamma}_4)+\frac{48c_2^2}{\pi}\Lambda^2J_5(-\mathbf{f}(0)/2+\alpha_7-4\alpha_{10})
		\\\nonumber&
		+\frac{16L}{\pi}\Big(\mathrm{A}_1+2\mathrm{A}_2-2\mathrm{A}_3-\mathrm{A}_4\Big)
		+\frac{c_2L}{\pi}\Big(16\mathrm{B}_1
		+8\mathrm{B}_2+8\mathrm{B}_3+448\mathrm{B}_6-448\mathrm{B}_7\Big)
	\\\nonumber&+
	\frac{c_2^2J_1}{\pi^2}\Big(12L_1^2+L\big[2\alpha_9+4\theta_3+26\theta_2-14\theta_1+26\alpha_{11}\big]\Big)
	\\\nonumber&+
	\frac{c_2J_2}{\pi^2}\Big(10L_1^2+L\big[2\alpha_8-4\alpha_9+34\theta_3+6\theta_2+20\theta_1+14\alpha_{11}\big]\Big)
	\\\nonumber&+
	\frac{J_3}{\pi^2}\Big(4L_1^2+L\big[8\alpha_8-16\alpha_9-4\theta_3-12\theta_2+8\theta_1-6\alpha_{11}\big]\Big)
	\\\nonumber&+
	\frac{Lc_2^2J_4}{\pi^2}\Big(10\alpha_8-20\alpha_9-60\theta_2+52\theta_1+44\alpha_{11}\Big)
		+(\hat{\kappa}_4+\hat{\kappa}_5)S[B]+\hat{\kappa}_6,
\end{align}
	where $\{\hat{\kappa}_4,\hat{\kappa}_5,\hat{\kappa}_6\}$ are singular coefficients and do not depend on the background field.
\end{lemma}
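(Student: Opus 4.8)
The plan is to start from the splitting \eqref{33-d-75}, $\mathbb{H}_0^{\mathrm{sc}}(\Gamma_4^2)=\mathrm{H}_{44}^1+\mathrm{H}_{44}^2$, and to insert the two asymptotic expansions already proved: formula \eqref{33-d-124-1} of Lemma \ref{33-lem-9} for the ``caterpillar'' part $\mathrm{H}_{44}^1$, and formula \eqref{33-y-20-1} of Lemma \ref{33-lem-11} for the ``watermelon'' part $\mathrm{H}_{44}^2$. What remains is then a finite algebraic reduction: collect the coefficients in front of each of the functionals $\mathrm{A}_1,\dots,\mathrm{A}_8$, $\mathrm{B}_1,\dots,\mathrm{B}_8$, $J_1,\dots,J_7$, in front of the local vertices $\widetilde{\Gamma}_2$, $\mathrm{X}_1$, $\Gamma_4$, and in front of $S[B]$, and rewrite the sum in the claimed form \eqref{33-d-124-1-1}.

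First I would reduce the diagram basis. The functionals of Section \ref{33:sec:dia} are not linearly independent; in particular the relation $\mathrm{A}_5=\mathrm{A}_6-\tfrac{c_2}{2}\mathrm{B}_5$ holds, so I would use it to eliminate $\mathrm{A}_5$ throughout. Adding the $\mathrm{A}$- and $\mathrm{B}$-coefficients coming from \eqref{33-d-124-1} and \eqref{33-y-20-1} and applying this relation, the columns attached to $\mathrm{A}_5,\mathrm{A}_6,\mathrm{A}_7,\mathrm{A}_8$ and to $\mathrm{B}_4,\mathrm{B}_5$ should cancel, leaving precisely $\tfrac{16L}{\pi}\big(\mathrm{A}_1+2\mathrm{A}_2-2\mathrm{A}_3-\mathrm{A}_4\big)$ together with $\tfrac{c_2L}{\pi}\big(16\mathrm{B}_1+8\mathrm{B}_2+8\mathrm{B}_3+448\mathrm{B}_6-448\mathrm{B}_7\big)$; the disappearance of the $\mathrm{A}_5$--$\mathrm{A}_8$ and $\mathrm{B}_4,\mathrm{B}_5$ entries is a useful internal check.

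Next I would treat the quartic-vertex part. The only $\Lambda^2$-singular combination of $J_5,J_6,J_7$ that survives after summing the two parts is $\alpha_6\Lambda^2\big(96\,J_6+16\,c_2J_7+80\,\tfrac{c_2^2L_1}{\pi}J_5\big)$, which is exactly $8\alpha_6\Lambda^2$ times the right-hand side of the auxiliary expansion \eqref{33-e-1} for $\mathbb{H}_0^{\mathrm{sc}}(\widetilde{\Gamma}_4)$, up to the correction terms $\tfrac{4\alpha_5}{\Lambda^2}(6J_3+c_2J_2)$, $\tfrac{L\kappa_{44,2}}{\Lambda^2}S[B]$ and the singular constant $\kappa_{44,1}$ occurring there. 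Substituting $8\alpha_6\Lambda^2\mathbb{H}_0^{\mathrm{sc}}(\widetilde{\Gamma}_4)$ for that combination thus produces the displayed term $8\alpha_6\Lambda^2\mathbb{H}_0^{\mathrm{sc}}(\widetilde{\Gamma}_4)$, an $\mathcal{O}(1)$ shift of the $J_2$- and $J_3$-coefficients (which is trivial in the sense of singular parts since $J_2,J_3$ are themselves bounded in $\Lambda$), a term proportional to $S[B]$, and a background-independent singular constant; the latter two are absorbed into $\hat{\kappa}_5S[B]$ and $\hat{\kappa}_6$.

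With the $\mathrm{A}$-, $\mathrm{B}$- and $\widetilde{\Gamma}_4$-blocks settled, the last step is to collect the coefficients of $J_1,J_2,J_3,J_4$ and of $\Lambda^2J_5$. Here I would use $L_1=L+\mathbf{f}(0)/2$, the identities $L_{k+1}=L_k+\theta_k$ and $\ln(\sigma/\hat{\sigma})=-\theta_1$ from \eqref{33-bb-1}, and, where it shortens the formulas, $\alpha_9=0$ and $\alpha_{11}=-1/4$, in order to bring the $L$- and $L_1$-dependent coefficients into the $\theta_1,\theta_2,\theta_3,\alpha_8,\alpha_9,\alpha_{11}$ basis of \eqref{33-d-124-1-1}; the $\Lambda^2J_5$-coefficient $\tfrac{48c_2^2}{\pi}\big(-\mathbf{f}(0)/2+\alpha_7-4\alpha_{10}\big)$ is read off directly from \eqref{33-y-20-1}. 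Everything that is proportional to $S[B]$ or independent of the background field is then lumped into $\hat{\kappa}_4+\hat{\kappa}_5$ and $\hat{\kappa}_6$. The main obstacle is the sheer bulk and fragility of this bookkeeping: one must simultaneously control the heavy cross-cancellations among the $\mathrm{A}_i$ and $\mathrm{B}_i$ induced by $\mathrm{A}_5=\mathrm{A}_6-\tfrac{c_2}{2}\mathrm{B}_5$, and the $\alpha_5\alpha_6$-, $\kappa_{44,\cdot}$- and $\mathcal{O}(L/\Lambda^2)$-type remnants generated by \eqref{33-e-1}, checking that they really fall inside the constants $\hat{\kappa}_4,\hat{\kappa}_5,\hat{\kappa}_6$ rather than contaminating the displayed coefficients.
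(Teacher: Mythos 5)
Your proposal matches the paper's (implicit) proof: Lemma \ref{33-lem-14} is obtained precisely by adding the expansions of Lemmas \ref{33-lem-9} and \ref{33-lem-11}, eliminating $\mathrm{A}_5$ via $\mathrm{A}_5=\mathrm{A}_6-\tfrac{c_2}{2}\mathrm{B}_5$, and repackaging the $\Lambda^2$-block $96J_6+16c_2J_7+80c_2^2L_1J_5/\pi$ together with its remnants as $8\alpha_6\Lambda^2\mathbb{H}_0^{\mathrm{sc}}(\widetilde{\Gamma}_4)$ via \eqref{33-e-1}, and your coefficient checks (cancellation of the $\mathrm{A}_5$--$\mathrm{A}_8$ and $\mathrm{B}_4,\mathrm{B}_5$ columns, survival of $16\mathrm{B}_1+8\mathrm{B}_2+8\mathrm{B}_3+448\mathrm{B}_6-448\mathrm{B}_7$) are correct. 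The one imprecise aside is that the induced shifts of the $J_2,J_3$ coefficients are ``trivial'' because $J_2,J_3$ are bounded --- rather, those shifts are exactly what removes the $\alpha_5\alpha_6 L$ pieces carried by $J_2$ and $J_3$ in Lemmas \ref{33-lem-9} and \ref{33-lem-11}, which is why no $\alpha_5\alpha_6$ term survives in \eqref{33-d-124-1-1}.
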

\begin{lemma}\label{33-lem-15}
Taking into account the representation from \eqref{33-k-4} and the decomposition from \eqref{33-d-81}, the formula is valid
\begin{multline}\label{33-d-124-5}
		\mathbb{H}_0^{\mathrm{sc}}(\Gamma_4)=\Lambda^2\mathbb{H}_0^{\mathrm{sc}}(\widetilde{\Gamma}_2)
		+12\Big(\mathrm{B}_6-\mathrm{B}_7\Big)+\frac{3L_1c_2}{2\pi}J_1[B]+\frac{\theta_2}{\pi}J_2[B]+\\+S[B]\bigg(-\frac{3L_1^2c_2^2}{8\pi^2}+\frac{c_2^2L_1\theta_2}{2\pi^2}\bigg)
		+\hat{\kappa}_7+o\big(1/\Lambda\big),
	\end{multline}
	where the coefficient $\hat{\kappa}_7$ is singular and independent of the background field.
\end{lemma}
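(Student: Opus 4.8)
\textbf{Proof plan for Lemma \ref{33-lem-15}.}
The diagram $\mathbb{H}_0^{\mathrm{sc}}(\Gamma_4)$ is the two-loop vacuum ``figure-eight'' obtained by pairing the four legs of the single quartic vertex $\Gamma_4$, and the statement is essentially a recapitulation of the two-loop analysis of \cite{Ivanov-Akac,Iv-244}. The plan is to begin from the fixed-endpoint rewriting $\Gamma_4\to\Gamma_{4,1}+3\Gamma_{4,2}$ of \eqref{33-k-4} (equivalently \eqref{33-d-36}), obtained earlier by integration by parts and the Jacobi identity \eqref{33-eq:Jacobi}, which already collapses the three quartic group structures into two, and then to close the two ``left'' legs into one Green's-function loop and the two ``right'' legs into a second loop, in all admissible ways, with the ordinary derivative kept on the left as agreed in Section~\ref{33:sec:cl:4}. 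The combinatorial multiplicities are exactly those recorded in the passage through \eqref{33-d-4}--\eqref{33-d-6}.

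The only source of a nonintegrable density is the loop carrying the two differential operators, built from the $\partial_\mu$-leg and the $D_\mu$-leg. For that loop I would integrate by parts to reconstruct $A^{ef}$ and substitute the near-diagonal expansion \eqref{33-d-79}, $\tfrac12\bigl(A^{ad}(x)G_\Lambda^{db}(x,y)|_{y=x}+A^{bd}(x)G_\Lambda^{da}(x,y)|_{y=x}\bigr)=\tfrac{\Lambda^2\alpha_1(\mathbf{f})}{4\pi}\delta^{ab}-\tfrac{\theta_2}{4\pi}B_\mu^{ad}(x)B_\mu^{db}(x)+o(1)$, or, more compactly, use the packaged relation \eqref{33-d-81}. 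The $\Lambda^2$ piece assembles into $\Lambda^2\mathbb{H}_0^{\mathrm{sc}}(\widetilde{\Gamma}_2)$ with $\widetilde{\Gamma}_2$ as in \eqref{33-d-14}. For the other loop, which carries no nonintegrable density, I would split the diagonal Green's function by Lemma~\ref{33-lem-7} as $G_\Lambda^{ab}(x,x)=\tfrac{L_1}{\pi}\delta^{ab}+2PS_\Lambda^{ab}(x,x)+\mathcal{O}(L/\Lambda)$; contracting this with the $-\tfrac{\theta_2}{4\pi}B_\mu B_\mu$ correction and with the remaining pieces $\widetilde{\mathrm{D}}_1$ and $\mathrm{X}_2$ of \eqref{33-d-81}, and using the group identities \eqref{33-eq:tr_S}, produces the logarithmic contributions $\tfrac{3L_1c_2}{2\pi}J_1[B]$ and $\tfrac{\theta_2}{\pi}J_2[B]$ together with the finite term $12(\mathrm{B}_6-\mathrm{B}_7)$. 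Everything proportional to $S[B]$ is then gathered into the stated coefficient $-\tfrac{3L_1^2c_2^2}{8\pi^2}+\tfrac{c_2^2L_1\theta_2}{2\pi^2}$, and the background-independent remainder into $\hat{\kappa}_7$.

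The main obstacle is the bookkeeping of the \emph{finite} ($O(1)$) and $S[B]$-proportional pieces rather than the leading divergence: because this diagram enters $W_1[B,\Lambda]$ multiplied by $a_0\propto L_1$, the expansion must be kept to accuracy $o(1/\Lambda)$, so the cross terms between the power-divergent loop and the $PS_\Lambda$/$B_\mu$ loop, the $\theta_2$-corrections coming from differences of deformed Green's functions, and the distinction between $L$ and $L_1$ in the coefficient of $J_1[B]$ all have to be tracked exactly; one must also verify that the correction vertices $\Theta_1,\Theta_2$ of \eqref{33-d-81} and the $o(1)$ remainder of \eqref{33-d-79} feed only into the discarded $o(1/\Lambda)$. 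Since the required ingredients---the definitions and identities for $L_1$, $L_2$, $\theta_2$ from Section~\ref{33:sec:cl:1} and the expansion \eqref{33-d-79} derived in \cite{Iv-244}---are all available, no genuinely new estimate is needed, and the argument reduces to a careful but routine reorganization of the two pairings.
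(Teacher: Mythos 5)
Your proposal follows exactly the route the paper indicates for this lemma: start from the fixed-endpoint rewriting $\Gamma_4\to\Gamma_{4,1}+3\Gamma_{4,2}$ of \eqref{33-k-4}, treat the loop carrying the two differential operators via the packaged expansion \eqref{33-d-81} (equivalently \eqref{33-d-79}) so that the $\Lambda^2$ piece assembles into $\Lambda^2\mathbb{H}_0^{\mathrm{sc}}(\widetilde{\Gamma}_2)$, and split the remaining diagonal Green's function by Lemma \ref{33-lem-7} to extract the $J_1$, $J_2$, $\mathrm{B}_6-\mathrm{B}_7$ and $S[B]$ pieces, keeping the finite parts because the diagram is multiplied by $a_0\propto L_1$ inside $W_1[B,\Lambda]$. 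This matches the paper's (terse) justification, and your identification of the bookkeeping of $O(1)$ and $S[B]$-proportional terms as the only real work is accurate.
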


\subsection{The case $\mathbb{H}_0^{\mathrm{sc}}(\Gamma_6)$}
\label{33:sec:cl:6}
Let us turn to the consideration of the sextic vertex $\Gamma_6$. It is clear that after applying the operator $\mathbb{H}_0^{\mathrm{sc}}$, we obtain a set of fifteen diagrams, each of which contains three Green's functions on the diagonal, that is, with matching arguments. However, some Green's functions may contain derivatives (no more than two). Thus, by performing all possible connections of the external lines, we obtain the following decomposition
\begin{equation}\label{33-y-6}
\mathbb{H}_0^{\mathrm{sc}}(\Gamma_6)=\sum_{i=1}^{15}e_i.
\end{equation}
Omitting the operator elements, the last diagrams can be represented as follows
\begin{equation}\label{33-s-31-t}
	{\centering\adjincludegraphics[width = 1.6 cm, valign=c]{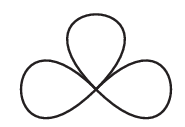}}.
\end{equation}
Taking into account the explicit form of the sextic vertex \eqref{33-g-6}, the contributions have the form
\begin{equation}\label{33-y-1}
e_1=-{\centering\adjincludegraphics[ width = 2 cm, valign=c]{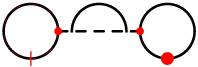}},\,\,\,
e_2={\centering\adjincludegraphics[ width = 1.5 cm, valign=c]{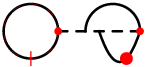}},\,\,\,
e_3=-{\centering\adjincludegraphics[ width = 1.5 cm, valign=c]{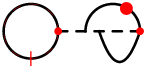}},\,\,\,
e_4={\centering\adjincludegraphics[ width = 1.5 cm, valign=c]{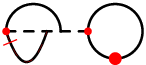}},\,\,\,
e_5=-{\centering\adjincludegraphics[ width = 1.2 cm, valign=c]{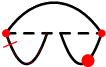}},
\end{equation}
\begin{equation}\label{33-y-3}
e_6={\centering\adjincludegraphics[ width = 1.2 cm, valign=c]{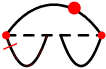}},\,\,\,
e_7={\centering\adjincludegraphics[ width = 1.5 cm, valign=c]{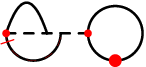}},\,\,\,
e_8=-{\centering\adjincludegraphics[ width = 1.2 cm, valign=c]{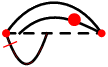}},\,\,\,
e_9={\centering\adjincludegraphics[ width = 1.2 cm, valign=c]{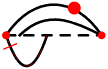}},\,\,\,
e_{10}={\centering\adjincludegraphics[ width = 1.2 cm, valign=c]{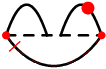}},
\end{equation}
\begin{equation}\label{33-y-5}
e_{11}={\centering\adjincludegraphics[ width = 1.2 cm, valign=c]{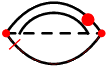}},\,\,\,
e_{12}={\centering\adjincludegraphics[ width = 1.2 cm, valign=c]{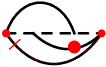}},\,\,\,
e_{13}=-{\centering\adjincludegraphics[ width = 1.2 cm, valign=c]{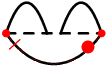}},\,\,\,
e_{14}=-{\centering\adjincludegraphics[ width = 1.2 cm, valign=c]{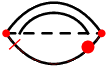}},\,\,\,
e_{15}=-{\centering\adjincludegraphics[ width = 1.2 cm, valign=c]{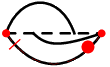}}.
\end{equation}
The study of recent contributions actually boils down to substituting the Green's function and its first derivatives on the diagonal. In most cases, it is enough to simply use formulas \eqref{33-r-28} and \eqref{33-r-32}. Omitting the step of some small summations, we present the answers for the diagrams.
\begin{fleqn}
\begin{equation}\label{33-y-7}
e_1\stackrel{\mathrm{s.p.}}{=}\frac{Lc_2}{\pi}\Big(2\mathrm{B}_3+4\mathrm{B}_4+4\mathrm{B}_5\Big)
+\frac{L_1^2}{\pi^2}\Big(-c_2^2\frac{3J_1}{2}\Big)
+\kappa_{4,1}S[B],
\end{equation}
\end{fleqn}
\begin{fleqn}
\begin{equation}\label{33-y-8}
e_2\stackrel{\mathrm{s.p.}}{=}\frac{Lc_2}{\pi}\Big(2\mathrm{B}_2+3\mathrm{B}_4+2\mathrm{B}_5-2\mathrm{B}_7\Big)+\frac{L_1^2}{\pi^2}\Big(-c_2^2\frac{3J_1}{4}\Big)
+\kappa_{4,2}S[B],
\end{equation}
\end{fleqn}
\begin{fleqn}
\begin{equation}\label{33-y-9}
e_3\stackrel{\mathrm{s.p.}}{=}\frac{L}{\pi}\Big(2\mathrm{A}_8-2\mathrm{A}_7\Big)+\frac{Lc_2}{\pi}\Big(2\mathrm{B}_1+4\mathrm{B}_4+4\mathrm{B}_5+2\mathrm{B}_8\Big)
+\frac{L_1^2}{\pi^2}\Big(-c_2^2J_1+c_2\frac{J_2}{2}\Big)+\kappa_{4,3}S[B],
\end{equation}
\end{fleqn}
\begin{fleqn}
\begin{equation}\label{33-y-10}
e_4\stackrel{\mathrm{s.p.}}{=}\frac{Lc_2}{\pi}\Big(\mathrm{B}_3+\mathrm{B}_4+2\mathrm{B}_5+2\mathrm{B}_7\Big)+\frac{L_1^2}{\pi^2}\Big(-c_2^2\frac{3J_1}{4}\Big)+\kappa_{4,4}S[B],
\end{equation}
\end{fleqn}
\begin{fleqn}
\begin{equation}\label{33-y-11}
e_5\stackrel{\mathrm{s.p.}}{=}\frac{L}{\pi}\Big(-4\mathrm{A}_5
+2\mathrm{A}_7
+2\mathrm{A}_8
\Big)+\frac{Lc_2}{\pi}\Big(\mathrm{B}_2+\mathrm{B}_4\Big)
+\frac{L_1^2}{\pi^2}\Big(-c_2^2\frac{3J_1}{8}\Big)+\kappa_{4,5}S[B],
\end{equation}
\end{fleqn}
\begin{fleqn}
\begin{equation}\label{33-y-12}
e_6\stackrel{\mathrm{s.p.}}{=}\frac{L}{\pi}2\mathrm{A}_8+\frac{Lc_2}{\pi}\Big(\mathrm{B}_1-4\mathrm{B}_6+4\mathrm{B}_7+\mathrm{B}_8\Big)
+\frac{L_1^2}{\pi^2}\Big(-c_2^2\frac{J_1}{2}+c_2\frac{J_2}{4}\Big)+\kappa_{4,6}S[B],
\end{equation}
\end{fleqn}
\begin{fleqn}
\begin{align}\label{33-y-13}
e_7\stackrel{\mathrm{s.p.}}{=}&\,\frac{L}{\pi}\Big(
-4\mathrm{A}_2+4\mathrm{A}_3+2\mathrm{A}_7-2\mathrm{A}_8
\Big)+\frac{Lc_2}{\pi}\Big(\mathrm{B}_3+4\mathrm{B}_4+4\mathrm{B}_5-2\mathrm{B}_8\Big)
\\\nonumber+&\frac{L_1^2}{\pi^2}\Big(-c_2^2J_1+c_2\frac{J_2}{2}\Big)+\kappa_{4,7}S[B],
\end{align}
\end{fleqn}
\begin{fleqn}
\begin{align}\label{33-y-14}
e_8\stackrel{\mathrm{s.p.}}{=}&\,\frac{L}{\pi}\Big(
-\mathrm{A}_1
+\mathrm{A}_4-4\mathrm{A}_6+2\mathrm{A}_7+2\mathrm{A}_8
\Big)+\frac{Lc_2}{\pi}\Big(-\frac{1}{2}\mathrm{B}_1+\mathrm{B}_2-\mathrm{B}_4-2\mathrm{B}_6+2\mathrm{B}_7\Big)
\\\nonumber+&\frac{L_1^2}{\pi^2}\Big(-c_2^2\frac{J_1}{8}-c_2\frac{3J_2}{8}-\frac{J_3}{4}\Big)+\kappa_{4,8}S[B],
\end{align}
\end{fleqn}
\begin{fleqn}
\begin{align}\label{33-y-15}
e_9\stackrel{\mathrm{s.p.}}{=}&\,\frac{L}{\pi}\Big(
-\mathrm{A}_1
+\mathrm{A}_4-\mathrm{A}_7+\mathrm{A}_8
\Big)+\frac{Lc_2}{\pi}\Big(\frac{1}{2}\mathrm{B}_1-\frac{1}{2}\mathrm{B}_4-2\mathrm{B}_6+3\mathrm{B}_7+\mathrm{B}_8\Big)\\\nonumber
+&\frac{L_1^2}{\pi^2}\Big(-c_2^2\frac{J_1}{4}-c_2\frac{J_2}{8}-\frac{J_3}{4}\Big)+\kappa_{4,9}S[B],
\end{align}
\end{fleqn}
\begin{fleqn}
\begin{align}\label{33-y-16}
e_{10}\stackrel{\mathrm{s.p.}}{=}&\,\frac{L}{\pi}\Big(
-6\mathrm{A}_2
+4\mathrm{A}_3+2\mathrm{A}_4-2\mathrm{A}_8
\Big)+\frac{Lc_2}{\pi}\Big(\mathrm{B}_3-4\mathrm{B}_6+4\mathrm{B}_7-\mathrm{B}_8\Big)
\\\nonumber+&\frac{L_1^2}{\pi^2}\Big(-c_2^2\frac{J_1}{2}+c_2\frac{J_2}{4}\Big)+\kappa_{4,10}S[B],
\end{align}
\end{fleqn}
\begin{fleqn}
\begin{equation}\label{33-y-17}
e_{11}\stackrel{\mathrm{s.p.}}{=}\frac{L}{\pi}\Big(
-2\mathrm{A}_1
+2\mathrm{A}_2
\Big)+\frac{Lc_2}{\pi}\Big(-4\mathrm{B}_6+4\mathrm{B}_7\Big)
+\frac{L_1^2}{\pi^2}\Big(-c_2^2\frac{J_1}{2}-c_2\frac{J_2}{4}-\frac{J_3}{2}\Big)+\kappa_{4,11}S[B],
\end{equation}
\end{fleqn}
\begin{fleqn}
\begin{align}\label{33-y-18}
e_{12}\stackrel{\mathrm{s.p.}}{=}&\,\frac{L}{\pi}\Big(
-\mathrm{A}_1
-2\mathrm{A}_2+2\mathrm{A}_3+\mathrm{A}_4+\mathrm{A}_7-\mathrm{A}_8
\Big)\\\nonumber+&\frac{Lc_2}{\pi}\Big(
-\frac{1}{2}\mathrm{B}_1
+\mathrm{B}_2
+\frac{1}{2}\mathrm{B}_4
-2\mathrm{B}_6
+\mathrm{B}_7
-\mathrm{B}_8
\Big)+\frac{L_1^2}{\pi^2}\Big(-c_2^2\frac{J_1}{4}-c_2\frac{J_2}{8}-\frac{J_3}{4}\Big)+\kappa_{4,12}S[B],
\end{align}
\end{fleqn}
\begin{fleqn}
\begin{align}\label{33-y-19}
e_{13}+e_{14}+e_{15}\stackrel{\mathrm{s.p.}}{=}&\,\frac{L}{\pi}\Big(
4\mathrm{A}_5
+4\mathrm{A}_6-4\mathrm{A}_7-4\mathrm{A}_8
\Big)+\frac{16Lc_2}{\pi}\Big(
-\mathrm{B}_4
-\mathrm{B}_5
-2\mathrm{B}_6
+2\mathrm{B}_7
\Big)\\\nonumber-&\frac{\Lambda^2\alpha_1(\mathbf{f})}{\pi}\Big(\frac{5L_1c_2^2}{\pi}J_5[B]+6J_6[B]+c_2J_7[B]\Big)
-\frac{2L\alpha_1(\mathbf{f})\alpha_5(\mathbf{f})}{\pi}\Big(c_2J_2+6J_3\Big)
\\\nonumber+&\frac{L\theta_2}{\pi^2}\Big(-4c_2J_2+J_3\Big)+\kappa_{4,13}S[B]+\kappa_{4,14}.
\end{align}
\end{fleqn}
Here, the coefficients $\kappa_{4,1}$--$\kappa_{4,14}$ depend on the regularizing parameter and do not contain the background field. Summing up all the contributions, we can obtain a general expression for the case of the sextic vertex.
\begin{lemma}\label{33-lem-10}
Taking into account all of the above, the following decomposition is true
\begin{align}\label{33-y-20}
\mathbb{H}_0^{\mathrm{sc}}(\Gamma_6)\stackrel{\mathrm{s.p.}}{=}
-&\frac{\Lambda^2\alpha_1(\mathbf{f})}{2\pi}\mathbb{H}_0^{\mathrm{sc}}(\widetilde{\Gamma}_4)
+\frac{L}{\pi}\Big(
-5\mathrm{A}_1
-10\mathrm{A}_2
+10\mathrm{A}_3
+5\mathrm{A}_4
\Big)
\\\nonumber+&\frac{Lc_2}{\pi}\Big(
\frac{5}{2}\mathrm{B}_1
+5\mathrm{B}_2
+5\mathrm{B}_3
-50\mathrm{B}_6
+50\mathrm{B}_7
\Big)
\\\nonumber+&
\frac{L_1^2}{\pi^2}\Big(-c_2^2\frac{15J_1}{2}+c_2\frac{5J_2}{8}-\frac{5J_3}{4}\Big)
\\\nonumber+&
\frac{L\theta_2}{\pi^2}\Big(-4c_2J_2+J_3\Big)+\hat{\kappa}_8S[B]+\hat{\kappa}_9,
\end{align}
where $\hat{\kappa}_8$ and $\hat{\kappa}_9$ are singular coefficients and do not depend on the background field.
\end{lemma}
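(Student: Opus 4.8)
The plan is to prove the decomposition \eqref{33-y-6} into fifteen strongly connected vacuum diagrams, then establish the individual asymptotics \eqref{33-y-7}--\eqref{33-y-19}, and finally add them. The key structural point, already noted before \eqref{33-y-6}, is that $\Gamma_6$ carries only one pair of derivatives (the ordinary $\partial_\mu$ on the first leg and the covariant $D_\mu$ on the last), so after any complete pairing of its six legs one obtains a local diagram in which three Green's functions sit on the diagonal, at most two of them carrying a derivative. Consequently the whole computation is local and reduces to substituting the near-diagonal data for $G_\Lambda$.

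First I would collect the inputs. For the diagonal value and first derivatives I would use $G_\Lambda^{ab}(x,x)=\frac{L_1}{\pi}\delta^{ab}+2PS_\Lambda^{ab}(x,x)+\mathcal{O}(L/\Lambda)$ from \eqref{33-s-14}, together with \eqref{33-r-28} and \eqref{33-r-32} of Lemma \ref{33-lem-7} for $\partial_{x^\mu}G_\Lambda|_{y=x}$ and $D_\mu^{ac}(x)G_\Lambda^{cb}|_{y=x}$; for the three diagrams $e_{13},e_{14},e_{15}$, where the contraction pattern forces an $A^{ef}(y)G_\Lambda^{fd}(y,z)|_{y=z}$ factor, I would use the collapse \eqref{33-d-12}, which is what produces the $\Lambda^2\alpha_1(\mathbf{f})$ terms. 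Then for each $e_i$ I would insert these expansions, discard everything that decays faster than any power of $L$ when the regularization is removed, and reduce the remaining group skeletons to the standard basis $\{\mathrm{A}_k,\mathrm{B}_k,J_k\}$ of Section \ref{33:sec:dia} with the help of the Jacobi identity \eqref{33-eq:Jacobi} and the trace relations \eqref{33-eq:tr_S}. This is exactly how \eqref{33-y-7}--\eqref{33-y-19} are obtained; a few small resummations of the kind indicated in the text are needed along the way.

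The last step is bookkeeping: sum the fifteen expressions, collect the coefficient of each $\mathrm{A}_k$, $\mathrm{B}_k$ and $J_k$, and use the linear dependence $\mathrm{A}_5=\mathrm{A}_6-\frac{c_2}{2}\mathrm{B}_5$ to eliminate redundancy. The only nontrivial repackaging is that the combination $\frac{5L_1c_2^2}{\pi}J_5+6J_6+c_2J_7$ occurring inside $e_{13}+e_{14}+e_{15}$ must be rewritten in terms of $\mathbb{H}_0^{\mathrm{sc}}(\widetilde{\Gamma}_4)$ via relation \eqref{33-e-1}; the $\alpha_5(\mathbf{f})$-corrections on the right of \eqref{33-e-1} then cancel the $\alpha_1(\mathbf{f})\alpha_5(\mathbf{f})$ pieces in \eqref{33-y-19}, and one is left with the single prefactor $-\frac{\Lambda^2\alpha_1(\mathbf{f})}{2\pi}\mathbb{H}_0^{\mathrm{sc}}(\widetilde{\Gamma}_4)$ stated in the lemma. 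All surviving contributions proportional to $S[B]$ or independent of the background field are absorbed into $\hat{\kappa}_8$ and $\hat{\kappa}_9$.

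The main obstacle is not conceptual but organizational. Deriving the fifteen partial expansions \eqref{33-y-7}--\eqref{33-y-19} requires careful control of the signs coming from the structure constants, the combinatorial multiplicity of each pairing, and the placement of the two derivatives; and then one must verify that the sizeable list of $\mathrm{A}_k$, $\mathrm{B}_k$, $J_k$ coefficients really collapses to the compact form \eqref{33-y-20}. A secondary subtlety is uniformity of the error estimates, i.e. that every dropped term is genuinely subleading in $L$ after $\Lambda\to+\infty$, which rests on the integrability statements established throughout Section \ref{33:sec:cl:1}.
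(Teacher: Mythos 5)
Your proposal follows essentially the same route as the paper: decompose $\mathbb{H}_0^{\mathrm{sc}}(\Gamma_6)$ into the fifteen pairings $e_1,\ldots,e_{15}$, substitute the near-diagonal data \eqref{33-s-14}, \eqref{33-r-28}, \eqref{33-r-32} (and the collapse \eqref{33-d-12} producing the $\Lambda^2\alpha_1(\mathbf{f})$ pieces in $e_{13}+e_{14}+e_{15}$), reduce the group factors, and then sum, repackaging the $J_5,J_6,J_7$ combination through \eqref{33-e-1} into $-\frac{\Lambda^2\alpha_1(\mathbf{f})}{2\pi}\mathbb{H}_0^{\mathrm{sc}}(\widetilde{\Gamma}_4)$ exactly as the paper does. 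The only cosmetic difference is that the linear relation $\mathrm{A}_5=\mathrm{A}_6-\frac{c_2}{2}\mathrm{B}_5$ you invoke is not actually needed here, since in this sum the coefficients of $\mathrm{A}_5$--$\mathrm{A}_8$ and $\mathrm{B}_4$, $\mathrm{B}_5$, $\mathrm{B}_8$ cancel outright.
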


\section{Additional verification}
\label{33:sec:dop}

An important property of the diagrams from Section \ref{33:sec:cl} is their independence from the auxiliary dimensional parameter $\sigma$, which appears when dividing the integration regions. This leads to the fact that the parameter $\sigma$ is contained not only in the logarithms $\ln(\Lambda/\sigma)$, but also in the function $PS_\Lambda$, and, moreover, in the correction terms. Given the fact that the structure of the singularities should not depend on the choice of $\sigma$, we can conclude that when shifting $\sigma\to\sigma_1$, the singularities should be redistributed in a consistent manner. We use this observation to verify the calculations.

So, let us define the operation $\Upsilon$, which acts on the functions $f(\sigma)$, depending on the parameter $\sigma$, by the equality
\begin{equation}\label{33-e-3}
\Upsilon(f)=\lim_{\sigma_1\to\sigma}\frac{f(\sigma_1)-f(\sigma)}{\sigma_1-\sigma}.
\end{equation}
Then, using the properties of the functions from Section \ref{33:sec:cl:1}, two key relations can be obtained
\begin{equation}\label{33-e-4}
\Upsilon\big(PS_\Lambda^{ab}(x,x)\big)=-\frac{\delta^{ab}}{2\pi}+\mathcal{O}\big(1/\Lambda^2\big),
\end{equation}
\begin{equation}\label{33-e-5}
\Upsilon\Big(\partial_{y^\mu}PS_\Lambda^{ab}(y,x)\big|_{y=x}\Big)=-\frac{B^{ab}_\mu(x)}{2\pi}+\mathcal{O}\big(1/\Lambda^2\big),
\end{equation}
from which a series of equations is instantly obtained
\begin{equation}\label{33-e-6}
\Upsilon\big(\mathrm{A}_{1}\big)=\frac{1}{2\pi}(c_2J_2-J_3)+\mathcal{O}\big(1/\Lambda^2\big)
,\,\,\,
\Upsilon\big(\mathrm{A}_{2}\big)=\frac{1}{4\pi}(3c_2J_2+c_2^2J_1)+\mathcal{O}\big(1/\Lambda^2\big),
\end{equation}
\begin{equation}\label{33-e-7}
\Upsilon\big(\mathrm{A}_{3}\big)=\frac{1}{4\pi}(2c_2J_2+c_2^2J_1)+\mathcal{O}\big(1/\Lambda^2\big)
,\,\,\,
\Upsilon\big(\mathrm{A}_{4}\big)=\frac{1}{\pi}c_2J_2+\mathcal{O}\big(1/\Lambda^2\big),
\end{equation}
\begin{equation}\label{33-e-8}
\Upsilon\big(\mathrm{A}_{5}\big)=\frac{1}{8\pi}c_2^2J_4+\mathcal{O}\big(1/\Lambda^2\big)
,\,\,\,
\Upsilon\big(\mathrm{A}_{6}\big)=-\frac{1}{8\pi}c_2^2J_4+\mathcal{O}\big(1/\Lambda^2\big),
\end{equation}
\begin{equation}\label{33-e-9}
\Upsilon\big(\mathrm{A}_{7}\big)=\frac{1}{8\pi}(c_2J_2-2c_2^2J_4)+\mathcal{O}\big(1/\Lambda^2\big)
,\,\,\,
\Upsilon\big(\mathrm{A}_{8}\big)=\frac{1}{8\pi}(-c_2J_2+2c_2^2J_4)+\mathcal{O}\big(1/\Lambda^2\big),
\end{equation}
\begin{equation}\label{33-e-10}
\Upsilon\big(\mathrm{B}_{1}\big)=\frac{1}{2\pi}(c_2J_1-J_2)+\mathcal{O}\big(1/\Lambda^2\big)
,\,\,\,
\Upsilon\big(\mathrm{B}_{2}\big)=\frac{1}{2\pi}c_2J_1+\mathcal{O}\big(1/\Lambda^2\big),
\end{equation}
\begin{equation}\label{33-e-11}
\Upsilon\big(\mathrm{B}_{3}\big)=\frac{1}{\pi}c_2J_1+\mathcal{O}\big(1/\Lambda^2\big)
,\,\,\,
\Upsilon\big(\mathrm{B}_{4}\big)=\frac{1}{4\pi}(c_2J_1+2c_2J_4)+\mathcal{O}\big(1/\Lambda^2\big),
\end{equation}
\begin{equation}\label{33-e-12}
\Upsilon\big(\mathrm{B}_{5}\big)=-\frac{1}{2\pi}c_2J_4+\mathcal{O}\big(1/\Lambda^2\big)
,\,\,\,
\Upsilon\big(\mathrm{B}_{6}\big)=\frac{1}{4\pi}c_2J_4+\mathcal{O}\big(1/\Lambda^2\big),
\end{equation}
\begin{equation}\label{33-e-13}
\Upsilon\big(\mathrm{B}_{7}\big)=\frac{1}{8\pi}(c_2J_1+2c_2J_4)+\mathcal{O}\big(1/\Lambda^2\big),
\end{equation}
\begin{equation}\label{33-e-15}
\Upsilon\big(J_i\big)=\varepsilon_iS[B]+\mathcal{O}\big(1/\Lambda^2\big)
,\,\,\,\mbox{where}\,\,\,i\in\{1,2,3,4\},
\end{equation}
\begin{equation}\label{33-e-16}
\Upsilon\big(J_6\big)=-\frac{1}{\pi}c_2^2J_5+\mathcal{O}\big(1/\Lambda^2\big),\,\,\,
\Upsilon\big(J_7\big)=\frac{1}{\pi}c_2J_5+\mathcal{O}\big(1/\Lambda^2\big),
\end{equation}
\begin{equation}\label{33-e-14}
\Upsilon\big(\mathbb{H}_0^{\mathrm{sc}}(\Gamma)\big)=0,
\end{equation}
where $\Gamma$ is a set of vertices independent of the auxiliary parameter $\sigma$, and $\{\varepsilon_i\}$ is a set of constants. By applying the operator $\Upsilon$ to the relations from the lemmas of Section \ref{33:sec:cl} and applying the last equalities, we make sure that the structure of the divergences does not change, and all functions containing singular coefficients acquire the substitution of the argument $\sigma\to\sigma_1$.

\section{Quasi-local vertices}
\label{33:sec:kva}
In Section \ref{33:sec:osn}, it was shown that when local auxiliary vertices are selected, we obtain additional contributions to the classical action containing logarithmic singularities of the form $L_1^2$. It is important to note that the introduced regularization in Section \ref{33:sec:ps} is based on the averaging of fluctuations. This approach violates locality, but preserves quasi-locality, since averaging occurs over a "small" neighborhood of the selected point. 

Consider a series of quasi-local vertices. In this case, the quasi-locality lies in the fact that in addition to averaging fluctuations, an integral operator also appears, and the kernel of which has a rather small support. So let $\mathrm{K}(\cdot)$ be a continuous function such that $\mathrm{supp}(\mathrm{K})\subset\mathrm{B}_1$, then we define four functionals
\begin{equation}\label{33-j-1}
	\mathrm{V}_1[\mathrm{K},\phi]=\int_{\mathbb{R}^2}\mathrm{d}^2x\int_{\mathbb{R}^2}\mathrm{d}^2y\,
	B^{ba}(x)\phi^a(x)\Big(f^{bec}\Lambda^2\mathrm{K}\big((x-y)\Lambda\big)G_\Lambda^{cd}(x,y)B^{gd}_\mu(y)f^{gef}\Big)\phi^f(y),
\end{equation}
\begin{equation}\label{33-j-2}
	\mathrm{V}_2[\mathrm{K},\phi]=\int_{\mathbb{R}^2}\mathrm{d}^2x\int_{\mathbb{R}^2}\mathrm{d}^2y\,
	B^{ba}(x)\phi^a(x)\Big(f^{bec}\Lambda^2\mathrm{K}\big((x-y)\Lambda\big)G_\Lambda^{cd}(x,y)f^{def}\Big)
	B^{fg}_\mu(y)\phi^g(y),
\end{equation}
\begin{equation}\label{33-j-3}
	\mathrm{V}_3[\mathrm{K},\phi]=\int_{\mathbb{R}^2}\mathrm{d}^2x\int_{\mathbb{R}^2}\mathrm{d}^2y\,
	B^{ba}(x)\phi^a(x)f^{ibk}\Big(f^{kec}\Lambda^2\mathrm{K}\big((x-y)\Lambda\big)G_\Lambda^{cd}(x,y)f^{def}\Big)f^{fli}
	B^{lg}_\mu(y)\phi^g(y),
\end{equation}
\begin{equation}\label{33-j-4}
	\mathrm{V}_4[\mathrm{K},\phi]=\int_{\mathbb{R}^2}\mathrm{d}^2x\int_{\mathbb{R}^2}\mathrm{d}^2y\,
	\Big(\partial_{x_\mu}\phi^b(x)\Big)\Big(f^{bec}\Lambda^2\mathrm{K}\big((x-y)\Lambda\big)G_\Lambda^{cd}(x,y)f^{def}\Big)
	B^{fg}_\mu(y)\phi^g(y).
\end{equation}
Using standard methods for calculating diagrams with two integration operators, we obtain
\begin{equation}\label{33-j-5}
	\mathbb{H}_0^{\mathrm{sc}}(\mathrm{V}_1[\mathrm{K},\phi])\stackrel{\mathrm{s.p.}}{=}
	\bigg[-\frac{2L_1c_2}{\pi}J_1[B]+\frac{L_1^2c_2^2}{2\pi^2}S[B]\bigg]
	\int_{\mathrm{B}_1}\mathrm{d}^2x\,\mathrm{K}(x)+\varrho_1S[B],
\end{equation}
\begin{equation}\label{33-j-6}
	\mathbb{H}_0^{\mathrm{sc}}(\mathrm{V}_2[\mathrm{K},\phi])\stackrel{\mathrm{s.p.}}{=}
	\bigg[\frac{2L_1c_2}{\pi}(c_2J_1[B]-J_2[B])-\frac{L_1^2c_2^2}{\pi^2}S[B]\bigg]
	\int_{\mathrm{B}_1}\mathrm{d}^2x\,\mathrm{K}(x)+\varrho_2S[B],
\end{equation}
\begin{equation}\label{33-j-7}
	\mathbb{H}_0^{\mathrm{sc}}(\mathrm{V}_3[\mathrm{K},\phi])\stackrel{\mathrm{s.p.}}{=}
	\bigg[\frac{2L_1}{\pi}(-c_2^2J_1[B]-J_3[B])+\frac{L_1^2c_2^3}{\pi^2}S[B]\bigg]
	\int_{\mathrm{B}_1}\mathrm{d}^2x\,\mathrm{K}(x)+\varrho_3S[B],
\end{equation}
\begin{equation}\label{33-j-8}
	\mathbb{H}_0^{\mathrm{sc}}(\mathrm{V}_4[\mathrm{K},\phi])\stackrel{\mathrm{s.p.}}{=}
	\bigg[\frac{L_1}{\pi}(2c_2J_4[B]-J_2[B])-\frac{L_1^2c_2^2}{2\pi^2}S[B]\bigg]
	\int_{\mathrm{B}_1}\mathrm{d}^2x\,\mathrm{K}(x)+\varrho_4S[B],
\end{equation}
where the singular functions $\varrho_i$ are proportional to $L_1$ and independent of the background field. Combinations of such vertices can remove unnecessary nonlocal and non-linear terms, which is formulated and proved in Theorem \ref{33-tt}.

\section{Conclusion}
\label{33:sec:zak}

\subsection{Comments}

In this paper, three-loop diagrams for the two-dimensional non-linear sigma model were studied using the background field method and the cutoff regularization in the coordinate representation. All the necessary renormalization vertices were found, and the consistency of the results with the previously obtained ones was shown. The general formulation is presented in Section \ref{33:sec:r}.
Separately, it is worth paying attention to the fact that such studies are directly related to the generalization of the $\mathcal{R}$-operation for the case of the regularization under consideration and, using the example of a number of diagrams, completely coincide with it. As a comparison of the structure of the singularities, the standard cutoff in the momentum representation was considered. A special feature of this approach is the presence of a projector, due to the properties of which most of terms with differences are reduced.

\vspace{2mm}
\noindent{\textbf{Classical fields.}} As noted in Section \ref{33:sec:ps}, the renormalization procedure was considered in the context of the background field $B^a_\mu(x)$ satisfying the quantum equation of motion. A feature of this approach is not only the absence of the vertex $\Gamma_1$ in formal decompositions, and with it the negative powers of the coupling constant $\gamma$, but also the restriction of series \eqref{33-p-27-1} into a strongly connected part. However, it is worth noting that, according to the general idea, the renormalization procedure itself should be invariant with respect to the choice of an extreme trajectory for decomposition. In particular, the field $B^a_{\mathrm{cl},\mu}(x)$ can be selected as the background field, which satisfies only the classical equation of motion $\partial_{x_\mu}B^a_{\mathrm{cl},\mu}(x)=0$. In this case, the vertex with one external line will disappear again, and decomposition \eqref{33-p-27-1} will contain the operator $\mathbb{H}_0^{\mathrm{c}}$ instead of $\mathbb{H}_0^{\mathrm{sc}}$. Thus, additional diagrams will appear in the relations from \eqref{33-p-33}, which are connected, but not strongly connected. By direct calculation, we can make sure that they do not introduce other singularities, and the whole procedure is agreed upon.\footnote{In the general case, this is a consequence of the fact that the regularization preserves the connection between the quantum equation of motion and the quantum action, see Section 3.3 in \cite{sksk}.}.

Using the example of two-loop calculation, the remark looks the simplest. Indeed, in this case, an additional diagram appears in formula \eqref{33-p-30}, proportional to
\begin{equation*}
\mathbb{H}_0^{\mathrm{c}}\big(\mathbb{H}_1^{\mathrm{c}}(\Gamma_3)\mathbb{H}_1^{\mathrm{c}}(\Gamma_3)\big).
\end{equation*}
Using the representation from \eqref{33-w-4}, we note that the third and fourth terms are equal to zero due to the classical equation, and the first two lead to the relation
\begin{equation*}
\mathbb{H}_1^{\mathrm{c}}(\Gamma_3)
=
3\,{\centering\adjincludegraphics[width = 1.4 cm, valign=c]{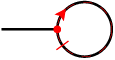}}
=
\int_{\mathbb{R}^2}\mathrm{d}^2x\,\phi^a(x)q^a(x)
,
\end{equation*}
in which the density $q^a(x)$ is finite and contains no singularities.

\vspace{2mm}
\noindent{\textbf{A shift of the nonlocal part.}} In work \cite{Kharuk-2021}, the shift of the Green's function for the four-dimensional Yang--Mills quantum theory to local\footnote{See the definition in \cite{30-1-1}.} zero modes was studied. It has been shown that the singular component of the two-loop part of the quantum action is invariant with respect to such shifts. In this paper, it was shown that a similar statement is true for the three-loop part of the two-dimensional non-linear sigma model. Indeed, shifting the "nonlocal" part by a sufficiently smooth function, which vanishes after applying thethe Laplace operator, will not introduce additional singularities. Note that auxiliary (counter) vertices that subtract singular "nonlocal" parts play an important role in this statement.

\vspace{2mm}
\noindent{\textbf{A logarithmic contribution.}} One of the directions of further work may be the calculation of an explicit formula for the logarithmic singularity in the third renormalization coefficient $a_2$ for the coupling constant, see Theorem \ref{33-t}. Given the results obtained in this work, we can use a special background field selection for the search. In particular, we can consider a constant field. Moreover, as a Green's function, we can use some other function that is similar only in the first orders when decomposed near the diagonal. This fact can greatly simplify the calculation, as we can add additional properties for convenience.

\vspace{2mm}
\noindent{\textbf{The quantum equation.}} It follows from Theorem \ref{33-t} that at the third step of the renormalization process, a triple vertex $\hat{\Gamma}_3$ appears, proportional to $\partial_{x_\mu}B^a_{\mu}(x)$. It is clear that such a vertex is zero on the set of solutions of the classical equation. As mentioned above, it is assumed that in higher orders of perturbative decomposition, parts will arise that will add up to a functional equal to zero on the background field. The study of the validity of this hypothesis can be noted as open tasks.

\vspace{2mm}
\noindent{\textbf{On the structure of singularities.}} In Theorem \ref{33-t}, it was shown that singularities of the form $\Lambda^2L$ do not appear in the renormalization coefficient and in the vertices. This fact is not trivial, since often in models, with an increase in the number of loops, the degree of logarithmic singularity also increases. It is clear that $\Lambda^2L$ was accumulated by the quartic vertex $\widetilde{\Gamma}_4$. Nevertheless, the question remains: at all steps of the renormalization process, will auxiliary vertices remove (by including inside) singularities of the form $\Lambda^2L^k$, where $k>0$?

\vspace{2mm}
\noindent{\textbf{About other regularizations.}} A separate interesting task is to study the relationship of the proposed cutoff in the coordinate representation with some other regularizations. It has been shown that the standerd cutoff in the momentum representation has important differences, primarily related to the absence of quasi-locality. In particular, some of the divergences have significantly different behavior in terms of the regularization parameter. Thus, a formal transition to the limit, at which the area of deformation of the free Green's function becomes infinitely large, is not correct.

\vspace{2mm}
\noindent{\textbf{About quasi-local vertices.}} The use of quasi-local vertices was discussed in Theorem \ref{33-t}. It was also noted that this method contains arbitrariness. Let us give another example. To do this, we define the quartic functional as follows
\begin{multline*}
	\mathrm{V}_5[\phi]=
	\int_{\mathbb{R}^2}\mathrm{d}^2x\int_{\mathbb{R}^2}\mathrm{d}^2y\,
	\phi^a(x)\phi^b(x)
	\Big(
	f^{ac_1g}A^{c_1c_2}(x)G_\Lambda^{c_2c_3}(x,y)f^{fc_3c}\\\times
	f^{bge_1}A^{e_3e_2}(y)G_\Lambda^{e_1e_2}(x,y)f^{fde_3}
	\Big)
	\phi^c(y)\phi^d(y).
\end{multline*}
Using the above methods, it is not difficult to prove a relation of the form
\begin{equation*}
	\mathrm{H}_{44}^{1,6}+t_1+t_2\stackrel{\mathrm{s.p.}}{=}8\mathbb{H}_{0}^{\mathrm{sc}}(\mathrm{V}_5),
\end{equation*}
where the notation from \eqref{33-d-119} and \eqref{33-k-3} were used. It is clear that such a nonlocal vertex leads to an alternative way to eliminate the power singularity $\Lambda^2$, however, at the same time, unlike the vertex $\widetilde{\Gamma}_4$ from \eqref{33-d-28-1-2}, also contains logarithmic terms with $\alpha_8$ and $\theta_3$, which exactly reduce the similar parts in the coefficients of $\tau_i$. The remaining parts of the numbers $\tau_i$ can be eliminated by vertices from Section \ref{33:sec:kva}.

\vspace{2mm}
\noindent{\textbf{About the non-linear sigma model.}} The main object of the study was a special case of two-dimensional non-linear sigma model, or the model of the principal chiral field. Thus, one of the possible generalization options is to move on to a more general non-linear sigma model.

\subsection{Acknowledgements}

Research of I.K. was conducted as part of the bachelor's scientific practice and is reflected in Sections \ref{33:sec:cl:3}, \ref{33:sec:cl:6}, and \ref{33:sec:appl:3}. The rest of the work (P.A. and A.I.) is supported by the Ministry of Science and Higher Education of the Russian
Federation in the framework of a scientific project under agreement № 075-15-2025-013. Also, A.I. is the winner of the "Young Mathematics of Russia" contest conducted by the Theoretical Physics and Mathematics Advancement Foundation "BASIS" and would like to thank its sponsors and jury.

\vspace{2mm}
\noindent With special warmth, A.V.Ivanov thanks N.V.Kharuk for helpful comments, K.A.Ivanov for creating a supportive work atmosphere, as well as his parents for the wonderful berry jam, which facilitated the lengthy writing process.

\vspace{2mm}
\textbf{Data availability statement.} Data sharing not applicable to this article as no datasets were generated or analysed during the current study.

\vspace{2mm}
\textbf{Conflict of interest statement.} The authors state that there is no conflict of interest.

\section{Appendix}
\label{33:sec:appl}
\subsection{Proof of Lemma \ref{33-lem-7}}
\label{33:sec:appl:2}
Relation \eqref{33-r-25} follows from the explicit calculation. Let us do it. From \eqref{33-r-23} we get
\begin{equation}\label{33-r-23-1}
	G_{2}^{\nu\mu}(0)=-\frac{\delta_{\mu\nu}}{2}
	\int_{\mathrm{B}_{1/\sigma}}\mathrm{d}^2y\int_{\mathrm{B}_{1/\sigma}}\mathrm{d}^2z\,
	G^{\Lambda,\mathbf{f}}(y)
	G^{\Lambda,\mathbf{f}}(y-z)
	A_0(z)G^{\Lambda,\mathbf{f}}(z)+\frac{\delta_{\mu\nu}}{16\pi\sigma^2}.
\end{equation}
Let us write the middle deformed function as  $G^{\Lambda,\mathbf{f}}=G^{\Lambda,\mathbf{f}}\pm G$, then the first part of the integral is
\begin{align*}
	\int_{\mathrm{B}_{1/\sigma}}\mathrm{d}^2y\int_{\mathrm{B}_{1/\sigma}}\mathrm{d}^2z\,
	G^{\Lambda,\mathbf{f}}(y)&
	\Big(G^{\Lambda,\mathbf{f}}(y-z)-G(y-z)\Big)
	A_0(z)G^{\Lambda,\mathbf{f}}(z)\\&=\frac{L}{2\pi\Lambda^2}
	\int_{\mathbb{R}^2}\mathrm{d}^2y\int_{\mathrm{B}_{1}}\mathrm{d}^2z\,
	\Big(G^{1,\mathbf{f}}(y-z)-G(y-z)\Big)
	A_0(z)G^{1,\mathbf{f}}(z)+\mathcal{O}(1/\Lambda^2)
	\\&=\frac{L}{2\pi\Lambda^2}
	\int_{\mathrm{B}_{1}}\mathrm{d}^2z\,
	\Big(G^{1,\mathbf{f}}(y)-G(y)\Big)+\mathcal{O}(1/\Lambda^2)
	\\&=\frac{L}{8\pi\Lambda^2}\bigg(
	\int_0^1\mathrm{d}^2s\,\mathbf{f}(s)-1\Bigg)+\mathcal{O}(1/\Lambda^2).
\end{align*}
In turn, the following chain of relations is true for the second part
\begin{align*}
	\int_{\mathrm{B}_{1/\sigma}}\mathrm{d}^2y\int_{\mathrm{B}_{1/\sigma}}\mathrm{d}^2z\,
	G^{\Lambda,\mathbf{f}}(y)&G(y-z)
	A_0(z)G^{\Lambda,\mathbf{f}}(z)\\&=\int_{\mathrm{B}_{1/\sigma}}\mathrm{d}^2y\int_{\mathrm{B}_{1/\Lambda}}\mathrm{d}^2z\,
	G^{\Lambda,\mathbf{f}}(y)G^{1/|y|,\mathbf{0}}(z)
	A_0(z)G^{\Lambda,\mathbf{f}}(z)
	\\&=\frac{1}{8\pi\sigma^2}-\frac{L}{4\pi\Lambda^2}+\frac{L}{8\pi\Lambda^2}\int_0^1\mathrm{d}s\,\mathbf{f}(s)\\&+\frac{L}{8\pi\Lambda^2}
	\int_{\mathrm{B}_{1}}\mathrm{d}^2y\,A_0(y)G^{1,\mathbf{f}}(y)\big(1-|y|^2\big)
	+\mathcal{O}(1/\Lambda^2).
\end{align*}
Finally, by describing the last integral and substituting all the parts in \eqref{33-r-23-1}, taking into account Definition \eqref{33-r-24-4}, we obtain the declared equality \eqref{33-r-25}.

\subsection{Auxiliary lemmas}
\label{33:sec:appl:1}
\begin{lemma}\label{33-lem-2} Let $v_1(\cdot)$ and $v_2(\cdot)$ be continuous functions on $\mathbb{R}^2$ with the Hölder exponent\footnote{Note that the introduction of the Hölder exponent is related to the fact that the functions in the work are not smooth and may have logarithmic behavior. In the statements below, the Hölder index is introduced for the same reasons.} $\alpha>0$, see \cite{He-1}, decreasing well enough at infinity (such that the integrals below converge). Then, taking into account the explicit form of regularization \eqref{33-o-55}, the following asymptotic expansions are valid for large values of the regularizing parameter $\Lambda$
	\begin{equation*}
		\int_{\mathbb{R}^2}\mathrm{d}^2x\int_{\mathbb{R}^2}\mathrm{d}^2y\,v_1(x)v_2(y)\Big(\partial_{x^{\nu}}G^{\Lambda,\mathbf{f}}(x-y)\Big)\Big(\partial_{y^{\mu}}G^{\Lambda,\mathbf{f}}(x-y)\Big)\stackrel{\mathrm{s.p.}}{=} -\frac{L}{4\pi}\int_{\mathbb{R}^{2}}\mathrm{d}^2x\,v_1(x)v_2(x)\delta_{\nu\mu}+\mathcal{O}(1),
	\end{equation*}
	\begin{equation*}
		\int_{\mathbb{R}^2}\mathrm{d}^2x\int_{\mathbb{R}^2}\mathrm{d}^2y\,v_1(x)v_2(y)\Big(\partial_{x^{\nu}}\partial_{y^{\mu}}G^{\Lambda,\mathbf{f}}(x-y)\Big)G^{\Lambda,\mathbf{f}}(x-y) \stackrel{\mathrm{s.p.}}{=} \frac{L}{4\pi}\int_{\mathbb{R}^{2}}\mathrm{d}^2x\,v_1(x)v_2(x)\delta_{\nu\mu}+\mathcal{O}(1).
	\end{equation*}
\end{lemma}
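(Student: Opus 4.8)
The plan is to pass to the relative coordinate $z=x-y$, after which the $\Lambda$-singularity in both integrals is localised near $z=0$ and the coefficient of $L$ is produced by a single radial integral. Write $F(z)=G^{\Lambda,\mathbf{f}}(z)$ and use $\partial_{x^{\mu}}F(x-y)=(\partial_{\mu}F)(z)$, $\partial_{y^{\mu}}F(x-y)=-(\partial_{\mu}F)(z)$; then the first integrand becomes $-v_1(x)v_2(x-z)(\partial_{\nu}F)(z)(\partial_{\mu}F)(z)$ and the second becomes $-v_1(x)v_2(x-z)(\partial_{\nu}\partial_{\mu}F)(z)F(z)$ (the Jacobian of the change of variables is $1$). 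Fix the radius $R=1/\sigma$ and split the $z$-integral into $|z|\le R$ and $|z|>R$. On $|z|>R$, and for $\Lambda>\sigma$, the averaging operators act trivially, so $F(z)=-\ln(|z|\sigma)/2\pi$ is independent of $\Lambda$; by the assumed decay of $v_1,v_2$ this part is $\mathcal{O}(1)$.

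On $|z|\le R$ I would write $v_2(x-z)=v_2(x)+\bigl(v_2(x-z)-v_2(x)\bigr)$. The Hölder bound $|v_2(x-z)-v_2(x)|\le C|z|^{\alpha}$ together with the elementary estimates $|\partial_{\nu}F(z)|\lesssim\min(\Lambda,1/|z|)$, $|\partial_{\nu}\partial_{\mu}F(z)|\lesssim\min(\Lambda^2,1/|z|^2)$ and $|F(z)|\lesssim L+|\ln(|z|\sigma)|$ — all consequences of the explicit form \eqref{33-o-55-1} and of the boundedness of $\mathbf{f}$ and its derivatives, cf. the estimates of Section \ref{33:sec:appl:3} — show that the remainder term is absolutely integrable with a bound uniform in $\Lambda$, hence contributes only $\mathcal{O}(1)$; the piece $|z|\le1/\Lambda$, where the derivatives are largest, is also $\mathcal{O}(1)$ because that region has measure $\mathcal{O}(1/\Lambda^2)$. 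What survives is $-\int\mathrm{d}^2x\,v_1(x)v_2(x)\,\mathcal{I}_{\nu\mu}(\Lambda)+\mathcal{O}(1)$, where $\mathcal{I}_{\nu\mu}=\int_{\mathrm{B}_R}\mathrm{d}^2z\,(\partial_{\nu}F)(\partial_{\mu}F)$ in the first case and $\mathcal{I}_{\nu\mu}=\int_{\mathrm{B}_R}\mathrm{d}^2z\,(\partial_{\nu}\partial_{\mu}F)F$ in the second.

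Finally I would evaluate $\mathcal{I}_{\nu\mu}$. Since $F$ is radial and even, $\mathcal{I}_{\nu\mu}$ is a rotationally invariant tensor, hence equal to $\tfrac12\delta_{\nu\mu}$ times its trace. In the first case the trace is $\int_{\mathrm{B}_R}|\nabla F|^2=-\int_{\mathrm{B}_R}F\,\Delta F+\oint_{\partial\mathrm{B}_R}F\,\partial_n F$; using $\Delta F=-A_0F$ and $\mathrm{supp}(A_0F)\subset\mathrm{B}_{1/\Lambda}$, the bulk term is $\int_{\mathrm{B}_{1/\Lambda}}(A_0F)F=L_2/2\pi$ by \eqref{33-r-29-1}, and the boundary flux is $\mathcal{O}(1)$; with $L_2\stackrel{\mathrm{s.p.}}{=}L$ this gives $\mathcal{I}_{\nu\mu}\stackrel{\mathrm{s.p.}}{=}L\delta_{\nu\mu}/4\pi$ and the claimed $-\tfrac{L}{4\pi}\int v_1v_2\,\delta_{\nu\mu}$. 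In the second case the trace is $\int_{\mathrm{B}_R}(\Delta F)F=-\int_{\mathrm{B}_{1/\Lambda}}(A_0F)F=-L_2/2\pi$, so $\mathcal{I}_{\nu\mu}\stackrel{\mathrm{s.p.}}{=}-L\delta_{\nu\mu}/4\pi$ and, after the overall minus sign, $+\tfrac{L}{4\pi}\int v_1v_2\,\delta_{\nu\mu}$; the opposite signs are consistent, since the difference of the two radial integrals equals the $\mathcal{O}(1)$ surface flux of $F\nabla F$. The main obstacle is making the $\mathcal{O}(1)$ claims rigorous, i.e. the uniform-in-$\Lambda$ control of the Hölder remainder and of the shrinking region $|z|\le1/\Lambda$; once those bounds are established, the identities follow from the symmetry reduction and a single integration by parts.
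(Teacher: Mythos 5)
Your argument is correct and follows essentially the same route as the paper's proof: localize to the region $|x-y|\leqslant 1/\sigma$, freeze the test function using the Hölder bound, and reduce everything to a universal radial integral times $\int v_1v_2$. The only difference is in how that radial integral is evaluated — the paper substitutes the explicit form $\partial_{x^\nu}G^{\Lambda,\mathbf{f}}(x)=-x^\nu/(2\pi|x|^2)$ valid for $|x|>1/\Lambda$ and computes $\int_{1/\Lambda\leqslant|x|\leqslant1/\sigma}x^\nu x^\mu|x|^{-4}\,\mathrm{d}^2x$ directly, whereas you integrate by parts and invoke \eqref{33-r-29-1} together with $L_2\stackrel{\mathrm{s.p.}}{=}L$ — and both yield $\pm\frac{L}{4\pi}\delta_{\nu\mu}$ up to $\mathcal{O}(1)$.
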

\begin{proof} Note that the integration domain $\mathbb{R}^2\times\mathbb{R}^2$ can be replaced by $\mathrm{B}_{1/\sigma}(y)\times\mathbb{R}^2$, where $\sigma$ is the mentioned above parameter. It is a finite fixed value, which, neglecting the finite integral, does not contain singularities with respect to the regularizing parameter $\Lambda$. In the remaining integral, we decompose one more time, then replace $v_1(x)\to v_1(y)$, again discarding the final parts, and also use a chain of relations
	\begin{equation}\label{33-r-5}
		\int_{\mathrm{B}_{1/\sigma}(y)}\mathrm{d}^2y\,\Big(\partial_{x^\nu}G^{\Lambda,\mathbf{f}}(x-y)\Big)
		\Big(\partial_{y^\mu}G^{\Lambda,\mathbf{f}}(x-y)\Big)\stackrel{\mathrm{s.p.}}{=}
		\frac{-1}{4\pi^2}\int_{1/\Lambda\leqslant|x|\leqslant1/\sigma}
		\mathrm{d}^2x\,\frac{x^\nu x^\mu}{|x|^4}\stackrel{\mathrm{s.p.}}{=}-
		\frac{L}{4\pi}\delta_{\mu\nu},
	\end{equation}
	where the replacement of the variable $x\to x+y$ was additionally used. This leads to the stated result. The second integral is analyzed similarly.
\end{proof}

\begin{lemma}\label{33-lem-4} Let $v_1^{ab}(\cdot)$ and $v_2^{ab}(\cdot)$ be matrix-valued continuously differentiable functions on $\mathbb{R}^2$, decreasing well enough at infinity (such that the integrals below converge). Additionally, we require that the first partial derivatives have the Hölder exponent $\alpha>0$, see \cite{He-1}. Then, taking into account the explicit form of regularization \eqref{33-o-55}, the following asymptotic expansions are valid for large values of the regularizing parameter $\Lambda$
	\begin{multline}\label{33-r-8}
		\int_{\mathbb{R}^2}\mathrm{d}^2x\int_{\mathbb{R}^2}\mathrm{d}^2y\,
		v_1^{ab}(x)
		\Big(\partial_{y^\mu}G_\Lambda^{bc}(x,y)\Big)
		\Big(\partial_{x_\nu}D_\nu^{ad}(x)G_\Lambda^{de}(x,y)\Big)
		v_2^{ec}(y)
		\stackrel{\mathrm{s.p.}}{=}\\\stackrel{\mathrm{s.p.}}{=} \frac{L}{\pi}
		\int_{\mathbb{R}^2}\mathrm{d}^2y\,v_1^{ab}(y)B_\mu^{bc}(y)v_2^{ac}(y)+\mathcal{O}(1),
	\end{multline}
	\begin{multline}\label{33-r-9}
		\int_{\mathbb{R}^2}\mathrm{d}^2x\int_{\mathbb{R}^2}\mathrm{d}^2y\,
		v_1^{ab}(x)
		\Big(G_\Lambda^{bg}(x,y)B_\mu^{gc}(y)\Big)
		\Big(\partial_{x_\nu}D_\nu^{ad}(x)G_\Lambda^{de}(x,y)\Big)
		v_2^{ec}(y)
		\stackrel{\mathrm{s.p.}}{=}\\\stackrel{\mathrm{s.p.}}{=} -\frac{2L}{\pi}
		\int_{\mathbb{R}^2}\mathrm{d}^2y\,v_1^{ab}(y)B_\mu^{bc}(y)v_2^{ac}(y)+\mathcal{O}(1),
	\end{multline}
	\begin{multline}\label{33-r-10}
		\int_{\mathbb{R}^2}\mathrm{d}^2x\int_{\mathbb{R}^2}\mathrm{d}^2y\,
		v_1^{ab}(x)
		\Big(G_\Lambda^{bg}(x,y)\overleftarrow{D}_\mu^{gc}(y)\Big)
		\Big(\partial_{x_\nu}D_\nu^{ad}(x)G_\Lambda^{de}(x,y)\Big)
		v_2^{ec}(y)
		\stackrel{\mathrm{s.p.}}{=}\\\stackrel{\mathrm{s.p.}}{=} -\frac{L}{\pi}
		\int_{\mathbb{R}^2}\mathrm{d}^2y\,v_1^{ab}(y)B_\mu^{bc}(y)v_2^{ac}(y)+\mathcal{O}(1).
	\end{multline}
	In this case, the operator $\partial_{x_\nu}D_\nu^{ad}(x)$ in the left parts can be replaced by aither $D_\nu^{ad}(x)\partial_{x_\nu}$ or $-2A^{ad}(x)$.
\end{lemma}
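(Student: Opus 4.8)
\noindent\textbf{Proof of Lemma \ref{33-lem-4}.} The three formulas share one architecture — a product of two Green's-function factors tied together by the smooth matrices $v_1,v_2$ — so the plan is to treat them uniformly and only at the end distinguish the three left-hand factors $\partial_{y^\mu}G_\Lambda^{bc}(x,y)$, $G_\Lambda^{bg}(x,y)B_\mu^{gc}(y)$ and $G_\Lambda^{bg}(x,y)\overleftarrow{D}_\mu^{gc}(y)$. First I would reduce the second-order operator in front of $G_\Lambda^{de}(x,y)$: from \eqref{33-o-3}--\eqref{33-o-4}, as operators, $\partial_{x_\nu}D_\nu^{ad}(x)=-A_0(x)\delta^{ad}-\partial_{x_\nu}B_\nu^{ad}(x)$, $D_\nu^{ad}(x)\partial_{x_\nu}=-A_0(x)\delta^{ad}-B_\nu^{ad}(x)\partial_{x_\nu}$ and $-2A^{ad}(x)=-A_0(x)\delta^{ad}-V^{ad}(x)$, so that $\partial_{x_\nu}D_\nu^{ad}(x)-(-2A^{ad}(x))=-\tfrac12(\partial_{x_\nu}B_\nu^{ad}(x))$ and $D_\nu^{ad}(x)\partial_{x_\nu}-(-2A^{ad}(x))=+\tfrac12(\partial_{x_\nu}B_\nu^{ad}(x))$ are \emph{multiplication} operators; applied to $G_\Lambda^{de}(x,y)$ they leave a kernel at worst logarithmic on the diagonal, and against the (at worst $|x-y|^{-1}$) remaining factor the double integral converges. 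This already gives the interchangeability of $\partial_{x_\nu}D_\nu^{ad}$, $D_\nu^{ad}\partial_{x_\nu}$ and $-2A^{ad}$ asserted at the end of the lemma. By the line following \eqref{33-r-1}, for $|x-y|\ge 1/\sigma$ the function $G_\Lambda$ is carried entirely by $PS_\Lambda$, which has two derivatives bounded uniformly in $\Lambda$, hence contributes nothing singular; so I localize near the diagonal.

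Second, and this is the heart of the matter, I would show that against the smooth partner the factor $-2A^{ad}(x)G_\Lambda^{de}(x,y)$ behaves like the nascent delta $-2A_0(x)G^{\Lambda,\mathbf{f}}(x-y)\,\delta^{ae}$. Inserting the resolvent expansion \eqref{33-o-5}--\eqref{33-o-6} and splitting $A^{ad}=\tfrac12A_0\delta^{ad}+\tfrac12V^{ad}$, the $\mathcal{O}(B)$ term $\tfrac12V^{ad}(x)\cdot 2G^{\Lambda,\mathbf{f}}(x-y)\delta^{de}=V^{ae}(x)G^{\Lambda,\mathbf{f}}(x-y)$ is cancelled by $\tfrac12A_0(x)\cdot 2\eta_1^{ae}(x,y)$, because $A_0(x)G^{\Lambda,\mathbf{f}}(x-x_1)$ — a unit-mass kernel supported in $|x-x_1|\le 1/\Lambda$ by \eqref{33-s-29} — collapses the internal integral in $\eta_1^{ae}$ and returns $-V^{ae}(x)G^{\Lambda,\mathbf{f}}(x-y)$ up to $\mathcal{O}(L/\Lambda)$; the same collapse sends $\tfrac12A_0\cdot 2\eta_{k+1}^{ae}$ onto $-V^{ad}\eta_k^{de}$, so the series telescopes and leaves $A^{ad}(x)G_\Lambda^{de}(x,y)\stackrel{\mathrm{s.p.}}{=}A_0(x)G^{\Lambda,\mathbf{f}}(x-y)\delta^{ae}$ — the averaged form of $A\,G_\Lambda\approx\delta$ already exploited in Lemma \ref{33-lem-8}, the residual $G_{11},G_2,PS_\Lambda$ parts of \eqref{33-r-1} being non-singular against a smooth test factor. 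The nascent delta then localizes $v_1,v_2$ at $x=y$; since the remaining Green's-function factor admits a well-defined diagonal value by Lemma \ref{33-lem-7} — its wildly-varying $\partial G^{\Lambda,\mathbf{f}}$-part, which vanishes at the origin by \eqref{33-r-31}, averaging out by parity against the nearly-even kernel — it is replaced by that value up to $\mathcal{O}(1)$ (one of the finite $\theta$-type constants of Section \ref{33:sec:cl:1}).

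Third, I would plug in the diagonal values from Lemma \ref{33-lem-7}: $\partial_{y^\mu}G_\Lambda^{bc}(x,y)|_{x=y}=-\tfrac{L_1}{2\pi}B_\mu^{bc}+\ldots$ (from \eqref{33-r-28} with \eqref{33-r-26}), $G_\Lambda^{bg}(x,x)B_\mu^{gc}(x)=\tfrac{L_1}{\pi}B_\mu^{bc}+\ldots$ (from \eqref{33-s-14}), and $G_\Lambda^{bg}(x,y)\overleftarrow{D}_\mu^{gc}(y)=\partial_{y^\mu}G_\Lambda^{bc}(x,y)+B_\mu^{gc}(y)G_\Lambda^{bg}(x,y)$ — the relative sign dictated by $B_\mu^{gc}=-B_\mu^{cg}$ — with diagonal value $\tfrac{L_1}{2\pi}B_\mu^{bc}+\ldots$. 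Multiplying by the overall $-2$ from $-2A^{ad}$, contracting the surviving $\delta^{ae}$ into $v_1^{ab}(y)v_2^{ec}(y)$, and using $L_1=L+\mathcal{O}(1)$ turns the three cases into $\tfrac{L}{\pi}$, $-\tfrac{2L}{\pi}$ and $-\tfrac{L}{\pi}$ times $\int_{\mathbb{R}^2}\mathrm{d}^2y\,v_1^{ab}(y)B_\mu^{bc}(y)v_2^{ac}(y)$, which is the claim. A consistency check falls out automatically: the third integrand is the first plus the second, so \eqref{33-r-10} must equal \eqref{33-r-8} plus \eqref{33-r-9}, and indeed $-\tfrac{L}{\pi}=\tfrac{L}{\pi}-\tfrac{2L}{\pi}$.

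The main obstacle I anticipate is the telescoping step: making the cancellation precise and verifying that the accumulated $\mathcal{O}(L/\Lambda)$ and $\mathcal{O}(1)$ remainders (from each nascent-delta collapse, from the $G_{11},G_2,PS_\Lambda$ terms, and from the smearing of the diagonal value) never feed back into the $L$-order — a mishandled boundary term there would corrupt the coefficients. The operator identities of the first step, the parity vanishing of the $B$-free cross terms, and the reading of the diagonal values are all routine. If one preferred to avoid the resummation argument, one could instead substitute \eqref{33-r-1} into every factor and reduce the resulting ``two-$\partial G^{\Lambda,\mathbf{f}}$'' integrals by Lemma \ref{33-lem-2}; that route is heavier but supplies an independent verification, as does the $\sigma$-independence test of Section \ref{33:sec:dop}.
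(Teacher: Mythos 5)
Your architecture is essentially the paper's: expand both deformed Green's functions near the diagonal, recognize that $A_0(x)G^{\Lambda,\mathbf{f}}(x-y)$ is a unit-mass kernel supported in $\mathrm{B}_{1/\Lambda}$ (via \eqref{33-s-29}), and read off the singular part from the $\tfrac{L_1}{2\pi}B_\mu$-type diagonal data of Lemma \ref{33-lem-7}. Your operator identities for interchanging $\partial_{x_\nu}D_\nu^{ad}$, $D_\nu^{ad}\partial_{x_\nu}$ and $-2A^{ad}$ match the paper's closing remark, your three coefficients are right, and the additivity check $-\tfrac{L}{\pi}=\tfrac{L}{\pi}-\tfrac{2L}{\pi}$ is exactly how the paper disposes of \eqref{33-r-10}.

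The one place where you assert what the paper actually has to compute is the step you yourself flag as the obstacle. The paper does not prove a wholesale kernel identity $A^{ad}(x)G_\Lambda^{de}(x,y)\stackrel{\mathrm{s.p.}}{=}A_0(x)G^{\Lambda,\mathbf{f}}(x-y)\delta^{ae}$ and then localize; instead it isolates the three cross terms that could be singular and treats each explicitly. The free--free term dies by parity (your observation). The term where the free $\partial_{y^\mu}G^{\Lambda,\mathbf{f}}$ meets the first-order correction $\partial_{x_\nu}\partial_{x^\nu}\eta_1^{ae}-\partial_{x_\nu}B_\nu^{ae}G^{\Lambda,\mathbf{f}}$ --- precisely the error term your telescoping sweeps away --- is killed by the identity \eqref{33-r-6}, whose vanishing is an \emph{exact} cancellation resting on $\int_{\mathrm{B}_1}\mathrm{d}^2x\,A_0(x)G^{1,\mathbf{f}}(x)=1$ (the bracket $\int\partial\partial G^{1,\mathbf{f}}+1$ collapses to zero), not a soft $\mathcal{O}(L/\Lambda)$ remainder estimate: the pointwise error in the nascent-delta collapse is not uniformly small on $\mathrm{B}_{1/\Lambda}$, and it is being integrated against a partner that scales like $\Lambda$ there, so only the explicit computation rules out an $\mathcal{O}(L)$ feedback. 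Likewise, the surviving term ($\partial_{y^\mu}\eta_1$ against $\partial_{x_\nu}\partial_{x^\nu}G^{\Lambda,\mathbf{f}}$) is not evaluated by substituting a diagonal value but by integration by parts plus spherical averaging --- the weighted average of $\partial_\mu G_1^\rho$ against the concentrated kernel differs from its value at the origin by finite $\theta$-type constants, cf. \eqref{33-bb-2}, which is harmless at order $L$ but is exactly the kind of bookkeeping your localization heuristic skips. So: right mechanism, right answer, but the proof is only closed once the cross term \eqref{33-r-6} is computed rather than estimated.
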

\begin{proof} Consider the left hand side of the first relation. Taking into account the expansions near the diagonal \eqref{33-o-5} and \eqref{33-r-1}, only the following three terms can have a singular contribution
	\begin{equation*}
		4\int_{\mathbb{R}^2}\mathrm{d}^2x\int_{\mathbb{R}^2}\mathrm{d}^2y\,
		v_1^{ab}(x)
		\Big(\partial_{y^\mu}G^{\Lambda,\mathbf{f}}(x-y)\Big)
		\Big(\partial_{x_\nu}\partial_{x^\nu}G^{\Lambda,\mathbf{f}}(x-y)\Big)
		v_2^{ab}(y),
	\end{equation*}
	\begin{equation*}
		4\int_{\mathbb{R}^2}\mathrm{d}^2x\int_{\mathbb{R}^2}\mathrm{d}^2y\,
		v_1^{ab}(x)
		\Big(\partial_{y^\mu}G^{\Lambda,\mathbf{f}}(x-y)\Big)
		\Big(\partial_{x_\nu}\partial_{x^\nu}\eta_1^{ae}(x,y)-\partial_{x_\nu}B_\nu^{ae}(x)G^{\Lambda,\mathbf{f}}(x-y)\Big)
		v_2^{eb}(y),
	\end{equation*}
	\begin{equation}\label{33-r-7}
		4\int_{\mathbb{R}^2}\mathrm{d}^2x\int_{\mathbb{R}^2}\mathrm{d}^2y\,
		v_1^{ab}(x)
		\Big(\partial_{y^\mu}\eta_1^{bc}(x,y)\Big)
		\Big(\partial_{x_\nu}\partial_{x^\nu}G^{\Lambda,\mathbf{f}}(x-y)\Big)
		v_2^{ac}(y).
	\end{equation}
	The first two are convergent integrals when regularization is removed and do not contain a singular component, since the following relations are fulfilled
	\begin{equation*}
		\int_{\mathrm{B}_{1/\sigma}}\mathrm{d}^2x\,
		\Big(\partial_{x^\mu}G^{\Lambda,\mathbf{f}}(x)\Big)
		\Big(\partial_{x_\nu}\partial_{x^\nu}G^{\Lambda,\mathbf{f}}(x)\Big)=0,
	\end{equation*}
	\begin{equation}\label{33-r-6}
		\int_{\mathrm{B}_{1/\sigma}}\mathrm{d}^2x\,
		\Big(\partial_{x^\mu}G^{\Lambda,\mathbf{f}}(x)\Big)
		\bigg(-\partial_{x_\nu}\partial_{x^\nu}
		\int_{\mathrm{B}_{1/\sigma}}\mathrm{d}^2z\,
		G^{\Lambda,\mathbf{f}}(x-z)\partial_{z^\rho}G^{\Lambda,\mathbf{f}}(z)
		-\partial_{x^\rho}G^{\Lambda,\mathbf{f}}(x)\bigg)\stackrel{\mathrm{s.p.}}{=}0.
	\end{equation}
	Note that the latter equality is proved by integration by parts. Indeed, using spherical averaging and the fact that the support of the function $\partial_{x_\nu}\partial_{x^\nu}G^{\Lambda,\mathbf{f}}(x)$ is located in the ball $\mathrm{B}_{1/\Lambda}$, we get
	\begin{align*}
		\eqref{33-r-6}&\stackrel{\mathrm{s.p.}}{=}\frac{\delta_{\mu\rho}}{2}
		\int_{\mathrm{B}_{1/\Lambda}}\mathrm{d}^2x\,
		\Big(\partial_{x_\eta}\partial_{x^\eta}G^{\Lambda,\mathbf{f}}(x)\Big)
		\bigg(\partial_{x_\nu}\partial_{x^\nu}
		\int_{\mathrm{B}_{1/\sigma}}\mathrm{d}^2z\,
		G^{\Lambda,\mathbf{f}}(x-z)G^{\Lambda,\mathbf{f}}(z)
		+G^{\Lambda,\mathbf{f}}(x)\bigg)\\&\stackrel{\mathrm{s.p.}}{=}
		\frac{\delta_{\mu\rho}}{2}
		\int_{\mathrm{B}_{1/\Lambda}}\mathrm{d}^2x\,
		\Big(\partial_{x_\eta}\partial_{x^\eta}G^{\Lambda,\mathbf{f}}(x)\Big)
		\bigg(
		\int_{\mathrm{B}_{1/\Lambda}}\mathrm{d}^2z\,
		\Big(\partial_{z_\eta}\partial_{z^\eta}G^{\Lambda,\mathbf{f}}(z)\Big)G^{\Lambda,\mathbf{f}}(x-z)
		+\frac{\ln(\Lambda/\sigma)}{2\pi}\bigg)\\&\stackrel{\mathrm{s.p.}}{=}
		\frac{\delta_{\mu\rho}}{2}\frac{\ln(\Lambda/\sigma)}{2\pi}
		\int_{\mathrm{B}_{1}}\mathrm{d}^2x\,
		\Big(\partial_{x_\eta}\partial_{x^\eta}G^{1,\mathbf{f}}(x)\Big)
		\bigg(
		\int_{\mathrm{B}_{1}}\mathrm{d}^2z\,
		\Big(\partial_{z_\eta}\partial_{z^\eta}G^{1,\mathbf{f}}(z)\Big)
		+1\bigg)=0
		.
	\end{align*}
	Thus, only the third part of \eqref{33-r-7} is of interest. Let us rewrite it in more explicit equivalent terms, taking into account factorization after shifting the variable $x\to x+y$
	\begin{equation*}
		4\int_{\mathrm{B}_{1/\sigma}}\mathrm{d}^2x\,
		\big(\partial_{x^\mu}\int_{\mathrm{B}_{1/\sigma}}\mathrm{d}^2z\,
		G^{\Lambda,\mathbf{f}}(x-z)\partial_{z_\rho}G^{\Lambda,\mathbf{f}}(z)\bigg)
		\Big(\partial_{x_\nu}\partial_{x^\nu}G^{\Lambda,\mathbf{f}}(x)\Big)
		\bigg(\int_{\mathbb{R}^2}\mathrm{d}^2y\,v_1^{ab}(y)B_\rho^{bc}(y)v_2^{ac}(y)\bigg).
	\end{equation*}
	Next, integrating by parts, using spherical averaging, and again taking into account the size of the function support $\partial_{x_\nu}\partial_{x^\nu}G^{\Lambda,\mathbf{f}}(x)$, we obtain
	\begin{equation*}
		2\int_{\mathrm{B}_{1/\Lambda}}\mathrm{d}^2x\,
		\bigg(\int_{\mathrm{B}_{1/\Lambda}}\mathrm{d}^2z\,
		G^{\Lambda,\mathbf{f}}(x-z)\partial_{z_\rho}\partial_{z^\rho}G^{\Lambda,\mathbf{f}}(z)\bigg)
		\Big(\partial_{x_\nu}\partial_{x^\nu}G^{\Lambda,\mathbf{f}}(x)\Big)
		\bigg(\int_{\mathbb{R}^2}\mathrm{d}^2y\,v_1^{ab}(y)B_\mu^{bc}(y)v_2^{ac}(y)\bigg).
	\end{equation*}
	Then, by scaling $x\to x/\Lambda$ and $z\to z/\Lambda$ and using the representation from \eqref{33-o-55}, we arrive at the formula
	\begin{equation*}
		\frac{\ln(\Lambda/\sigma)}{\pi}
		\bigg(\int_{\mathrm{B}_{1}}\mathrm{d}^2z\,\partial_{z_\rho}\partial_{z^\rho}G^{1,\mathbf{f}}(z)\bigg)^2
		\bigg(\int_{\mathbb{R}^2}\mathrm{d}^2y\,v_1^{ab}(y)B_\mu^{bc}(y)v_2^{ac}(y)\bigg),
	\end{equation*}
hence the stated statement. Similarly, the validity of the second relation \eqref{33-r-9} can be shown. The equality \eqref{33-r-10} is the sum of the previous two. To conclude the proof, we add that in the left hand sides the operator $\partial_{x_\nu}D_\nu^{ad}(x)$ can be replaced by $-2A^{ad}(x)$ from \eqref{33-o-3} or by $D_\nu^{ad}(x)\partial_{x_\nu}$ by adding a term proportional to $-\partial_{x^\nu}B_\nu^{ad}(x)$. Such addition does not lead to additional singularities, since the auxiliary integral is convergent.
\end{proof}

\begin{lemma}\label{33-lem-5} Let $v_1^{ab}(\cdot)$ and $v_2^{ab}(\cdot)$ be matrix-valued doubly continuously differentiable functions on $\mathbb{R}^2$, decreasing well enough at infinity (such that the integrals below converge). Additionally, we require that the second partial derivatives have the Hölder exponent $\alpha>0$, see \cite{He-1}. Then, taking into account the explicit form of regularization \eqref{33-o-55}, the following asymptotic expansions are valid for large values of the regularizing parameter $\Lambda$
	\begin{multline}\label{33-r-11}
		\int_{\mathbb{R}^2}\mathrm{d}^2x\int_{\mathbb{R}^2}\mathrm{d}^2y\,
		v_1^{ab}(x)
		\Big(A^{cg}(y)G_\Lambda^{bg}(x,y)\Big)
		\Big(A^{ad}(x)G_\Lambda^{de}(x,y)\Big)
		v_2^{ec}(y)
		\stackrel{\mathrm{s.p.}}{=}\\\stackrel{\mathrm{s.p.}}{=}\Lambda^2
		\bigg(\int_{\mathbb{R}^2}\mathrm{d}^2x\Big(A_0(x)G^{1,\mathbf{f}}(x)\Big)^2\bigg)
		\int_{\mathbb{R}^2}\mathrm{d}^2y\,
		v_1^{ab}(y)
		v_2^{ab}(y)+\mathcal{O}(1).
	\end{multline}
\end{lemma}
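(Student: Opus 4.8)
\textbf{Proof proposal for Lemma \ref{33-lem-5}.}

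The plan is to substitute the near-diagonal expansions \eqref{33-o-5} and \eqref{33-r-1} for both Green's functions $G_\Lambda^{bg}(x,y)$ and $G_\Lambda^{de}(x,y)$ into the left-hand side, and to identify which terms of the resulting (large) sum actually carry the $\Lambda^2$ singularity. Since the kernel $A^{ad}(x)$ acts on a Green's function, the leading piece is $A^{ad}(x)\big(2G^{\Lambda,\mathbf{f}}(x-y)\delta^{de}\big)$, whose main part near $x=y$ is $-2\partial_{x_\mu}\partial_{x^\mu}G^{\Lambda,\mathbf{f}}(x-y)\delta^{de}$, i.e.\ up to the deforming correction twice $A_0(x)G^{\Lambda,\mathbf{f}}(x-y)$. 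The crucial observation is that, unlike the logarithmic lemmas \ref{33-lem-2} and \ref{33-lem-4}, here \emph{two} factors $A_0 G^{\Lambda,\mathbf{f}}$ appear, each supported in $\mathrm{B}_{1/\Lambda}$ and each integrating to $1$ over that ball after rescaling (see \eqref{33-s-29}). Their product, integrated over one variable, produces a factor $\Lambda^2\int_{\mathbb{R}^2}\mathrm{d}^2x\,\big(A_0(x)G^{1,\mathbf{f}}(x)\big)^2$, which is exactly $\Lambda^2\alpha_6(\mathbf{f})$ in the notation of \eqref{33-r-24-9}; this is the claimed constant.

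Concretely, I would first reduce the domain $\mathbb{R}^2\times\mathbb{R}^2$ to $\mathrm{B}_{1/\sigma}(y)\times\mathbb{R}^2$ (the complement gives a $\Lambda$-finite contribution), then shift $x\to x+y$ to factorize, and replace $v_1^{ab}(x+y)\to v_1^{ab}(y)$ and $v_2^{ec}(y)\to v_2^{ab}(y)$ up to finite remainders — this is legitimate because the remaining $x$-integral is concentrated on the shrinking ball $\mathrm{B}_{1/\Lambda}$. After this reduction the integral becomes
\begin{equation*}
4\int_{\mathrm{B}_{1/\sigma}}\mathrm{d}^2x\,
\Big(A^{cg}(x)G_\Lambda^{bg}(x,0)\Big)\Big|_{\text{shifted}}
\Big(A_0(x)G^{\Lambda,\mathbf{f}}(x)\Big)\,
\delta^{be}\int_{\mathbb{R}^2}\mathrm{d}^2y\,v_1^{ab}(y)v_2^{ab}(y)+\ldots,
\end{equation*}
and one checks that only the term in which the \emph{second} Green's function also contributes its leading diagonal part $A_0(x)G^{\Lambda,\mathbf{f}}(x)$ survives with an $\Lambda^2$ coefficient, all mixed terms involving $\eta_1$, $G_1^\mu$, $G_{11}^{\nu\mu}$, $G_2^{\nu\mu}$ or $PS_\Lambda$ being at most $\mathcal{O}(L)$ because they trade one factor of $A_0 G^{\Lambda,\mathbf{f}}$ for a less singular object. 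Rescaling $x\to x/\Lambda$ and using \eqref{33-o-55} then yields the stated $\Lambda^2\alpha_6(\mathbf{f})$.

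The main obstacle I anticipate is the careful bookkeeping of the subleading terms: one must verify that none of the numerous cross terms — in particular those where $A^{ad}(x)$ hits the $V^{ab}$-piece inside $\eta_1$, or where the derivatives in $A^{cg}(y)$ act on $B_\mu$-dependent coefficients of \eqref{33-r-1} — can conspire to give a second factor of $\Lambda^2$ or even $\Lambda^2 L$. The estimate \eqref{33-r-28} (and the vanishing relations in Lemma \ref{33-lem-7}) should close this: each replacement of $A_0 G^{\Lambda,\mathbf{f}}$ by $B_\mu$ or by $PS_\Lambda$ costs a power of $\Lambda$ and at worst adds a $\log$, so the only $\Lambda^2$ survivor is the double-delta term. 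A secondary technical point is justifying the two substitutions $v_i\to v_i(y)$ under the Hölder hypothesis on the second derivatives; this is the standard Taylor-with-remainder argument already used implicitly in Lemmas \ref{33-lem-2} and \ref{33-lem-4}, and the Hölder exponent $\alpha>0$ guarantees the remainder integrals against $\big(A_0 G^{\Lambda,\mathbf{f}}\big)^2$ are $o(\Lambda^2)$.
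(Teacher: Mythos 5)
The leading mechanism you identify — the $\Lambda^2$ singularity arises from the product of the two approximate delta functions $A_0(x)G^{\Lambda,\mathbf{f}}(x-y)$, whose square integrates to $\Lambda^2\alpha_6(\mathbf{f})$ — is the same as in the paper. The gap is in your treatment of the cross terms, and it is not a technicality: it is the entire content of the lemma beyond the one-line leading computation. Your claim that every mixed term ``costs a power of $\Lambda$ and at worst adds a log'', hence is ``at most $\mathcal{O}(L)$'', is neither sufficient nor justified. Not sufficient, because the lemma asserts an $\mathcal{O}(1)$ remainder, so an $\mathcal{O}(L)$ (let alone $\mathcal{O}(\Lambda)$) cross term would be a singular part that must appear on the right-hand side. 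Not justified, because the first correction to $A^{ad}(x)G_\Lambda^{de}(x,y)$ beyond $A_0(x)G^{\Lambda,\mathbf{f}}(x-y)\delta^{ae}$ involves $V^{ae}(x)G^{\Lambda,\mathbf{f}}(x-y)\sim B_\mu\partial_\mu G^{\Lambda,\mathbf{f}}$, which is of order $\Lambda$ pointwise on the support $\mathrm{B}_{1/\Lambda}$ of the other factor; naive power counting for that cross term gives $\Lambda^2\cdot\Lambda\cdot\Lambda^{-2}=\Lambda$, a power-law singularity. Removing it requires an actual cancellation (oddness of $\partial_\mu G^{\Lambda,\mathbf{f}}$ under spherical averaging together with a vanishing zeroth moment), not power counting; and \eqref{33-r-28} and Lemma \ref{33-lem-7}, which you invoke to ``close'' the estimate, concern the Green's function on the diagonal and say nothing about these off-diagonal products.

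The paper avoids the raw expansion \eqref{33-o-5}/\eqref{33-r-1} here precisely for this reason. It adds and subtracts the exact logarithm inside the resolvent series so as to obtain the identity $A^{ad}(x)G_\Lambda^{de}(x,y)=A_0(x)G^{\Lambda,\mathbf{f}}(x-y)\delta^{ae}+\rho_2^{ae}(x,y)$, where $\rho_2$ is a convolution against $A_0(x)\rho_1(x-z)$ with $\rho_1=G^{\Lambda,\mathbf{f}}+\ln(|\cdot|\sigma)/(2\pi)$ supported in $\mathrm{B}_{1/\Lambda}$. The cross terms $\rho_2\times\rho_2$ and $\rho_2\times A_0G^{\Lambda,\mathbf{f}}$ are then shown to be finite by spherical averaging plus the identity $\int_{\mathrm{B}_{1/\Lambda}}\mathrm{d}^2z\,A_0(z)\rho_1(z)=0$; that identity is the ingredient missing from your sketch. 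A smaller issue: your intermediate display carries a spurious factor of $4$ — the factor $2$ in $G_\Lambda=2G^{\Lambda,\mathbf{f}}\delta+\ldots$ is cancelled by the $1/2$ in $A=A_0\delta/2+V/2$, so each factor contributes exactly $A_0G^{\Lambda,\mathbf{f}}$, consistent with the final coefficient being $\alpha_6$ and not $4\alpha_6$.
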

\begin{proof} Let us introduce a convenient representation\footnote{It follows from formula \eqref{33-o-5}. It is possible to check by substitution.} for the Green's function of the form
	\begin{equation*}
		G_\Lambda^{de}(x,y)=2G^{\Lambda,\mathbf{f}}(x-y)\delta^{de}-
		\int_{\mathbb{R}^2}\mathrm{d}^2z\,\rho_1(x-z)V^{df}(z)G_\Lambda^{fe}(z,y)+
		\int_{\mathbb{R}^2}\mathrm{d}^2z\,\frac{\ln(|x-z|\sigma)}{2\pi}V^{df}(z)G_\Lambda^{fe}(z,y),
	\end{equation*}
	where
	\begin{equation*}
		\rho_1(x)=G^{\Lambda,\mathbf{f}}(x)+\frac{\ln(|x|\sigma)}{2\pi}.
	\end{equation*}
	Note that the support of $\rho_1(x)$ is located in the ball $\mathrm{B}_{1/\Lambda}$, and the function itself has a continuous first derivative in $\mathbb{R}^2\setminus\{0\}$. Next, we apply the operator $A^{ad}(x)=A_0(x)\delta^{ad}/2+V^{ad}(x)/2$ to the new representation, then we get
	\begin{equation*}
		A^{ad}(x)G_\Lambda^{de}(x,y)=A_0(x)G^{\Lambda,\mathbf{f}}(x-y)\delta^{ae}+\rho_2^{ae}(x,y).
	\end{equation*}
	where
	\begin{equation*}
		\rho_2^{ae}(x,y)=-\frac{1}{2}
		\int_{\mathbb{R}^2}\mathrm{d}^2z\,\Big(A_0(x)\rho_1(x-z)\Big)V^{af}(z)G_\Lambda^{fe}(z,y).
	\end{equation*}
	Let us show that the combinations
	\begin{equation}\label{33-r-12}
		\int_{\mathbb{R}^2}\mathrm{d}^2x\int_{\mathbb{R}^2}\mathrm{d}^2y\,v_1^{ab}(x)
		\Big(\rho_2^{cb}(y,x)\Big)
		\Big(\rho_2^{ae}(x,y)\Big)
		v_2^{ec}(y),
	\end{equation}
	\begin{equation}\label{33-r-13}
		\int_{\mathbb{R}^2}\mathrm{d}^2x\int_{\mathbb{R}^2}\mathrm{d}^2y\,v_1^{ab}(x)
		\Big(A_0(x)G^{\Lambda,\mathbf{f}}(x-y)\Big)
		\Big(\rho_2^{ae}(x,y)\Big)v_2^{eb}(y),
	\end{equation}
	do not contain singular contributions. To prove this fact, it is easiest to use the perturbative decomposition again and consider the most singular contribution. In the first case, there is only one. After shifting the variable and factoring, the interesting part of the integral can be represented as
	\begin{equation*}
		\int_{\mathrm{B}_{1/\sigma}}\mathrm{d}^2x
		\bigg(\int_{\mathrm{B}_{1/\sigma}}\mathrm{d}^2z\,A_0(x)\rho_1(x-z)\partial_{z^\mu}G^{\Lambda,\mathbf{f}}(z)\bigg)
		\bigg(\int_{\mathrm{B}_{1/\sigma}}\mathrm{d}^2y\,A_0(x)\rho_1(x-y)\partial_{y^\nu}G^{\Lambda,\mathbf{f}}(y)\bigg).
	\end{equation*}
	After integration by parts, as well as using spherical symmetry and constraints on the support, we obtain 
	\begin{equation*}
		\frac{\delta_{\mu\nu}}{2}\int_{\mathrm{B}_{2/\Lambda}}\mathrm{d}^2x
		\bigg(\int_{\mathrm{B}_{3/\Lambda}}\mathrm{d}^2z\,A_0(x)\rho_1(x-z)G^{\Lambda,\mathbf{f}}(z)\bigg)
		\bigg(\int_{\mathrm{B}_{1/\Lambda}}\mathrm{d}^2y\,A_0(x)\rho_1(x-y)A(y)G^{\Lambda,\mathbf{f}}(y)\bigg).
	\end{equation*}
	It is clear that the possible singular contribution is obtained after scaling the variables by substitution $G^{\Lambda,\mathbf{f}}(z)\to\ln(\Lambda/\sigma)/(2\pi)$ and is equal to
	\begin{equation*}
		\frac{\delta_{\mu\nu}\ln(\Lambda/\sigma)}{4\pi}\int_{\mathrm{B}_{2/\Lambda}}\mathrm{d}^2x
		\bigg(\int_{\mathrm{B}_{3/\Lambda}}\mathrm{d}^2z\,A_0(x)\rho_1(x-z)\bigg)
		\bigg(\int_{\mathrm{B}_{1/\Lambda}}\mathrm{d}^2y\,A_0(x)\rho_1(x-y)A(y)G^{\Lambda,\mathbf{f}}(y)\bigg).
	\end{equation*}
	The last integral is zero, since
	\begin{equation*}
		\int_{\mathrm{B}_{1/\Lambda}}\mathrm{d}^2z\,A_0(z)\rho_1(z)=0.
	\end{equation*}
	Thus, \eqref{33-r-12} does not contain singular contributions. Similar arguments are valid for the integral from \eqref{33-r-13}. Therefore, the main part of the asymptotic decomposition of the left side of \eqref{33-r-11} can be obtained by replacing $A^{ad}(x)G_\Lambda^{de}(x,y)$ and $A^{cg}(y)G_\Lambda^{bg}(x,y)$ by $A_0(x)G^{\Lambda,\mathbf{f}}(x-y)\delta^{ae}$ and $A_0(x)G^{\Lambda,\mathbf{f}}(x-y)\delta^{cg}$. This is what the stated statement follows from.
\end{proof}

\subsection{Integral estimates}
\label{33:sec:appl:3}
\noindent\textbf{The first integral.} Consider a value of the form
\begin{equation*}
	\mathrm{Q}_1=\int_{\mathrm{B}_1}\mathrm{d}^2x\int_{\mathrm{B}_1}\mathrm{d}^2y\int_{\mathrm{B}_1}\mathrm{d}^{2}z\,\Big(\partial_{x^{\mu}}\ln{|x-y|}\Big)\ln{|x-z|}\Big(\partial_{z^{\sigma}}\ln{|y-z|}\Big)\Big(\partial_{x^{\rho}}\ln{|x|}\Big)\ln{|z|}\Big(\partial_{y^{\nu}}\ln{|y|}\Big).
\end{equation*}
Let us define the auxiliary number
\begin{equation*}
	M_1=\max_{x,y\in\mathrm{B}_1}\bigg(
	\int_{\mathrm{B}_1}\mathrm{d}^{2}z\,\Big|\ln{|x-z|}\Big(\partial_{z^{\sigma}}\ln{|y-z|}\Big)\ln{|z|}\Big|\bigg)<+\infty.
\end{equation*}
Further, given the fact that $\big|\partial_{y^{\nu}}\ln{|y|}\big|\leqslant1/|y|$, the integral can be estimated from above by
\begin{equation*}
	\big|\mathrm{Q}_1\big|\leqslant M_1\int_{\mathrm{B}_1}\mathrm{d}^2x\int_{\mathrm{B}_1}\mathrm{d}^2y\,
	\frac{1}{|x|\cdot|x-y|\cdot|y|},
\end{equation*}
from which the finitness follows.\\

\noindent\textbf{The second integral.} Next, consider the value
\begin{equation*}
	\mathrm{Q}_2=\int_{\mathrm{B}_1}\mathrm{d}^2x\int_{\mathrm{B}_1}\mathrm{d}^2y\int_{\mathrm{B}_1}\mathrm{d}^{2}z\,\Big(\partial_{x^{\mu}}\ln{|x-y|}\Big)\ln{|x-z|}\Big(\partial_{z^{\sigma}}\ln{|y-z|}\Big)\Big(\partial_{x^{\rho}}\ln{|x|}\Big)\Big(\partial_{z^{\nu}}\ln{|z|}\Big)\ln{|y|}.
\end{equation*}
It is clear that its absolute value can be estimated by the integral
\begin{equation*}
|Q_2|\leqslant
\int_{\mathrm{B}_{1}}\mathrm{d}^2z\int_{\mathrm{B}_{1}}\mathrm{d}^2x\int_{\mathrm{B}_{1}}\mathrm{d}^2y\,\frac{|\ln|x-z||\cdot|\ln|y||}{|x|\cdot|z|\cdot|x-y|\cdot|y-z|}
\equiv Q_3.
\end{equation*}
Note that the integral function is invariant with respect to the replacement $x\longleftrightarrow z$, therefore, dividing the integration domain for the variable $x$
\begin{equation*}
\mathrm{B}_1=\mathrm{B}_{|z|}\cup\big(\mathrm{B}_1\setminus\mathrm{B}_{|z|}),
\end{equation*}
changing the order of integration in one of the terms
\begin{equation*}
\int_{\mathrm{B}_{1}}\mathrm{d}^2z\int_{\mathrm{B}_1\setminus\mathrm{B}_{|z|}}\mathrm{d}^2x
\longrightarrow
\int_{\mathrm{B}_{1}}\mathrm{d}^2x\int_{\mathrm{B}_{|x|}}\mathrm{d}^2z,
\end{equation*}
and in it, reassigning the variables $x\longleftrightarrow z$, we get the equality
\begin{equation*}
Q_3=2\int_{\mathrm{B}_{1}}\mathrm{d}^2z\int_{\mathrm{B}_{|z|}}\mathrm{d}^2x\int_{\mathrm{B}_{1}}\mathrm{d}^2y\,\frac{|\ln|x-z||\cdot|\ln|y||}{|x|\cdot|z|\cdot|x-y|\cdot|y-z|}.
\end{equation*}
Consider an auxiliary function of the form 
\begin{equation*}
\mathrm{I}(x,z)=\int_{\mathrm{B}_{1}}\mathrm{d}^2y\,\frac{|\ln|y||}{|x-y|\cdot|y-z|}.
\end{equation*}
Let $z\in\mathrm{B}_{1}$ and $x\in\mathrm{B}_{|z|}$, $M>1$, as well as $r(M,x)=|x|/M$. Then we can write out the relations 
\begin{align*}
\mathrm{I}(x,z)=& \int_{\mathrm{B}_{1}\setminus\mathrm{B}_{r(M,x)}}\mathrm{d}^2y\,(\ldots)
+\int_{\mathrm{B}_{r(M,x)}}\mathrm{d}^2y\,(\ldots)\\
\leqslant&
|\ln(r(M,x))|\int_{\mathrm{B}_{1}}\mathrm{d}^2y\,
\frac{1}{|x-y||y-z|}+\left(\frac{M}{M-1}\right)^{2}\frac{1}{|x|^{2}}\int_{\mathrm{B}_{r(M,x)}}\mathrm{d}^2y\,|\ln|y||.
\end{align*}
For the first integral in the expression above, an estimate can be obtained using Theorem 3 from paragraph $115$ (chapter IV) of the monograph \cite{smir}, while the second one is calculated explicitly. Thus, we come to an estimate of the form
\begin{equation*}
\mathrm{I}(x,z)\leqslant
A_1|\ln(r(M,x))|(1+|\ln|x-z||)+\frac{A_2}{(M-1)^{2}}(2|\ln(r(M,x))|+1),
\end{equation*}
where $A_1$ and $A_2$ are numbers independent of $M$. Finally, writing out the auxiliary integral in the form
\begin{equation*}
Q_3=2\int_{\mathrm{B}_{1}}\mathrm{d}^2z\int_{\mathrm{B}_{|z|}}\mathrm{d}^2x\,\frac{|\ln|x-z||}{|x|\cdot|z|}\,\mathrm{I}(x,z),
\end{equation*}
note that due to the logarithmic behavior of the function $\mathrm{I}(x,z)$, all integrals are convergent. Thus, the convergence of $Q_2$ is shown.

\subsection{Special relations}
\label{33:sec:sp}
\noindent\textbf{Property 1.} Let $G^{1,\mathbf{f}}(x)$ be a deformation of the free fundamental solution from \eqref{33-o-55-1}, as well as $x\in\mathrm{B}_1$, then the relation holds
\begin{equation}\label{33-sp-1}
z_1(x)=
\int_{\mathrm{B}_2}\mathrm{d}^2y\,\Big(G^{1,\mathbf{f}}(x+y)-G^{1,\mathbf{f}}(y)\Big)=-
\frac{|x|^2}{4}.
\end{equation}
Let us use addition and subtraction, then we can rewrite
\begin{equation}\label{33-sp-2}
z_1(x)=
\int_{\mathrm{B}_2}\mathrm{d}^2y\,\Big(G^{1,\mathbf{f}}(x+y)-G(x+y)+G(x+y)-G^{1,\mathbf{f}}(y)\Big),
\end{equation}
where $G(x)=-\ln(|x|\sigma)/(2\pi)$. Next, note that the sum of the first two functions is constant at $x\in\mathrm{B}_1$. Indeed, the support of the difference $G^{1,\mathbf{f}}(\cdot)-G(\cdot)$ is contained in the ball $\mathrm{B}_1$, hence
\begin{equation}\label{33-sp-3}
\partial_{x_\mu}\int_{\mathrm{B}_2}\mathrm{d}^2y\,\Big(G^{1,\mathbf{f}}(x+y)-G(x+y)\Big)=
\int_{\partial\mathrm{B}_2}\mathrm{d}\sigma(y)\,\mathbf{n}^\mu(y)\Big(G^{1,\mathbf{f}}(x+y)-G(x+y)\Big)
=0,
\end{equation}
where $\mathrm{d}\sigma(y)$ is a standard measure on a sphere, and $\mathbf{n}^\mu(y)$ is the external normal to the boundary. Thus, choosing $x=0$ in the first two terms, we get
\begin{equation}\label{33-sp-4}
z_1(x)=
\int_{\mathrm{B}_2}\mathrm{d}^2y\,\Big(G(x+y)-G(y)\Big)=-\frac{1}{4}|x|^2,
\end{equation}
where the relation was used in the calculation, see \cite{Ivanov-2022}, 
\begin{equation}\label{33-sp-5}
\int_{\partial\mathrm{B}_1}\mathrm{d}\sigma(\hat{y})\,G(x+|y|\hat{y})=-\frac{1}{2\pi}
\begin{cases}
\ln(|y|\sigma),&|y|\geqslant|x|;\\
\ln(|x|\sigma),&|y|<|x|,
\end{cases}
=G^{1/|y|,\mathbf{0}}(x).
\end{equation}

\noindent\textbf{Property 2.} Let the function $G_1^\mu(\cdot)$ be defined by equality \eqref{33-r-21}, and the relation $x\in\mathrm{B}_1$ is also fulfilled, then the equality is true 
\begin{equation}\label{33-sp-6}
\Lambda G_1^\mu(x/\Lambda)=\frac{x^\mu \ln(\sigma\Lambda/\hat{\sigma})}{4\pi}-
\partial_{x_{\mu}}
\int_{\mathrm{B}_{2}}\mathrm{d}^2y\,
G^{1,\mathbf{f}}(x-y)
G^{1,\mathbf{f}}(y),
\end{equation}
where the second term is a correction to the first in the sense of expansion by the regularizing parameter $\Lambda$. To prove this, we use the definition, scaling of the variable, and integration by parts, then we get
\begin{align*}
\Lambda G_1^\mu(x/\Lambda)=&-
\Lambda\int_{\mathrm{B}_{1/\hat{\sigma}}}\mathrm{d}^2y\,
G^{\Lambda,\mathbf{f}}(x/\Lambda-y)
\partial_{y_{\mu}}G^{\Lambda,\mathbf{f}}(y)\\&-\partial_{x_{\mu}}
\int_{\mathrm{B}_{\Lambda/\hat{\sigma}}}\mathrm{d}^2y\,
\bigg(G^{1,\mathbf{f}}(x-y)+\frac{\ln(\Lambda)}{2\pi}\bigg)
\bigg(G^{1,\mathbf{f}}(y)+\frac{\ln(\sigma\Lambda/\hat{\sigma})}{2\pi}\bigg).
\end{align*}
Next, note that $\ln(\Lambda)/(2\pi)$ can be removed from the first multiplier, since in this case the differentiation of the constant will result. It is also important to note that the function
\begin{equation}\label{33-sp-8}
\partial_{x_{\mu}}
\int_{\mathrm{B}_{\Lambda/\hat{\sigma}}}\mathrm{d}^2y\,
G^{1,\mathbf{f}}(x-y)
G^{1,\mathbf{f}}(y)=
\partial_{x_{\mu}}
\int_{\mathrm{B}_{2}}\mathrm{d}^2y\,
G^{1,\mathbf{f}}(x-y)
G^{1,\mathbf{f}}(y)
\end{equation}
is finite due to the support of the averaged logarithm and, thus, falls into the correction term. The remaining part is instantly calculated using the first property from this section. Additionally, we note several auxiliary integrals in which only the main part of the asymptotics is used:
\begin{equation*}\label{33-sp-9}
\int_{\mathrm{B}_{1/\Lambda}}\mathrm{d}^2x\,\Big(A_0(x)G^{\Lambda,\mathbf{f}}(x)\Big)^2x_\nu 
G_1^\nu(x)=\frac{L\alpha_8(\mathbf{f})}{4\pi^2}+\mathcal{O}(1),
\end{equation*}
\begin{equation*}
\int_{\mathrm{B}_{1/\Lambda}}\mathrm{d}^2x\,\Big(A_0(x)G^{\Lambda,\mathbf{f}}(x)\Big)
G_1^\nu(x)\partial_{x^\nu}
\bigg(G^{\Lambda,\mathbf{f}}(x)-
\int_{\mathrm{B}_{1/\Lambda}}\mathrm{d}^2y\,G^{\Lambda,\mathbf{f}}(x-y)A_0(y)G^{\Lambda,\mathbf{f}}(y)\bigg)
=\frac{L\alpha_9(\mathbf{f})}{4\pi^2}+\mathcal{O}(1),
\end{equation*}
\begin{equation*}
\int_{\mathrm{B}_{1/\sigma}}\mathrm{d}^2x\,\Big(\partial^\mu G^{\Lambda,\mathbf{f}}(x)\Big)
\Big(\partial_\mu G_1^\nu(x)\Big)
\partial_{x^\nu}
\bigg(G^{\Lambda,\mathbf{f}}(x)-
\int_{\mathrm{B}_{1/\Lambda}}\mathrm{d}^2y\,G^{\Lambda,\mathbf{f}}(x-y)A_0(y)G^{\Lambda,\mathbf{f}}(y)\bigg)
=-\frac{L\theta_2}{8\pi^2}+\mathcal{O}(1),
\end{equation*}
\begin{equation*}\label{33-sp-11}
\int_{\mathrm{B}_{1/\Lambda}}\mathrm{d}^2x\,\Big(A_0(x)G^{\Lambda,\mathbf{f}}(x)\Big)
\Big(x_\nu\partial_{x_\mu}G^{\Lambda,\mathbf{f}}(x)\Big)
\Big(\partial_{x^\mu}G_1^\nu(x)\Big)
=\frac{L\alpha_{11}(\mathbf{f})}{4\pi^2}+\mathcal{O}(1),
\end{equation*}
\begin{equation*}\label{33-sp-12}
\int_{\mathrm{B}_{1/\Lambda}}\mathrm{d}^2x\,\Big(A_0(x)G^{\Lambda,\mathbf{f}}(x)\Big)
\Big(\partial_{x_\mu}G^{\Lambda,\mathbf{f}}(x)\Big)
G_1^\mu(x)
=\frac{L\alpha_{11}(\mathbf{f})}{4\pi^2}+\mathcal{O}(1).
\end{equation*}
As well as two integrals that use the correction term:
\begin{equation*}
\int_{\mathrm{B}_{1/\Lambda}}\mathrm{d}^2x\,\Big(A_0(x)G^{\Lambda,\mathbf{f}}(x)\Big)
\Big(\partial_{x_\mu}G_1^\nu(x)\Big)\Big(\partial_{x^\mu}G_1^\nu(x)\Big)
=\frac{L_1^2+2L\theta_2}{8\pi^2}+\mathcal{O}(1),
\end{equation*}
\begin{equation*}
\int_{\mathrm{B}_{1/\Lambda}}\mathrm{d}^2x\,\Big(A_0(x)G^{\Lambda,\mathbf{f}}(x)\Big)
\Big(\partial_{x_\mu}G_1^\mu(x)\Big)G^{\Lambda,\mathbf{f}}(x)
=\frac{L_1^2+L(\theta_2+\theta_1)}{4\pi^2}+\mathcal{O}(1).
\end{equation*}

\noindent\textbf{Property 3.} Let the function $G_2^{\nu\mu}(\cdot)$ be defined according to formula \eqref{33-r-23}. Let also $x\in\mathrm{B}_1$, then the relation is true
\begin{equation}\label{33-sp-13}
\Lambda^2G_2^{\mu\mu}(x/\Lambda)-\Lambda^2G_2^{\mu\mu}(0)=\frac{|x|^2L}{8\pi}+\mathcal{O}(1).
\end{equation}
To prove this, we use the definition and scaling of the integration variables, then we get
\begin{multline*}
\int_{\mathrm{B}_{\Lambda/\sigma}}\mathrm{d}^2y\int_{\mathrm{B}_{1}}\mathrm{d}^2z\,
\bigg(G^{1,\mathbf{f}}(y)+\frac{\ln(\Lambda)}{2\pi}\bigg)
\bigg(G^{1,\mathbf{f}}(y-z)+\frac{\ln(\Lambda)}{2\pi}\bigg)
A_0(z)G^{1,\mathbf{f}}(z)-\\
-\int_{\mathrm{B}_{\Lambda/\sigma}}\mathrm{d}^2y\int_{\mathrm{B}_{1}}\mathrm{d}^2z\,
\bigg(G^{1,\mathbf{f}}(x-y)+\frac{\ln(\Lambda)}{2\pi}\bigg)
\bigg(G^{1,\mathbf{f}}(y-z)+\frac{\ln(\Lambda)}{2\pi}\bigg)
A_0(z)G^{1,\mathbf{f}}(z).
\end{multline*}
It is clear that the terms with $\ln(\Lambda)/(2\pi)$, which follow from the first multiplier, are reduced. Then, using information about the behavior of the averaged logarithm, we can conclude that the integral
\begin{equation*}
\int_{\mathrm{B}_{\Lambda/\sigma}}\mathrm{d}^2y\int_{\mathrm{B}_{1}}\mathrm{d}^2z\,
G^{1,\mathbf{f}}(y)
G^{1,\mathbf{f}}(y-z)
A_0(z)G^{1,\mathbf{f}}(z)
-\int_{\mathrm{B}_{\Lambda/\sigma}}\mathrm{d}^2y\int_{\mathrm{B}_{1}}\mathrm{d}^2z\,
G^{1,\mathbf{f}}(x-y)
G^{1,\mathbf{f}}(y-z)
A_0(z)G^{1,\mathbf{f}}(z)
\end{equation*}
is a finite function and therefore falls into the correction $\mathcal{O}(1)$. The remaining part, taking into account the first property, leads to the stated answer
\begin{equation*}
\frac{L}{2\pi}\bigg(\int_{\mathrm{B}_{\Lambda/\sigma}}\mathrm{d}^2y\int_{\mathrm{B}_{1}}\mathrm{d}^2z\,
G^{1,\mathbf{f}}(y)
	A_0(z)G^{1,\mathbf{f}}(z)
	-\int_{\mathrm{B}_{\Lambda/\sigma}}\mathrm{d}^2y\int_{\mathrm{B}_{1}}\mathrm{d}^2z\,
G^{1,\mathbf{f}}(x-y)
	A_0(z)G^{1,\mathbf{f}}(z)\bigg)=\frac{|x|^2L}{8\pi}.
\end{equation*}
Here are two examples of applying properties that appear in calculations:
\begin{equation*}
\int_{\mathrm{B}_{1/\Lambda}}\mathrm{d}^2x\,\Big(A_0(x)G^{\Lambda,\mathbf{f}}(x)\Big)^2G_2^{\mu\mu}(x)=
L\bigg(
2\alpha_5(\mathbf{f})\alpha_6(\mathbf{f})+\frac{\alpha_8(\mathbf{f})}{8\pi^2}
\bigg)+\mathcal{O}(1),
\end{equation*}
\begin{equation*}
\int_{\mathrm{B}_{1/\Lambda}}\mathrm{d}^2x\,\Big(A_0(x)G^{\Lambda,\mathbf{f}}(x)\Big)
\Big(\partial_{x_\nu}G^{\Lambda,\mathbf{f}}(x)\Big)
\Big(\partial_{x^\nu}G_2^{\mu\mu}(x)\Big)=
\frac{L\alpha_{11}(\mathbf{f})}{4\pi^2}+\mathcal{O}(1).
\end{equation*}
\noindent\textbf{Property 4.} Taking into account the definitions from \eqref{33-9-24-53} and \eqref{33-9-24-55} for two functionals, the following equalities are true
\begin{equation*}
\alpha_9(\mathbf{f})=0,\,\,\,\alpha_{11}(\mathbf{f})=-\frac{1}{4}.
\end{equation*}
To prove this, let us first consider the second integral
\begin{equation*}
\alpha_{11}(\mathbf{f})=\pi\int_{\mathbb{R}^2}\mathrm{d}^2x\,\Big(A_0(x)G^{1,\mathbf{f}}(x)\Big)
x^\nu\partial_{x^\nu}G^{1,\mathbf{f}}(x).
\end{equation*}
Note that the functions depend on the absolute value of the argument. Therefore, the operators can be rewritten in the polar coordinates
\begin{equation*}
A_0(x)=-r^{-1}\partial_rr\partial_r,\,\,\,x^\nu\partial_{x^\nu}=r\partial_r,
\end{equation*}
where $r=|x|$. Then, substituting and integrating by parts, we obtain the following chain of equalities
\begin{equation*}
\alpha_{11}(\mathbf{f})=-2\pi^2\int_{0}^{+\infty}\mathrm{d}r\,\Big(\partial_rr\partial_rG^{1,\mathbf{f}}(x)\Big)
r\partial_rG^{1,\mathbf{f}}(x)=-\pi^2\Big(r\partial_rG^{1,\mathbf{f}}(x)\Big)^2\Big|_{r\to+\infty}
=-\frac{1}{4}.
\end{equation*}
Next, note that the functional $\alpha_{11}(\mathbf{f})$ is part of $\alpha_{9}(\mathbf{f})$, so it is enough to calculate only the second part
\begin{equation*}
\alpha_{9}(\mathbf{f})-\alpha_{11}(\mathbf{f})=
-\pi\int_{\mathbb{R}^2}\mathrm{d}^2x\,\Big(A_0(x)G^{1,\mathbf{f}}(x)\Big)
x^\nu\partial_{x^\nu}z_2(x),
\end{equation*}
where the auxiliary function has the form
\begin{equation*}
z_2(x)=\int_{\mathrm{B}_1}\mathrm{d}^2y\,G^{1,\mathbf{f}}(x-y)\Big(A_0(y)G^{1,\mathbf{f}}(y)\Big).
\end{equation*}
First of all, we note that the function $z_2(x)$ depends on the absolute value of $|x|$, therefore, the differential operators can be rewritten in polar coordinates without the angular part. Next, we integrate it again by parts, then we get
\begin{equation*}
\alpha_{9}(\mathbf{f})-\alpha_{11}(\mathbf{f})=\pi^2\Big(r\partial_rG^{1,\mathbf{f}}(x)\Big)\Big(r\partial_rz_2(x)\Big)\Big|_{r\to+\infty}=\frac{1}{4},
\end{equation*}
where the relations were used
\begin{equation*}
z_2(x)=\int_{\mathrm{B}_1}\mathrm{d}^2y\,G^{1,\mathbf{f}}(x-y)\Big(A_0(y)G^{1,\mathbf{f}}(y)\Big)=
-\frac{\ln(|x|\sigma)}{2\pi}\int_{\mathrm{B}_1}\mathrm{d}^2y\,A_0(y)G^{1,\mathbf{f}}(y)=G^{1,\mathbf{f}}(x)
\end{equation*}
for $|x|>2$, as well as
\begin{equation*}
-2\pi^2\int_{0}^{+\infty}\mathrm{d}r\,\Big(r\partial_rG^{1,\mathbf{f}}(x)\Big)
\partial_rr\partial_rz_2(x)=
\pi\int_{\mathbb{R}^2}\mathrm{d}^2x\,\Big(x^\nu\partial_{x^\nu}G^{1,\mathbf{f}}(x)\Big)
A_0(x)z_2(x)=-\alpha_{9}(\mathbf{f})+\alpha_{11}(\mathbf{f}).
\end{equation*}
Finally, from the equality $\alpha_{9}(\mathbf{f})-\alpha_{11}(\mathbf{f})=1/4$ we obtain $\alpha_{9}(\mathbf{f})=0$.\\

\noindent\textbf{Property 5.} In addition, we consider another auxiliary integral, the asymptotics of which must be calculated using a less trivial method. It has the form
\begin{equation*}
\int_{\mathrm{B}_{1/\sigma_1}}\mathrm{d}^2x\,
\Big(\partial_{x_\nu}G^{\Lambda,\mathbf{f}}(x)\Big)
\Big(\partial_{x^\nu}G^{\Lambda,\mathbf{f}}(x)\Big)
\partial_{x^\mu}G_1^\mu(x)=
\frac{L_1^2}{8\pi^2}+\mathcal{O}(1).
\end{equation*}
Let us show its calculation. We make a few comments. First, the singular part of the asymptotics does not depend on which final value of $\sigma_1>0$ will be fixed. For further information, it is convenient to choose $1/\sigma_1=1/\hat{\sigma}-1/\Lambda$. Secondly, the application of the second property for the function $\partial_{x^\mu}G_1^\mu(x)$ is impossible, since it was derived with a variable restriction. Therefore, we need to use the definition from \eqref{33-r-21} followed by integration by parts. Then, after scaling the variables, the integral takes the form
\begin{equation*}
\int_{\mathrm{B}_{\Lambda/\hat{\sigma}-1}}\mathrm{d}^2x\,
\Big(\partial_{x_\nu}G^{1,\mathbf{f}}(x)\Big)
\Big(\partial_{x^\nu}G^{1,\mathbf{f}}(x)\Big)
\bigg(
\frac{\ln(\sigma\Lambda/\hat{\sigma})}{2\pi}+
\int_{x+\mathrm{B}_1}\mathrm{d}^2y\,
A_0(x)G^{1,\mathbf{f}}(x-y)
G^{1,\mathbf{f}}(y)
\bigg).
\end{equation*}
The convenient choice of $\sigma_1$ led to the possibility of making a shift of $y\to y+x$ by centering the multiplier with the Laplace operator. Breaking it down into terms, we get
\begin{multline*}
\frac{\ln(\sigma\Lambda/\hat{\sigma})}{2\pi}
\bigg(\int_{\mathrm{B}_{1}}\mathrm{d}^2x\,
\Big(\partial_{x_\nu}G^{1,\mathbf{f}}(x)\Big)
\Big(\partial_{x^\nu}G^{1,\mathbf{f}}(x)\Big)
+\frac{\ln(\Lambda/\hat{\sigma}-1)}{2\pi}\bigg)
+\\+
\int_{\mathrm{B}_{\Lambda/\hat{\sigma}-1}}\mathrm{d}^2x\,
\Big(\partial_{x_\nu}G^{\Lambda,\mathbf{f}}(x)\Big)
\Big(\partial_{x^\nu}G^{\Lambda,\mathbf{f}}(x)\Big)
\int_{\mathrm{B}_1}\mathrm{d}^2y\,
G^{1,\mathbf{f}}(x-y)
A_0(y)G^{1,\mathbf{f}}(y).
\end{multline*}
Next, we note that it is possible to make a substitution in the last integral
\begin{equation*}
\int_{\mathrm{B}_1}\mathrm{d}^2y\,
G^{1,\mathbf{f}}(x-y)
A_0(y)G^{1,\mathbf{f}}(y)\longrightarrow G^{1,\mathbf{f}}(x),
\end{equation*}
since the support of their difference is contained in $\mathrm{B}_2$. In addition, the integral in the first term can be calculated explicitly
\begin{equation*}
\int_{\mathrm{B}_{1}}\mathrm{d}^2x\,
\Big(\partial_{x_\nu}G^{1,\mathbf{f}}(x)\Big)
\Big(\partial_{x^\nu}G^{1,\mathbf{f}}(x)\Big)=
\int_{\mathrm{B}_{1}}\mathrm{d}^2x\,
\Big(A_0(x)G^{1,\mathbf{f}}(x)\Big)G^{1,\mathbf{f}}(x)\big|_{\sigma=1}=\frac{\theta_1+\mathbf{f}(0)/2}{2\pi}.
\end{equation*}
Then the original integral, after removing some finite parts, is rewritten as
\begin{equation*}
\frac{\ln(\sigma\Lambda/\hat{\sigma})}{4\pi^2}
\big(\theta_1+\mathbf{f}(0)/2+\ln(\Lambda/\hat{\sigma})\big)
+\frac{1}{4}\bigg(\mathrm{I}_1(\Lambda,\hat{\sigma})-\frac{\ln(\sigma\Lambda/\hat{\sigma})}{\pi^2}\big(\theta_1+\mathbf{f}(0)/2+\ln(\Lambda/\hat{\sigma})\big)\bigg),
\end{equation*}
where the definition from \eqref{33-w-7} was used. Remembering that $\theta_1=-\ln(\sigma/\hat{\sigma})$, we obtain
\begin{equation*}
\frac{(L_1+\ln(\sigma/\hat{\sigma}))^2+2L\theta_1}{8\pi^2},
\end{equation*}
from which final answer follows.


\begin{thebibliography}{99}
\bibitem{3}
L. D. Faddeev, A. A. Slavnov, \textit{Gauge Fields: An Introduction to Quantum Theory}, Frontiers in Physics \textbf{83}, Addison-Wesley, 1--236 (1991)
\bibitem{9}
C. Itzykson, J. B. Zuber, \textit{Quantum Field Theory}, Mcgraw-hill, New York, 1--705 (1980)
\bibitem{10}
M. E. Peskin, D. V. Schroeder, \textit{An Introduction to Quantum Field Theory}, Addison-Wesley, 1--868 (1995)

\bibitem{sk-1}
R. P. Feynman, A. R. Hibbs, D. F. Styer, \textit{Quantum Mechanics and Path Integrals}, Mineola, NY: Dover Publications, 1--371 (2010)
\bibitem{34-c-m}
P. Cartier, C. DeWitt-Morette, \textit{A Rigorous Mathematical Foundation of Functional Integration}, In: DeWitt-Morette, C., Cartier, P., Folacci, A. (eds) Functional Integration. NATO ASI Series, vol \textbf{361}, Springer, Boston (1997) doi:10.1007/978-1-4899-0319-8\_1
\bibitem{sk-2}
J. Zinn Justin, \textit{Path Integrals in Quantum Mechanics}, Oxford University Press, 1--334 (2004)
\bibitem{34-6}
E. T. Shavgulidze, O.G. Smolyanov, \textit{Functional integrals}, Moscow, URSS, 1--328 (2015)

\bibitem{6}
J. C. Collins, \textit{Renormalization: An Introduction to Renormalization, the Renormalization Group and the Operator-Product Expansion}, Cambridge University Press, 1--392 (1984)
\bibitem{7}
O. I. Zavialov, \textit{Renormalized quantum field theory}, Kluwer Academic Publishers, Dodrecht, Boston, 1--524 (1990)
\bibitem{105}
D. I. Kazakov, \textit{Radiative Corrections, Divergences, Regularization, Renormalization, Renormalization Group and All That in Examples in Quantum Field Theory}, arXiv:0901.2208 [hep-ph] (2009)


\bibitem{sk-3}
M. Atiyah, \textit{Topological quantum field theories}, Publications Mathématiques de l’Institut des Hautes Scientifiques \textbf{68}, 175--186 (1988) doi:10.1007/BF02698547
\bibitem{sk-4}
G. Segal, \textit{The definition of conformal field theory}, Topology, geometry and quantum field theory, London Math. Soc. Lect. Note Ser. \textbf{308}, Cambridge University Press, 421--577 (2004) doi:10.1017/CBO9780511526398.019
\bibitem{sk-5}
N. Reshetikhin, \textit{Lectures on Quantization of Gauge Systems}, New Paths Towards Quantum
Gravity, Springer Berlin Heidelberg, 125--190 (2010) doi:10.1007/978-3-642-11897-5$\_$3
\bibitem{sk-6}
S. Stolz, P. Teichner, \textit{Supersymmetric field theories and generalized cohomology}, Mathematical foundations of quantum field theory and perturbative string theory, Proc. Sympos.
Pure Math. \textbf{83}, 279--340 (2011) arXiv:1108.0189
\bibitem{sk-7}
A. S. Cattaneo, P. Mnev, N. Reshetikhin, \textit{Perturbative Quantum Gauge Theories on Manifolds with Boundary}, Commun. Math. Phys. \textbf{357}, 631--730 (2018) doi:10.1007/s00220-017-3031-6
\bibitem{sk-14}
S. Kandel, \textit{Functorial quantum field theory in the Riemannian setting}, Adv. Theor. Math.
Phys. \textbf{20}(6), 1443--1471 (2016) doi:10.4310/ATMP.2016.v20.n6.a5
\bibitem{sk-16}
S. Kandel, P. Mnev, K. Wernli, \textit{Two-dimensional perturbative scalar QFT and Atiyah-Segal gluing}, Adv. Theor. Math.
Phys. \textbf{25}(7), 1847--1952 (2021) doi:10.4310/ATMP.2021.v25.n7.a5
\bibitem{sksk}
A. V. Ivanov, \textit{Effective actions, cutoff regularization, quasi-locality, and gluing of partition functions}, J. Phys. A: Math. Theor., \textbf{58}, 135401 (2025) doi:10.1088/1751-8121/adc3de, arXiv:2411.13857, https://www.pdmi.ras.ru/preprint/2024/24-11.html

\bibitem{19}
C. G. Bollini, J. J. Giambiagi, \textit{Dimensional Renormalization: The Number of Dimensions as a Regularizing Parameter}, Nuovo Cim. B, \textbf{12}, 20--26 (1972)
\bibitem{555}
G. 't Hooft, M. Veltman, \textit{Regularization and renormalization of gauge fields}, Nucl. Phys. B \textbf{44}, 189--213 (1972)


\bibitem{AA-1}
A. A. Slavnov, \textit{Invariant regularization of non-linear chiral theories}, Nucl. Phys. B, \textbf{31}(2), 301--315 (1971) doi:10.1016/0550-3213(71)90234-3 
\bibitem{AA-2}
A. A. Slavnov, \textit{Invariant regularization of gauge theories}, Theoret. and Math. Phys., \textbf{13}:2 (1972) 1064--1066 doi:10.1007/BF01035526
\bibitem{Bakeyev-Slavnov}
T. Bakeyev, A. Slavnov, \textit{Higher covariant derivative regularization revisited}, Mod. Phys. Lett. A \textbf{11}(19), 1539--1554 (1996)
\bibitem{29-st}
K. V. Stepanyantz, \textit{The Higher Covariant Derivative Regularization as a Tool for Revealing the Structure of Quantum Corrections in Supersymmetric Gauge Theories},  Proc. Steklov Inst. Math. \textbf{309}, 284--298 (2020)

\bibitem{chi-0}
A. Brizola, O. Battistel, Marcos Sampaio, M. C. Nemes, \textit{Implicit Regularisation Technique: Calculation of the Two-loop $\phi^4_4$-theory $\beta$-function}, Mod. Phys. Lett. A, \textbf{14}, 1509--1518 (1999)
\bibitem{chi-1}
A. L. Cherchiglia, M. Sampaio, M. C. Nemes, \textit{Systematic Implementation of Implicit Regularization for Multi-Loop Feynman Diagrams}, Int. J. Mod. Phys. A, \textbf{26}, 2591--2635 (2011)
\bibitem{chi-2}
A. Cherchiglia, D. C. Arias-Perdomo, A. R. Vieira, M. Sampaio, B. Hiller, \textit{Two-loop renormalisation of gauge theories in 4D Implicit Regularisation and connections to dimensional methods}, Eur. Phys. J. C, \textbf{81}, 468 (2021)

\bibitem{FF-1}
R. P. Feynman, \textit{Relativistic Cut-Off for Quantum Electrodynamics}, Phys. Rev. \textbf{74}, 1430--1438 (1948) doi:10.1103/PhysRev.74.1430
\bibitem{Bog-R}
N. N. Bogolyubov, D. V. Shirkov, \textit{Introduction to the Theory of Quantized Fields}, Willey, New York, 1--620 (1980)
\bibitem{Pauli-Villars}
W. Pauli, F. Villars, \textit{On the Invariant Regularization in Relativistic Quantum Theory}, Rev. Mod. Phys. \textbf{21}(3): 434--444 (1949)

\bibitem{Sh}
S. L. Shatashvili, \textit{Two-loop approximation in the background field formalism}, Theor Math Phys \textbf{58}, 144--150 (1984) doi:10.1007/BF01017919
\bibitem{w6}
M. Oleszczuk, \textit{A symmetry-preserving cut-off regularization},
Z. Phys. C, \textbf{64}, 533--538 (1994)
\bibitem{w7}
Sen-Ben Liao, \textit{Operator Cutoff Regularization and Renormalization Group in 
	Yang-Mills Theory}, Phys. Rev. D, \textbf{56}, 5008--5033 (1997)
\bibitem{w8}
G. Cynolter, E. Lendvai, \textit{Cutoff Regularization Method in Gauge Theories},
[arXiv:1509.07407 [hep-ph]] (2015)
\bibitem{Khar-2020}
N. V. Kharuk, \textit{Mixed type regularizations and nonlogarithmic singularities}, Questions of quantum field theory and statistical physics. Part 27, Zap. Nauchn. Sem. POMI, \textbf{494}, POMI, St. Petersburg, 2020, 242--249; J. Math. Sci. (N. Y.), \textbf{264}, 362--367 (2022) 10.1007/s10958-022-06003-7
\bibitem{sk-b-19}
A. A. Bagaev, \textit{Two-loop calculations of the matrix $\sigma$-model effective action in the background field formalism}, Theor Math Phys, \textbf{154}:2, 303--310 (2008) doi:10.1007/s11232-008-0028-5

\bibitem{34}
A. V. Ivanov, N. V. Kharuk, \textit{Quantum equation of motion and two-loop cutoff renormalization for $\phi^3$ model}, Zap. Nauchn. Sem. POMI, \textbf{487}, POMI, St. Petersburg, 2019, 151--166; J. Math. Sci. (N. Y.), \textbf{257}:4 (2021), 526--536, arXiv:2203.04562, doi:10.1007/s10958-021-05500-5
\bibitem{Ivanov-Kharuk-2020}
A. V. Ivanov, N. V. Kharuk, \textit{Two-Loop Cutoff Renormalization of 4-D Yang--Mills Effective Action}, 2020 J. Phys. G: Nucl. Part. Phys. \textbf{48}, 015002, arXiv:2004.05999, 
doi:10.1088/1361-6471/abb939
\bibitem{Ivanov-Kharuk-20222}
A. V. Ivanov, N. V. Kharuk, \textit{Formula for two-loop divergent part of 4-D Yang--Mills effective action},  Eur. Phys. J. C \textbf{82}, 997 (2022), arXiv:2203.07131, doi:10.1140/epjc/s10052-022-10921-w
\bibitem{Ivanov-Akac}
P. V. Akacevich, A. V. Ivanov, \textit{On Two-Loop Effective Action of 2D Sigma Model},  Eur. Phys. J. C \textbf{83}, 653 (2023), arXiv:2304.02374, doi:10.1140/epjc/s10052-023-11797-0
\bibitem{Ivanov-Kharuk-2023}
A. V. Ivanov, N. V. Kharuk, \textit{Three-loop divergences in effective action of
	4-dimensional Yang--Mills theory with cutoff regularization: $\Gamma_4^2$-contribution}, Zap. Nauchn. Sem. POMI, \textbf{520}, POMI, St. Petersburg, 2023, 162--188, J Math Sci \textbf{284}, 681--699 (2024) doi:10.1007/s10958-024-07379-4
\bibitem{Iv-2024-1}
A. V. Ivanov, \textit{Three-loop renormalization of the quantum action for a four-dimensional scalar model with quartic interaction with the usage of the background field method and a cutoff regularization}, Nucl. Phys. B, \textbf{1006}, 116647 (2024), doi:10.1016/j.nuclphysb.2024.116647, arXiv:2402.14549, https://www.pdmi.ras.ru/preprint/2024/24-02.html
\bibitem{Iv-Kh-2024}
A. V. Ivanov, N. V. Kharuk, \textit{Three-loop renormalization of the quantum action for a five-dimensional scalar cubic model with the usage of the background field method and a cutoff regularization}, Eur. Phys. J. Plus \textbf{139}, 849 (2024) doi:10.1140/epjp/s13360-024-05648-4,
arXiv:2404.07513, https://www.pdmi.ras.ru/preprint/2024/24-05.html
\bibitem{Kh-2024}
N. V. Kharuk, \textit{Three-loop renormalization with a cutoff in a sextic model}, Questions of quantum field theory and statistical physics. Part 30, Zap. Nauchn. Sem. POMI, \textbf{532}, POMI, St. Petersburg, 2024, 273--286 https://mathscinet.ams.org/mathscinet/relay-station?mr=4811943
https://www.mathnet.ru/eng/znsl7462
\bibitem{Kh-25}
N. V. Kharuk, \textit{Four-loop renormalization with a cutoff in a sextic model}, (2025) 	arXiv:2504.07688, http://www.pdmi.ras.ru/preprint/2025/25-02.html

\bibitem{HB1}
A. M. Polyakov, \textit{Interaction of goldstone particles in two dimensions. Applications to ferromagnets and massive Yang-Mills fields}, Phys. Lett. B, \textbf{59}(1), 79--81 (1975)
\bibitem{HB2}
A. A. Migdal, \textit{Phase transitions in gauge and spin-lattice systems}, Sov. Phys. JETP, \textbf{42}(4), 743--746 (1976)
\bibitem{HB3}
E. Brezin, J. Zinn-Justin, \textit{Renormalization of the Nonlinear $\sigma$ Model in $2+\varepsilon$ Dimensions -- Application to the Heisenberg Ferromagnets}, Phys. Rev. Lett., \textbf{36}, 691--694 (1976)
\bibitem{HB7}
E. Brezin, J. Zinn-Justin, \textit{Spontaneous breakdown of continuous symmetries near two dimensions}, Phys. Rev. B, \textbf{14}, 3110--3120 (1976)
\bibitem{HB4}
E. Brezin, J. Zinn-Justin, J. C. Le Guillou, \textit{Renormalization of the nonlinear $\sigma$ model in $2+\varepsilon$ dimensions}, Phys. Rev. D, \textbf{14}(10), 2615--2621 (1976)
\bibitem{HB}
S. Hikami, E. Brezin, \textit{Three-loop calculations in the two-dimensional non-linear $\sigma$ model}, J. Phys. A: Math. Gen. \textbf{11}, 1141 (1978)
\bibitem{sig2}
D. Friedan, \textit{Nonlinear models in $2+\varepsilon$ dimensions}, Ann. Phys. \textbf{163}, 318--419 (1985)
\bibitem{sig1}
A. M. Polyakov, \textit{Gauge Fields and Strings}, London, Taylor and Francis Group, 1--312 (1987)
\bibitem{q-2}
I. Bakas, \textit{Renormalization group equations and geometric flows}, (2007) doi:10.48550/arXiv.hep-th/0702034
\bibitem{q-1}
K. Anagnostopoulos, K. Farakos, P.Pasipoularides, A. Tsapalis, \textit{Non-Linear Sigma Model and asymptotic freedom at the Lifshitz point}, (2010) doi:10.48550/arXiv.1007.0355
\bibitem{q-3}
M. Carfora, \textit{Renormalization Group and the Ricci Flow}, Milan Journal of Mathematics, \textbf{78}(1), 319--353 (2010) doi:10.1007/s00032-010-0110-y 
\bibitem{q-6}
A. N. Efremov, A. Rançon, \textit{Non-linear sigma models on constant curvature target manifolds: a functional renormalization group approach}, Phys. Rev. D \textbf{104}, 105003 (2021) doi:10.1103/PhysRevD.104.105003
\bibitem{q-14}
N. Levine, \textit{Universal 1-loop divergences for integrable sigma models}, J. High Energ. Phys. \textbf{2023}, 3 (2023) doi:10.1007/JHEP03(2023)003


\bibitem{Ivanov-2022}
A. V. Ivanov, \textit{Explicit Cutoff Regularization in Coordinate Representation}, 2022 J. Phys. A: Math. Theor. \textbf{55}, 495401, arXiv:2209.01783, doi:10.1088/1751-8121/aca8dc
\bibitem{Iv-2024}
A. V. Ivanov, \textit{An applicability condition of a cutoff regularization in the coordinate representation}, Funct Anal Its Appl \textbf{59}, 1--10 (2025) arXiv:2403.09218, https://www.pdmi.ras.ru/preprint/2024/24-04.html, https://doi.org/10.1134/S123456782501001X;
Translated from: Funktsional'nyi Analiz i ego Prilozheniya, \textbf{59}:1, 5--17 (2025) https://doi.org/10.4213/faa4221
\bibitem{sk-b-20}
V. P. Zastavnyi, \textit{The continuation of the radial function from the exterior of the ball to a function positively defined on the entire space}, Bulletin of Donetsk National University: Series A. Natural Sciences, 2/2024, 14--28 (2024) doi:10.5281/zenodo.13752079

\bibitem{HB5}
M. T. Grisaru, A. E. M. Van De Ven, D. Zanon, \textit{Two-dimensional supersymmetric sigma-models on Ricci-flat Kähler manifolds are not finite}. Nucl. Phys. B, \textbf{277}, 388--408 (1986)
\bibitem{HB6}
M. T. Grisaru, A. E. M. Van De Ven, D. Zanon, \textit{Four-loop divergences for the N=1 supersymmetric non-linear sigma-model in two dimensions}. Nucl. Phys. B, \textbf{277}, 409--428 (1986)


\bibitem{q-15}
A. A. Slavnov, L. D. Faddeev, \textit{Invariant perturbation theory for nonlinear chiral Lagrangians}, Theor Math Phys \textbf{8}, 843--850 (1971) doi:10.1007/BF01029338
\bibitem{q-17}
I. Ya. Aref'eva, \textit{Phase transition in the three-dimensional chiral field}, Annals Phys., \textbf{117}:2, 393--406 (1979) doi:10.1016/0003-4916(79)90361-0
\bibitem{q-16}
M. A. Semenov-Tian-Shansky, L. D. Faddeev, \textit{On the theory of nonlinear chiral fields}, Vestnik Leningrad Univ. Math., \textbf{10}, 319--327 (1982)
\bibitem{q-12}
L. D Faddeev, N. Yu Reshetikhin, \textit{Integrability of the principal chiral field model in 1+1 dimension}, Ann. Phys. \textbf{167}(2), 227--256 (1986) doi:10.1016/0003-4916(86)90201-0
\bibitem{q-13}
G. Arutyunov, S. Frolov, \textit{Foundations of the $AdS_5\times S^5$ Superstring: Part I}, J. Phys. A: Math. Theor. \textbf{42}, 254003 (2009) doi:10.1088/1751-8113/42/25/254003
\bibitem{q-10}
T. Nguyen, \textit{Quantization of the nonlinear sigma model revisited}, J. Math. Phys. \textbf{57}, 082301 (2016) doi:10.1063/1.4961153
\bibitem{q-11}
J. Evslin, B. Zhang, \textit{Mass-gap in the compactified principal chiral model}, Phys. Rev. D \textbf{98}, 085016 (2018) doi:10.1103/PhysRevD.98.085016
\bibitem{q-9}
C. D. Batista, M. Shifman, Z. Wang, S.-S. Zhang, \textit{Principal Chiral Model in Correlated Electron Systems}, Phys. Rev. Lett. \textbf{121}, 227201 (2018) doi:10.1103/PhysRevLett.121.227201
\bibitem{q-5}
J. Kluson, \textit{Nonrelativistic String Theory Sigma Model and Its Canonical Formulation}, Eur. Phys. J. C \textbf{79}, 108 (2019) doi:10.1140/epjc/s10052-019-6623-9
\bibitem{q-7}
F. Bascone, F. Pezzella, \textit{Principal Chiral Model without and with WZ term: Symmetries and Poisson-Lie T-Duality}, PoS, \textbf{376}, CORFU2019 134 (2020) doi:10.22323/1.376.0134
\bibitem{q-8}
B. Hoare, \textit{Integrable deformations of sigma models}, J. Phys. A: Math. Theor. \textbf{55}, 093001 (2022) doi:10.1088/1751-8121/ac4a1e
\bibitem{q-4}
G. Oling, Z. Yan, \textit{Aspects of Nonrelativistic Strings}, Front. Phys., \textbf{10}:832271 (2022)
doi:10.3389/fphy.2022.832271
\bibitem{33-ma-1}
Y. Liu, M. Mariño, \textit{Trans-series from condensates in the non-linear sigma model}, (2025) arXiv:2507.02605

\bibitem{29-1-0}
L. D. Faddeev, \textit{Scenario for the renormalization in the 4D Yang--Mills theory}, Int. J. Mod. Phys. A, \textbf{31}, 1630001 (2016)
\bibitem{29-1-1}
C. E. Derkachev, A. V. Ivanov, L. D. Faddeev, \textit{Renormalization scenario for the quantum Yang--Mills theory in four-dimensional space-time}, TMF, \textbf{192}:2 (2017), 227--234; Theoret. and Math. Phys., \textbf{192}:2 (2017), 1134--1140 10.1134/S0040577917080049

\bibitem{2}
M. Nakahara, \textit{Geometry, topology and physics}, Second Edition, CRC Press, 1--573 (2003)

\bibitem{102}
B. S. DeWitt, \textit{Quantum Theory of Gravity. 2. The Manifestly Covariant
	Theory}, Phys. Rev. \textbf{162}, 1195--1239 (1967)
\bibitem{103}
B. S. DeWitt, \textit{Quantum Theory of Gravity. 3. Applications of the Covariant Theory}, Phys. Rev. \textbf{162}, 1239--1256 (1967)
\bibitem{24}
G. ’t Hooft, \textit{The background field method in gauge field theories}, (Karpacz, 1975), Proceedings, Acta Universitatis Wratislaviensis, \textbf{1}, Wroclaw, 345--369 (1976)
\bibitem{25}
L. F. Abbott, \textit{Introduction to the background field method}, Acta Phys. Polon. B, \textbf{13}:1--2, 33--50 (1982)
\bibitem{26}
I. Ya. Aref'eva, A. A. Slavnov, L. D. Faddeev, \textit{Generating functional for the S-matrix in gauge-invariant theories}, TMF, \textbf{21}:3, 311--321 (1974)
\bibitem{33-d-1}
P. S. Howe, G. Papadopoulos, K. S. Stelle, \textit{The background field method and the non-linear $\sigma$-model}, Nucl. Phys. B, \textbf{296}(1), 26--48 (1988) doi:10.1016/0550-3213(88)90379-3 

\bibitem{Fa-1}
L. D. Faddeev, \textit{A couple of methodological comments on the quantum Yang-Mills theory}, Theor Math Phys \textbf{181}, 1638--1642 (2014) doi:10.1007/s11232-014-0240-4
\bibitem{Ba-1}
A. A. Bagaev, \textit{Renormalization of the quantum equation of motion for Yang–Mills fields in background formalism}, J Math Sci \textbf{138}, 5631--5635 (2006) doi:10.1007/s10958-006-0331-3
\bibitem{Ba-2}
A. A. Bagaev, \textit{A remark on renormalization of the quantum equation of motion for the matrix sigma model}, J Math Sci \textbf{151}, 2813--2815 (2008) doi:10.1007/s10958-008-9007-5

\bibitem{Vas-98}
A. N. Vasiliev, \textit{Functional Methods in Quantum Field Theory and Statistical Physics}, CRC Press, 1--320 (1998)

\bibitem{I-R}
A. V. Ivanov, M. A. Russkikh, \textit{Quantum Field Theory on the Example of the Simplest Cubic Model}, J Math Sci \textbf{275}, 306--325 (2023) https://doi.org/10.1007/s10958-023-06683-9
\bibitem{Iv-244}
A. V. Ivanov, \textit{Applicability condition of a cutoff in two-dimensional models}, Questions of quantum field theory and statistical physics. Part 30, Zap. Nauchn. Sem. POMI, 532, POMI, St. Petersburg, 153--168 (2024) https://www.mathnet.ru/eng/znsl7457

\bibitem{Gelfand-1964}
I. M. Gel'fand, G. E. Shilov, 
\textit{Generalized Functions, Volume 1: Properties and Operations},
AMS Chelsea Publishing \textbf{377}, 1--423 (1964)
\bibitem{Vladimirov-2002}
V. S. Vladimirov, \textit{Methods of the theory of generalized functions}, London,
CRC Press, 1--328 (2002)

\bibitem{13}
J. P. Bornsen, A. E. M. van de Ven, \textit{Three-loop Yang--Mills $\beta$-function via the covariant background field method}, Nucl. Phys. B, \textbf{657}, 257--303 (2003) doi:10.1016/S0550-3213(03)00118-4

\bibitem{29}
M. Lüscher, \textit{Dimensional regularisation in the presence of large background fields}, Annals of Physics \textbf{142}, 359--392 (1982)
\bibitem{vas1}
D. V. Vassilevich, \textit{Heat kernel expansion: user's manual}, Phys. Rept. \textbf{388}, 279--360 (2003) doi:10.1016/j.physrep.2003.09.002
\bibitem{vas2}
D. Fursaev, D. Vassilevich, \textit{Operators, Geometry and Quanta: Methods of Spectral Geometry in Quantum Field Theory}, Springer, 1--304 (2011)
\bibitem{33}
A. V. Ivanov, N. V. Kharuk, \textit{Heat kernel: Proper-time method, Fock--Schwinger gauge, path integral, and Wilson line}, TMF, \textbf{205}:2, 242--261, (2020); Theoret. and Math. Phys., \textbf{205}:2, 1456--1472 (2020) 10.1134/S0040577920110057 [arXiv:1906.04019]
\bibitem{30-1-1}
A. V. Ivanov, N. V. Kharuk, \textit{Special Functions for Heat Kernel Expansion}, Eur. Phys. J. Plus \textbf{137}, 1060 (2022), arXiv:2106.00294v2, 10.1140/epjp/s13360-022-03176-7

\bibitem{Kharuk-2021}
N. V. Kharuk, \textit{Zero modes of the Laplace operator in two-loop calculations in the Yang--Mills theory}, Questions of quantum field theory and statistical physics. Part 28, Zap. Nauchn. Sem. POMI, \textbf{509}, POMI, St. Petersburg, 216--226 (2021), J Math Sci \textbf{275}, 370--377 (2023) doi:10.1007/s10958-023-06687-5

\bibitem{He-1}
D. Gilbarg, N. Trudinger, \textit{Elliptic Partial Differential Equations of Second Order}, New York: Springer, 1--517 (1983)
\bibitem{smir}
V. I. Smirnov, \textit{A course of higher mathematics: volume 5}, Pergamon Press, 1--644 (1964)


\end{thebibliography}
\end{document}